\documentclass[12pt]{article}
\usepackage[utf8]{inputenc}
\usepackage[T1]{fontenc}
\usepackage[english]{babel}

\usepackage{amsmath}
\usepackage{amssymb}
\usepackage{amsthm}
\usepackage{mathtools}

\usepackage[margin=1in]{geometry}
\usepackage{enumerate}
\usepackage[shortlabels]{enumitem}
\usepackage{setspace}

\usepackage{graphicx}
\usepackage{float}
\usepackage{subcaption}
\usepackage{tikz}
\usetikzlibrary{arrows.meta, patterns}
\usepackage{multirow}
\usepackage{afterpage}
\usepackage{booktabs}

\usepackage{natbib}
\usepackage{hyperref}
\usepackage{thmtools}

\usepackage{xcolor}

\newcommand{\sug}[1]{#1}

\newtheorem{theorem}{Theorem}
\newtheorem{proposition}{Proposition}[section]
\newtheorem{lemma}{Lemma}[section]
\newtheorem{corollary}[proposition]{Corollary}
\newtheorem*{claim}{Claim}
\theoremstyle{remark}
\newtheorem{remark}[proposition]{Remark}

\theoremstyle{definition}
\newtheorem{definition}[proposition]{Definition}

\newcommand{\cost}{C}

\newcommand{\profit}{\Pi}
\newcommand{\gif}[3]{\Phi(#1,\,#2;\,#3)}
\newcommand{\vs}{f}
\newcommand{\qlim}{q_0}
\newcommand{\qrel}{q_1}
\newcommand{\qopt}{q^*}
\newcommand{\qiso}{q^\nu}

\newcommand{\dens}{p}
\newcommand{\cdf}{P}
\newcommand{\csplus}{CS_+}
\newcommand{\csminus}{CS_-}
\newcommand{\jfunc}{\mathcal{J}}
\newcommand{\wfunc}{\mathcal{W}}
\newcommand{\cutoff}{\theta_1}
\newcommand{\uth}{\underline{\theta}}
\newcommand{\oth}{\bar{\theta}}
\newcommand{\hth}{\hat{\theta}}
\newcommand{\tth}{\tilde{\theta}}
\newcommand{\qmir}{q^{\mathrm{m}}}
\newcommand{\qterm}{q^{\mathrm{end}}}
\newtheorem{assumption}{Assumption}
\newcommand{\qL}{\sug{q_L}}
\newcommand{\qH}{\sug{q_H}}
\newcommand{\qpert}{\sug{\tilde{q}}}
\newcommand{\flatq}{\sug{\bar{q}}}

\begin{document}

\title{Monotone Allocations without
	Single-Crossing: When to Bunch and When to Jump\thanks{We thank Humberto Moreira for suggesting that the paper introduce its
		results through a guiding application, and for raising the questions
		of stochastic contracts and of distortion at the top, answered in
		Proposition~\ref{prop_stochastic} and Remark~\ref{rem:top_general}.
		We also thank Hern\'an Falla for suggestions on a step in the proof
		of Theorem~\ref{proposicion}, Braulio Calagua and Leandro Lyra for
		their reading of an earlier version, and V\'ictor P\'erez for his
		comments. Replication code for all
		numerical results is available at
		\url{https://doi.org/10.5281/zenodo.21854607} (this paper uses
		v3, \texttt{10.5281/zenodo.21959952}).}}

\author{A.~Araujo\thanks{Instituto Nacional de Matem\'atica Pura e
		Aplicada, Estrada Dona Castorina 110, Rio de Janeiro, Brasil,
		and Graduate School of Economics, Get\'ulio Vargas Foundation,
		Praia de Botafogo 190, Rio de Janeiro, Brazil.
		\texttt{aloisio@impa.br}}
	\and
	C.~Parra\thanks{Faculdade de Ci\^encias Econ\^omicas, UERJ,
		Rua S\~ao Francisco Xavier 524, Rio de Janeiro, Brazil.
		\texttt{carolina.martinez@uerj.br}}
	\and
	S.~Vieira\thanks{Ibmec, Av.\ Presidente Wilson 118,
		Rio de Janeiro, Brazil.
		\texttt{SVieira2@ibmec.edu.br}}}

\date{This version: August 2026}

\maketitle

\begin{abstract}
	A principal screens an agent whose technology has a minimum
	efficient scale, so the Spence--Mirrlees condition fails along a
	monotone dividing curve: the locus at which every type values
	marginal output equally. For the class in which this curve and the
	relaxed solution are both strictly monotone, the optimal contract
	obeys a trichotomy, governed by how the two meet: a jump is
	\emph{impossible} when they never meet, \emph{unavoidable} across a
	flat dividing curve, \emph{a choice} across a strictly increasing
	one. The optimum is found, not conjectured: each solution is
	certified as globally optimal among all implementable allocations,
	deterministic or random, by dualizing the family of binding
	constraints through an explicit weight; the certificates require
	neither linear primitives nor any restriction on the shape of the
	contract. Under mild regularity the class comprises exactly forty
	configurations; each is mapped to its forced shape, solved in
	closed form, and certified.
\end{abstract}

\medskip
\noindent\textbf{Keywords:} One-dimensional screening;
non-single-crossing; incentive compatibility; monotone allocations;
bunching; discontinuous contracts.

\smallskip
\noindent\textbf{JEL codes:} D82, D86, C61.

\thispagestyle{empty}
\newpage
\setcounter{page}{1}

\onehalfspacing

\section{Introduction}
\label{sec:Intro}

A regulator contracts with a firm whose technology has a minimum
efficient scale: an output level at which economies of scale are
exhausted. Below it, expanding output is cheapest for the most
efficient types; above it, the ranking reverses. The firm's
willingness to pay for quantity is therefore not monotone in its
type: the Spence--Mirrlees condition fails along a monotone
\emph{dividing curve}, the locus at which every type values marginal
output equally\footnote{The same object is called the
	\emph{dividing line} in the discussion of
	\citet{ChenIshidaSuen2022}; we say \emph{curve} because in our
	applications $\qlim$ is genuinely non-affine.}.

The Spence--Mirrlees condition (SMC) is the foundation of classical
one-dimensional screening: under it, global incentive compatibility
reduces to monotonicity, and the optimal contract is characterized
by the ironing procedure of \citet{MyR} and \citet{GL1984}. The
condition fails in economically natural environments. When
multidimensional private information is aggregated into a
one-dimensional index, the reduction preserves single crossing
locally while destroying it globally: the cross-derivative of the
agent's utility changes sign along the dividing curve, separating
the type--decision space into a positive and a negative
single-crossing region. We study the class in which the dividing curve
and the relaxed solution are both strictly monotone. There, the
relaxed solution satisfies local monotonicity within each region,
but global incentive constraints --- upward, downward, or both ---
can bind across them, and local conditions no longer guarantee
implementability.

\citet{AM2010} solve the problem when the dividing curve is decreasing:
the binding constraints pair types discretely, through a
U-condition, and the optimal contract can pool isolated pairs and
jump. \citet{AMV2015} develop the marginal tariff approach for that
class and show that discontinuous contracts can strictly dominate
continuous ones. \citet{schottmuller2015} treats the complementary
case of two monotone curves --- the geometry produced by index
aggregation --- and characterizes its strictly monotone and
continuous solutions; one step in the argument
supporting his sufficient conditions for continuity is incomplete,
as recorded in the corrigendum of \citet{AVP2022}, and we are not
aware of an alternative proof. Three
things are therefore missing for the monotone class: a proof that
continuity is forced where it is, an account of when a jump must
occur instead, and a sufficiency theory certifying that the
candidates delivered by first-order conditions are optimal. This
paper supplies them.

The answer to all three turns on a single geometric fact: whether
the relaxed solution meets the dividing curve, and
how the curve behaves where it does. If the two
never meet, any optimal contract is continuous: a
pool of low types followed by a distorted separating branch. If the
relaxed solution crosses a flat dividing curve,
the conclusion reverses: every continuous contract
is strictly improved by a jump at the crossing, and the type at the
jump is assigned the whole interval of quantities between the two
levels. If the curve is strictly increasing, neither outcome is
forced, and the jump is weighed against the continuous alternative.
A jump is thus \emph{impossible}, \emph{unavoidable}, or \emph{a
choice}.

Our first contribution is the shape of the optimal contract in each
regime, established as a matter of necessity rather than exhibited
as a candidate. When the relaxed solution lies entirely in one single-crossing
region, any optimal contract is continuous under an ex-ante
condition on primitives --- monotonicity of the distortion--rent
ratio (Theorem~\ref{proposicion}). When the dividing curve is flat, the same variational argument
runs in reverse: every continuous implementable contract is strictly
dominated by one with a jump
(Proposition~\ref{prop_jump_forced}). When the curve is strictly
monotone and crossed by the relaxed solution, the shape of any
contract that pools and jumps is pinned down completely: after the
jump the contract follows the mirror of the pooled level --- the
unique decision generating the same marginal rent on the other side
of the curve --- until the relaxed solution overtakes it
(Proposition~\ref{thm:mirror}). Whether the principal prefers that
contract to the continuous candidate is a profit comparison, which
the guiding application resolves in closed form.

Bunching is a persistent feature of the optimum in all three
regimes, and its origin is not the classical one: it arises from the
erosion of informational rents as types approach the dividing curve,
not from non-monotonicity of the relaxed solution. It is therefore
distinct from the U-condition of \citet{AM2010}, and it survives even
though the relaxed solution is strictly monotone and locally
implementable everywhere.

Our second contribution is a sufficiency theory for the class. A
scalar multiplier on the binding incentive constraint does not
suffice, because at the optimum a whole \emph{family} of global
constraints binds at once, one for each pooled type. We dualize the
family through an explicit weight, available in closed form, whose
boundary behavior is supplied by the first-order conditions of the
variational system itself. The certificate delivers
optimality and uniqueness among \emph{all} implementable
allocations, requires no restriction on the shape of the contract
and no linearity of the primitives, and extends to random
mechanisms: lotteries are dominated in every regime, and the jump
survives randomization.

The analysis is developed through a single guiding application:
procurement from a technology with a minimum efficient scale, a
two-parameter family in which the dividing curve is the efficient scale
itself and a single parameter moves the environment across the three
regimes. Each regime admits a closed-form solution in this family, the
horizontal case arising on the boundary of the monotone one as the
slope of the mirror vanishes. Underlying the trichotomy is a
complete taxonomy of forty qualitatively distinct configurations,
classified by the direction of each curve and whether the two
intersect; each is mapped to its forced shape,
solved in closed form, and certified, and the solutions are
collected in Appendix~\ref{sec:cases}. In two configurations the
certificate overturns a candidate that survives every numerical
check, by a margin of order $10^{-3}$. Two further applications with nonlinear curves --- optimal regulation and nonlinear pricing --- are solved by the same method. The ex post conditions of the certificates --- endogenous, as sufficiency conditions for non-convex problems must be --- are verified in each application, analytically where possible and numerically otherwise; we make no claim beyond the primitives so verified. Appendix~\ref{sec:outside} then runs the method on the numerical
example of \citet[Section~6]{schottmuller2015}, whose optimum the
corrigendum left open: the necessary conditions single out the
corrigendum's own dominating contract, and the certificate proves
it optimal among all implementable allocations --- the only
numerical step being the solution of the variational system, which
admits no closed form.

The paper proceeds as follows. Section~\ref{sec:Modelo} presents the
model, the relaxed solution, and the dividing curve.
Section~\ref{sec:MES} introduces the guiding application.
Section~\ref{sec:continuity} establishes the shape results.
Section~\ref{sec:magic} solves the three regimes of the family in
closed form, and Section~\ref{sec:general_apply} develops the
general method: the variational systems that locate the cutoff, the
pool and the jump, and the certificates that make their solutions
optimal among all implementable allocations.
Section~\ref{sec:decreasing} extends every result to the decreasing
configurations through a reflection dictionary.
Section~\ref{sec:discussion} discusses the resemblance to
multidimensional screening and the optimality of deterministic
contracts, and Section~\ref{sec:conclusion} concludes. The taxonomy,
the proofs, the applications, and the route for
problems outside the taxonomy appear in the appendices.

\section{Model}
\label{sec:Modelo}

A principal contracts with an agent whose preferences are
quasi-linear,
\begin{equation}
	V(q,\theta,t) = v(q,\theta) - t,
	\label{eq:agent_utility}
\end{equation}
where $q \in Q \subseteq \mathbb{R}_+$ is the decision variable and
$t \in \mathbb{R}$ the monetary transfer. The agent's type
$\theta \in \Theta = [\uth,\oth]$ is private information, drawn from
a distribution with cdf $\cdf(\theta)$ and continuous, strictly
positive density $\dens(\theta)$. The principal's payoff is
\begin{equation}
	\profit(q,t,\theta) = B(q,\theta) + t,
	\label{eq:profit}
\end{equation}
where $B(q,\theta)$ is the principal's benefit net of any
type-dependent cost.

\begin{assumption}
	\label{S1}
	$v(q,\theta),\,B(q,\theta) \in C^3$ and $v_\theta > 0$ on
	$Q\times\Theta$. The agent's outside option is zero.
\end{assumption}

The condition $v_\theta>0$ makes the agent's rent increasing in
$\theta$, so \eqref{eq:IR} binds only at $\uth$. It is not
essential: when the rent attains its minimum at an interior type,
the problem can still be solved, at the cost of introducing
intervals that link types with identical $v_\theta$ \citep{AVP2022};
non-monotone rents also arise in \citet{Jullien2000}, there from
type-dependent reservation utilities. We maintain $v_\theta>0$
throughout.

\subsection*{The principal's problem}

By the Revelation Principle \citep{Mye1979}, we restrict attention
to direct mechanisms $(q,t):\Theta\to Q\times\mathbb{R}$. The
agent's \textit{informational rent} is
\begin{equation}
	U(\theta\,;q) := v(q(\theta),\theta) - t(\theta),
	\label{eq:rent}
\end{equation}
and the principal solves
\begin{equation}
	\tag{$\profit$}
	\label{maxi}
	\max_{\{q(\cdot),\,t(\cdot)\}}
	\int_{\Theta}
	\profit(q(\theta),t(\theta),\theta)\,\dens(\theta)\,d\theta,
\end{equation}
subject to \textit{individual rationality}
\begin{equation}
	U(\theta\,;q) \geq 0
	\quad \forall\,\theta\in\Theta,
	\tag{IR}
	\label{eq:IR}
\end{equation}
and \textit{incentive compatibility}
\begin{equation}
	U(\theta\,;q) \geq v(q(\theta'),\theta) - t(\theta')
	\quad \forall\,\theta,\theta'\in\Theta.
	\tag{IC}
	\label{eq:IC}
\end{equation}
We call $q(\cdot)$ \textit{implementable} if some $t(\cdot)$ makes
$(q,t)$ satisfy \eqref{eq:IC}.

Standard arguments \citep{AMV2015} give the following.

\begin{lemma}
	\label{env1}
	If $U(\cdot\,;q)$ is differentiable at
	$\theta\in\mathrm{int}(\Theta)$,
	\begin{equation}
		\frac{d}{d\theta}U(\theta\,;q) =
		v_\theta(q(\theta),\theta).
		\label{eq:envelope}
	\end{equation}
\end{lemma}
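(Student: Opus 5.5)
The argument will be the standard two-sided difference-quotient (envelope) argument, and it uses only \eqref{eq:IC}; no single-crossing or monotonicity of $q$ is needed. Fix an interior type $\theta$ at which $U(\cdot\,;q)$ is differentiable, and let $h\neq 0$ be small enough that $\theta+h\in\Theta$. The idea is to squeeze the difference quotient of $U$ between two quantities that both converge to $v_\theta(q(\theta),\theta)$.

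Applying \eqref{eq:IC} with true type $\theta+h$ and report $\theta$, the transfer cancels and we get
\begin{equation*}
	U(\theta+h\,;q)-U(\theta\,;q) \;\geq\; v(q(\theta),\theta+h)-v(q(\theta),\theta).
\end{equation*}
This lower bound holds for every implementable allocation, because $t(\theta)$ enters both sides identically. Dividing by $h>0$ and letting $h\downarrow 0$ gives
\begin{equation*}
	U'(\theta)\geq v_\theta(q(\theta),\theta).
\end{equation*}
Dividing by $h<0$ reverses the inequality, and letting $h\uparrow 0$ gives
\begin{equation*}
	U'(\theta)\leq v_\theta(q(\theta),\theta).
\end{equation*}
Both one-sided limits exist and equal $U'(\theta)$ by the assumed differentiability. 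On the right-hand side, the point $q(\theta)$ is held fixed, so $h\mapsto v(q(\theta),\theta+h)$ is differentiable at $0$ by Assumption~\ref{S1} ($v\in C^3$), with derivative $v_\theta(q(\theta),\theta)$. Combining the two inequalities yields \eqref{eq:envelope}.

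The main point needing care is to avoid using the other direction of \eqref{eq:IC} (type $\theta$ reporting $\theta+h$). That direction would produce the bound $v(q(\theta+h),\theta+h)-v(q(\theta+h),\theta)$, which involves $q(\theta+h)$. Its limit requires continuity of $q$ at $\theta$, and that is unavailable here because the paper explicitly allows jumps. Using only the deviation \emph{to} $\theta$ from both sides, with $h$ of either sign, keeps $q$ evaluated at the fixed point $q(\theta)$ and sidesteps the issue. Since the identity is claimed only at points of differentiability, no absolute-continuity argument is needed; the remark that $U$ is Lipschitz, hence a.e.\ differentiable, can be left to the later uses of the lemma.
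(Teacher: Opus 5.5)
Your proof is correct and is the standard envelope argument the paper relies on when it cites \citet{AMV2015} (and later \citet{MilgromSegal2002}): by \eqref{eq:IC}, type $\theta$ minimizes $s\mapsto U(s\,;q)-v(q(\theta),s)$, so this function has zero derivative at $\theta$ wherever $U$ is differentiable. You were also right to use only deviations \emph{to} $\theta$, which keeps $q$ fixed at $q(\theta)$ and so needs no continuity of $q$, since the paper allows jumps.
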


Under Assumption~\ref{S1}, \eqref{eq:envelope} together with
$U(\uth\,;q)=0$ pins down the transfer that implements $q(\cdot)$:
\begin{equation}
	t(\theta) = v(q(\theta),\theta) -
	\int_{\uth}^{\theta} v_\theta(q(\tth),\tth)\,d\tth.
	\label{eq:transfer}
\end{equation}
Global implementability is captured by the \textit{global incentive
function}
\begin{equation}
	\gif{\theta_1}{\theta_2}{q} :=
	\int_{\theta_1}^{\theta_2}
	\int_{q(\theta_1)}^{q(\tth)}
	v_{q\xi}(\xi,\tth)\,d\xi\,d\tth,
	\tag{GIF}
	\label{eq:GIF}
\end{equation}
and $q(\cdot)$ is implementable iff
$\gif{\theta_1}{\theta_2}{q}\geq 0$ for all
$\theta_1,\theta_2\in\Theta$ \citep{AMV2015}.

Plugging \eqref{eq:transfer} into \eqref{maxi} and integrating by
parts --- using that \eqref{eq:IR} binds only at $\uth$ --- the
principal's problem becomes
\begin{equation}
	\tag{$\profit_2$}
	\label{P2}
	\max_{q(\cdot)}
	\int_{\uth}^{\oth}
	\vs(q(\theta),\theta)\,\dens(\theta)\,d\theta
	\quad\text{s.t.\ $q(\cdot)$ implementable,}
\end{equation}
where the \textit{virtual surplus} is
\begin{equation}
	\vs(q,\theta) = v(q,\theta) + B(q,\theta) -
	\frac{1-\cdf(\theta)}{\dens(\theta)} v_\theta(q,\theta).
	\label{eq:vs}
\end{equation}

Dropping implementability gives the \textit{relaxed problem}
\begin{equation}
	\tag{$\profit_R$}
	\label{Pi3}
	\max_{q(\cdot)}
	\int_{\uth}^{\oth}
	\vs(q(\theta),\theta)\,\dens(\theta)\,d\theta,
\end{equation}
with necessary optimality condition
\begin{equation}
	\vs_q(q,\theta) = 0.
	\tag{EE}
	\label{eq:euler}
\end{equation}

\begin{assumption}
	\label{S2}
	$\vs(\cdot,\theta)$ is strictly concave for every
	$\theta\in\Theta$.
\end{assumption}
Under Assumption~\ref{S2}, the first-order condition
\eqref{eq:euler} is sufficient and defines a unique relaxed solution
$\qrel(\theta)$.

\subsection*{Failure of single crossing}

\begin{definition}[SMC]
	The \textit{Spence--Mirrlees condition} holds when
	$v_{q\theta}$ has constant sign: $v_{q\theta}>0$ on
	$Q\times\Theta$ ($\csplus$) or $v_{q\theta}<0$ ($\csminus$).
\end{definition}

We relax the SMC as follows.

\begin{assumption}
	\label{S3}
	$v_{qq\theta}(q,\theta)$ has constant sign for $q>0$, and
	$v_{q\theta\theta}(q,\theta)$ either has constant sign or
	vanishes identically. The \emph{dividing curve} $\qlim(\cdot)$,
	defined implicitly by $v_{q\theta}(q,\theta)=0$, is accordingly
	either strictly monotone or constant.
\end{assumption}

The curve $\qlim$ splits $\Theta\times Q$ into two regions where
$v_{q\theta}$ has opposite signs. The two cases in
Assumption~\ref{S3} correspond to the regimes of the trichotomy:
when $\qlim$ is strictly monotone, the relaxed solution may or may
not cross it, and the optimum may or may not jump; when $\qlim$ is
constant --- the \emph{horizontal case},
$v_{q\theta\theta}\equiv 0$ --- a jump is unavoidable whenever the
relaxed solution crosses the curve
(Section~\ref{sec:continuity}). The economic content of $\qlim$ and
the geometry of the two regions are developed in
Section~\ref{sec:MES}.

Although Assumption~\ref{S3} allows $v_{q\theta}$ to change sign, we
require the transformed problem to behave more regularly:
$\vs_{q\theta}$ must have constant sign. This is a joint restriction
on preferences, the principal's payoff, and the type distribution.

\begin{assumption}
	\label{S4}
	$\vs_{q\theta}(q,\theta)$ has constant sign for $q>0$.
\end{assumption}

The sign conditions in Assumptions~\ref{S3} and \ref{S4} are imposed
for $q>0$. Where the allocation sits at the boundary of $Q$ the
pointwise optimum is a corner and the first-order conditions do not
govern it, so no restriction on the derivatives is needed there. The
proofs below invoke these signs only over ranges bounded by values
of the contract itself, and so are unaffected.

Under Assumptions~\ref{S2}--\ref{S4}, $\qrel(\theta)$ is strictly
monotone, with the sign of $\vs_{q\theta}$ determining its
direction. Together with the signs of $v_{qq\theta}$ and
$v_{q\theta\theta}$, this fixes the direction of both curves, which
side of $\qlim$ the region $\csplus$ lies on, and whether the two
can meet. The combinations are finite: there are forty qualitatively
distinct configurations, twenty-eight with both curves strictly
monotone and twelve with $\qlim$ constant.
Appendix~\ref{sec:cases} enumerates them, exhibits a family of
primitives realizing each one, and collects the corresponding
solutions.

\subsection{Two Necessary Conditions}
\label{sec:necesarias}   % label conservado: refs existentes apuntan aqui

Applying \eqref{eq:GIF} locally, any implementable decision must be
non-decreasing in $\csplus$ and non-increasing in $\csminus$.

\begin{lemma}[Local Monotonicity, LMC]
	\label{lmc}
	Let $q$ be c\`adl\`ag at $\theta$. If
	$v_{q\theta}(q(\theta),\theta)>0$ (resp.\ $<0$), then $q$ is
	non-decreasing (resp.\ non-increasing) on
	$(\theta,\theta+\varepsilon)$ for $\varepsilon>0$ small.
\end{lemma}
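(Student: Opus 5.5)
The plan is to derive the lemma from the implementability criterion $\gif{\theta_1}{\theta_2}{q}\ge 0$, applied with $\theta_1=\theta$ and $\theta_2=\theta'\in(\theta,\theta+\varepsilon)$, together with the right-continuity of $q$ at $\theta$. I treat the case $v_{q\theta}(q(\theta),\theta)>0$; the other case follows by reversing inequalities. Since $v_{q\theta}$ is continuous (Assumption~\ref{S1}) and $q$ is càdlàg, there are $\varepsilon>0$ and $\delta>0$ with $v_{q\theta}(\xi,\tth)>0$ for all $\tth\in[\theta,\theta+\varepsilon]$ and all $\xi$ with $|\xi-q(\theta)|\le\delta$. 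Shrinking $\varepsilon$ if needed, right-continuity also gives $|q(\tth)-q(\theta)|<\delta$ on $[\theta,\theta+\varepsilon)$. The whole argument then lives in the box where the cross-partial is strictly positive.

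The key step is to apply \eqref{eq:GIF} between two points $a<b$ of $(\theta,\theta+\varepsilon)$, not only from $\theta$. I argue by contradiction. Suppose $q$ is not non-decreasing there, so there exist $a<b$ with $q(b)<q(a)$. I use the two-sided form of the criterion: $\gif{b}{a}{q}\ge 0$, i.e.
\[
\int_b^a\int_{q(b)}^{q(\tth)} v_{q\xi}(\xi,\tth)\,d\xi\,d\tth
=-\int_a^b\int_{q(b)}^{q(\tth)} v_{q\xi}\,d\xi\,d\tth\ \ge 0 ,
\]
together with $\gif{a}{b}{q}=\int_a^b\int_{q(a)}^{q(\tth)} v_{q\xi}\,d\xi\,d\tth\ge0$. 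Adding the two gives
\[
\int_a^b\int_{q(a)}^{q(b)} v_{q\xi}(\xi,\tth)\,d\xi\,d\tth\ \ge 0 .
\]
This is the usual pairwise IC. Here the inner integral runs downward over a region where $v_{q\xi}>0$, so the left side is strictly negative, which is a contradiction. Hence $q(b)\ge q(a)$. The left endpoint $\theta$ can be included by the same computation, with $a=\theta$.

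The main obstacle is the measurability and integrability needed to make sense of the iterated integrals and to add them, together with making sure every $\xi$ between $q(a)$ and $q(b)$ stays inside the $\delta$-box. The box condition follows because both values lie within $\delta$ of $q(\theta)$ and the interval between them is convex. Integrability follows because a càdlàg function is bounded and Borel on a compact interval, and $v_{q\theta}$ is continuous. The strict sign is immediate, since $v_{q\xi}$ is bounded below by a positive constant on the compact box and $b-a>0$, $q(a)-q(b)>0$. If the paper's version of \eqref{eq:GIF} is intended only for $\theta_1<\theta_2$, I will instead derive the pairwise inequality directly from \eqref{eq:IC} and \eqref{eq:transfer}. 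That amounts to adding the IC of type $a$ mimicking $b$ and of $b$ mimicking $a$, and rewriting $v(q(b),\cdot)-v(q(a),\cdot)$ across $[a,b]$ as the double integral of $v_{q\theta}$.
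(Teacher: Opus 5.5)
Your proof is correct and is essentially the paper's own argument: the paper only says the lemma follows from \emph{applying \eqref{eq:GIF} locally}, and your pairwise inequality $\int_a^b\int_{q(a)}^{q(b)} v_{q\xi}(\xi,\tilde\theta)\,d\xi\,d\tilde\theta\ge0$ on a box where $v_{q\theta}$ keeps its sign (secured by continuity of $v_{q\theta}$ and right-continuity of $q$ at $\theta$) is exactly that argument spelled out. The fallback you mention, adding the two (IC) inequalities directly, is the cleaner of your two routes, since it uses only the values $q(a)$ and $q(b)$ and so avoids any measurability question about $q$ on $[a,b]$ (which being c\`adl\`ag at the single point $\theta$ does not supply).
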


Lemma \ref{lmc} separates two sources of pooling, which
we distinguish throughout.

\begin{definition}[Structural and corner pooling]
	\label{def:pooling} Say that $q_1$ is \emph{partially locally implementable} if it satisfies the LMC on a subset of $\Theta$ of
		positive measure, and \emph{nowhere locally implementable} if the
		LMC fails at every type. Full pooling is then of two kinds. If
		$q_1$ is nowhere locally implementable, the LMC rules out every
		non-constant allocation and the full pool is the only
		implementable choice: we call it \emph{structural pooling}. If
		$q_1$ is partially locally implementable, the LMC no longer
		forces a constant, and the full pool is one candidate rather than
		the answer: we call it \emph{corner pooling}.
\end{definition}

The distinction fixes what is left to determine in
each case. In the structural case, only the level: the pool is
pinned by $\int_\Theta f_q(\bar q,\theta)\,p(\theta)\,d\theta=0$,
and no comparison is needed. In the partially locally implementable
case there is something to decide: the shape results of
Section~\ref{sec:continuity} govern it, and corner pooling is
evaluated alongside the remaining corners of
Section~\ref{sec:general_apply}.

Under the SMC this is enough for global implementability; without
it, global IC constraints can bind even for monotone decisions
contained in a single region \citep{AM2010,AMV2015}. Rewriting
\eqref{eq:GIF} via Lemma~\ref{env1}, the downward IC condition for
$\theta_1<\theta_2$ becomes
\begin{equation}
	0 \leq \gif{\theta_1}{\theta_2}{q} =
	\int_{\theta_1}^{\theta_2}
	\bigl\{v_\theta(q(\tth),\tth) -
	v_\theta(q(\theta_1),\tth)\bigr\}\,d\tth,
	\label{eq:iso}
	\tag{ISO($\theta_1,\theta_2$)}
\end{equation}
the \emph{isoperimetric constraint}. When it binds, the
semi-relaxed problem is
\begin{equation}
	\tag{$\profit_{ISO}$}
	\label{pi_iso}
	\max_{q(\cdot)} \int_{\uth}^{\oth}
	\vs(q(\theta),\theta)\,\dens(\theta)\,d\theta
	\quad\text{s.t.}\quad \eqref{eq:iso} \text{ with equality,}
\end{equation}
whose solution is characterized as follows.

\begin{lemma}[\cite{AMV2015,Petrov1968}]
	\label{teo11}
	The solution $\qiso(\theta)$ of \eqref{pi_iso} satisfies
	\begin{equation}
		\vs_q(\qiso(\theta),\theta)\,\dens(\theta) +
		\lambda\,v_{q\theta}(\qiso(\theta),\theta) = 0,
		\label{eq:teo}
	\end{equation}
	where $\lambda\in\mathbb{R}$ is chosen so that \eqref{eq:iso}
	holds with equality.
\end{lemma}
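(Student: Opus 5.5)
The plan is to treat \eqref{pi_iso} as an isoperimetric problem in the calculus of variations and apply the Lagrange multiplier rule for an integral constraint, in the form given by \citet{Petrov1968}.

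The first step is to put the constraint in integral form over the whole type space, with the same integration variable as the objective. Fix $\theta_1<\theta_2$ and write $\bar q:=q(\theta_1)$, treated as a parameter. Then
\begin{equation*}
	\gif{\theta_1}{\theta_2}{q}=\int_{\uth}^{\oth}\mathbf 1_{[\theta_1,\theta_2]}(\tth)\,\bigl\{v_\theta(q(\tth),\tth)-v_\theta(\bar q,\tth)\bigr\}\,d\tth .
\end{equation*}
Its integrand depends on $q$ only pointwise, through $v_\theta(q(\tth),\tth)$. The objective $\int \vs(q,\theta)\dens\,d\theta$ is also a pointwise functional. So the problem has the form: maximize $\int F(q(\theta),\theta)\,d\theta$ subject to $\int G(q(\theta),\theta)\,d\theta=c$.

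The second step applies the multiplier rule. Assume the constraint is regular, meaning $v_{q\theta}(\qiso,\cdot)\not\equiv 0$ on a set of positive measure in $[\theta_1,\theta_2]$. Then there is $\lambda\in\mathbb R$ such that $\qiso$ is a stationary point of the unconstrained functional $\int\{F+\lambda G\}\,d\theta$. Since neither functional involves $q'$, the Euler--Lagrange equation reduces to a pointwise first-order condition. To derive it, I will perturb $\qiso$ by $\varepsilon\eta+\delta\zeta$ with bounded $\eta,\zeta$ supported away from $\theta_1$. The direction $\zeta$ is chosen with $\int v_{q\theta}(\qiso,\tth)\zeta\,d\tth\neq0$, which regularity guarantees. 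The implicit function theorem then makes $\delta=\delta(\varepsilon)$ restore the constraint, and differentiating the objective at $\varepsilon=0$ gives
\begin{equation*}
	\int\bigl\{\vs_q(\qiso,\theta)\dens(\theta)+\lambda\,\mathbf 1_{[\theta_1,\theta_2]}v_{q\theta}(\qiso,\theta)\bigr\}\eta\,d\theta=0
\end{equation*}
for every admissible $\eta$. The fundamental lemma then yields \eqref{eq:teo} almost everywhere, and continuity upgrades this to everywhere on $[\theta_1,\theta_2]$. Outside that interval the indicator vanishes, so the condition is $\vs_q=0$, i.e.\ $\qiso=\qrel$. This is consistent with \eqref{eq:teo} read with the convention that the $\lambda$-term lives on the constraint's support. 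The multiplier $\lambda$ is then fixed by substituting the pointwise solution back into \eqref{eq:iso} with equality.

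The main obstacle is the role of the endpoint value $\bar q=q(\theta_1)$. It enters both the constraint and the integrand, so a perturbation at the single point $\theta_1$ has measure zero in the objective but moves $v_\theta(\bar q,\cdot)$ over the whole interval. I would handle this in one of two ways. The first is to restrict admissible variations to those leaving $q(\theta_1)$ fixed; this is all \eqref{eq:teo} requires, since it is an interior condition. The second is to treat $\bar q$ as a separate scalar control, whose own first-order condition gives a transversality condition that is not part of the lemma's claim.

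A secondary issue is existence of a solution to \eqref{pi_iso}, which the lemma presupposes. Given a solution, uniqueness of $\qiso(\theta)$ for fixed $\lambda$ follows where $\vs_{qq}\dens+\lambda v_{qq\theta}<0$, using Assumptions~\ref{S2} and \ref{S3} with the appropriate sign of $\lambda$.
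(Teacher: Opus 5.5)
Your proposal is correct and follows essentially the route the paper takes. The lemma is cited rather than proved there, but the same condition is rederived in the proof of Proposition~\ref{prop_variacional}: a Lagrange multiplier rule for the single integral constraint, variations that leave the anchor level $q(\theta_1)$ fixed, and the fundamental lemma of the calculus of variations, which applies pointwise because no derivative of $q$ enters. That is exactly your plan, except that you derive the multiplier by hand through the implicit function theorem where the paper invokes the Banach-space theorem. Your normality hypothesis on $v_{q\theta}(\qiso,\cdot)$ and your remark that the $\lambda$-term in \eqref{eq:teo} is active only on $[\theta_1,\theta_2]$ make explicit what the statement leaves implicit; neither changes the argument.
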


\section{A Guiding Application: Screening
	with Threshold Technologies}
\label{sec:MES}

Single crossing fails naturally in technologies where the agent's
advantage is scale-dependent: below some threshold the ranking of
types by marginal valuation runs one way, above it the ranking
reverses. Installed capacity, setup costs, minimum efficient scale
and learning curves all produce this geometry, and non-constant
returns of this kind are common in industrial organization though
much less studied in screening \citep{KNZ2010}.

Several strands document what breaks when single crossing is
dropped. With finitely many types, \citet{And2008} shows that
individual rationality may bind for every consumer, that the whole
menu may be undistorted, and that the ordering of quantities may
reverse relative to the first best. \citet{KNZ2010} show that
capacity constraints modeled through convex costs can produce a
cycle in the solution graph, making the standard screening solution
non-implementable. \citet{ChaoNahata2015} resolve the direction of
distortion with two types and crossing linear demands: oversizing
--- quantities above the efficient level, impossible under single
crossing --- and overall efficiency each arise on a non-negligible
region of the parameter space. Two papers bridge the polar cases:
\citet{KNZ2014} introduce a Hotelling--Spence--Mirrlees condition
connecting the vertical and horizontal extremes for discrete types,
and \citet{ChenIshidaSuen2022} analyze signaling under
double-crossing preferences with a continuum of types, obtaining a
threshold below which types separate and above which they pool,
with a gap in between.

What is missing is the screening counterpart in the continuum,
where single crossing fails along a curve. This section presents
one such family, which spans the three regimes and which we refer
to as the \emph{minimum-efficient-scale} (MES) family. The minimum
efficient scale of a technology is the smallest output at which
long-run average cost attains its minimum: below it, expansion is
most valuable to the types best placed to exploit the remaining
scale economies; above it, the ranking of types by their marginal
willingness to expand reverses. The dividing curve $\qlim$ plays
exactly this role in the family: the quantity at which every type
values marginal output equally --- the boundary between the two
rankings, and the locus along which single crossing
fails.\footnote{The name is an analogy rather than a literal one:
	$\qlim$ is not a point on a long-run average cost curve but the
	locus where $v_{q\theta}$ changes sign, and for $\alpha<0$ it
	lies below zero over part of the type space, so the reading
	applies only where the curve is interior.}

\subsection*{The environment}

A principal procures a quantity $q\ge0$ from a firm whose
productivity type $\theta\sim U[0,1]$ is private information. The
firm's technology has a minimum efficient scale that depends on the
type. Gross utility is
\begin{equation}
	v(q,\theta) = \frac{\theta q^2}{2} -
	\Bigl(\alpha + \frac{\gamma}{2}\theta\Bigr)\theta q + q +
	\theta,
	\qquad
	\alpha\in\mathbb{R},\;
	\gamma\in[0,1],
	\tag{MES}
	\label{eq:MES_v}
\end{equation}
so that $v_{q\theta}(q,\theta) = q-\alpha-\gamma\theta$ and the
dividing curve is
\begin{equation}
	\qlim(\theta) = \alpha + \gamma\theta .
	\label{eq:MES_qlim}
\end{equation}

Below $\qlim(\theta)$ a higher type values marginal increments
less; above it, more. A firm designed for large runs is
comparatively worse at small ones; its advantage appears only past
the threshold. Here $\alpha$ is the scale floor common to all types
and $\gamma$ measures how fast the efficient scale grows with the
type --- installed capacity proportional to productivity. Neither
the productivity nor the threshold is observed: the principal
cannot tell which side of $\qlim(\theta)$ a supplier operates on.

Taking the principal's cost to be
\begin{equation}
	\cost(q,\theta) = \theta q^2 +
	\Bigl(1+\alpha-\theta-2\alpha\theta+
	\gamma\theta-\frac{3\gamma}{2}\theta^2\Bigr)q + 2\theta - 1,
	\label{eq:MES_C}
\end{equation}
the virtual surplus is canonical, $\vs(q,\theta)=-q^2/2+\theta q$,
so the relaxed solution is
\begin{equation}
	\qrel(\theta) = \theta .
	\label{eq:MES_qrel}
\end{equation}
The family satisfies Assumptions~\ref{S1}--\ref{S4} whenever
$\alpha+\gamma<\sqrt2$, which we assume throughout: this bounds the
range of $\qlim$ on $[0,1]$ and keeps $v_\theta>0$, so
\eqref{eq:IR} binds only at $\theta=0$.
\footnote{Assumptions~\ref{S2} and \ref{S4} hold unconditionally,
	since $\vs=-q^2/2+\theta q$ regardless of $(\alpha,\gamma)$,
	giving $\vs_{qq}=-1$ and $\vs_{q\theta}=1$. So does
	Assumption~\ref{S3}, for every $\gamma\in[0,1]$:
	$v_{qq\theta}=1$ and $v_{q\theta\theta}=-\gamma$, vanishing
	identically at the horizontal corner $\gamma=0$. Only
	Assumption~\ref{S1}'s $v_\theta>0$ binds:
	$v_\theta(q,\theta)=q^2/2-\qlim(\theta)q+1$ has
	$v_\theta(0,\theta)=1$, so positivity can only fail at the
	vertex $q=\qlim(\theta)$, where it equals
	$1-\tfrac12\qlim(\theta)^2$. Since $\qlim$ is non-decreasing,
	this is positive throughout $\Theta$ iff
	$\qlim(\oth)=\alpha+\gamma<\sqrt2$, exactly the bound imposed
	above ($\alpha>-\sqrt2$ is never binding in the ranges used).}

\subsection*{One family, three regimes}

The pair $(\alpha,\gamma)$ spans the trichotomy inside the single
family \eqref{eq:MES_v}, and each regime admits a closed-form
solution (Section~\ref{sec:magic}).

\begin{itemize}
	\item \textbf{No crossing: continuity is forced.} For
	$(\alpha,\gamma)=(-a,1)$ with $a>0$, the dividing curve
	$\qlim=\theta-a$ runs parallel below $\qrel=\theta$, so
	$\qrel$ lies entirely in $\csplus$, the region where
	$v_{q\theta}>0$, and there is no crossing.

	\item \textbf{Horizontal crossing: the jump is unavoidable.}
	For $\gamma=0$ and $\alpha=a\in(0,\tfrac{\sqrt2}{2})$ the
	dividing curve is flat, $\qlim\equiv a$, crossed at
	$\theta^*=a$.

	\item \textbf{Monotone crossing: the jump is a choice.} For
	$\gamma\in(0,1)$ and $\alpha\in(0,1-\gamma)$ the dividing curve
	is strictly increasing and crosses $\qrel$ at the interior
	$\theta^*$.
\end{itemize}

Each regime has distinct observable implications. In the continuous
regime the menu pools low types at a common quantity and a common
marginal tariff. In the horizontal regime the allocation is
discontinuous: at the jump, type $\theta_1$ is indifferent among
all quantities between the left- and right-hand limits, so its
allocation is the whole interval --- a convex-valued correspondence
in the sense of \citet{AM2010} --- while no other type is assigned
a quantity strictly inside it. In the monotone-crossing regime the
two are combined --- a jump followed by a moving branch --- and
whether the highest types end up screened at all depends on
$\gamma$. Sections~\ref{sec:continuity} and~\ref{sec:magic} show
why and how each regime resolves.

\section{On the Shape of the Optimal Contract}
\label{sec:continuity}

\subsection{\textbf{Continuity}: no crossing}

We first establish that in the no-crossing regime any optimal
contract must be continuous --- the result that forces the
flat-then-branch shape of the solution below. Sufficient conditions
for continuity appear in \citet[Propositions~3
and~4]{schottmuller2015}; their proofs, however, are
incomplete,\footnote{Both rely on the sixth bullet point of
	Theorem~1 in \citet{schottmuller2015}, which is affected by the
	corrigendum \citet{AVP2022}.} and we are not aware of an
alternative argument. Theorem~\ref{proposicion} below provides a
sufficiency result based on an additional condition on primitives,
Assumption~H.

The proof relies on the following lemma, which shows that binding IC
constraints cannot overlap when $\qlim$ is increasing and $\qrel$
lies in $\csplus$. An analogous lemma appears in
\citet[Lemma 5]{schottmuller2015}; we provide a simpler and more
direct proof.
\begin{lemma}
	\label{lem}
	Let Assumption~\ref{S3} hold with
	$v_{qq\theta}>0$ and $v_{q\theta\theta}<0$,
	so that $\qlim$ is increasing with $\csplus$
	above, and let $\qrel$ be increasing with
	graph contained in $\csplus$. Let $q$ be an
	implementable non-decreasing decision with
	$\gif{\theta_1}{\theta_2}{q}=0$ for some
	$\theta_1<\theta_2$ in $\Theta$. Then:
	\begin{enumerate}[(i)]
		\item $\gif{\theta_1'}{\theta_2'}{q}>0$
		for any $\theta_1'\in(\theta_1,\theta_2]$,
		$\theta_2'>\theta_2$ with
		$q(\theta_1)<q(\theta_1')$, where for
		$\theta_1'=\theta_2$ we additionally require
		$q$ to be continuous at $\theta_2$.
		\item $\gif{\theta_1'}{\theta_2'}{q}>0$
		for any $\theta_2'\in[\theta_1,\theta_2)$,
		$\theta_1'<\theta_1$ with
		$q(\theta_1')<q(\theta_1)$, where for
		$\theta_2'=\theta_1$ we additionally require
		$q$ to be continuous at $\theta_1$.
	\end{enumerate}
\end{lemma}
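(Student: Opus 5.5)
The argument rests on one algebraic identity, which lets us add and subtract isoperimetric constraints \eqref{eq:iso}, combined with the sign of $v_{q\theta\theta}$. Write $h(t):=v_\theta(q(t),t)-v_\theta(q(\theta_1),t)$, so that $\gif{\theta_1}{s}{q}=\int_{\theta_1}^{s}h$ by \eqref{eq:iso}. For part (i), fix $\theta_1'\in(\theta_1,\theta_2]$ with $q(\theta_1)<q(\theta_1')$ and set
\[
D(t):=v_\theta(q(\theta_1),t)-v_\theta(q(\theta_1'),t)=-\int_{q(\theta_1)}^{q(\theta_1')}v_{q\theta}(\xi,t)\,d\xi .
\]
Then $v_\theta(q(t),t)-v_\theta(q(\theta_1'),t)=h(t)+D(t)$. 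Differentiating,
\[
D'(t)=-\int_{q(\theta_1)}^{q(\theta_1')}v_{q\theta\theta}(\xi,t)\,d\xi>0,
\]
because $v_{q\theta\theta}<0$ and the interval of integration is nondegenerate. So $D$ is strictly increasing. This is exactly where the hypothesis $q(\theta_1)<q(\theta_1')$ enters.

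Next, split the integrals at $\theta_1'$ and $\theta_2$. From $\gif{\theta_1}{\theta_2}{q}=\gif{\theta_1}{\theta_1'}{q}+\int_{\theta_1'}^{\theta_2}h$ and $\int_{\theta_1'}^{\theta_2}h=\gif{\theta_1'}{\theta_2}{q}-\int_{\theta_1'}^{\theta_2}D$, the binding constraint gives
\[
0=\gif{\theta_1}{\theta_1'}{q}+\gif{\theta_1'}{\theta_2}{q}-\int_{\theta_1'}^{\theta_2}D .
\]
Both $\Phi$ terms are nonnegative by implementability, so $\int_{\theta_1'}^{\theta_2}D\ge 0$. Expanding $\gif{\theta_1'}{\theta_2'}{q}=\int_{\theta_1'}^{\theta_2'}(h+D)$ in the same way yields
\[
\gif{\theta_1'}{\theta_2'}{q}=\gif{\theta_1}{\theta_2'}{q}+\gif{\theta_1'}{\theta_2}{q}+\int_{\theta_2}^{\theta_2'}D(t)\,dt .
\]
The first two terms are $\ge0$. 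It therefore suffices to show $D>0$ on $(\theta_2,\theta_2']$.

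When $\theta_1'<\theta_2$, this follows at once. A strictly increasing $D$ with $\int_{\theta_1'}^{\theta_2}D\ge0$ must satisfy $D(\theta_2)>0$, and hence $D>0$ to the right of $\theta_2$.

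The main obstacle is the endpoint case $\theta_1'=\theta_2$, where that integral is vacuous. Here I would use the continuity hypothesis. The map $s\mapsto\gif{\theta_1}{s}{q}$ is nonnegative and vanishes at $s=\theta_2$, so its left derivative there satisfies $h(\theta_2^-)\le 0$. Continuity of $q$ at $\theta_2$ gives $h(\theta_2^-)=h(\theta_2)=v_\theta(q(\theta_2),\theta_2)-v_\theta(q(\theta_1),\theta_2)=-D(\theta_2)$. Hence $D(\theta_2)\ge0$, and strict monotonicity gives $D>0$ on $(\theta_2,\theta_2']$.

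I expect monotonicity of $q$ and the placement of $\qrel$ in $\csplus$ to matter only insofar as they guarantee the ordering $q(\theta_1)<q(\theta_1')$ and the regularity of $h$ near $\theta_2$. The core inequality uses only implementability and $v_{q\theta\theta}<0$.

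Part (ii) is the mirror image. Take $\theta_1'<\theta_1$ with $q(\theta_1')<q(\theta_1)$ and $\theta_2'\in[\theta_1,\theta_2)$, and define $\tilde D(t):=v_\theta(q(\theta_1),t)-v_\theta(q(\theta_1'),t)$. Now $\tilde D'(t)=\int_{q(\theta_1')}^{q(\theta_1)}v_{q\theta\theta}<0$, so $\tilde D$ is strictly decreasing. The same add-and-subtract identity expresses $\gif{\theta_1'}{\theta_2'}{q}$ as a sum of nonnegative $\Phi$ terms plus $\int_{\theta_1'}^{\theta_1}\tilde D$.

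Positivity of that last integral should come from the binding constraint, exactly as $\int_{\theta_1'}^{\theta_2}D\ge 0$ did in part (i). The alternative route is to read $\tilde D\ge0$ at $\theta_1$ off the vanishing one-sided derivative of $\gif{\theta_1}{\cdot}{q}$ at its lower endpoint. In the boundary case $\theta_2'=\theta_1$, continuity at $\theta_1$ is needed to identify that derivative, mirroring the endpoint case of part (i).
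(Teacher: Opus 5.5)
Your part (i) is correct, and for $\theta_1'<\theta_2$ it obtains the decisive sign by a different route from the paper. The underlying identity is the same: the paper's $\gif{\theta_1}{\theta_4}{q}=A-C$ is your expansion of $\gif{\theta_1'}{\theta_2'}{q}$, solved for the other term once the claimed pair is also assumed to bind. The difference is how $D(\theta_2)\ge0$ is reached. The paper gets it pointwise, from the left derivative of $\gif{\theta_1}{\cdot}{q}$ at $\theta_2$, and then needs convexity ($v_{qq\theta}>0$) and monotonicity of $q$ to pass from $q^-(\theta_2)$ to $q(\theta_1')$. You get $D(\theta_2)>0$ by averaging, $\int_{\theta_1'}^{\theta_2}D=\gif{\theta_1}{\theta_1'}{q}+\gif{\theta_1'}{\theta_2}{q}\ge0$, which uses neither. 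Only your endpoint case needs the derivative, and there you use it exactly as the paper does.

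Part (ii) is where the proof has a genuine gap. It is not the mirror image of (i), because $\Phi$ is not symmetric in its arguments: the reference level is frozen at the first one. The correct expansions are
\[
\gif{\theta_1'}{\theta_2'}{q}=\gif{\theta_1'}{\theta_1}{q}+\gif{\theta_1}{\theta_2'}{q}+\int_{\theta_1}^{\theta_2'}\tilde D,
\qquad
\gif{\theta_1'}{\theta_2'}{q}=\gif{\theta_1'}{\theta_2}{q}+\gif{\theta_1}{\theta_2'}{q}-\int_{\theta_2'}^{\theta_2}\tilde D,
\]
the second using $\gif{\theta_1}{\theta_2}{q}=0$.

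No expansion of the form ``nonnegative $\Phi$ terms plus $\int_{\theta_1'}^{\theta_1}\tilde D$'' exists. That integral equals $\gif{\theta_1'}{\theta_1}{q}+\gif{\theta_1}{\theta_1'}{q}$, so isolating it leaves the upward constraint $\gif{\theta_1}{\theta_1'}{q}$ with a minus sign, plus $\int_{\theta_1}^{\theta_2'}\tilde D$ of unknown sign. What the second expansion actually needs is $\tilde D(\theta_2')\le0$: at $\theta_2'$ rather than $\theta_1$, and with the opposite sign to the one you propose. Neither of your sources supplies it. The given binding constraint only yields $\int_{\theta_1}^{\theta_2}\tilde D=\gif{\theta_1'}{\theta_2}{q}-\gif{\theta_1'}{\theta_1}{q}$, which has no sign. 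The one-sided derivative of $\gif{\theta_1}{\cdot}{q}$ at its lower endpoint is $v_\theta(q(\theta_1^+),\theta_1)-v_\theta(q(\theta_1),\theta_1)$, which does not involve $q(\theta_1')$ at all.

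The missing idea is the one the paper builds in by proving a single claim in which \emph{both} pairs bind: argue by contradiction. Suppose $\gif{\theta_1'}{\theta_2'}{q}=0$.
\begin{itemize}
\item When $\theta_2'>\theta_1$, the first expansion gives $\int_{\theta_1}^{\theta_2'}\tilde D\le0$.
\item When $\theta_2'=\theta_1$, the left derivative of $\gif{\theta_1'}{\cdot}{q}$ at $\theta_1$, together with continuity of $q$ there, gives $\tilde D(\theta_1)\le0$.
\end{itemize}
Since $\tilde D$ is strictly decreasing, either way $\tilde D<0$ on $(\theta_2',\theta_2]$. The second expansion then returns $\gif{\theta_1'}{\theta_2'}{q}>0$, a contradiction. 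With that repair your averaging route goes through for (ii) as well, still without using $v_{qq\theta}>0$ or monotonicity of $q$.
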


\begin{proof}
	Both parts follow from a single claim: there
	are no $\theta_1<\theta_2\leq\theta_3<\theta_4$
	in $(\uth,\oth)$ with
	$\gif{\theta_1}{\theta_3}{q}=
	\gif{\theta_2}{\theta_4}{q}=0$ and
	$q(\theta_1)<q(\theta_2)$, where in the
	case $\theta_2=\theta_3$ we additionally
	require $q$ to be continuous at the
	common type. Part~(i) is the
	relabelling in which the given binding pair
	is $(\theta_1,\theta_3)$ and the claimed one
	is $(\theta_2,\theta_4)$; part~(ii) is the
	relabelling in which the given pair is
	$(\theta_2,\theta_4)$ and the claimed one is
	$(\theta_1,\theta_3)$. In both, $q$
	implementable gives $\gif{}{}{q}\geq0$, so
	failure of the strict inequality means
	equality, and the hypothesis on $q$ becomes
	$q(\theta_1)<q(\theta_2)$.

	Suppose by contradiction that such
	$\theta_1<\theta_2\leq\theta_3<\theta_4$
	exist. Decompose
	$\gif{\theta_1}{\theta_4}{q}$ as:
	\begin{equation}
		\gif{\theta_1}{\theta_4}{q} =
		A + B + C + D,
		\label{eq:fis}
	\end{equation}
	where:
	\begin{align*}
		A &= \int_{\theta_3}^{\theta_4}
		\int_{q(\theta_1)}^{q(\theta_2)}
		v_{q\xi}(\xi,\tth)\,d\xi\,d\tth, \\
		B &= \int_{\theta_3}^{\theta_4}
		\int_{q(\theta_2)}^{q(\tth)}
		v_{q\xi}(\xi,\tth)\,d\xi\,d\tth
		= -C + \gif{\theta_2}{\theta_4}{q} = -C,\\
		C &= \gif{\theta_2}{\theta_3}{q}\geq 0,\\
		D &= \gif{\theta_1}{\theta_3}{q} -
		\gif{\theta_2}{\theta_3}{q} = -C.
	\end{align*}
	Hence $\gif{\theta_1}{\theta_4}{q} = A - C
	\leq A$. Since $\gif{\theta_1}{\cdot}{q}$
	attains its minimum value zero at the
	interior point $\theta_3$, its left
	derivative there is non-positive:
	\begin{equation*}
		\int_{q(\theta_1)}^{q^{-}(\theta_3)}
		v_{q\xi}(\xi,\theta_3)\,d\xi \leq 0.
	\end{equation*}
	The map $z\mapsto\int_{q(\theta_1)}^{z}
	v_{q\xi}(\xi,\theta_3)\,d\xi$ is convex
	(as $v_{qq\theta}>0$ by
	Assumption~\ref{S3}) and vanishes
	at $z=q(\theta_1)$; being non-positive at
	$z=q^{-}(\theta_3)$, it is non-positive on
	all of $[q(\theta_1),q^{-}(\theta_3)]$.
	Monotonicity of $q$ gives $q(\theta_2)\leq
	q^{-}(\theta_3)$ whenever $\theta_2<\theta_3$;
	when $\theta_2=\theta_3$, continuity of $q$
	at $\theta_3$ gives $q(\theta_2)=
	q^{-}(\theta_3)$. Hence
	\begin{equation*}
		\int_{q(\theta_1)}^{q(\theta_2)}
		v_{q\xi}(\xi,\theta_3)\,d\xi \leq 0.
	\end{equation*}
	Since $v_{q\theta\theta}<0$ by
	Assumption~\ref{S3}, the function
	$\tth\mapsto\int_{q(\theta_1)}^{q(\theta_2)}
	v_{q\xi}(\xi,\tth)\,d\xi$ is decreasing in
	$\tth$. Hence $A<0$ when
	$q(\theta_1)<q(\theta_2)$, so
	$\gif{\theta_1}{\theta_4}{q}<0$, contradicting
	implementability of $q$.
\end{proof}

The second ingredient controls the tension that smoothing a jump
creates between profits and incentives: raising the decision of
types just below the jump is costly, lowering it just above is
profitable, so profitability asks for a long right band, while
incentive compatibility across the band asks for a short one.

\begin{assumption}[H]\label{ass:H}
	For every $\theta\in\Theta$, the ratio
	\begin{equation*}
		R(q,\theta):=\frac{-\vs_q(q,\theta)}
		{v_{q\theta}(q,\theta)}
	\end{equation*}
	is strictly increasing in $q$ on
	$\{q:q\geq\qrel(\theta)\}$.
\end{assumption}

$R$ is well defined on this domain, since
$q\geq\qrel(\theta)>\qlim(\theta)$ gives $v_{q\theta}>0$, and
$\vs_q\leq0$ above $\qrel$ gives $R\geq0$.

Up to the density, $R$ is the distortion--rent ratio that
\citet{AM2010} equalize across discretely pooled types (their
eq.~(11)): the marginal profit cost of raising the decision per unit
of marginal rent created. Assumption~H states that this shadow price
increases along the fiber above $\qrel$. Unlike the sufficiency
conditions of Section~\ref{sec:general_apply}, it is an \emph{ex
	ante} condition: it involves only primitives and can be checked
before anything is solved. It does not follow from
Assumptions~\ref{S1}--\ref{S4}; it holds throughout the MES family
--- for every value of $(\alpha,\gamma)$ --- and in both nonlinear
applications of Appendix~\ref{sec:App}. Assumption~H is precisely
what repairs the step of \citet{schottmuller2015} that the
corrigendum \citep{AVP2022} found incomplete; whether the theorem
survives without some condition of this kind is open.
\begin{theorem}
	\label{proposicion}
	Let $\qlim$ be increasing with $\csplus$
	above, $\qrel$ increasing with graph
	contained in $\csplus$, and let
	Assumption~H hold. Then any optimal
	solution $q(\cdot)$ to \eqref{maxi} with
	$q\geq\qrel$\footnote{See the discussion
		closing Appendix~\ref{app:continuity}.} is
	continuous.
\end{theorem}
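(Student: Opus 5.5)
We argue by contradiction: we suppose an optimal $q\geq\qrel$ has a jump and build an implementable perturbation that strictly raises $\int\vs\,\dens$. Since $q\geq\qrel>\qlim$, the graph of $q$ lies in $\csplus$. Lemma~\ref{lmc} then makes $q$ non-decreasing, so any discontinuity is an upward jump at some $\hth$, with $q^-(\hth)<q^+(\hth)$. We may take $q$ c\`adl\`ag, since modifying a non-decreasing function at countably many points changes nothing.

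The first step locates the binding constraints near $\hth$. Above $\qrel$, strict concavity gives $\vs_q<0$ wherever $q>\qrel$. Hence, if no global constraint bound near the jump, lowering $q$ slightly on a right neighbourhood $(\hth,\hth+\varepsilon)$ toward $q^-(\hth)$ would stay monotone, stay in $\csplus$ (as $q^-(\hth)\geq\qrel(\hth)>\qlim$), and raise profit. So some $\gif{\theta_1}{\theta_2}{q}=0$ must bind with $\theta_1<\hth<\theta_2$; it is a downward constraint pooling the level $q(\theta_1)$ with types across the jump. By Lemma~\ref{lem}, binding pairs cannot overlap: any other binding pair $(\theta_1',\theta_2')$ with $q(\theta_1)<q(\theta_1')$ is ruled out. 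This isolates a single binding family anchored at $\theta_1$, and leaves slack, bounded away from zero, in all constraints whose first argument has a strictly different level. That slack is what room for a local perturbation requires.

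The second step is the perturbation itself, a two-sided smoothing. We raise $q$ on a left band $(\hth-\delta,\hth)$ and lower it on a right band $(\hth,\hth+\eta)$ to a common intermediate level, or more generally by $h_L\geq0$ and $-h_R\leq0$. Since $\vs_q\leq0$ on the left band, its first-order profit change is $\int_{L}\vs_q h_L\dens$, which is negative. On the right band the change is $-\int_R\vs_q h_R\dens$, which is positive. We then compute the first-order effect on the binding constraint $\gif{\theta_1}{\theta_2}{q}$ via \eqref{eq:iso}: a change $h$ at $\tth$ moves it by $v_{q\theta}(q(\tth),\tth)h(\tth)$ times the appropriate sign. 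We choose the band lengths so that the constraint is preserved to first order, $\int_L v_{q\theta}h_L=\int_R v_{q\theta}h_R$ up to signs. The net profit change is then a comparison of $\int R\cdot v_{q\theta}h\,\dens$ on the two bands, with $R=-\vs_q/v_{q\theta}$. Across the jump, the right-band types sit at higher $q$ on essentially the same fiber ($\tth\to\hth$), so Assumption~H gives $R(q^+(\hth),\hth)>R(q^-(\hth),\hth)$, and by continuity the same holds on small bands. Removing a unit of rent on the right therefore saves strictly more profit than adding a unit on the left costs, and the net gain is strictly positive.

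The main obstacle is making this first-order argument rigorous against all the global constraints at once, not only the single binding one. Three things must be checked: that the perturbed $q$ remains non-decreasing and in $\csplus$; that constraints with first argument inside the bands stay non-negative; and that second-order terms are dominated. For the second, we would use Lemma~\ref{lem} to show that the only constraints with zero slack near the bands are those anchored at $\theta_1$ (or at types sharing its level), and that these are controlled exactly by the first-order balance. We would also add a small correction, a slight extra lowering of the right band, which both restores strict slack and, since $\vs_q<0$ there, only increases profit. The remaining constraints have positive slack on a compact set, which absorbs $O(\delta+\eta)$ perturbations. The delicate point is uniformity when $\theta_2$ is close to $\hth$, or when $q$ is flat on the left band; there, the continuity requirements in Lemma~\ref{lem} at the common type must be handled by taking one-sided limits.
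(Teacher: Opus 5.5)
Your core mechanism is the paper's. You fill the jump at $\hat\theta$ with a band $[\hat\theta-\varepsilon,\hat\theta+\eta)$ at a common intermediate level $\tilde q$, with $\eta\approx L\varepsilon$, and let Assumption~H price the rent moved on the two sides. Two steps, however, do not go through as you state them.

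The first is fixable: your correction to restore strict slack goes the wrong way. The band lies above $\qlim$, so lowering the right band further, or over a longer stretch, lowers $v_\theta(q(\tilde\theta),\tilde\theta)$ there. That \emph{tightens} every downward constraint anchored below the band and ending beyond it, including the one you balanced. What works is to take $L$ strictly inside the window $(a/b,c/d)$ of \eqref{eq:window}, with $\theta_1=\hat\theta$. The window is non-empty because H gives $a<R(\tilde q,\hat\theta)\,c$ and $b>R(\tilde q,\hat\theta)\,d$. This compares $v_{q\theta}$-weighted averages of $R$ over $[q^-(\hat\theta),\tilde q]$ and $[\tilde q,q^+(\hat\theta)]$, not $R$ at the two limits. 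Every constraint anchored below the band and ending at or beyond $\hat\theta$ then rises by at least $\varepsilon(c-Ld)+o(\varepsilon)>0$, uniformly. Profit rises by $\varepsilon\,p(\hat\theta)(bL-a)+o(\varepsilon)>0$, whichever constraints bind. Your first step, locating a binding pair across the jump, is therefore unnecessary, and as written it is not proved.

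The second is the genuine gap: the constraints anchored \emph{inside} the band are not covered by ``positive slack absorbs $O(\delta+\eta)$''. Every type in the band now reports $\tilde q$. For $\theta_2$ beyond the band, $\Phi(\theta_0,\theta_2;\hat q)=\Phi(\hat\theta,\theta_2;q)+\int_{\hat\theta}^{\theta_2}[v_\theta(q(\hat\theta),\tilde\theta)-v_\theta(\tilde q,\tilde\theta)]\,d\tilde\theta-m(\eta)$, where $m(\eta)$ is the rent removed on the right band. The middle term is of order $(q(\hat\theta)-\tilde q)(\theta_2-\hat\theta)$ whatever the band length. It also has no sign: once $\qlim(\tilde\theta)$ exceeds $q(\hat\theta)$, lowering the anchor's level raises its marginal rent.

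Lemma~\ref{lem} does not give the uniform slack you invoke either. It excludes interleaved binding pairs, not nested ones, so $q$ may carry a binding constraint $\Phi(\hat\theta,\theta';q)=0$ anchored at the jump type itself. And slack always vanishes as the two endpoints approach each other.

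The paper's Case~3 (Appendix~\ref{app:continuity}) takes $\tilde q$ close to the right limit $q(\hat\theta)$ and splits in two.
\begin{itemize}
\item If $\Phi(\hat\theta,\theta;q)>0$ for all $\theta>\hat\theta$, compactness gives a slack on endpoints at distance at least $\delta_0$. That slack dominates the middle term once $\tilde q$ is close enough.
\item If the constraint vanishes at some $\theta'$, the missing idea is to view the middle term at $\theta_2=\theta'$ as a function of $\tilde q$. It is strictly concave, because $v_{qq\theta}>0$. It vanishes at $q(\hat\theta)$, and it is non-negative at $q^-(\hat\theta)$ by implementability of $q$ for anchors just left of the jump. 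It is therefore strictly positive in between, with strictly negative slope at $q(\hat\theta)$, hence bounded below linearly in $q(\hat\theta)-\tilde q$.
\end{itemize}
Three ingredients make the smoothed allocation implementable: that linear bound, the rise-then-fall shape of the middle term in $\theta_2$ forced by $v_{q\theta\theta}<0$, and a separate local argument for endpoints within a fixed $\delta_0$ of the jump. Your proposal contains none of them.
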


\begin{proof}
	We prove by contradiction for the increasing
	case. Suppose $q(\theta_1^-)<q(\theta_1)$
	for some $\theta_1\in(\uth,\oth)$. Define
	$\hat{q}(\theta)$ by:
	\[
	\hat{q}(\theta) = \begin{cases}
		\qpert & \theta\in[\theta_1-\varepsilon,
		\theta_1+\eta(\varepsilon)),\\
		q(\theta) & \text{otherwise},
	\end{cases}
	\]
	where $\qpert\in(q(\theta_1^-),q(\theta_1))$
	and $\eta(\varepsilon)$ satisfies $\eta(0)=0$
	and $\eta'(0)=L$ with
	\begin{equation}
		L\in\Bigl(\frac{a}{b},\frac{c}{d}\Bigr),
		\qquad
		\begin{aligned}
			a&=\int_{q(\theta_1^-)}^{\qpert}
			\!\!-\vs_q(\xi,\theta_1)\,d\xi, &
			b&=\int_{\qpert}^{q(\theta_1)}
			\!\!-\vs_q(\xi,\theta_1)\,d\xi,\\
			c&=\int_{q(\theta_1^-)}^{\qpert}
			\!\!v_{q\theta}(\xi,\theta_1)\,d\xi, &
			d&=\int_{\qpert}^{q(\theta_1)}
			\!\!v_{q\theta}(\xi,\theta_1)\,d\xi.
		\end{aligned}
		\label{eq:window}
	\end{equation}
	The window in \eqref{eq:window} is non-empty
	under Assumption~H: since $q\geq\qrel$, the
	segment $[q(\theta_1^-),q(\theta_1)]$ lies
	above $\qrel(\theta_1)$, so $b,c,d>0$,
	$a\geq0$, and writing
	$-\vs_q=R\,v_{q\theta}$ with
	$R(\cdot,\theta_1)$ strictly increasing
	yields $a<R(\qpert,\theta_1)\,c$ and
	$b>R(\qpert,\theta_1)\,d$, hence $a/b<c/d$.
	For $\varepsilon$ small, $\hat{q}$ is
	non-decreasing with graph in $\csplus$.

	\medskip
	\noindent\textit{Step 1: $\hat{q}$ improves
		profit.} Since $\vs$ is concave and
	$\vs_q(\qrel(\theta),\theta)=0$:
	\begin{equation}
		\frac{1}{\dens(\theta_1)}
		\frac{\partial\Delta\jfunc}
		{\partial\varepsilon}
		\bigg|_{\varepsilon=0}
		= \bigl[\vs(\qpert,\theta_1)-
		\vs(q(\theta_1),\theta_1)\bigr]\eta'(0)
		+ \bigl[\vs(\qpert,\theta_1)-
		\vs(q(\theta_1^-),\theta_1)\bigr]
		= bL-a>0,
		\label{eq:delta}
	\end{equation}
	by $L>a/b$, since $q(\theta)\geq
	\qrel(\theta)$ implies $\vs_q<0$ above
	$\qrel$ and $\qpert$ is between
	$q(\theta_1^-)$ and $q(\theta_1)$.

	\medskip
	\noindent\textit{Step 2: $\hat{q}$ is
		implementable.} We verify
	$\gif{\theta_0}{\theta_2}{\hat{q}}\geq 0$
	for all $\theta_0<\theta_2$ in three cases:

	\medskip
	\noindent\textbf{Case 1:}
	$\theta_0<\theta_1-\varepsilon<
	\theta_1+\eta(\varepsilon)\leq\theta_2$.
	Set $M(\varepsilon)=
	\int_{\theta_1-\varepsilon}^{\theta_1}
	[v_\theta(\qpert,\tth)-
	v_\theta(q(\tth),\tth)]\,d\tth>0$ and
	$m(x)=\int_{\theta_1}^{\theta_1+x}
	[v_\theta(q(\tth),\tth)-
	v_\theta(\qpert,\tth)]\,d\tth$ with $m(0)=0$
	non-decreasing. Then:
	\begin{equation*}
		\gif{\theta_0}{\theta_2}{\hat{q}}
		\geq M(\varepsilon)-m(\eta(\varepsilon))
		>0
	\end{equation*}
	for $\varepsilon$ small enough: by dominated
	convergence and the c\`adl\`ag property of
	$q$, $M(\varepsilon)/\varepsilon\to c$ and
	$m(\eta(\varepsilon))/\varepsilon\to Ld$ as
	$\varepsilon\downarrow0$, and $L<c/d$ by
	\eqref{eq:window}.

	\medskip
	\noindent\textbf{Case 2:}
	$\theta_0<\theta_1-\varepsilon\leq\theta_2
	<\theta_1+\eta(\varepsilon)$.
	If $\theta_2<\theta_1$, the perturbation
	only raises the decision above $q$ on
	$[\theta_1-\varepsilon,\theta_2]$, inside
	$\csplus$, so
	$\gif{\theta_0}{\theta_2}{\hat q}\geq
	\gif{\theta_0}{\theta_2}{q}\geq0$. If
	$\theta_2\in[\theta_1,
	\theta_1+\eta(\varepsilon))$, then, since
	$m$ is non-decreasing:
	\begin{equation*}
		\gif{\theta_0}{\theta_2}{\hat{q}}
		\geq M(\varepsilon)-m(\theta_2-\theta_1)
		\geq M(\varepsilon)-m(\eta(\varepsilon))
		>0
	\end{equation*}
	for $\varepsilon$ small, as in Case~1.

	\medskip
	\noindent\textbf{Case 3:}
	$\theta_1-\varepsilon\leq\theta_0<
	\theta_1+\eta(\varepsilon)\leq\theta_2$.
	See Appendix~\ref{app:continuity} for the
	complete argument.

	\medskip
	Since $\hat{q}$ is implementable and
	achieves strictly higher profit than $q$,
	we conclude that no discontinuous
	allocation can be optimal.
\end{proof}

\begin{figure}[ht!]
	\centering
	\begin{subfigure}{0.48\textwidth}
		\centering
		\includegraphics[width=\linewidth]
		{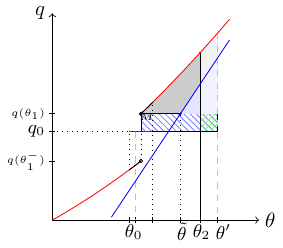}
		\caption{Case 3b(i): geometry of
			the proof.}
		\label{fig:demost}
	\end{subfigure}
	\hfill
	\begin{subfigure}{0.48\textwidth}
		\centering
		\includegraphics[width=\linewidth]
		{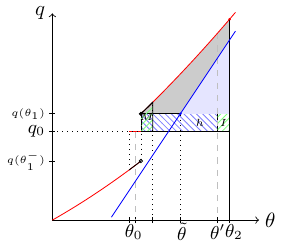}
		\caption{Case 3b(iii): geometry of
			the proof.}
		\label{fig:demostra}
	\end{subfigure}
	\caption{Illustrations for the proof of
		Theorem~\ref{proposicion}, Case~3.
		The shaded regions represent the sign
		of $v_{q\theta}$ in each single-crossing
		region.}
	\label{fig:cases}
\end{figure}

\subsection{\textbf{The jump}: a horizontal dividing curve}

The variational logic of Theorem~\ref{proposicion} runs in reverse
when the dividing curve is flat. There, a monotone $\qlim$ made every
jump strictly improvable by filling it in; here, a horizontal
$\qlim$ makes every continuous contract strictly improvable by
opening a jump at $\theta^*$. The engine is that
$v_{q\theta}(\qlim,\cdot)=0$: informational rents are flat to second
order at the dividing curve, while virtual surplus is first-order steep
--- crossing $\qlim$ discontinuously is almost free in ISO terms and
strictly profitable. The result below needs neither Assumption~H nor
$q\geq\qrel$: it rests only on the sign structure of $v_{q\theta}$
around a flat dividing curve. Write $D_J$ for the set of implementable
allocations consisting of two flat pieces separated by a single
jump, followed by a non-decreasing branch.

\begin{proposition}[The jump is unavoidable: horizontal case]
	\label{prop_jump_forced}
	Let $\qlim\equiv c$ be constant, and let $\qrel$ be strictly increasing
	with $\qrel(\theta^*)=c$ for some $\theta^*\in(\uth,\oth)$. Assume that no
	full-pooling allocation at a level below $c$ is optimal. Then no continuous
	allocation solves \eqref{maxi}: every continuous
	implementable allocation is strictly dominated by an implementable allocation
	in $D_J$.
\end{proposition}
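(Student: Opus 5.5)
The plan rests on one observation special to the horizontal case. Since $v_{q\theta\theta}\equiv 0$, the cross-partial does not depend on the type: $v_{q\theta}(q,\theta)=g(q)$, with $g(c)=0$. Let $G$ be a primitive of $g$. With the sign convention $v_{qq\theta}>0$ ($\csplus$ above $c$), $G$ is strictly decreasing below $c$, strictly increasing above it, and minimized at $c$. Then \eqref{eq:GIF} collapses to
\begin{equation*}
	\gif{\theta_1}{\theta_2}{q}=\int_{\theta_1}^{\theta_2}\bigl[G(q(\tth))-G(q(\theta_1))\bigr]\,d\tth .
\end{equation*}
Downward and upward constraints together say that $h:=G\circ q$ lies below its forward averages and above its backward averages. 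By Lebesgue differentiation this is equivalent to $h$ being (a.e.) non-decreasing. So I would first prove the reduction: $q$ is implementable iff $G\circ q$ is non-decreasing. It also gives the mirror map $m$: for $x<c$, $m(x)>c$ is defined by $G(m(x))=G(x)$, and a jump from $x$ to $m(x)$ costs nothing in incentive terms.

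Next I would classify continuous implementable allocations using this reduction. Suppose $q(\theta_a)<c$. Then every later type has $G(q)\ge G(q(\theta_a))>G(c)$, so by continuity $q$ never returns to $c$, stays below it, and is non-increasing. Suppose instead $q(\theta_a)=c$. Then every earlier type has $G(q)\le G(c)$, which forces $q=c$ there. Hence a continuous implementable $q$ is of one of two kinds:
\begin{enumerate}[(a)]
	\item $q<c$ on all of $\Theta$ and non-increasing;
	\item $q\equiv c$ on an initial segment $[\uth,s]$ (possibly degenerate), with $q\ge c$ and $q$ non-decreasing afterwards.
\end{enumerate}

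Case (b) is the main step, and I would handle it with an explicit perturbation. For small $\delta>0$ set $\hat q=c-\delta$ on $[\uth,\theta^*)$ and $\hat q=\max\{q,m(c-\delta)\}$ on $[\theta^*,\oth]$. Then $G\circ\hat q=\max\{G\circ q,\,G(c-\delta)\}$ on the upper piece and is constant on the lower piece, so it is non-decreasing and $\hat q$ is implementable. By construction $\hat q$ consists of a flat piece, a jump, a flat piece at $m(c-\delta)$, and then the branch of $q$, so $\hat q\in D_J$.

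For the profit comparison:
\begin{itemize}
	\item For $\theta<\theta^*$, $\qrel(\theta)<c\le q(\theta)$. Lowering to $c-\delta$ is therefore a gain, of first order $\delta\int_{\uth}^{\theta^*}(-\vs_q(c,\theta))\dens\,d\theta>0$.
	\item The only exception is the band where $\qrel(\theta)>c-\delta$. This band has width $O(\delta)$ and costs $O(\delta^2)$ by concavity (Assumption~\ref{S2}).
	\item For $\theta>\theta^*$, raising toward $m(c-\delta)=c+O(\delta)$ moves toward $\qrel(\theta)>c$, except on a band of width $O(\delta)$ where the loss is again $O(\delta^2)$.
\end{itemize}
Hence $\hat q$ strictly dominates $q$ for $\delta$ small. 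This construction never uses Assumption~H or $q\ge\qrel$, consistent with the statement.

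Case (a) is where I expect the real obstacle, and it is where the hypothesis excluding optimal low full pools enters. I would first iron: on the non-increasing cone below $c$, with $\qrel$ increasing there, the best element is a constant $\bar q<c$ (a full pool), so $q$ is weakly dominated by it. The hypothesis says this pool is not optimal, but I must still exhibit a strictly better element of $D_J$. The first thing I would try is a top jump: keep $\bar q$ on $[\uth,\tau)$ and switch to the flat level $m(\bar q)$ on $[\tau,\oth]$. This is implementable because the index is unchanged, and it lies in $D_J$. I would then optimize $\tau$ and check that the gain $\int_\tau^{\oth}[\vs(m(\bar q),\theta)-\vs(\bar q,\theta)]\dens\,d\theta$ is positive for some $\tau$.

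If that comparison fails, for instance because $\bar q$ is far below $c$ so that $m(\bar q)$ overshoots $\qrel$, I would argue indirectly. Take any improving allocation supplied by the hypothesis. Either it is discontinuous, in which case I would show it can be reshaped into $D_J$ using the monotone index $G\circ q$. Or it is continuous of type (b), in which case the case (b) perturbation produces the required element of $D_J$. Making this indirect step clean, in particular showing that a strictly improving allocation can always be pushed into the two-flat-pieces-plus-branch form without losing profit, is the step I expect to require the most care.
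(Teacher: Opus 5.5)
Your treatment of the family $q\ge c$ (your case (b), which includes the full pool at $c$) is correct and uses the paper's construction. You lower the pool to $c-\delta$ on $[\uth,\theta^*)$. You then lift types from $\theta^*$ on to the mirror $m(c-\delta)$, which is exactly the paper's $c+\delta_H$ chosen so that ISO binds between the flats. After that you take $\max\{q,\cdot\}$, and you obtain the same first-order gain $\delta\int_{\uth}^{\theta^*}|\vs_q(c,\theta)|\,\dens\,d\theta$.

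Where you differ is implementability. The paper checks pairs one by one: same flat, tail via LMC, straddling $\theta^*$ via ISO, straddling $\theta_2$, and downward pairs ``by symmetry''. Your index reduction instead settles every pair at once: $v_\theta=G(q)+\psi(\theta)$, so $q$ is implementable iff $G\circ q$ is non-decreasing. The same reduction gives a cleaner derivation of the shape of continuous implementable allocations than the paper's LMC argument.

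One precision on the reduction: you need $G\circ q$ non-decreasing at every point, not a.e. The two average conditions say that $G(q(x))$ is a subgradient of the convex primitive $\int G\circ q$ at every $x$. The converse direction fails for a merely a.e.-monotone index, for instance one with an isolated upward spike.

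The gap is case (a). The top jump can fail, as you note. The fallback, reshaping an arbitrary improving allocation into $D_J$ without loss, is unsupported. It amounts to showing that the optimum over all implementable allocations lies in $D_J$. That would require controlling switches $x\leftrightarrow m(x)$ at non-decreasing $G$-level, and the existence of an optimum. Nothing in the statement or the paper provides either.

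You do not need that step. Your ironing step already shows that a continuous allocation which is non-increasing with values $\le c$ is weakly dominated by the best constant $\bar q\le c$. Note that this constant can be $c$ itself, not necessarily below $c$ as you wrote.
\begin{itemize}
\item If $\bar q=c$, your case (b) gives strict $D_J$-domination by transitivity.
\item If $\bar q<c$, the hypothesis says that pool is not optimal, so $q$ is not optimal either.
\end{itemize}
This is exactly how the paper uses the hypothesis. Its proof dismisses the low continuous allocations as reducing to a pool below $c$, and then dominates only the family $q\ge c$. So the paper does not prove strict $D_J$-domination of low allocations in the sub-case $\bar q<c$ either. Drop the indirect step and record that case as non-optimality; that proves exactly what the paper's proof proves.

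Also correct your classification. Your two observations actually give this: $q\equiv c$ on a possibly degenerate initial segment, followed either by a non-increasing piece below $c$ or by a non-decreasing piece above $c$. An allocation such as $q=c-(\theta-s)^+$ fits neither (a) nor (b) as you stated them. It is, however, non-increasing with values $\le c$, so the same ironing argument disposes of it.
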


\begin{proof}
	Let $q$ be continuous and
	implementable. By Lemma~\ref{lmc}, $q$ is non-increasing wherever $q(\theta)<c$
	and non-decreasing wherever $q(\theta)>c$. Hence, if $q(\theta_a)<c$ for some
	$\theta_a$, then $q(\theta)\le q(\theta_a)<c$ for all $\theta\ge\theta_a$: a
	continuous allocation can never cross $c$ from below. Consider such an
	allocation. Wherever LMC is slack on it, pointwise optimality would require
	$q=\qrel$, which is increasing --- a contradiction with LMC in $\csminus$.
	Hence LMC binds throughout and the allocation is constant: such candidates
	reduce to full pooling at a level below $c$, excluded by hypothesis. Full
	pooling at levels $\ge c$ is covered by Steps 1--2 below. It therefore
	suffices to dominate the remaining family: continuous
	implementable allocations with $q\ge c$ on $[\uth,\oth]$, non-decreasing where
	above $c$.

	Fix $\delta>0$ small. Define $\hat q\in D_J$ by
	\[
	\hat q(\theta)=
	(c-\delta_L)\,\mathbf{1}_{[\uth,\theta^*)}
	+(c+\delta_H)\,\mathbf{1}_{[\theta^*,\theta_2)}
	+\max\{q(\theta),\,c+\delta_H\}\,\mathbf{1}_{[\theta_2,\oth]},
	\]
	where $\delta_L=\delta$, $\theta_2$ is the first point at which
	$q(\theta)\ge c+\delta_H$, and $\delta_H=\delta_H(\delta)$ is chosen so that
	ISO binds between the two flat levels: $\gif{\uth}{\theta_2}{\hat q}=0$. Since
	$v_{q\theta}(c,\theta)=0$, we have
	$v_\theta(c\pm\delta,\theta)-v_\theta(c,\theta)=O(\delta^2)$ uniformly, so
	$\delta_H=\delta+O(\delta^2)$: the flats open symmetrically around $c$ to
	second order.

	\emph{Step 1: $\hat q$ improves profit.} For $\theta<\theta^*$,
	$\qrel(\theta)<c$ and $\vs(\cdot,\theta)$ is strictly concave with peak at
	$\qrel(\theta)$, so $\vs_q(c,\theta)<0$ and, since $q(\theta)\ge c$,
	\[
	\vs(c-\delta,\theta)-\vs(q(\theta),\theta)\;\ge\;
	\vs(c-\delta,\theta)-\vs(c,\theta)\;=\;
	|\vs_q(c,\theta)|\,\delta+O(\delta^2),
	\]
	a first-order gain on $[\uth,\theta^*)$. For $\theta\in(\theta^*,\theta_2)$,
	$c\le q(\theta)<c+\delta_H$ and $\qrel(\theta)>c$, so $\vs_q(q(\theta),\theta)>0$
	and raising the allocation to $c+\delta_H$ yields a further (weak) gain; the
	second-order adjustment of $\delta_H$ enforcing ISO costs $O(\delta^2)$. On
	$[\theta_2,\oth]$ the allocation is unchanged. In total,
	$\Delta\Pi\ge\kappa\,\delta+O(\delta^2)>0$ for $\delta$ small, where
	$\kappa=\int_{\uth}^{\theta^*}|\vs_q(c,\theta)|\,\dens(\theta)\,d\theta>0$.

	\emph{Step 2: $\hat q$ is implementable.} We verify
	$\gif{\theta_a}{\theta_b}{\hat q}\ge 0$ for all pairs. (i) Both types on the
	same flat: $\Phi=0$. (ii) Both types in $[\theta_2,\oth]$:
	$\hat q\ge c+\delta_H>c$ and non-decreasing, so LMC holds and $\Phi\ge 0$ as in
	the no-crossing case. (iii) Pairs straddling $\theta^*$ within the two flats:
	$\Phi$ reduces to the ISO integral between the levels $c-\delta_L$ and
	$c+\delta_H$, which vanishes by the choice of $\delta_H$. (iv) Pairs
	straddling $\theta_2$: $\Phi$ decomposes into a flat part ($\ge 0$ by
	(i)/(iii)) plus a tail part in which $\hat q(t)\ge c+\delta_H\ge
	\hat q(\theta_a)$ with $v_{q\theta}>0$ above $c$, so the integrand is
	non-negative. Downward pairs are symmetric, using that $v_\theta(\cdot,\theta)$
	attains its minimum at $c$. Hence $\hat q$ is implementable and, by Step~1,
	strictly dominates $q$.
\end{proof}

By the symmetric argument, the result holds for $\qrel$ strictly
decreasing, with the jump opening downward. Together with
Theorem~\ref{proposicion}, this yields a trichotomy: with no
crossing, a jump is impossible; with a horizontal crossing, the jump
is unavoidable; with a strictly monotone crossing, the shape of a
jump candidate is pinned down by Proposition~\ref{thm:mirror} below,
and whether the principal chooses it over the continuous candidate
is resolved by comparing profits --- the task of the general method
of Section~\ref{sec:general_apply}.

\subsection{\textbf{The mirror}: a strictly monotone dividing curve}

The remaining configuration has $\qlim$ strictly monotone --- say
increasing, with $\csplus$ above --- and $\qrel$ increasing with
graph crossing $\qlim$ once, at an interior $\theta^*$: $\qrel$ lies
in $\csminus$ below $\theta^*$ and in $\csplus$ above it. Neither
Theorem~\ref{proposicion} nor Proposition~\ref{prop_jump_forced}
applies: the dividing curve is not flat, and $\qrel$ is not confined to
a single region. On $[\uth,\theta^*)$, $\qrel$ increasing inside
$\csminus$ conflicts with the LMC, and the argument that forces a
flat pool applies verbatim: any profit-maximizing implementable
allocation that is non-increasing on a sub-interval of $\csminus$
where $\qrel$ is increasing must be constant on it. We take this as
given and focus on what is new: the shape of the allocation once it
leaves the pool.

Fix a pool level $\bar q$ and a cutoff $\theta_1$ with
$\bar q<\qlim(\theta_1)$. For $\theta>\theta_1$, write
\begin{equation}
	h(q,\theta):=v_\theta(q,\theta)-v_\theta(\bar q,\theta)
	=\int_{\bar q}^{q} v_{q\xi}(\xi,\theta)\,d\xi ,
	\label{eq:h_def}
\end{equation}
an identity that requires no convexity assumption. Since
$v_{q\theta}$ vanishes at $\qlim(\theta)$ by definition and is
increasing in $q$ (Assumption~\ref{S3}), it is negative for
$\xi\in(\bar q,\qlim(\theta))$ and positive for $\xi>\qlim(\theta)$;
hence $h(\cdot,\theta)$ is strictly decreasing on
$(\bar q,\qlim(\theta))$ and strictly increasing on
$(\qlim(\theta),\infty)$, with $h(\bar q,\theta)=0$ and
$h(\qlim(\theta),\theta)<0$.

\begin{definition}[Mirror allocation, general form]
	\label{def:mirror_general}
	Given $\bar q<\qlim(\theta)$, the \emph{mirror} of $\bar q$ at
	$\theta$ is the unique $\qmir(\theta)>\qlim(\theta)$ solving
	\begin{equation}
		h\bigl(\qmir(\theta),\theta\bigr)=0,
		\qquad\text{equivalently}\qquad
		v_\theta\bigl(\qmir(\theta),\theta\bigr)
		=v_\theta(\bar q,\theta).
		\label{eq:mirror_general}
	\end{equation}
	Existence and uniqueness follow from the intermediate value
	theorem: $h(\cdot,\theta)$ is continuous, strictly increasing on
	$(\qlim(\theta),\infty)$ from the negative value
	$h(\qlim(\theta),\theta)$, and unbounded above. When
	$v_{q\theta}(\cdot,\theta)$ is affine --- in particular in the
	linear family of Appendix~\ref{sec:realization}, which contains
	the MES application --- \eqref{eq:mirror_general} solves
	explicitly to $\qmir(\theta)=2\qlim(\theta)-\bar q$; in general
	$\qmir$ is defined implicitly and need not be affine in
	$\bar q$.
\end{definition}

\begin{proposition}[The branch is the mirror until the relaxed solution
	takes over]
	\label{thm:mirror}
	Let $\qlim$ be increasing with $\csplus$ above, let $\qrel$ be
	increasing with $\qrel(\theta_1)<\qlim(\theta_1)<\qrel(\theta)$ for
	some $\theta>\theta_1$, and let $q$ be an optimal solution to
	\eqref{maxi} that is constant at $\bar q$ on $[\uth,\theta_1)$ with
	$\bar q<\qlim(\theta_1)$ and non-decreasing on
	$[\theta_1,\oth]$. Then
	\begin{equation}
		q(\theta)=\qmir(\theta)\quad\text{for }
		\theta\in[\theta_1,\theta_3),
		\qquad
		q(\theta)=\qrel(\theta)\quad\text{for }
		\theta\in[\theta_3,\oth],
		\label{eq:thm_mirror_shape}
	\end{equation}
	where $\qmir$ is the mirror of $\bar q$ as in
	Definition~\ref{def:mirror_general} and $\theta_3$ is the type at
	which $\qrel$ overtakes it, $\qmir(\theta_3)=\qrel(\theta_3)$,
	with $\theta_3:=\oth$ if no such type exists in $\Theta$.
\end{proposition}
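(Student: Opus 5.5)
The plan is to work with the family of downward constraints linking each pooled type to each type on the branch, and to reduce them to a single accumulated-rent function. For $\theta'<\theta_1$ and $\theta>\theta_1$, the integrand of \eqref{eq:iso} vanishes on $[\theta',\theta_1)$ because $q\equiv\bar q$ there. Hence
\[
\gif{\theta'}{\theta}{q}=H(\theta):=\int_{\theta_1}^{\theta}h\bigl(q(\tth),\tth\bigr)\,d\tth ,
\]
with $h$ as in \eqref{eq:h_def}. Implementability therefore requires $H\ge 0$ on $[\theta_1,\oth]$. The sign pattern of $h(\cdot,\theta)$ recorded before Definition~\ref{def:mirror_general} says the following. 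For $q\ge\bar q$ we have $h<0$ exactly when $q\in(\bar q,\qmir(\theta))$, $h=0$ at $q=\qmir(\theta)$, and $h>0$ above it. Values $q<\bar q$ just after $\theta_1$ would put the allocation below the pool in $\csminus$. There I would use Lemma~\ref{lmc} together with the hypothesis that $q$ is non-decreasing on $[\theta_1,\oth]$ to exclude them, so that $q\ge\bar q$ on the branch.

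\emph{Step 1 (lower bound).} I first show $q\ge\qmir$ a.e.\ on $[\theta_1,\theta_3)$. Let $\theta$ be a point where $q(\theta)<\qmir(\theta)$. Near $\theta_1$ this gives $h<0$ on a right-neighbourhood, hence $H<0$ immediately, which contradicts implementability. Further to the right I would argue by a first-exit point: let $\hat\theta$ be the first type after which $q<\qmir$ on a set of positive measure. Before $\hat\theta$ the integrand is non-negative. If $H(\hat\theta)>0$, the dip is locally feasible, so I must rule it out by optimality instead. Here I use $\theta<\theta_3$, i.e.\ $\qrel<\qmir$. Then $q$ below $\qmir$ but above $\qrel$ is costly on the left part of the branch, while lowering $q$ is profitable. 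Shifting mass of $h$ rightwards (lowering $q$ earlier toward $\qmir$, raising it later) keeps $H\ge0$ and raises $\int\vs\,\dens$. This is where strict concavity of $\vs$ (Assumption~\ref{S2}) and monotonicity of $\qmir$ enter. I would verify the latter by implicit differentiation of \eqref{eq:mirror_general}, or take it from the affine case $\qmir=2\qlim-\bar q$ when $v_{q\theta}$ is affine.

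\emph{Step 2 (upper bound, the main obstacle).} Next I show $q\le\max\{\qmir,\qrel\}$ a.e. Suppose $q>\max\{\qmir,\qrel\}$ on a set $E$ of positive measure. On $E$ we have $\vs_q(q,\cdot)<0$, so lowering $q$ toward $\max\{\qmir,\qrel\}$ raises profit pointwise. Because $h\ge 0$ is preserved pointwise after Step~1, the pool constraints $H\ge0$ survive. The delicate part is the remaining incentive constraints among branch types and upward constraints from the branch to the pool. To handle these I would replace $q$ by the truncation $\min\{q,\max\{\qmir,\qrel\}\}$. This function is non-decreasing and lies in $\csplus$, since $\qmir>\qlim$ and $\qrel>\qlim$ after $\theta^*$. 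I would then check $\Phi\ge0$ as in Cases~1--3 of Theorem~\ref{proposicion}, using Lemma~\ref{lem} to prevent overlapping binding pairs. The real difficulty is excluding an interior isoperimetric branch $\qiso$ of Lemma~\ref{teo11}, lying strictly between $\qrel$ and $\qmir$ with only the integral constraint binding. I expect to handle it by noting that \eqref{eq:teo} with $\lambda>0$ forces $\qiso>\qrel$ in $\csplus$. On $[\theta_1,\theta_3)$, feasibility of $H\ge0$ at every intermediate $\theta$ (one constraint per pooled type, all binding) then forces $h\equiv0$, i.e.\ $q=\qmir$. This is the pointwise, rather than integrated, binding that the statement asserts.

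\emph{Step 3 (assembly).} On $[\theta_1,\theta_3)$, Steps~1--2 give $q=\qmir$. On $[\theta_3,\oth]$, $\qrel\ge\qmir$ makes $h(\qrel,\cdot)\ge0$, so the relaxed solution is feasible for all pool constraints and is pointwise optimal. A final check confirms that the glued allocation, $\qmir$ followed by $\qrel$ (continuous at $\theta_3$), satisfies the remaining IC constraints via Lemma~\ref{lem}. Any deviation from it would then be strictly improvable, which yields \eqref{eq:thm_mirror_shape}.
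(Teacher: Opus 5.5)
Your reduction of the pool-anchored constraints to $H(\theta)=\int_{\theta_1}^{\theta}h(q(x),x)\,dx\ge0$ matches the paper's \eqref{eq:phi_as_h}. Your landing argument at $\theta_1^+$ and the two local moves (lower $q$ where $q>\max\{\qmir,\qrel\}$, raise it where $q<\qrel$) match its Claims~A and~B.

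\textbf{The gap is Step~1 away from $\theta_1$.} Because $\qmir$ is strictly increasing, a non-decreasing branch can land above the mirror, stay flat while $\qmir$ overtakes it, and then sit in $(\qrel,\qmir)$. The positive $H$ accumulated on the overshoot then pays for the negative $h$ on the dip. Your ``shift mass of $h$ rightwards'' move lowers $q$ on the overshoot at $x_1$, which is a gain. It raises $q$ on the dip at $x_2$, which is a loss, since $q>\qrel$ there too. Per unit of $h$ transferred, its first-order value is $R(q(x_1),x_1)\dens(x_1)-R(q(x_2),x_2)\dens(x_2)$, with $R=-\vs_q/v_{q\theta}$ as in Assumption~\ref{ass:H}. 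Strict concavity of $\vs$ and monotonicity of $\qmir$ say nothing about this sign. What you need is that the shadow price fall along the branch: for instance Assumption~\ref{ass:H} together with $\Lambda$ of \eqref{eq:flowM} non-increasing on $[\theta_1,\theta_3]$, which is the first clause of Assumption~\ref{S8}.

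This ingredient cannot be dispensed with. Suppose $\Lambda(x_1)<\Lambda(x_2)$ for some $\theta_1<x_1<x_2<\theta_3$. Replace $\qmir$ by the constant $\qmir(x_1+\epsilon)$ on $[x_1,x_1+\epsilon)$ and by $\qmir(x_2-\epsilon')$ on $[x_2-\epsilon',x_2)$, choosing $\epsilon'$ so that the two contributions to $\int h$ cancel. The result is non-decreasing and keeps $H\ge0$; the other constraints survive, as the sign pattern of $h$ shows. It raises profit by $(\Lambda(x_2)-\Lambda(x_1))\int_{x_1}^{x_1+\epsilon}h(\qmir(x_1+\epsilon),x)\,dx+O(\epsilon^3)>0$, and that integral is of order $\epsilon^2$. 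So, under the stated hypotheses alone, an overshoot-then-dip contract can beat the mirror. In the MES family with uniform types $\Lambda$ is decreasing, which is why the worked examples are unaffected.

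\textbf{Step~2 inherits the gap.} It invokes Step~1 to keep $H\ge0$. The truncation $\min\{q,\max\{\qmir,\qrel\}\}$ removes precisely the accumulated $H$ that finances a dip, so it need not be implementable. Lemma~\ref{lem} and the case analysis of Theorem~\ref{proposicion} are not available either: both assume $\qrel$ lies entirely in $\csplus$, and the latter also needs Assumption~\ref{ass:H}.

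You should know that the paper's proof does not close this case. Its Claim~B asserts that \eqref{eq:phi_as_h} survives the local lowering for every endpoint. That requires $h\ge0$ further right, i.e.\ $q\ge\qmir$ on the whole branch, while Claim~A establishes this only at $\theta_1^+$.

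Once dips are excluded, your detour through $\qiso$ is unnecessary. $\bar q<q\le\qmir$ on $[\theta_1,\theta_3)$ gives $h\le0$, and $H\ge0$ on every initial segment then forces $q=\qmir$ a.e.

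Finally, Lemma~\ref{lmc} together with monotonicity does not exclude $q<\bar q$ on the branch; below $\qlim$ it only makes $q$ locally constant. The paper simply takes the jump to be upward.
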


\begin{proof}
	Because $q\equiv\bar q$ on $[\uth,\theta_1)$, for any
	$\theta_a<\theta_1$ and $\theta\geq\theta_1$,
	\begin{equation}
		\gif{\theta_a}{\theta}{q}
		=\int_{\theta_1}^{\theta} h\bigl(q(x),x\bigr)\,dx ,
		\label{eq:phi_as_h}
	\end{equation}
	which does not depend on $\theta_a$: implementability of $q$
	requires \eqref{eq:phi_as_h} to be non-negative for every
	$\theta\geq\theta_1$ simultaneously, for every reference
	$\theta_a$ in the pool at once.

	\emph{Claim A (the jump cannot land below the mirror).} By the
	sign pattern of $v_{q\theta}$ established above,
	$h(q,\theta_1)<0$ for every $q\in(\bar q,\qmir(\theta_1))$, and
	$h(\bar q,\theta_1)=h(\qmir(\theta_1),\theta_1)=0$. If
	$q(\theta_1^+)\in(\bar q,\qmir(\theta_1))$, then $h$ is strictly
	negative at $\theta_1^+$, and by continuity on an interval
	$(\theta_1,\theta_1+\varepsilon)$, so
	$\int_{\theta_1}^{\theta_1+\varepsilon}h(q(x),x)\,dx<0$ for
	$\varepsilon$ small, violating \eqref{eq:phi_as_h}. Hence
	$q(\theta_1^+)\notin(\bar q,\qmir(\theta_1))$, and since $q$ is
	non-decreasing and $q(\theta_1^+)>\bar q$ (there is a jump),
	$q(\theta_1^+)\geq\qmir(\theta_1)$.

	\emph{Claim B (above the mirror, the branch is the mirror until
		$\qrel$ overtakes it).} Fix $\theta>\theta_1$ and suppose
	$q(\theta)>\qmir(\theta)$ while $q(\theta)>\qrel(\theta)$, so that
	$\vs_q(q(\theta),\theta)<0$ by strict concavity of
	$\vs(\cdot,\theta)$ (Assumption~\ref{S2}) with peak at
	$\qrel(\theta)$. Lower $q$ on a shrinking neighborhood of
	$\theta$ to $q(\theta)-\delta$ for $\delta>0$ small. By
	continuity the perturbed value stays above $\qmir$ on that
	neighborhood, so $h\geq0$ there is preserved and
	\eqref{eq:phi_as_h} continues to hold for every $\theta_a$ and
	every endpoint; monotonicity of $q$ is preserved for $\delta$
	small. The perturbation raises $\vs(\cdot,\theta)$ to first
	order, since $\vs_q<0$, and leaves the rest of the objective
	unchanged, strictly increasing profit --- contradicting
	optimality of $q$. Hence $q(\theta)=\qmir(\theta)$ whenever
	$\qmir(\theta)\geq\qrel(\theta)$, and by the symmetric argument
	(raising $q$ where $\vs_q(q(\theta),\theta)>0$ is profitable and
	preserves $h(q(\theta),\theta)\ge0$, since $h$ is increasing
	above $\qlim(\theta)$) $q(\theta)=\qrel(\theta)$ whenever
	$\qrel(\theta)>\qmir(\theta)$. Together, $q$ follows the mirror
	on $[\theta_1,\theta_3)$ and the relaxed solution on
	$[\theta_3,\oth]$, with $\theta_3$ as defined in the
	proposition.

	Finally, this shape is implementable: $h(q(\theta),\theta)=0$
	wherever $q=\qmir$, so \eqref{eq:phi_as_h} accumulates no change
	there, and $h(q(\theta),\theta)>0$ wherever $q=\qrel>\qmir$, so
	\eqref{eq:phi_as_h} is non-decreasing once $\qrel$ overtakes the
	mirror. Hence $\gif{\theta_a}{\theta}{q}\geq0$ for every
	$\theta_a<\theta_1\leq\theta$, and local incentive compatibility
	on $[\theta_1,\oth]$ holds because the envelope of two increasing
	curves is increasing and remains in $\csplus$ (as
	$\qmir(\theta)>\qlim(\theta)$ throughout).
\end{proof}

Proposition~\ref{thm:mirror} fixes the \emph{shape} of the
allocation once a pool level $\bar q$ and a cutoff $\theta_1$ are
given; it does not determine $\bar q$ and $\theta_1$ themselves, nor
does it compare the resulting profit against the continuous
candidate of Proposition~\ref{prop_variacional}. Those are calculus
questions --- a stationarity condition in two variables, and a
profit comparison across the two candidate families --- and are the
task of the general method of Section~\ref{sec:general_apply}. What
the proposition removes is any need for that method to search over
menus with a second flat, a partial jump, or a branch that departs
from the mirror: none of these can be optimal, whatever $\bar q$ and
$\theta_1$ turn out to be.

Together with Theorem~\ref{proposicion} and
Proposition~\ref{prop_jump_forced}, this completes the structural
picture announced by the section's title: three configurations,
three forced shapes. With no crossing, the shape is continuous, a
flat pool followed by the pointwise branch $\qiso$; with a
horizontal crossing, it is two flats separated by a jump onto a
constant mirror; with a strictly monotone crossing, it is a flat
pool, a jump onto a mirror that now moves with $\qlim$, followed by
the mirror itself until $\qrel$ catches up and takes over. The three
are three readings of a single fact about $v_\theta(\cdot,\theta)$.
Its derivative $v_{q\theta}$ vanishes on the dividing curve, so
$v_\theta(\cdot,\theta)$ has a unique critical point at
$\qlim(\theta)$, and the sign of $v_{qq\theta}$ decides whether that
point is a minimum or a maximum. When it is a minimum,
$h(\cdot,\theta)$ of \eqref{eq:h_def} is negative throughout the
open interval between the pooled level and its mirror: every
decision strictly inside subtracts from $\int h(q(x),x)\,dx$, so a
contract crossing $\qlim$ continuously would accumulate a deficit
and fail, and the interval must be traversed at once. When it is a
maximum the curvature reverses, $h$ is positive there, intermediate
decisions add rather than subtract, and the contract is carried
across without a discontinuity. In economic terms, the mirror is the
second level paying the same marginal rent as the pool; convexity
means the levels in between pay strictly \emph{less}, so the agent
will not accept them as a continuation, while concavity means they
pay more and they are offered freely. In every case the local branch
is pinned down; only its free parameters, and --- in the crossing
cases --- whether the principal chooses it over the continuous
candidate, are left to the profit comparison of
Section~\ref{sec:general_apply}.

\section{Three Solvable Cases: The Optimal Menu and Its Justification}
\label{sec:magic}

Before developing the general method, we solve three special cases
of the MES application of Section~\ref{sec:MES} in closed form
--- one for each regime of the trichotomy, which we number:
Regime~I, no crossing; Regime~II, horizontal crossing; Regime~III,
monotone crossing. Each is found by exhibiting
a candidate menu and optimizing over its few free parameters; the
shape results of Section~\ref{sec:continuity} then certify that the
menu could not have been otherwise, and the three are one object
seen at three parameter values.

\subsection{The continuous case}
We fix $(\alpha,\gamma)=(-a,1)$ in \eqref{eq:MES_v}, with
$a\in\bigl[0,\tfrac13\bigr)$: this is the
no-crossing corner, where
$\qrel$ lies entirely in $\csplus$.

\subsubsection{Guiding Example: Continuous
	Solution}
\label{sec:ex_nocross}
We illustrate with the increasing case:
$\qlim$ increasing, $\qrel$ increasing,
$\qrel\subset\csplus$. Types are uniform
on $[0,1]$, and
\begin{equation}
	v(q,\theta) = aq\theta +
	\frac{q^2\theta}{2} -
	\frac{q\theta^2}{2} + q + \theta,
	\qquad
	\cost(q,\theta) = \theta q^2 +
	\Bigl(1-a+2a\theta-
	\frac{3}{2}\theta^2\Bigr)q+2\theta-1.
	\label{eq:ex_v_nocross}
\end{equation}
As in Section~\ref{sec:MES}, the virtual
surplus is canonical,
$\vs(q,\theta) = -q^2/2 + \theta q$, with
$\vs_{q\theta}=1>0$ so $\qrel(\theta)=\theta$
is increasing, and
$v_{q\theta}=a+q-\theta$ so
$\qlim(\theta)=\theta-a$ is increasing.
Since $\qrel(\theta)=\theta>\theta-a
=\qlim(\theta)$, $\qrel\subset\csplus$: the
curves run parallel with constant gap $a$
and never cross. A direct computation shows
\begin{equation}
	\gif{\hat\theta}{\theta}{\qrel} =
	\frac{(\theta-\hat\theta)^2}{6}
	\bigl[3a-(\theta-\hat\theta)\bigr],
	\label{eq:phi_ex}
\end{equation}
which is negative for $\theta-\hat\theta>3a$,
so $\qrel$ is not implementable for
$a<\frac{1}{3}$: some pooling is
unavoidable.

Guided by Theorem~\ref{proposicion}, we
look for a continuous solution: a pool
$\qopt(\theta)=\bar q$ on
$\theta\in[0,\cutoff]$ followed by a
separating branch
$\qopt(\theta)=\theta+b$ on
$\theta\in[\cutoff,1]$, with the two
pieces meeting at $\cutoff$,
\begin{equation}
	\cutoff = \bar q - b .
	\label{eq:smoothpasting}
\end{equation}
Continuity is the only input carried over
from Theorem~\ref{proposicion}; the
rest uses only $\bar q$ and the binding
incentive constraint. The gap in
\eqref{eq:phi_ex} widens with distance from
the pool, so it is tightest for the pair
farthest apart: the top type $\theta=1$
against the pool. Setting
$\gif{\cutoff}{1}{\qopt}=0$ with
$\cutoff=\bar q-b$ gives, after
simplification,
\begin{equation}
	-\frac{1}{6}(1+b-\bar q)^2
	\bigl(3a+2b+\bar q-1\bigr)=0,
	\label{eq:binding_b}
\end{equation}
whose non-degenerate root is linear in $b$:
\begin{equation}
	b(\bar q,a) = \frac{1-3a-\bar q}{2}
	\label{eq:bofqbar}
\end{equation}
(the remaining root, $b=\bar q-1$, collapses
the branch onto the pool and is discarded).
Substituting \eqref{eq:bofqbar} and
$\cutoff=\bar q-b(\bar q,a)$ into the
objective leaves a single free variable,
\begin{equation}
	\Pi(\bar q,a) :=
	\int_0^{\cutoff}\!\vs(\bar q,\theta)\,
	d\theta
	+\int_{\cutoff}^1\!
	\vs(\theta+b(\bar q,a),\theta)\,d\theta .
	\label{eq:Pi_reduced}
\end{equation}
Maximizing $\Pi(\bar q,a)$ over
$\bar q\in[0,\,1+b(\bar q,a)]$ --- the
range compatible with monotonicity --- by
the Karush--Kuhn--Tucker conditions gives
three stationary points. Writing $\mu$ and
$\nu$ for the multipliers of
$\bar q\geq0$ and $\bar q\leq1+b(\bar q,a)$,
they are $(\mu,\nu)=(0,0)$ at
$\bar q=\bar q^*(a)$ below;
$\mu=\tfrac14(9a^2-9a+2)$, $\nu=0$ at
$\bar q=0$; and $\mu=0$,
$\nu=\tfrac13(2a-1)$ at $\bar q=1-a$. A
maximum needs non-negative multipliers on
binding constraints, and both
$\tfrac14(9a^2-9a+2)$ and $\tfrac13(2a-1)$
are strictly negative throughout
$a\in\bigl[0,\tfrac13\bigr)$, so the two
boundary points fail this sign condition
and only the interior one survives:
\begin{equation}
	\bar q^*(a) = \frac{(3a-2)(3a-1)}{4-9a},
	\qquad
	b^*(a) = b(\bar q^*(a),a)
	=\frac{(3a-1)^2}{4-9a},
	\qquad
	\cutoff^*(a) = \frac{3a-1}{9a-4}.
	\label{eq:ex_sol}
\end{equation}
The optimal contract is\footnote{The derivation
	above, including the two discarded
	boundary branches, is reproduced by
	\texttt{necessary\_conditions.py} in the
	supplementary material, which solves the
	variational systems and returns the
	analytical candidate in each of the forty
	configurations; \texttt{solver.py}, with
	\texttt{basic\_functions.py}, computes the
	numerical solution of the discretized
	problem independently. The discretized problem is not concave and
	admits multiple local optima; the coincidence
	reported here and in the two cases that
	follow was established by solving from
	several starting points and retaining the
	best, not from a single run. It is also
	ill-conditioned where the contract lies close to the dividing curve,
	since $v_{q\theta}$ vanishes on $\qlim$ --- the numerical counterpart
	of the erosion of informational rents that drives the pooling. We use
	the discretized problem to check the analytical solutions, not as an
	independent method for computing them.}
\begin{equation}
	\qopt(\theta) = \begin{cases}
		\bar q^*(a) &
		\theta \in [0,\,\cutoff^*(a)],
		\\[4pt]
		\theta + b^*(a) &
		\theta \in [\cutoff^*(a),\,1].
	\end{cases}
	\label{eq:ex_qstar}
\end{equation}
Figure~\ref{fig:example} plots
\eqref{eq:ex_qstar} for a representative
value of $a$ against the solution of the
fully discretized problem: types on a grid
$\{\theta_1,\dots,\theta_N\}$, and
$(q_i,\tau_i)_{i=1}^N$ chosen to maximize
expected profit subject to
$v(q_i,\theta_i)-\tau_i\geq0$ for every
$i$ and $v(q_i,\theta_i)-\tau_i\geq
v(q_j,\theta_i)-\tau_j$ for every pair
$i\neq j$, with no continuity or shape
imposed.\footnote{Solved with AMPL and
	Knitro; see the supplementary scripts
	\texttt{solver.py} and
	\texttt{basic\_functions.py}.} The two
solutions coincide, so the candidate found
here is not merely a stationary point of
the restricted pool-plus-branch problem,
but the unconstrained optimum.

We check that $\qopt$ lies above $\qlim$,
with a margin bounded away from zero
throughout $a\in\bigl[0,\tfrac13\bigr)$.
On the flat part, $\qopt-\qlim=\bar
q^*(a)-\theta+a$ is decreasing in $\theta$,
so its minimum on $[0,\cutoff^*(a)]$ is at
$\cutoff^*(a)$, where by
\eqref{eq:smoothpasting} it equals
$b^*(a)+a>0$. On the increasing part,
$\qopt-\qlim=b^*(a)+a=(2a-1)/(9a-4)$,
positive throughout
$a\in\bigl[0,\tfrac13\bigr)$ since both
$2a-1$ and $9a-4$ are negative there. We
further check $v_\theta(\qopt(\theta),
\theta)>0$ throughout, so that
$U'(\theta;\qopt)=v_\theta(\qopt(\theta),
\theta)>0$ and the IR constraint binds
only at $\uth=0$.

With $U(0)=0$, the agent's rent is
\begin{equation*}
	U(\theta;\qopt) =
	\int_0^\theta
	v_\theta(\qopt(x),x)\,dx ,
\end{equation*}
piecewise polynomial in $\theta$ through
$\qopt$.

\begin{figure}[htbp]
	\centering
	\includegraphics[width=0.5\linewidth]{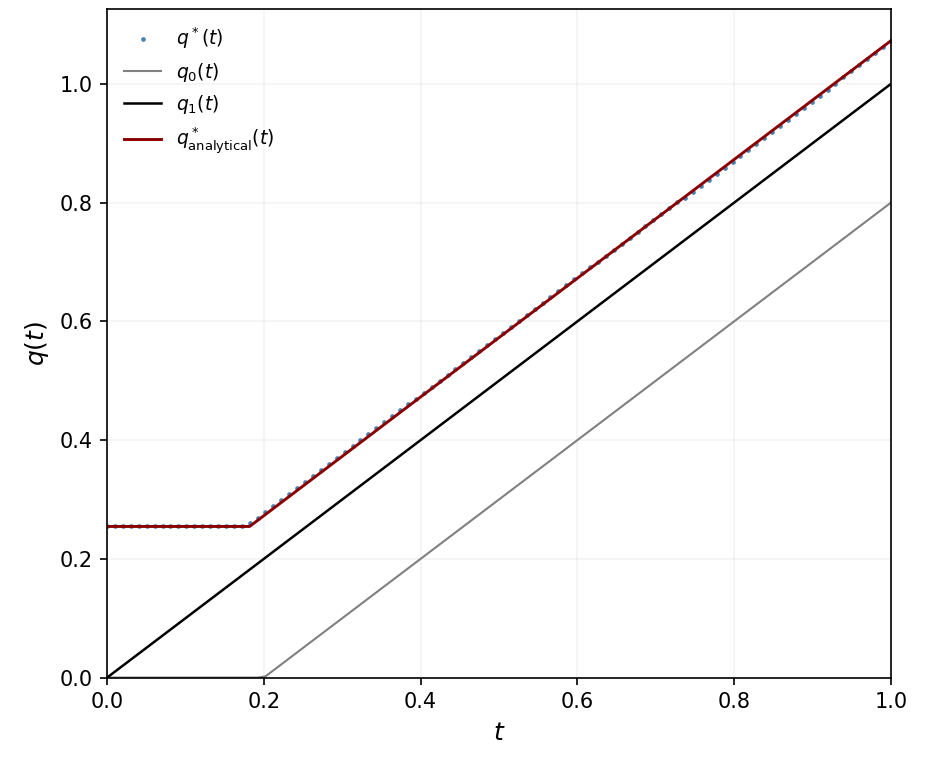}
	\caption{The optimal contract pools
		low types before branching into a
		separating region.}
	\label{fig:example}
\end{figure}

The shape \eqref{eq:ex_qstar} was found by
conjecture; Theorem~\ref{proposicion}
shows it could not have been otherwise.
Assumption~H holds throughout this
family --- in fact everywhere, not merely
above $\qrel$: writing
$R(q,\theta)=-\vs_q(q,\theta)/
v_{q\theta}(q,\theta)$,
\begin{equation*}
	\frac{\partial R}{\partial q}
	=\frac{\qrel(\theta)-\qlim(\theta)}
	{(q-\qlim(\theta))^2}
	=\frac{a}{(q-\qlim(\theta))^2}>0,
\end{equation*}
since $\qrel(\theta)-\qlim(\theta)=a$ is
exactly the constant gap that defines the
no-crossing corner. And $\qopt\geq\qrel$
holds by construction: $b^*(a)>0$ on the
branch, while on the pool
$\bar q^*(a)-\cutoff^*(a)=b^*(a)>0$ by
\eqref{eq:smoothpasting}. Hence
Theorem~\ref{proposicion} applies, and
any contract with a jump at $\cutoff^*(a)$
is dominated by \eqref{eq:ex_qstar}.

%%%%%%%%%%%%%%%%%%%%%%%%%%%%%%%%%%%%%%%
\subsection{The horizontal-crossing case ($\gamma=0$)}
We fix $(\alpha,\gamma)=(a,0)$ in \eqref{eq:MES_v}, with
$a\in\bigl(0,\tfrac{\sqrt2}{2}\bigr)$: this
is the horizontal corner, where
$\qlim\equiv a$ is flat and $\qrel$ crosses
it at $\theta^*=a$.

\subsubsection{Guiding Example: Discontinuous
	Solution}
\label{sec:ex_cross}
We illustrate with the case $\qlim$
constant, $\qrel$ increasing, crossing at
$\theta^*=a$. Let $a\in(0,\frac{\sqrt{2}}{2})$,
types uniform on $[0,1]$, and:
\begin{equation*}
	v(q,\theta) = \theta\Bigl(
	\frac{q^2}{2}-aq\Bigr)+\theta+q,
	\quad
	\cost(q,\theta) = -1+2\theta+q^2\theta+
	q(1+a-\theta-2a\theta).
\end{equation*}
The virtual surplus is
$\vs(q,\theta)=-q^2/2+\theta q$,
the relaxed solution is $\qrel(\theta)=\theta$,
and the dividing curve is $\qlim(\theta)=a$.
They cross at $\theta^*=a$.

Guided by Proposition~\ref{prop_jump_forced},
we look for a solution with a single jump: a
low flat $\qopt\equiv\qL$ on $[0,\theta_1)$,
a high flat $\qopt\equiv\qH$ on
$[\theta_1,\qH)$, and the relaxed solution
$\qopt(\theta)=\theta$ on $[\qH,1]$. The
jump is the only input carried over from the
Proposition; the rest uses only $\qH$ and
the binding incentive constraint.

Here $v_{q\theta}(q,\theta)=q-a$ does not
depend on $\theta$, so the incentive
constraint between the two flat levels
collapses to a one-dimensional integral,
\begin{equation}
	\int_{\qL}^{\qH}
	v_{q\theta}(\xi,\theta)\,d\xi
	=(\qH-\qL)
	\Bigl(\frac{\qH+\qL}{2}-a\Bigr),
	\label{eq:iso_collapse}
\end{equation}
which vanishes, for $\qL\neq\qH$, exactly
when
\begin{equation}
	\qL = 2a-\qH .
	\label{eq:iso_cross}
\end{equation}
The location of the jump follows from the
same relation: the principal is indifferent
about which of the two levels to assign the
marginal type when
$\vs(\qL,\theta_1)=\vs(\qH,\theta_1)$, and
since $\vs$ is linear in $\theta$ this gives
$\theta_1=(\qL+\qH)/2$, that is,
$\theta_1=a$ by \eqref{eq:iso_cross}: the
jump sits exactly at the crossing.
Substituting \eqref{eq:iso_cross} into the
objective leaves a single free variable:
\begin{equation}
	\Pi(\qH) =
	\int_0^a \vs(2a-\qH,\theta)\,d\theta
	+ \int_a^{\qH}
	\vs(\qH,\theta)\,d\theta
	+ \int_{\qH}^1
	\vs(\theta,\theta)\,d\theta
	= \frac{1}{6}(1-6a^3+6a^2\qH-\qH^3).
	\label{eq:profit_ex_cross}
\end{equation}
Since $\Pi''(\qH)=-\qH<0$, $\Pi$ is
strictly concave. The first-order
condition $\Pi'(\qH)=a^2-\qH^2/2=0$
gives $\qH^*=a\sqrt2$ and
$\qL^*=2a-a\sqrt2=a(2-\sqrt2)$, so the
optimal contract is:
\begin{equation}
	\qopt(\theta) =
	a(2-\sqrt2)\,
	\mathbf{1}_{[0,a)}(\theta) +
	a\sqrt2\,
	\mathbf{1}_{[a,a\sqrt2)}(\theta) +
	\theta\,
	\mathbf{1}_{[a\sqrt2,1]}(\theta),
	\label{eq:qopt_ex_cross}
\end{equation}
with expected profit
$\Pi^*=\Pi(\qH^*)=\dfrac16-\dfrac{3-2\sqrt2}{3}a^3$.
Both $\qL^*>0$ and $\qH^*\le1$ hold
throughout the whole domain
$a\in\bigl(0,\tfrac{\sqrt2}{2}\bigr)$, with
$\qH^*\to1$ exactly as
$a\to\tfrac{\sqrt2}{2}$: this is why the
domain was set at $\sqrt2/2$ in the first
place. Figure~\ref{fig:example2} plots
\eqref{eq:qopt_ex_cross} at a representative
interior point against the solution of the
fully discretized problem, with no
continuity or shape imposed; as in the
continuous case, the two coincide.

\begin{figure}[htbp]
	\centering
	\includegraphics[width=0.5\linewidth]{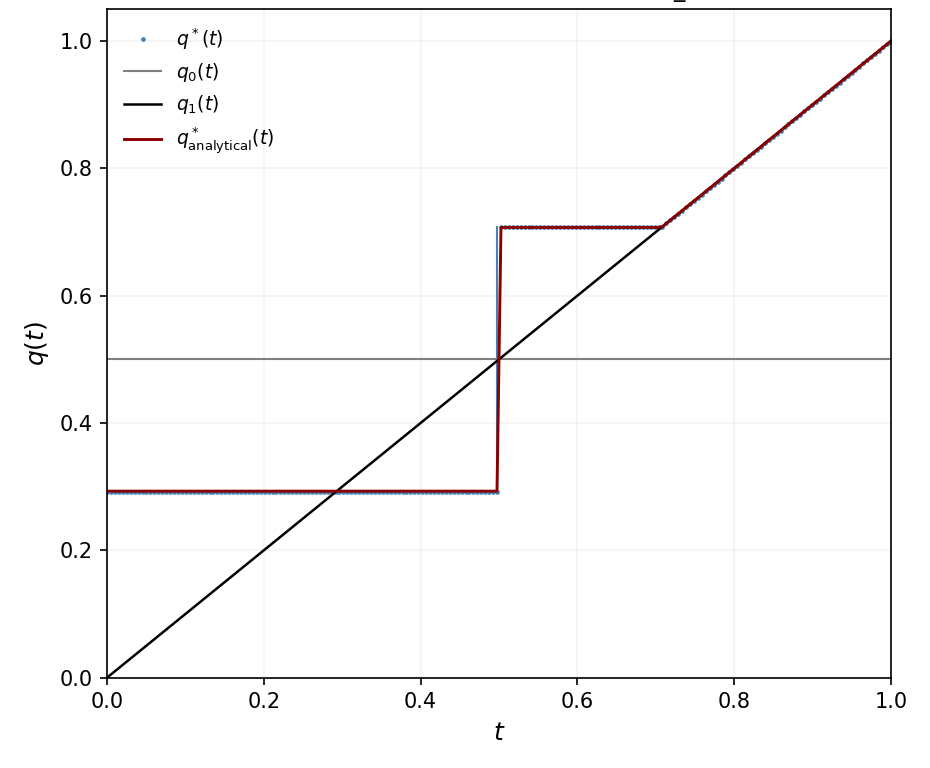}
	\caption{Crossing case with constant $\qlim$ and increasing $\qrel$ for $a=1/2$. }
	\label{fig:example2}
\end{figure}

The solution has a simple reading. Both
levels are proportional to the efficient
scale, $\qL^*=(2-\sqrt2)a\approx0.59a$ and
$\qH^*=\sqrt2\,a\approx1.41a$, so the menu
scales with the technology and the ratio
$\qH^*/\qL^*=1+\sqrt2$ is the same for every
$a$. The two levels straddle the threshold
and average to it, $\qL^*+\qH^*=2a$: this is
the incentive condition \eqref{eq:iso_cross}
read as a property of the menu, and it is
what makes the spread costless in rents.
The reason the principal spreads at all is
that $v_{q\theta}(a,\theta)=0$ for every
type: at the efficient scale all firms value
marginal output equally, so quantities near
$a$ carry no information about the firm and
are useless for screening. The principal
therefore refuses to trade in a whole
neighborhood of the efficient scale --- the
interval $(\qL^*,\qH^*)$, of width
$2(\sqrt2-1)a\approx0.83a$, is offered to no
type --- and sells one package below scale
and one above it.

The distortions are also unusual. On the
upper pool every type receives more than its
relaxed quantity, since $\qH^*>\theta$ for
all $\theta<\qH^*$: the non-local constraint
pushes the allocation above $\qrel$ rather
than below it. On the lower pool the
direction is mixed, upward for
$\theta<\qL^*$ and downward for
$\theta\in(\qL^*,a)$, and above $\qH^*$ the
relaxed solution is restored, so the top
types are undistorted.

\subsubsection*{Nonlinear tariff}
Using the pseudo-inverse
$\xi(q)$ of $\qopt$:
\begin{equation*}
	\xi(q)=
	\begin{cases}
		0, & q=\qL^*,\\
		a, & \qL^*<q<\qH^*,\\
		q, & \qH^*\leq q\leq 1,
	\end{cases}
\end{equation*}
the nonlinear tariff is
\begin{equation*}
	T(q)=\int_0^q v_q(s,\xi(s))\,ds
	=
	\begin{cases}
		q, & q=\qL^*,\\[4pt]
		\dfrac{a}{2}q^2-a^2q+q+a^3(\sqrt2-1), &
		\qL^*<q<\qH^*,\\[6pt]
		\dfrac{q^3}{3}-\dfrac{a}{2}q^2+q+
		\dfrac{3-2\sqrt2}{3}a^3, & \qH^*\leq q,
	\end{cases}
\end{equation*}
continuous at $\qH^*=a\sqrt2$ by
construction: both pieces evaluate to
$a\sqrt2$ there.

The shape \eqref{eq:qopt_ex_cross} was again
found by conjecture, and again the general
result shows it could not have been
otherwise. Proposition~\ref{prop_jump_forced}
requires that no full pool at a level below
$\qlim$ be optimal, which holds here with
room to spare: a constant contract at level
$\bar q$ earns $\bar q(1-\bar q)/2$, at most
$1/8$ over all $\bar q$, whereas $\Pi^*$ is
decreasing in $a$ and therefore bounded below
by its value at $a=\sqrt2/2$,
\begin{equation*}
	\Pi^*\;\geq\;\frac16-
	\frac{3-2\sqrt2}{3}
	\Bigl(\frac{\sqrt2}{2}\Bigr)^{3}
	\;>\;0.146\;>\;\frac18 .
\end{equation*}
Hence no continuous contract can be optimal,
and the jump in \eqref{eq:qopt_ex_cross} is
forced rather than chosen.

Both corners were solvable because the binding
structure was simple. In the no-crossing case a
single pair of types binds, the top type against
the pool, and the multiplier is a scalar. In the
horizontal case $v_{q\theta}$ vanishes
identically along $\qlim$, so a whole interval of
pairs binds at once and the two flat levels are
pinned by \eqref{eq:iso_collapse}. The remaining
case has $\qlim$ strictly increasing and crossing
$\qrel$ at an interior point. It is solvable too,
and it has the horizontal corner on its boundary,
reached as the slope of the separating branch
vanishes.

\subsection{The monotone-crossing case
	($0<\alpha<1-\gamma$)}
\label{sec:MES_regime3}

We now fix $\gamma\in(0,1)$ and $\alpha\in(0,1-\gamma)$ in
\eqref{eq:MES_v}, so that the dividing curve
$\qlim(\theta)=\alpha+\gamma\theta$ is strictly
increasing and crosses $\qrel(\theta)=\theta$
once, at
\begin{equation}
	\theta^*=\frac{\alpha}{1-\gamma}
	\in(\uth,\oth).
	\label{eq:thetastar}
\end{equation}
The relaxed solution begins in $\csminus$ and
ends in $\csplus$. Neither degeneracy of the two
previous corners is available: the incentive
constraint does not collapse to a
one-dimensional integral as in
\eqref{eq:iso_collapse}, because $v_{q\theta}$
now depends on $\theta$; and continuity is not
forced, because Theorem~\ref{proposicion}
requires $\qrel$ to lie in a single
single-crossing region.

Proposition~\ref{thm:mirror} settles the shape.
Since $v_{q\theta}$ is affine in $q$ throughout
the MES family, Definition~\ref{def:mirror_general}
gives the mirror in closed form,
\begin{equation}
	\qmir(\theta)=2\qlim(\theta)-\bar q,
	\label{eq:mirror}
\end{equation}
an increasing curve of slope $2\gamma$, twice
the slope of the dividing curve; and, for any pool
level $\bar q$ and cutoff $\theta_1$, the branch
must be
\begin{equation}
	\qopt(\theta)=\bar q\,
	\mathbf 1_{[\uth,\theta_1)}
	+\qmir(\theta)\,
	\mathbf 1_{[\theta_1,\theta_3)}
	+\qrel(\theta)\,
	\mathbf 1_{[\theta_3,\oth]} ,
	\label{eq:DR}
\end{equation}
where $\theta_3$ is the type at which $\qrel$
overtakes the mirror, $\qmir(\theta_3)=
\qrel(\theta_3)$, and $\theta_3=\oth$ if no such
type exists in $\Theta$; the two pieces are
pinned together continuously at $\theta_3$ by
construction, with no second flat and no branch
that departs from the mirror before $\qrel$
overtakes it. All that remains is to choose
$\bar q$ and $\theta_1$, which is a calculus
problem in two variables:
\begin{equation}
	\jfunc(\bar q,\theta_1)=
	\int_{\uth}^{\theta_1}
	\vs(\bar q,\theta)\,\dens(\theta)\,d\theta+
	\int_{\theta_1}^{\theta_3}
	\vs\bigl(\qmir(\theta),\theta\bigr)\,
	\dens(\theta)\,d\theta+
	\int_{\theta_3}^{\oth}
	\vs\bigl(\qrel(\theta),\theta\bigr)\,
	\dens(\theta)\,d\theta .
	\label{eq:JR}
\end{equation}

\begin{corollary}
	\label{prop:MES_regime3}
	Let $\gamma\in(0,1)$,
	$\alpha\in(0,1-\gamma)$, and write
	$D=2(1-\gamma)(1-2\gamma)$. The maximizer of
	\eqref{eq:JR} has its jump at the crossing,
	$\theta_1^*=\theta^*$, whatever the value of
	$\bar q$, with the two levels straddling the
	dividing curve symmetrically,
	$\tfrac12\bigl(\bar q^*+\qmir(\theta^*)\bigr)
	=\qlim(\theta^*)=\theta^*$; and
	\begin{equation*}
		\bar q^*=\theta^*\bigl(2(1-\gamma)
		-\sqrt D\bigr)
		\ \ \text{if }
		2(1-\gamma)\theta^{*2}<1-2\gamma,
		\qquad
		\bar q^*=(1-\gamma)\theta^*(2-\theta^*)
		+\gamma-\tfrac12
		\ \ \text{otherwise,}
		\label{eq:R_general}
	\end{equation*}
	the first case being the one in which the
	undistorted tail is present, with
	$\theta_3=\theta^*\sqrt{2(1-\gamma)/
		(1-2\gamma)}$ the type at which $\qrel$
	overtakes the mirror. The two expressions
	agree on the frontier
	$\gamma=\frac{1-2\theta^{*2}}
	{2(1-\theta^{*2})}$, where $\theta_3=\oth$.
\end{corollary}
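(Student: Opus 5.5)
The plan is to compute directly with the MES primitives, so the corollary reduces to two first-order conditions for the two-variable function $\jfunc(\bar q,\theta_1)$ in \eqref{eq:JR}. The primitives are $\dens\equiv1$ on $[0,1]$, $\vs(q,\theta)=-q^2/2+\theta q$, $\qrel(\theta)=\theta$, $\qlim(\theta)=\alpha+\gamma\theta$ and $\qmir(\theta)=2\alpha+2\gamma\theta-\bar q$. By Proposition~\ref{thm:mirror} the shape \eqref{eq:DR} is already forced, so nothing beyond this finite-dimensional optimization is needed.

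\emph{The cutoff.} I first note that $\theta_3$ depends on $\bar q$ only. It solves $2\alpha+2\gamma\theta_3-\bar q=\theta_3$, so $\theta_3=(2\alpha-\bar q)/(1-2\gamma)$ when this lies in $\Theta$, and $\theta_3=\oth=1$ otherwise. Differentiating \eqref{eq:JR} in $\theta_1$ therefore gives
\[
\partial_{\theta_1}\jfunc=\vs(\bar q,\theta_1)-\vs(\qmir(\theta_1),\theta_1).
\]
The elementary identity $\vs(x,\theta)-\vs(y,\theta)=(x-y)\bigl(\theta-\tfrac{x+y}{2}\bigr)$ applies here. Since $x\neq y$ across the jump, stationarity forces
\[
\theta_1=\tfrac12\bigl(\bar q+\qmir(\theta_1)\bigr)=\qlim(\theta_1)=\alpha+\gamma\theta_1,
\]
that is, $\theta_1=\alpha/(1-\gamma)=\theta^*$, whatever $\bar q$ is. The same sign computation shows $\partial_{\theta_1}\jfunc$ changes sign from positive to negative at $\theta^*$. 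Indeed, $\theta_1-\qlim(\theta_1)=(1-\gamma)(\theta_1-\theta^*)$ and $\bar q<\qmir$. So $\theta^*$ is the maximizer in $\theta_1$ for each fixed $\bar q$. The symmetric straddling $\tfrac12(\bar q^*+\qmir(\theta^*))=\qlim(\theta^*)=\theta^*$ is then just this condition read back.

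\emph{The pool level.} Next I differentiate in $\bar q$ with $\theta_1=\theta^*$. The boundary terms at $\theta_3$ cancel because $\qmir(\theta_3)=\qrel(\theta_3)$: the integrand is continuous there, and when $\theta_3=\oth$ the upper integral is empty. This leaves
\[
\int_0^{\theta^*}(\theta-\bar q)\,d\theta-\int_{\theta^*}^{\theta_3}\bigl(\theta-2\alpha-2\gamma\theta+\bar q\bigr)\,d\theta=0 .
\]
I then substitute $\alpha=(1-\gamma)\theta^*$ and split into two cases.
\begin{enumerate}[(i)]
\item \emph{Tail present.} Here $\theta_3<1$ and $\theta_3=(2\alpha-\bar q)/(1-2\gamma)$. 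The condition is a quadratic in $\bar q$ (equivalently in $\theta_3$), giving $\theta_3=\theta^*\sqrt{2(1-\gamma)/(1-2\gamma)}$ and hence $\bar q^*=\theta^*\bigl(2(1-\gamma)-\sqrt D\bigr)$. I keep the root with $\theta_3>\theta^*$; the other root places the overtaking point on the wrong side of the jump.
\item \emph{No tail.} Here $\theta_3=1$ and the condition is linear in $\bar q$. Collecting the $\theta^{*2}$ terms, whose coefficient is $\tfrac12+\tfrac{1-2\gamma}{2}-2(1-\gamma)=-(1-\gamma)$, gives $\bar q^*=(1-\gamma)\theta^*(2-\theta^*)+\gamma-\tfrac12$.
\end{enumerate}
The regime condition $\theta_3<1$ is, after squaring, exactly $2(1-\gamma)\theta^{*2}<1-2\gamma$. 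On equality both formulas yield $\theta_3=1$ and coincide, which gives the stated frontier for $\gamma$. For $\gamma\ge\tfrac12$ the mirror has slope $2\gamma\ge1$, never meets $\qrel$, and the inequality fails automatically, so only the second formula applies.

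\emph{Main obstacle: global maximality.} The step I expect to be hardest is showing that the stationary point is the global maximizer of $\jfunc$ on its feasible set, not merely a critical point. My first attempt is to show $\bar q\mapsto\jfunc(\bar q,\theta^*)$ is strictly concave in each regime. Its second derivative is $-\theta^*-(\theta_3-\theta^*)+(\text{term from }\partial\theta_3/\partial\bar q)$, and this last term vanishes because the integrand difference is zero at $\theta_3$. So the derivative is $-\theta_3<0$, and the $\bar q$-FOC identifies the unique maximizer. I would then check the corners by KKT, as in the no-crossing case:
\begin{itemize}
\item $\bar q=0$;
\item $\qmir(\theta^*)\le1$;
\item $\bar q<\qlim(\theta^*)$, i.e.\ a degenerate jump.
\end{itemize}
The check is that these carry multipliers of the wrong sign on the stated parameter range. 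I would also verify that $\bar q^*>0$ there.
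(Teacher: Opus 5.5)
Your first-order computation is the paper's proof. It uses the same boundary terms in $\theta_1$, factored through $\vs(x,\theta)-\vs(y,\theta)=(x-y)\bigl(\theta-\tfrac{x+y}{2}\bigr)$, the same cancellation at $\theta_3$, and the same two-case solution. Your coefficient $-(1-\gamma)$, the relation $\theta_3^2=2(1-\gamma)\theta^{*2}/(1-2\gamma)$ and the frontier all check out. Your observation that $\partial^2_{\bar q}\jfunc(\bar q,\theta^*)=-\theta_3<0$ is correct and goes beyond the paper.

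The gap is in the step you deferred. The corner $\bar q<\qlim(\theta_1)$ is \emph{not} slack on the whole stated range, so your check that it carries a multiplier of the wrong sign fails there, and the corollary fails with it. Your $\theta_1$-argument already presupposes $\bar q<\qlim(\theta^*)=\theta^*$; otherwise $\theta_1=\theta^*$ is not admissible in $D_R$. The no-tail root need not satisfy this. One computes $\theta^*-\bar q^*=(1-\gamma)\theta^{*2}+(2\gamma-1)\theta^*-\tfrac12(2\gamma-1)$, and for $\gamma>\tfrac12$ this is negative whenever $\theta^*<s/(1+s)$, with $s:=\sqrt{2\gamma-1}$. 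For instance, $\gamma=\tfrac{9}{10}$ and $\alpha=\tfrac{1}{100}$ give $\theta^*=\tfrac{1}{10}$ and $\bar q^*=\tfrac{419}{1000}$. Then $\qmir(\theta^*)=2\theta^*-\bar q^*=-\tfrac{219}{1000}$, so the jump is reversed and the branch is negative. Moreover $(\bar q^*,\theta^*)$ is a saddle of $\jfunc$: there $\jfunc_{\bar q\bar q}=-1$, the cross partial vanishes, and $\jfunc_{\theta_1\theta_1}=2(1-\gamma)(\bar q^*-\theta^*)>0$. This is exactly the kind of root that step S3 of Corollary~\ref{cor:mirror} discards.

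Your own two ingredients give the correct statement. For $\bar q<\theta^*$ the best cutoff is $\theta^*$, and $\jfunc(\cdot,\theta^*)$ is strictly concave. So the stationary root is admissible, and maximizes $\jfunc$ over pairs with $\bar q<\theta^*$, if and only if $\partial_{\bar q}\jfunc(\bar q,\theta^*)<0$ at $\bar q=\theta^*$. That derivative equals $-\tfrac12\theta^{*2}$ for $\gamma<\tfrac12$, and $\tfrac12\bigl[(2\gamma-1)(1-\theta^*)^2-\theta^{*2}\bigr]$ for $\gamma\ge\tfrac12$. For $\bar q\ge\theta^*$, every admissible cutoff exceeds $\theta^*$, and there $\partial_{\theta_1}\jfunc<0$.

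Hence, when $(2\gamma-1)(1-\theta^*)^2\ge\theta^{*2}$, $\jfunc$ has no interior maximizer on $D_R$. The supremum is approached only on the boundary $\bar q\uparrow\qlim(\theta_1)$, where the jump onto the mirror degenerates.

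Your argument goes through once you add the hypothesis $\bar q^*<\qlim(\theta^*)$, instead of claiming all of $\gamma\in(0,1)$. This hypothesis is automatic for $\gamma\le\tfrac12$, which covers the worked example $\gamma=\tfrac14$. For $\gamma>\tfrac12$ it is equivalent to $\theta^*>\sqrt{2\gamma-1}/(1+\sqrt{2\gamma-1})$. Even then, the reduction to $\theta_1=\theta^*$ covers only $\bar q<\theta^*$; pairs with $\bar q\ge\theta^*$ still need a separate comparison. The paper's proof stops at the first-order conditions and never checks admissibility, so it has the same gap.

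A minor point: $\qmir(\theta^*)\le1$ is not a constraint of $D_R$, since nothing bounds $Q$ above in this family. Only $\bar q\ge0$ and $\bar q<\qlim(\theta_1)$ matter, and $\bar q^*>0$ does hold on both branches.
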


\begin{proof}
	Differentiate \eqref{eq:JR} in $\theta_1$.
	Only the boundary terms survive, and they do
	not cancel, because the allocation jumps:
	\begin{equation*}
		\frac{\partial\jfunc}{\partial\theta_1}=
		\bigl[\vs(\bar q,\theta_1)-
		\vs(\qmir(\theta_1),\theta_1)\bigr]
		\dens(\theta_1)=
		\bigl(\qmir(\theta_1)-\bar q\bigr)
		\bigl(\qlim(\theta_1)-\theta_1\bigr)
		\dens(\theta_1),
	\end{equation*}
	using $\vs(q,\theta)=-q^2/2+\theta q$ and
	$\tfrac12(\bar q+\qmir(\theta_1))=
	\qlim(\theta_1)$. The first factor is the
	size of the jump and is positive, and
	$\qlim(\theta_1)-\theta_1=
	(1-\gamma)(\theta^*-\theta_1)$, so the
	derivative is positive below $\theta^*$,
	negative above, and vanishes only at
	$\theta^*$. Differentiating in $\bar q$: the
	boundary terms at $\theta_3$ cancel because
	$\qmir(\theta_3)=\qrel(\theta_3)$, the
	undistorted piece does not depend on
	$\bar q$, and $\partial_{\bar q}\qmir=-1$, so
	\begin{equation*}
		\int_{\uth}^{\theta^*}
		\vs_q(\bar q,\theta)\,\dens(\theta)\,
		d\theta=
		\int_{\theta^*}^{\theta_3}
		\vs_q(\qmir(\theta),\theta)\,
		\dens(\theta)\,d\theta ;
	\end{equation*}
	solving for $\bar q$ with $\theta_3=(2\alpha-\bar q)/(1-2\gamma)$ yields the two expressions of the corollary.
\end{proof}

\subsubsection{Guiding Example: Jump onto the
	Mirror}
\label{sec:ex_mirror}

We illustrate with $(\alpha,\gamma)=
\bigl(\tfrac38,\tfrac14\bigr)$. Types are uniform
on $[0,1]$, and the primitives of the MES family
read
\begin{equation}
	v(q,\theta)=\frac{\theta q^2}{2}
	-\Bigl(\frac38+\frac{\theta}{8}\Bigr)\theta q
	+q+\theta,
	\qquad
	\cost(q,\theta)=\theta q^2
	+\Bigl(\frac{11}{8}-\frac32\theta
	-\frac38\theta^2\Bigr)q+2\theta-1 .
	\label{eq:ex_v_mirror}
\end{equation}
The virtual surplus is again canonical,
$\vs(q,\theta)=-q^2/2+\theta q$, so
$\qrel(\theta)=\theta$; and
$v_{q\theta}=q-\tfrac38-\tfrac{\theta}{4}$, so
$\qlim(\theta)=\tfrac38+\tfrac{\theta}{4}$,
crossing $\qrel$ at $\theta^*=\tfrac12$. Here
$2(1-\gamma)\theta^{*2}=\tfrac38<\tfrac12
=1-2\gamma$, so Corollary~\ref{prop:MES_regime3}
gives
\begin{equation}
	\bar q^*=\frac{3-\sqrt3}{4}\approx0.3170,
	\qquad
	\qmir(\theta^*)=\frac{1+\sqrt3}{4}
	\approx0.6830,
	\qquad
	\theta_3=\frac{\sqrt3}{2}\approx0.8660,
	\label{eq:ex_mirror_sol}
\end{equation}
so that $\qmir(\theta)=\tfrac{\theta}{2}
+\tfrac{\sqrt3}{4}$ and the optimal contract is
\begin{equation}
	\qopt(\theta)=
	\left\{
	\begin{array}{lll}
		\dfrac{3-\sqrt3}{4}, &
		\theta\in\Bigl[0,\dfrac12\Bigr), &
		\text{(pool)}
		\\[10pt]
		\dfrac{\theta}{2}+\dfrac{\sqrt3}{4}, &
		\theta\in\Bigl[\dfrac12,\dfrac{\sqrt3}{2}
		\Bigr), &
		\text{(mirror branch)}
		\\[10pt]
		\theta, &
		\theta\in\Bigl[\dfrac{\sqrt3}{2},1\Bigr], &
		\text{(relaxed tail)}
	\end{array}
	\right.
	\label{eq:ex_mirror_qstar}
\end{equation}
with expected profit
$\Pi^*=\tfrac{5}{48}+\tfrac{\sqrt3}{32}
=\tfrac{10+3\sqrt3}{96}\approx0.15829$.
Figure~\ref{fig:example3} plots it against the
solution of the fully discretized problem, with
no continuity or shape imposed; as in the two
previous cases, the two coincide.

\begin{figure}[htbp]
	\centering
	\includegraphics[width=0.5\linewidth]
	{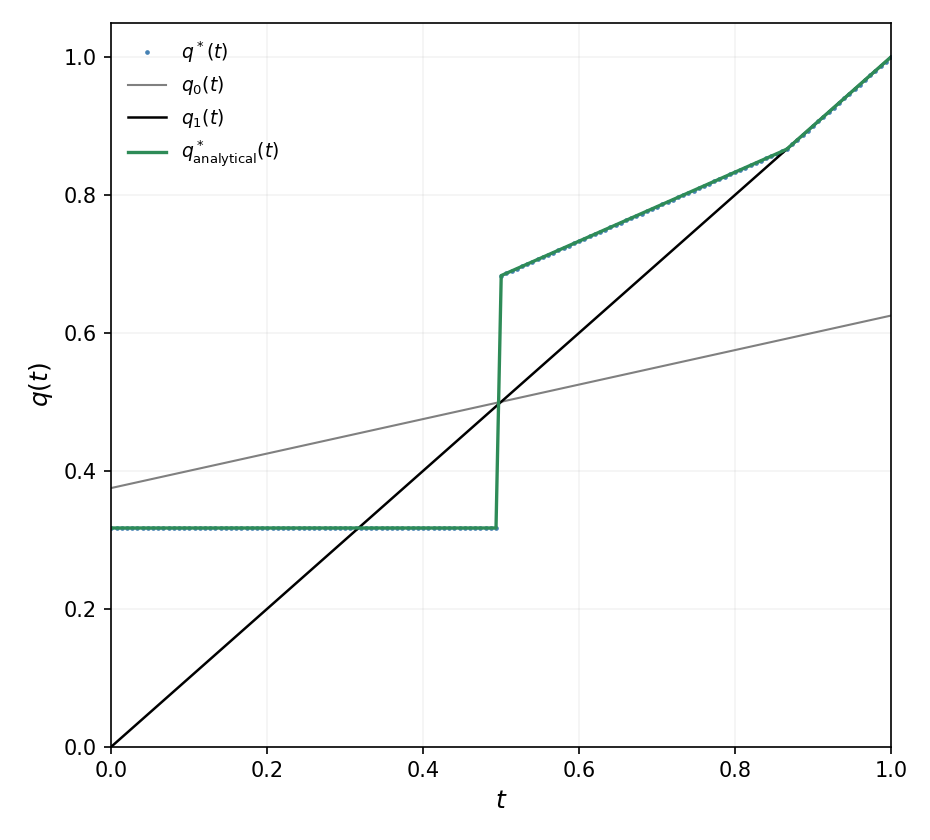}
	\caption{Monotone crossing with
		$(\alpha,\gamma)=(3/8,1/4)$: the contract
		pools low types, jumps at the crossing
		$\theta^*=1/2$ onto the mirror of the
		pooled level, and rejoins the relaxed
		solution at $\theta_3=\sqrt3/2$.}
	\label{fig:example3}
\end{figure}

The horizontal contract scaled with the
technology: $\qL^*$ and $\qH^*$ were both
proportional to $a$, so the menu was the same
shape at every scale. That property survives
here, in the sub-case with an undistorted tail
and with $\theta^*$ in the role of $a$. By Corollary~\ref{prop:MES_regime3} the three quantities $\bar q^*$, $q^m(\theta^*)$ and $\theta_3$ are all proportional to $\theta^*$, so the whole
contract is homogeneous of degree one in the
location of the crossing, and the ratio of the
two levels at the jump depends on $\gamma$ alone.
It does not survive in the sub-case without a
tail, where $\bar q^*\to\gamma-\tfrac12\neq0$ as
$\theta^*\to0$. What $\gamma$ controls, once the
crossing is interior, is therefore not whether
there is a jump but how fast the menu must climb
after it, and with that whether the highest
types are screened at all.

The horizontal corner is the limit
$\gamma\to0^+$. This case requires $\gamma>0$, so
that $\qlim$ is genuinely increasing; $\gamma=0$
itself belongs to Section~\ref{sec:ex_cross}. But the expressions of Corollary~\ref{prop:MES_regime3} are algebraic in $\gamma$ with no singularity at $\gamma=0$, so they admit a limit 
as $\gamma\to0^+$, along which
$\theta^*=\alpha/(1-\gamma)\to\alpha=a$ and
$D\to2$. That limit gives
$\bar q^*\to a(2-\sqrt2)$,
$\qmir(\theta^*)\to a\sqrt2$ and
$\theta_3\to a\sqrt2$, which are exactly
$\qL^*$, $\qH^*$ and the junction type of
\eqref{eq:qopt_ex_cross}, while
\eqref{eq:mirror} reduces to
\eqref{eq:iso_cross}. So the horizontal case is
not inside this one but sits on its boundary,
reached as the mirror's slope $2\gamma$ vanishes
and the branch flattens into the single level
that $\qrel$ eventually overtakes from below.

\begin{remark}[Full pooling is dominated]
	\label{rem:pooling}
	A constant allocation is always
	implementable, and in this family its value
	does not depend on $(\alpha,\gamma)$:
	$\int_{\uth}^{\oth}\vs(\bar q,\theta)
	\dens(\theta)\,d\theta=-\bar q^2/2+\bar q/2$
	is maximized at $\bar q=1/2$ with value
	$1/8$, the same bound already used in the
	horizontal case. It is therefore a competing
	candidate rather than a technicality. Where
	the second case of \eqref{eq:R_general}
	applies,
	\begin{equation}
		\jfunc(\qopt)-\tfrac18=
		\frac{(1-\gamma)(1-\theta^*)^2
			\bigl[\,3\theta^{*2}+
			\gamma(3\theta^*+1)(1-\theta^*)\,
			\bigr]}{6}>0,
		\label{eq:pooling_gap}
	\end{equation}
	every factor being positive for
	$\gamma\in(0,1)$ and $\theta^*\in(0,1)$.
	Where the first case applies,
	\begin{equation}
		\jfunc(\qopt)-\tfrac18=
		\frac1{24}+
		\frac{\theta^{*3}(1-\gamma)}{3}
		\Bigl[2\sqrt D-(3-4\gamma)\Bigr],
		\label{eq:pooling_gap_tail}
	\end{equation}
	and the identity
	$(3-4\gamma)^2-8D\equiv1$ shows that
	$3-4\gamma>2\sqrt D$ for every
	$\gamma\in\bigl(0,\tfrac12\bigr)$, so the
	bracket is strictly negative and
	\eqref{eq:pooling_gap_tail} decreases in
	$\theta^*$. Its minimum over the region where
	the tail is present is attained as
	$\gamma\to0^+$ and $\theta^*\to1/\sqrt2$,
	where it equals
	$\tfrac16-\tfrac{3-2\sqrt2}{3}
	\bigl(\tfrac{\sqrt2}{2}\bigr)^3-\tfrac18
	\approx0.0214$ --- the bound already computed
	for the horizontal corner, which is thus the
	worst case of the whole regime. Pooling
	therefore never wins here, on either branch.
	The margin vanishes only as $\gamma\to1$,
	where the two curves merge and the jump
	disappears, and as $\theta^*\to1$, which is
	the frontier $\alpha=1-\gamma$ beyond which
	$\qrel$ lies entirely in $\csminus$ and
	pooling does become the solution. Full
	pooling is the limit of this regime rather
	than an alternative to it, which is also why
	the hypothesis of
	Proposition~\ref{prop_jump_forced} can be
	verified rather than assumed.
\end{remark}

\begin{table}[ht!]
	\centering
	\renewcommand{\arraystretch}{1.25}
	\begin{tabular}{@{}p{0.20\textwidth}
			p{0.24\textwidth}
			p{0.24\textwidth}
			p{0.24\textwidth}@{}}
		\hline
		& \textbf{I. No crossing}
		& \textbf{II. Horizontal}
		& \textbf{III. Monotone crossing}\\
		& $(\alpha,\gamma)=(-a,1)$
		& $(\alpha,\gamma)=(a,0)$
		& $0<\alpha<1-\gamma$\\
		\hline
		$v_\theta(\cdot,\theta)$ on the range
		& injective
		& two-to-one
		& two-to-one\\
		Binding pairs
		& pool (one level) vs.\
		one type, $\oth$
		& pool (one level) vs.\
		flat branch (one level)
		& pool (one level) vs.\
		moving branch (a continuum
		of levels)\\
		Shape forced by
		& Thm.~\ref{proposicion}
		& Prop.~\ref{prop_jump_forced}
		& Prop.~\ref{thm:mirror}\\
		Separating piece
		& $\qiso$, pointwise
		& flat at $\qH^*$
		& mirror, moving with
		$\qlim$\\
		Top-type distortion
		& always positive, $b^*(a)$
		& always zero
		& zero iff tail present;
		continuous otherwise \\
		Jump
		& impossible
		& unavoidable
		& present, at $\theta^*$\\
		\hline
	\end{tabular}
	\caption{The three solvable cases of the MES
		family. The root cause is the first row:
		on one side of $\qlim$ the marginal rent
		$v_\theta(\cdot,\theta)$ is injective and
		only an isolated pair of types can bind;
		across $\qlim$ it is two-to-one and a whole
		interval can be held indifferent at once.
		Everything below follows from that.}
	\label{tab:regimes}
\end{table}

Table~\ref{tab:regimes} collects the comparison.
Case~I is the one that stands apart: with the
contract confined to one side of $\qlim$ the
mirror does not exist, a single pair of types
binds, and the separating branch is $\qiso$
rather than a reflection. This is why the
no-crossing solution \eqref{eq:ex_sol} has a
branch parallel to $\qrel$, while both crossing
cases have a branch tied to $\qlim$ and are
special cases of the same proposition.

\begin{remark}[Top-type distortion as a binding-constraint question]
	\label{rem:top_distortion}
	Whether the highest type is distorted is exactly the question of
	whether the isoperimetric constraint remains active at
	$\theta=\oth$: the top type is efficient, $\qopt(\oth)=\qrel(\oth)$,
	precisely when the constraint has already released before the top
	is reached. This is not a fourth phenomenon but a restatement of the
	two corners already in hand, together with the one case that falls
	genuinely in between.
	
	In Regime I the constraint never releases: the binding pair is
	$(\cutoff,\oth)$ itself, so $\oth$ is the anchor of the tightest
	constraint rather than a point where it has gone slack, and
	\begin{equation*}
		\qopt(\oth)-\qrel(\oth)=b^*(a)=\frac{(3a-1)^2}{4-9a}>0
		\qquad\text{throughout } a\in\bigl[0,\tfrac13\bigr):
	\end{equation*}
	the top type is \emph{always} distorted, however small $a$ is. In
	Regime II the constraint releases at $\theta_2=a\sqrt2<1$
	throughout the domain $a\in(0,\sqrt2/2)$, strictly before the top
	is reached, so $\qopt(\oth)=\qrel(\oth)$ \emph{always}.
	
	Regime III interpolates between these two answers rather than
	repeating either of them. The release point is
	$\theta_3=\theta^*\sqrt{2(1-\gamma)/(1-2\gamma)}$, and
	\begin{equation*}
		\qopt(\oth)-\qrel(\oth)=
		\begin{cases}
			0, & \theta_3<\oth
			\ \bigl(2(1-\gamma)\theta^{*2}<1-2\gamma\bigr),\\[6pt]
			\dfrac12\Bigl[2(1-\gamma)\theta^{*2}-(1-2\gamma)\Bigr],
			& \theta_3\geq\oth
			\ \bigl(2(1-\gamma)\theta^{*2}\geq1-2\gamma\bigr),
		\end{cases}
	\end{equation*}
	a quantity that is continuous across the frontier
	$\gamma^*=\frac{1-2\theta^{*2}}{2(1-\theta^{*2})}$, equal to zero
	there, and grows to $\tfrac12$ as $\gamma\to1$. So the two corners
	are the two ways a single constraint can fail to release within
	$\Theta$ --- never, or always --- and Regime III is the family in
	which $\gamma$ moves the release point continuously through the
	top of the type space, making the presence of top-type distortion
	itself an equilibrium object rather than a structural given.
\end{remark}

In each case the shape was established
beforehand, not guessed: continuity by
Theorem~\ref{proposicion}, the jump by
Proposition~\ref{prop_jump_forced}, and the
mirror branch by Proposition~\ref{thm:mirror}.
The next section develops the method that locates
the free parameters --- the cutoff, the jump, the
pool level --- for arbitrary primitives in the
class, and compares the resulting profit against
the continuous candidate whenever both are
available, which is what remains open in the
monotone-crossing case.

\section{The General Method: Optimal Contracts for Increasing Curves}
\label{sec:general_apply}

\subsection{\textbf{(i) No-Crossing Cases}}
\label{sec:nocrossing}

We analyze the cases where $\qrel$ does
not intersect $\qlim$, so $\qrel$ lies
entirely in one single-crossing region. When
$\qrel(\theta)<\qlim(\theta)$ for
all $\theta\in\Theta$ (or
$\qrel(\theta)>\qlim(\theta)$ when
$\csminus$ is above), the relaxed solution
lies entirely in $\csminus$. The LMC
requires $\qrel$ to be decreasing in
$\csminus$, but $\vs_{q\theta}>0$ implies
$\qrel$ is increasing --- a contradiction.
Hence $\qrel$ is non-implementable for
any parameter value and the problem
reduces to a trivial pooling solution.
These cases are excluded from further
analysis.

\subsubsection{Main Case: $\qrel\subset\csplus$}
\label{sec:nocrossing_main}

We fix the case $\qlim$ and $\qrel$ increasing, $\csplus$ above $\qlim$, $\qrel > \qlim$, to illustrate how to find a solution when $\qrel$ satisfies local monotonicity but is not implementable. As the guiding example of Section~\ref{sec:ex_nocross} shows, the
global IC constraint may bind even though $\qrel$ satisfies the
local monotonicity condition of Section~\ref{sec:necesarias}. In this case, the binding constraints are \emph{downward}: higher types are tempted to mimic lower ones. Although $v_{q\theta}>0$
means higher types value decision more, the
condition $v_{q\theta\theta}<0$ causes this
advantage to erode as $\theta$ increases
toward $\qlim$. Near $\qlim$, where
$v_{q\theta}\approx 0$, the informational
rent separating adjacent types shrinks, and
higher types find mimicry attractive.

The principal responds by pooling low types:
all $\theta\in[\uth,\cutoff]$ receive the
same decision $q(\cutoff)$, eliminating the
incentive to mimic anyone in this range.
High types $\theta\in[\cutoff,\oth]$ receive
a strictly increasing allocation $\qiso(\theta)$
above $\qrel$. By Theorem~\ref{proposicion}, the
optimal contract must be continuous and
therefore \emph{flat-then-increasing}, as
illustrated in Figure~\ref{fig:sc_regions}.

\begin{figure}[htbp]
	\centering
	\begin{subfigure}{0.38\textwidth}
		\centering
		\resizebox{\linewidth}{!}{% cambia item III
% I Q0 creciente
% II Q1 creciente
% III Q1(theta_0) > Q0(theta_0)  sin interseccion, 
% IV SC+ arriba (o Q1 dentro de SC- )

\begin{tikzpicture}

% horizontal axis flecha
\draw[->] (0,0) -- (4.3,0) node[anchor=north] {$\theta$};
% inicio
\draw	(0,0) node[anchor=north] {$\underline{\theta}$};

% vertical axis fecha
\draw[->] (0,0) -- (0,4.5) node[anchor=east] {$q$};

% ubicacion SC+ y SC-
\draw	(1,3) node{CS$_{+}$}
		   (3,1) node{CS$_{-}$};
		   
% Q1 curva
%\draw[thick,dashed] (2,2) parabola (4,3);
%\draw (4.2,4.2) node {$Q_1$}; 
%etiqueta

\draw[thick,dashed] [domain=0:4,smooth,variable=\x,red] plot ({\x},{(\x*\x)/4});
\draw (4.3,4.2)  node {\tiny{$q_0(\theta)$}}; %label

\draw[domain=0:3.3,smooth,variable=\x,blue] (0,0.1) parabola (3.5,4);
\draw (3.5,4.2) node {\tiny{$q_1(\theta)$}}; 
%IC region
%\draw (1,0.5) -- (1,2);
%\draw (1,2) -- (3.05,2);
\draw[thick]  (0,1) -- (1,1) parabola (3,4);
\draw (2.5,3.8) node {\tiny{$q^{*}(\theta)$}}; 
\end{tikzpicture}}
		\caption{$q_1\subset SC_+$.}
	\end{subfigure}\hspace{0.06\textwidth}%
	\begin{subfigure}{0.38\textwidth}
		\centering
		\resizebox{\linewidth}{!}{% I Q0 creciente
% II Q1 creciente
% III Q1(theta_0) < Q0(theta_0)  sin interseccion, 
% IV SC+ arriba (o Q1 dentro de SC- )

%\begin{document}

\begin{tikzpicture}

% horizontal axis flecha
\draw[->] (0,0) -- (4.3,0) node[anchor=north] {$\theta$};
% inicio
\draw	(0,0) node[anchor=north] {$\underline{\theta}$};

% vertical axis fecha
\draw[->] (0,0) -- (0,4.5) node[anchor=east] {$q$};

% ubicacion SC+ y SC-
\draw	(1,3) node{CS$_{+}$}
		   (3,1) node{CS$_{-}$};
		   
% Q1 curva
%\draw[thick,dashed] (2,2) parabola (4,3);
%\draw (4.2,4.2) node {$Q_1$}; 
%etiqueta

%Q0 curva
\draw[thick,dashed] [domain=0:4,smooth,variable=\x,red] (0,0.1) parabola (3.5,4);
\draw (3,4) node {\tiny{$q_0(\theta)$}}; %label

\draw[domain=0:4,smooth,variable=\x,blue]  plot ({\x},{(\x*\x)/4});
\draw (4.3,4.2)  node {\tiny{$q_1(\theta)$}}; %label

\draw[thick]  (0,2) -- (4.3,2);
\draw (4.7,2)  node {\tiny{$q^{*}(\theta)$}}; %label
%IC region
%\draw (1,0.5) -- (1,2);
%\draw (1,2) -- (3.05,2);

\end{tikzpicture}

%\end{document}
}
		\caption{$q_1\subset SC_-$.}
	\end{subfigure}
	\caption{The no-crossing configuration in its two orientations.}
	\label{fig:sc_regions}
\end{figure}

This structure contrasts with the standard
SMC case \citep{MyR, GL1984}, where pooling
arises only when $\qrel$ is non-monotone.
Here $\qrel$ is strictly increasing and
satisfies the LMC, yet pooling is optimal
due to the erosion of incentives near $\qlim$.
It also contrasts with \citet{AM2010}, where
a decreasing $\qlim$ generates discrete
pooling of isolated type pairs via the
U-condition. In our setting, pooling is
continuous over an interval.

\citet{AVP2022} showed that, within the space
of flat-then-increasing allocations, the
optimal contract is characterized by the
isoperimetric condition
$\gif{\uth}{\cutoff}{q}=0$.
Theorem~\ref{proposicion} provides the
missing foundation: continuity is a
\emph{necessary} condition for implementability
in our setting, which together with the LMC
implies that the optimal contract must indeed
be flat-then-increasing. The characterization
of \citet{AVP2022} is therefore not merely a
convenient restriction but a theorem.

%=================================================
\subsubsection{Characterization: Necessity}
\label{sec:characterization}
%=================================================

Under Assumption~\ref{S1}, $v_\theta>0$
was used to reduce the virtual surplus and
ensure the IR constraint binds only at $\uth$.
However, since the rent
$U'(\theta\,;q)=v_\theta(\qopt(\theta),\theta)$
must be verified to be non-negative along the
optimal solution, this condition must be
checked \emph{a posteriori}.\footnote{When
	the informational rent has an interior minimum,
	the IR constraint may bind at an interior type
	rather than at $\uth$. In this case the problem
	requires reformulation; see \citet{AVP2022}
	for a treatment of this case.}
The space of candidate solutions is:
\begin{equation*}
	D = \Bigl\{(q,\cutoff)\in
	C^1[\uth,\oth]\times\mathbb{R}:
	q(\theta)=\begin{cases}
		q(\cutoff) &
		\theta\in[\uth,\cutoff],\\
		q(\theta) &
		\theta\in[\cutoff,\oth],
	\end{cases}\;
	\uth<\cutoff<\oth\Bigr\}.
\end{equation*}
The isoperimetric constraint
$\wfunc(q,\cutoff)=0$ reduces to:
\begin{equation}
	\wfunc(q,\cutoff) :=
	\int_{\cutoff}^{\oth}
	\bigl[v_\theta(q(\theta),\theta) -
	v_\theta(q(\cutoff),\theta)\bigr]\,
	d\theta = 0.
	\label{eq:iso_constraint}
\end{equation}

\begin{proposition}
	\label{prop_variacional}
	Let $(q,\cutoff)$ be an optimal solution
	in $D$ to $\max \jfunc$ subject to
	$\wfunc=0$. Then there exists
	$\nu\in\mathbb{R}$ such that:
	\begin{enumerate}[(i)]
		\item For $\theta\in[\cutoff,\oth]$:
		$\;\vs_q(q(\theta),\theta)\,
		\dens(\theta) +
		\nu\,v_{q\theta}(q(\theta),\theta)
		= 0.$
		\item $\displaystyle
		\int_{\uth}^{\cutoff}
		\vs_q(q(\cutoff),\theta)\,
		\dens(\theta)\,d\theta
		= \nu\int_{\cutoff}^{\oth}
		v_{q\theta}(q(\cutoff),\theta)\,
		d\theta.$
	\end{enumerate}
\end{proposition}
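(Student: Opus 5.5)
The plan is to treat $(q,\cutoff)\mapsto(\jfunc,\wfunc)$ as a smooth isoperimetric problem with a free boundary, and apply a Lagrange multiplier rule in the Banach space $C^1[\cutoff,\oth]\times\mathbb{R}$. Concretely, I write the objective on $D$ as
\[
\jfunc(q,\cutoff)=\int_{\uth}^{\cutoff}\vs(q(\cutoff),\theta)\dens(\theta)\,d\theta+\int_{\cutoff}^{\oth}\vs(q(\theta),\theta)\dens(\theta)\,d\theta,
\]
and the constraint as $\wfunc$ in \eqref{eq:iso_constraint}. Both are Fréchet differentiable under Assumption~\ref{S1} ($C^3$ primitives). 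Provided the constraint is regular (its derivative is not identically zero), the multiplier rule gives $\nu$ with $\delta\jfunc+\nu\,\delta\wfunc=0$ for every admissible variation. The two conditions then come from two families of variations.

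\emph{Interior variations.} First I take $\phi\in C^1$ supported in $(\cutoff,\oth)$ and perturb $q\mapsto q+\epsilon\phi$ with $\cutoff$ fixed. Then $q(\cutoff)$ is unchanged, so only the second integral of $\jfunc$ and the first term of $\wfunc$ move. This gives
\[
\int_{\cutoff}^{\oth}\bigl[\vs_q(q,\theta)\dens(\theta)+\nu\,v_{q\theta}(q,\theta)\bigr]\phi\,d\theta=0.
\]
The fundamental lemma of the calculus of variations then yields (i) on the open interval, and continuity extends it to the closed one. This is exactly Lemma~\ref{teo11} restricted to $[\cutoff,\oth]$, and I may simply invoke it there.

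\emph{Variation of the pooled level.} Next I take $\phi$ with $\phi(\cutoff)=1$, keep $\cutoff$ fixed, and compute the full first variation. The pool contributes $\int_{\uth}^{\cutoff}\vs_q(q(\cutoff),\theta)\dens\,d\theta$. In $\wfunc$, the term $-v_\theta(q(\cutoff),\theta)$ contributes $-\int_{\cutoff}^{\oth}v_{q\theta}(q(\cutoff),\theta)\,d\theta$. The interior contributions cancel by (i), already established. The stationarity equation is therefore
\[
\int_{\uth}^{\cutoff}\vs_q(q(\cutoff),\theta)\dens\,d\theta-\nu\int_{\cutoff}^{\oth}v_{q\theta}(q(\cutoff),\theta)\,d\theta=0,
\]
which is (ii).

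\emph{Variation of $\cutoff$.} For completeness, varying $\cutoff$ with $q$ fixed gives only boundary terms. Both the boundary terms of $\jfunc$ and the integrand of $\wfunc$ at $\cutoff$ vanish, since the integrand there is $v_\theta(q(\cutoff),\cutoff)-v_\theta(q(\cutoff),\cutoff)=0$. What remains is again the $q(\cutoff)$-dependence, so this variation adds nothing beyond (ii). I note this so that the free boundary does not produce an extra condition.

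The main obstacle is justifying the multiplier rule, that is, the constraint qualification. I need $\delta\wfunc\neq0$ along some admissible direction. For interior $\phi$, $\delta\wfunc=\int v_{q\theta}(q,\theta)\phi$, which is nonzero unless $v_{q\theta}(q(\theta),\theta)\equiv0$ on $(\cutoff,\oth)$, i.e.\ $q$ lies on $\qlim$. That case is excluded in the no-crossing configuration by Lemma~\ref{lmc}, since the branch sits strictly in $\csplus$. If regularity nevertheless failed, I would fall back on the Fritz John form $\lambda_0\delta\jfunc+\nu\delta\wfunc=0$ and rule out $\lambda_0=0$ by the same observation. A secondary issue is that $D$ is not a linear space, because the pool value is tied to $q(\cutoff)$. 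I handle this by parametrizing $D$ through $(q|_{[\cutoff,\oth]},\cutoff)$ and shifting the domain with $\cutoff$, which makes the constraint map a $C^1$ map on an open set of a Banach space.
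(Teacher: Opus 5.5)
Your proof is correct. It runs in the same framework as the paper's: a Lagrange multiplier rule in a $C^1$ space, first variations, and the fundamental lemma for (i). Where it differs is in how you extract (ii).

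\textbf{How (ii) is obtained.} The paper gets (ii) from a pure cutoff variation ($r\equiv0$), for which $\delta\bar q=\xi_1 q'(\cutoff)$. That step needs $q'(\cutoff)\neq0$, which the paper asserts but never proves. You instead keep $\cutoff$ fixed, move the pooled level through $\phi(\cutoff)=1$, and use the already-established (i) to cancel the branch integral. This requires nothing of $q'(\cutoff)$, so your route is the more robust one. What the paper's cutoff variation buys in exchange is an explicit display that the two boundary terms of $\jfunc$ cancel by continuity. Those terms cancel against each other rather than vanishing separately, as your wording suggests.

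\textbf{Moving-domain parametrization.} Every direction you actually use fixes $\cutoff$. You can therefore freeze $\cutoff$ at its optimal value and work on a fixed space, so the moving-domain parametrization is unnecessary. The paper avoids the issue differently, by working in $C^1[\uth,\oth]$ and reading the pool value off as $q(\cutoff)$.

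\textbf{Constraint qualification.} You check it, which the paper skips, but your stated reason is off. Lemma~\ref{lmc} does not place the branch strictly in $\csplus$; that comes from the maintained restriction $q\geq\qrel>\qlim$. In fact no restriction is needed. If $v_{q\theta}(q(\theta),\theta)\equiv0$ on the branch, then $q(\cutoff)=\qlim(\cutoff)$ by continuity. For $\theta>\cutoff$ we have $v_{q\theta}(\qlim(\cutoff),\theta)<0$, because $\qlim$ is strictly increasing and $v_{qq\theta}>0$. Your pool-level direction then gives $\delta\wfunc=-\int_{\cutoff}^{\oth}v_{q\theta}(\qlim(\cutoff),\theta)\,d\theta>0$, so regularity holds unconditionally.

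\textbf{Minor points.} The Fritz John fallback adds nothing, since excluding $\lambda_0=0$ is the same regularity statement. The appeal to Lemma~\ref{teo11} is not literally licensed, because that lemma concerns the problem without a pool. Your direct derivation of (i) already makes it unnecessary.
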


\begin{proof}
	See Appendix.
\end{proof}

\begin{corollary}
	\label{cor:algorithm}
	A candidate optimal triple
	$(\nu,\cutoff,\qopt)$ is computed as
	follows:
	\begin{enumerate}[S1.]
		\item For each $\nu\in\mathbb{R}$,
		solve (i) to obtain
		$\qiso(\theta)$ on $[\cutoff,\oth]$.
		\item For each $\cutoff\in(\uth,\oth)$,
		find $\nu(\cutoff)$ satisfying
		$\wfunc(q,\cutoff)=0$:
		\begin{equation*}
			\int_{\cutoff}^{\oth}
			\bigl[v_\theta(\qiso(\theta),\theta)
			- v_\theta(\qiso(\cutoff),\theta)
			\bigr]\,d\theta = 0.
		\end{equation*}
		\item Find a cutoff $\cutoff^*$
		satisfying the condition (ii) with
		$\nu=\nu(\cutoff)$.
		\item The candidate optimal contract is:
		\begin{equation}
			\qopt(\theta) = \begin{cases}
				\qiso(\cutoff^*) &
				\theta\in[\uth,\cutoff^*],\\
				\qiso(\theta) &
				\theta\in[\cutoff^*,\oth].
			\end{cases}
			\label{eq:qstar}
		\end{equation}
	\end{enumerate}
\end{corollary}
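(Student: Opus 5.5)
The corollary is a reorganization of the necessary conditions of Proposition~\ref{prop_variacional} into a sequential computation, so the plan is to show that every optimal pair $(q,\cutoff)\in D$ is produced by steps S1--S4, and that the steps are well posed. I would take an optimal $(q,\cutoff)$ with its multiplier $\nu$ from Proposition~\ref{prop_variacional}. I would then read off each step in turn: condition (i) is S1, the constraint $\wfunc=0$ of \eqref{eq:iso_constraint} is S2, condition (ii) is S3, and membership in $D$ together with continuity at $\cutoff$ (Theorem~\ref{proposicion}) is S4.

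For S1, condition (i) is a pointwise equation in $q$ for each fixed $\theta$ and $\nu$:
\[
G(q,\theta;\nu):=\vs_q(q,\theta)\dens(\theta)+\nu\, v_{q\theta}(q,\theta)=0 .
\]
I would solve it by the implicit function theorem, using
\[
G_q=\vs_{qq}\,\dens+\nu\, v_{qq\theta} .
\]
The first term is negative by Assumption~\ref{S2}, and $v_{qq\theta}$ has constant sign by Assumption~\ref{S3}. On the relevant side of $\qrel$ this ties the sign of $\nu$ to the direction of distortion, since $\qiso\ge\qrel$ forces $\vs_q\le0$ and hence $\nu\ge0$ because $v_{q\theta}>0$ in $\csplus$. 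With that sign, I would look for a branch $\qiso(\cdot;\nu)$ on which $G_q<0$, so that it is $C^1$ and unique within the region $\csplus$. Assumption~H is useful here. Along the solution, $\nu/\dens(\theta)=R(\qiso,\theta)$, and $R$ is strictly increasing in $q$ above $\qrel$. Hence for each $\nu\ge0$ there is at most one solution above $\qrel$, and it moves monotonically with $\nu$. This gives S1 as a well-defined map $\nu\mapsto\qiso(\cdot;\nu)$.

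For S2, for each fixed $\cutoff$ I would consider
\[
\nu\mapsto\wfunc\bigl(\qiso(\cdot;\nu),\cutoff\bigr),
\]
which is continuous by the previous step. At $\nu=0$, $\qiso=\qrel$, and $\wfunc<0$ whenever the relaxed solution is not implementable from $\cutoff$, as in \eqref{eq:phi_ex}. As $\nu$ grows, $\qiso$ moves away from $\qlim$ deeper into $\csplus$, which raises $v_\theta(\qiso(\theta),\theta)-v_\theta(\qiso(\cutoff),\theta)$ on $[\cutoff,\oth]$. An intermediate-value argument then yields a root $\nu(\cutoff)$. Once $\nu(\cutoff)$ is fixed, S3 is exactly condition (ii), and S4 assembles the contract as the flat $\qiso(\cutoff^*)$ followed by the branch $\qiso$, which is the form \eqref{eq:qstar} required by $D$.

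The main obstacle is S2: showing that the root exists, and ideally that it is unique. Monotonicity of $\wfunc$ in $\nu$ is not immediate, because $\nu$ moves both the branch and the anchor $\qiso(\cutoff)$ at once. I would first try to differentiate $\wfunc$ in $\nu$ and sign the derivative using $\partial_\nu\qiso=-v_{q\theta}/G_q>0$ together with the sign of $v_{qq\theta}$. If that fails in general, I would state the corollary as producing \emph{candidates}, that is, all solutions of the system. This is all that necessity requires, with optimality deferred to the certificates of Section~\ref{sec:general_apply}.
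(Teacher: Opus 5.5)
Your reading matches the paper, which gives no separate proof of this corollary: it is the necessary conditions (i)--(ii) of Proposition~\ref{prop_variacional}, together with the constraint $\wfunc=0$ and the continuous flat-then-branch form built into $D$, arranged as a sequential computation, so your correspondence S1$\leftrightarrow$(i), S2$\leftrightarrow\wfunc=0$, S3$\leftrightarrow$(ii), S4$\leftrightarrow D$ is the whole argument. The well-posedness analysis you add (uniqueness of $\qiso$ via Assumption~H, an intermediate-value root $\nu(\cutoff)$ with $\nu\ge0$) goes beyond anything the paper asserts, and your fallback of reading the procedure as returning all solutions of the system is exactly the paper's intent, with optimality deferred to Proposition~\ref{prop_suff}. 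Do not insist on $\nu\ge0$ in S2, however: in the guiding example $\wfunc(\qrel,\cutoff)>0$ once $\oth-\cutoff<3a$, so for such cutoffs the root $\nu(\cutoff)$ is negative.
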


After computing $\qopt$, implementability
$\gif{\theta}{\hth}{\qopt}\geq 0$ holds in
three regions. For
$\theta,\hth\in[\cutoff^*,\oth]$: follows
from condition~(i), since $\qiso$ satisfies
the local IC condition in $\csplus$. For
$\theta,\hth\in[\uth,\cutoff^*]$: all types
receive $\qiso(\cutoff^*)$, so
$\gif{\theta}{\hth}{\qopt}=0$. Across
regions: guaranteed by
$\wfunc(q,\cutoff^*)=0$.

\subsubsection{Sufficiency}
\label{sec:sufficiency}

The optimality conditions of
Proposition~\ref{prop_variacional}
are necessary. We now show that,
under an additional condition on the
solution $(\qopt,\cutoff^*,\nu^*)$,
they are also sufficient for global
optimality among all implementable
allocations.

\begin{assumption}
	\label{S5}
	The multiplier $\nu^*$ characterized
	in Proposition~\ref{prop_variacional}
	satisfies:
	\begin{equation}
		\vs_{qq}(q,\theta)\,\dens(\theta)
		+ \nu^*\,v_{qq\theta}(q,\theta) < 0
		\quad \forall\,(q,\theta)\in
		Q\times\Theta.
		\label{eq:S5}
	\end{equation}
\end{assumption}

Assumption~\ref{S5} requires
the Lagrangian
$\mathcal{L}(q,\theta;\nu^*)=
\vs(q,\theta)\dens(\theta)+
\nu^*v_\theta(q,\theta)$
to be strictly concave in $q$ on
$[\cutoff^*,\oth]$. It is endogenous
in the sense that it depends on $\nu^*$,
and must be verified ex post once the
solution is computed. For the linear
family of examples in
Appendix~\ref{sec:realization},
\eqref{eq:S5} reduces to
$b_2 + \nu^* p < 0$, which holds
whenever $\nu^* < -b_2/p$, a condition
that can be verified directly from the
analytical solutions in
Appendix~\ref{sec:cases}. The same degeneracy
simplifies the remaining conditions of this section. In that family the
mass identities and the boundary values of the weights hold identically,
the concavity requirement holds with equality exactly at the jump type,
and the branch weights are monotone by construction, so that sufficiency
in each configuration reduces to a short list of sign conditions; see
Lemma~\ref{lem:cert_reduction} in Appendix~\ref{sec:realization}, and the
certificate recorded for each configuration in Table~\ref{tab:solutions}. The domain of
\eqref{eq:S5} is all of $\Theta$
rather than $[\cutoff^*,\oth]$
because the certificate below prices
the pooled types individually; in the
linear family the condition does not
depend on $(q,\theta)$ and the
extension is vacuous.

At the optimum, the isoperimetric constraint does not bind in
isolation: since $\qopt$ is constant on the pool,
$\gif{\theta_a}{\oth}{\qopt}=\wfunc(\qopt,\cutoff^*)=0$ for
\emph{every} $\theta_a\in[\uth,\cutoff^*]$ --- a whole family of
global constraints binds at once, and the right constraint to
dualize is a weighted average of that family rather than any single
member.\footnote{No scalar certificate exists. With a single
	dualized constraint, the Lagrangian density on the pool is
	$\vs\,\dens$ except at the reference type, whose contribution to
	the integral is negligible; moving the pooled level toward
	$\qrel$ then raises the Lagrangian strictly above
	$\jfunc(\qopt)$, whatever the multiplier. Any valid certificate
	must weight more than one member of the family.} The weight is
explicit. Write
\begin{equation*}
	V(\xi,\theta)
	:=\int_{\theta}^{\oth} v_\theta(\xi,x)\,dx,
	\qquad
	F(\theta)
	:=\int_{\uth}^{\theta}\vs_q(\bar q^*,x)\,\dens(x)\,dx,
\end{equation*}
where $\bar q^*=\qopt(\cutoff^*)$ is the pooled level, and define
\begin{equation}
	\Lambda(\theta):=\frac{F(\theta)}{V_q(\bar q^*,\theta)},
	\qquad \lambda:=\Lambda'
	\quad\text{on }[\uth,\cutoff^*].
	\label{eq:flow}
\end{equation}

\begin{assumption}[Multiplier weight]
	\label{Sflow}
	$V_q(\bar q^*,\uth)<0$, and $\Lambda$ of \eqref{eq:flow} is
	non-decreasing on $[\uth,\cutoff^*]$.
\end{assumption}

Under the maintained smoothness, $\Lambda$ is continuously
differentiable --- $V_q(\bar q^*,\cdot)$ is bounded away from zero
on the pool, as shown in
property~(i) below --- so $\lambda$ is bounded and integrable. And the mass that
the weight accumulates is not a normalization but an identity:
evaluating \eqref{eq:flow} at the cutoff,
\begin{equation}
	\Lambda(\cutoff^*)
	=\frac{\int_{\uth}^{\cutoff^*}
		\vs_q(\bar q^*,\theta)\,\dens(\theta)\,d\theta}
	{\int_{\cutoff^*}^{\oth}
		v_{q\theta}(\bar q^*,\theta)\,d\theta}
	=\nu^*,
	\label{eq:mass_nocross}
\end{equation}
which is the transversality condition~(ii) of
Proposition~\ref{prop_variacional} rearranged: the weight
accumulates exactly the mass that the aggregated multiplier
prescribes, and transversality is that statement.

Throughout this section, implementable
	allocations are understood to be measurable and essentially
	bounded --- automatic whenever $Q$ is compact, as in all our
	applications. Boundedness licenses Fubini's theorem in the proofs
	below, and it gives the direction of \eqref{eq:GIF} that weak
	duality uses: for implementable $q$, the rent is Lipschitz with
	$U'(\theta)=v_\theta(q(\theta),\theta)$ a.e.\
	\citep{MilgromSegal2002}, and integrating against (IC) yields
	$\gif{\theta_1}{\theta_2}{q}\ge0$.

\begin{proposition}[Sufficiency]
	\label{prop_suff}
	Let the assumptions of
	Proposition~\ref{prop_variacional}
	hold, and suppose in addition that
	Assumptions~\ref{S5} and~\ref{Sflow} are
	satisfied. Then $\qopt$ solves
	\eqref{maxi} among all measurable
	implementable allocations, and it is the
	unique such solution up to a null set of
	types.
\end{proposition}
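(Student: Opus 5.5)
The plan is a weak-duality certificate. For any measurable, essentially bounded implementable $q$, I will build a Lagrangian functional
\[
\mathcal{L}(q):=\jfunc(q)+\int_{\uth}^{\cutoff^*}\gif{\theta_a}{\oth}{q}\,d\Lambda(\theta_a)+\Lambda(\uth)\,\gif{\uth}{\oth}{q},
\]
that is, I dualize the family of binding constraints $\gif{\theta_a}{\oth}{q}\ge0$, $\theta_a\in[\uth,\cutoff^*]$, with the non-negative measure $d\Lambda=\lambda\,d\theta$ supplied by Assumption~\ref{Sflow}, plus an atom at $\uth$ if $\Lambda(\uth)\ne0$. I still need to check that $\Lambda(\uth)=0$, which follows from $F(\uth)=0$ and $V_q(\bar q^*,\uth)\neq0$. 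Since $\Lambda$ is non-decreasing and $q$ is implementable, every added term is non-negative, so $\jfunc(q)\le\mathcal{L}(q)$. At $\qopt$, each dualized constraint holds with equality (it equals $\wfunc(\qopt,\cutoff^*)=0$), so $\mathcal{L}(\qopt)=\jfunc(\qopt)$. It therefore suffices to show $\mathcal{L}(q)\le\mathcal{L}(\qopt)$ for all $q$, with strict inequality unless $q=\qopt$ a.e.

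The key step is to rewrite $\mathcal{L}$ pointwise using Fubini, which is licensed by boundedness. Using \eqref{eq:iso}, $\gif{\theta_a}{\oth}{q}=\int_{\theta_a}^{\oth}v_\theta(q(x),x)\,dx-V(q(\theta_a),\theta_a)$. Exchanging the order of integration in the first piece gives $\int_{\uth}^{\oth}\Lambda(\min\{x,\cutoff^*\})\,v_\theta(q(x),x)\,dx$. Hence
\[
\mathcal{L}(q)=\int_{\uth}^{\cutoff^*}\Bigl[\vs(q,\theta)\dens(\theta)+\Lambda(\theta)v_\theta(q,\theta)-\lambda(\theta)V(q,\theta)\Bigr]d\theta+\int_{\cutoff^*}^{\oth}\Bigl[\vs(q,\theta)\dens(\theta)+\nu^*v_\theta(q,\theta)\Bigr]d\theta,
\]
where I have used $\Lambda(\cutoff^*)=\nu^*$ from \eqref{eq:mass_nocross}. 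Both integrands depend on $q(\theta)$ only pointwise, so the problem reduces to pointwise maximization in $q$. On $[\cutoff^*,\oth]$ the integrand is the Lagrangian of Lemma~\ref{teo11}. Assumption~\ref{S5} makes it strictly concave in $q$, and Proposition~\ref{prop_variacional}(i) says $\qiso(\theta)$ is its stationary point, so $\qopt$ is the unique pointwise maximizer there.

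On the pool, the derivative of the integrand at $\bar q^*$ is $\vs_q\dens+\Lambda v_{q\theta}-\lambda V_q$. Since $\Lambda V_q=F$, differentiating gives $\lambda V_q+\Lambda\,\partial_\theta V_q=F'=\vs_q\dens$. Because $\partial_\theta V_q(\bar q^*,\theta)=-v_{q\theta}(\bar q^*,\theta)$, this says exactly that the derivative vanishes at $\bar q^*$. The weight \eqref{eq:flow} was built to deliver this. For global concavity in $q$ I need $\vs_{qq}\dens+\Lambda v_{qq\theta}-\lambda V_{qq}\le0$. Here $V_{qq}=\int_\theta^{\oth}v_{qq\theta}\,dx$ has the sign of $v_{qq\theta}>0$ and $\lambda\ge0$, so the last term only helps. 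For the middle term, $0\le\Lambda\le\nu^*$ on the pool (by monotonicity and the endpoint values $\Lambda(\uth)=0$, $\Lambda(\cutoff^*)=\nu^*$), so it is a convex combination of $\vs_{qq}\dens<0$ and $\vs_{qq}\dens+\nu^*v_{qq\theta}<0$, the latter by Assumption~\ref{S5} on all of $\Theta$. This is where the statement of \eqref{eq:S5} on the whole of $\Theta$ is used. I also have to confirm that $V_q(\bar q^*,\cdot)$ keeps the sign of $V_q(\bar q^*,\uth)<0$ on the pool, so that $\Lambda$ is well defined. The argument is that $\int_\theta^{\oth}v_{q\theta}(\bar q^*,x)\,dx$ is negative near the top (pool level below $\qlim$ there) and that its $\theta$-derivative, $-v_{q\theta}(\bar q^*,\theta)$, has a single sign change.

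Combining the two regions, $\mathcal{L}$ is maximized pointwise, and uniquely, by $\qopt$. So $\jfunc(q)\le\mathcal{L}(q)\le\mathcal{L}(\qopt)=\jfunc(\qopt)$, and strict concavity forces $q=\qopt$ a.e. whenever equality holds. The step I expect to be the main obstacle is the Fubini rewriting together with the sign bookkeeping on the pool: getting the orientation of $d\Lambda$, the convention for $V$, and the boundary atom right so that stationarity at $\bar q^*$ comes out exactly as $F'=\vs_q\dens$. I will first verify this identity on the MES example, where $\Lambda$ can be computed in closed form.
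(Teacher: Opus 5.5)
Your proposal is correct and is essentially the paper's own proof. It weights the constraints $\gif{\theta_a}{\oth}{q}\ge0$ over the pool by $\lambda$, rewrites the result by Fubini into the same pointwise density (branch coefficient $\Lambda(\cutoff^*)=\nu^*$), gets stationarity on the pool from $(\Lambda V_q)'=F'$, and gets concavity there from $0\le\Lambda\le\nu^*$ and $\lambda V_{qq}\ge0$. The differences are minor: you state $\Lambda(\uth)=0$ explicitly, whereas the paper uses it tacitly in the Fubini step, and you obtain $V_q(\bar q^*,\cdot)<0$ on the pool from the single sign change of $v_{q\theta}(\bar q^*,\cdot)$ rather than from $\bar q^*>\qlim$ on the pool.
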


\begin{proof}
	Define the averaged constraint
	\begin{equation*}
		\widehat{W}(q)
		:=\int_{\uth}^{\cutoff^*}
		\lambda(\theta_a)\,
		\gif{\theta_a}{\oth}{q}\,d\theta_a .
	\end{equation*}
	Three facts give the result, exactly as in a finite-dimensional
	Kuhn--Tucker argument.
	
	\emph{Complementary slackness.} $\qopt$ is flat on the pool, so
	$\gif{\theta_a}{\oth}{\qopt}=\wfunc(\qopt,\cutoff^*)=0$ for
	every anchor: $\widehat W(\qopt)=0$ and
	$\jfunc(\qopt)=\jfunc(\qopt)+\widehat W(\qopt)$.
	
	\emph{Weak duality.} For any implementable $q$, each
	$\gif{\theta_a}{\oth}{q}\ge0$, and $\lambda\ge0$ by
	Assumption~\ref{Sflow}; hence $\widehat W(q)\ge0$ and
	$\jfunc(q)\le\jfunc(q)+\widehat W(q)
	=:\mathcal{L}(q)$.
	
	\emph{$\qopt$ maximizes $\mathcal{L}$, uniquely.} Writing each
	$\Phi$ as in \eqref{eq:iso_constraint} and exchanging the order
	of integration --- Fubini's theorem applies, the integrand being
	bounded on the compact domain and $\lambda$ integrable ---
	terms collect by the type at which they are evaluated:
	$\mathcal{L}(q)=\int_\Theta
	\mathcal{D}(q(\theta),\theta)\,d\theta$ with
	\begin{equation*}
		\mathcal{D}(\xi,\theta)=
		\begin{cases}
			\vs(\xi,\theta)\dens(\theta)
			+\Lambda(\theta)\,v_\theta(\xi,\theta)
			-\lambda(\theta)\,V(\xi,\theta),
			& \theta\in[\uth,\cutoff^*],\\[4pt]
			\vs(\xi,\theta)\dens(\theta)
			+\nu^*\,v_\theta(\xi,\theta),
			& \theta\in(\cutoff^*,\oth],
		\end{cases}
	\end{equation*}
	the branch coefficient being $\Lambda(\cutoff^*)=\nu^*$ by the
	mass identity \eqref{eq:mass_nocross}. Each
	$\mathcal{D}(\cdot,\theta)$ is strictly concave: on the branch
	this is \eqref{eq:S5}; on the pool,
	$\mathcal{D}_{\xi\xi}
	=\vs_{qq}\dens+\Lambda\,v_{qq\theta}
	-\lambda\,V_{qq}
	\le\vs_{qq}\dens+\nu^*v_{qq\theta}<0$,
	using $0\le\Lambda\le\nu^*$, $v_{qq\theta}>0$, $\lambda\ge0$
	and $V_{qq}=\int_\theta^{\oth}v_{qq\theta}>0$. And
	$\qopt(\theta)$ is a stationary point of
	$\mathcal{D}(\cdot,\theta)$ for every $\theta$: on the branch
	this is the modified Euler equation~(i); on the pool,
	differentiating \eqref{eq:flow} gives $(\Lambda V_q)'=F'$,
	that is, $\Lambda' V_q=\vs_q\dens+\Lambda v_{q\theta}$, which
	is exactly $\mathcal{D}_\xi(\bar q^*,\theta)=0$. A strictly
	concave function is maximized at its stationary point, so
	$\qopt(\theta)$ is the \emph{unique} maximizer of
	$\mathcal{D}(\cdot,\theta)$, pointwise in $\theta$. Hence
	$\mathcal{L}(q)\le\mathcal{L}(\qopt)$, with strict inequality
	unless $q=\qopt$ almost everywhere.
	
	Chaining the three facts,
	\[
	\jfunc(q)\le\mathcal{L}(q)
	\le\mathcal{L}(\qopt)
	=\jfunc(\qopt),
	\]
	the middle inequality strict unless $q=\qopt$ a.e., which gives
	both optimality and uniqueness. \qed
\end{proof}

Assumption~\ref{Sflow} assumes less than it appears to, because
three further properties of \eqref{eq:flow} come free.
(i)~$V_q(\bar q^*,\cdot)$ is strictly decreasing on the pool ---
there
$\bar q^*\ge\qrel(\cutoff^*)>\qlim(\cutoff^*)\ge\qlim(\theta)$,\footnote{The first inequality holds because the candidate satisfies the maintained restriction $q\ge\qrel$ (see the footnote to Theorem~\ref{proposicion} and the discussion closing Appendix~A). We record $\qopt\ge\qrel$ --- and with it $\nu^*>0$, as derived at the end of this subsection --- as standing properties of the candidate, verified directly in each worked family.}
so $v_{q\theta}(\bar q^*,\theta)>0$ and
$\partial_\theta V_q=-v_{q\theta}<0$ --- hence the scalar
inequality $V_q(\bar q^*,\uth)<0$ already keeps $V_q$ negative
throughout, and no condition on an interval is needed.
(ii)~$\Lambda\ge0$ is automatic: $\bar q^*>\qrel(\theta)$ on the
pool gives $\vs_q<0$, so $F<0$, and \eqref{eq:flow} is a ratio
of two negative quantities.
(iii)~$\lambda(\cutoff^*)=0$: at the junction
$\qiso(\cutoff^*)=\bar q^*$ by continuity, so the modified
Euler equation~(i) makes the numerator of $\lambda$ vanish
there --- the weight switches off exactly where the branch
begins. What is genuinely assumed is only the monotonicity of
$\Lambda$, and it reads as a comparison of hazard rates:
since $|F|'=|\vs_q|\dens$ and $|V_q|'=v_{q\theta}$,
\begin{equation}
	\frac{|\vs_q(\bar q^*,\theta)|\,\dens(\theta)}{|F(\theta)|}
	\;\ge\;
	\frac{v_{q\theta}(\bar q^*,\theta)}{|V_q(\bar q^*,\theta)|}
	\qquad\text{on }[\uth,\cutoff^*],
	\label{eq:hazard}
\end{equation}
the rate at which the pool accumulates foregone virtual surplus
against the rate at which it loses reference value. The left
side diverges as $\theta\downarrow\uth$ and, by~(iii), equality
holds at $\cutoff^*$; what is assumed is that \eqref{eq:hazard}
does not fail in between. It holds throughout the guiding
example of Section~\ref{sec:ex_nocross} and the regulation
application of Appendix~\ref{sec:app_regulation}, where $\qlim$
and $\qrel$ are nonlinear and $v_{q\theta}$ is quadratic in
the type: Assumptions~\ref{S5} and~\ref{Sflow} are thus
verified outside the linear family, with the mass identity \eqref{eq:mass_nocross}
recovered there to twelve digits. Verification ex post is not a peculiarity of our setting but the
rule for non-convex problems, where first-order conditions
cannot be sufficient by themselves: Assumption~\ref{S5} is the
Mangasarian--Arrow concavity condition of the isoperimetric
problem \citep{Mangasarian1966,SeierstadSydsaeter1987},
endogenous through the multiplier as such conditions are, and
Assumption~\ref{Sflow} is its companion sign condition on the
multiplier itself.

Measure-valued multipliers are the standard object in optimal
control with state constraints
\citep{HartlSethiVickson1995,SeierstadSydsaeter1987}; in
screening they price type-dependent participation constraints
in \citet{Jullien2000}, and the sweeping measures of
\citet{RochetChone1998} play a kindred role in two dimensions.
What is specific to our class is not the tool but its shape:
the binding family is indexed by the pooled types themselves
--- two-parameter in the crossing cases --- and the weight's
mass and boundary behavior are supplied by the variational
system rather than assumed. No shape restriction enters the
certificate: the comparison class is every implementable
allocation, and continuity (Theorem~\ref{proposicion}) is a
structural property of the optimum rather than an input to its
optimality. The sign
$\nu^*>0$ follows from the modified Euler equation~(i) at any
branch type where the contract is distorted: there
$\qiso(\theta)>\qrel(\theta)$ gives $\vs_q<0$ and
$\qiso(\theta)>\qlim(\theta)$ gives $v_{q\theta}>0$, so
$\nu^*=R(\qiso(\theta),\theta)\,\dens(\theta)>0$ --- the
distortion--rent ratio of Assumption~\ref{ass:H} in its dual
role as the shadow price of the global constraint.

\subsection{\textbf{(ii) Crossing Cases: Horizontal Dividing Curve}}
\label{sec:crossing}

We analyze the cases where $\qrel$ and
$\qlim$ intersect at an interior point
$\theta^*\in(\uth,\oth)$.

When $\qrel$ crosses $\qlim$, it satisfies
the LMC in one region but violates it in
the other. The optimal decision cannot
follow $\qrel$ throughout $\Theta$ and
must be flat in the region where $\qrel$
violates the LMC. The key observation is that the
\emph{jump} in the optimal contract at
$\theta^*$ provides the principal with an
additional degree of freedom --- the jump
level --- relative to the no-crossing case,
where continuity is forced by
Theorem~\ref{proposicion}.

We fix the case $\qlim$ constant, 
$\qrel$ increasing, $\csplus$ above $\qlim$, $\qrel$ starts below $\qlim$, to illustrate how to find a solution when $\qrel$ crosses $\qlim$ and violates local monotonicity in part of the domain.

\begin{remark}
	\label{rem:convex_valued}
	The optimal decision is a
	convex-valued correspondence at $\theta^*$:
	its value at the jump point is the interval
	$[\qL,\qH]$. The c\`adl\`ag selection
	corresponds to the right limit $\qH$.
	This is consistent with the maximum
	theorem: an implementable decision is a
	non-empty, upper hemicontinuous
	compact-valued correspondence
	\citep{AM2010}.
\end{remark}

\subsubsection{Characterization}

The optimality conditions below are the G\^ateaux derivatives of the
same Lagrangian as in the no-crossing case; since the contract
coincides with $\qrel$ on the tail, the pointwise condition is slack
there and the variational system collapses to a finite-dimensional
one in $(\theta_1,\qL,\qH)$ --- the KKT conditions of the reduced
problem.

The space of candidate solutions is:
\begin{equation*}
	D_J = \Bigl\{(\theta_1, \qL, \qH)\in
	\mathbb{R}^3 :
	\quad \qopt(\theta) =
	\qL\,\mathbf{1}_{[\uth,\theta_1)} +
	\qH\,\mathbf{1}_{[\theta_1,\theta_2(\qH))}
	+ \qrel(\theta)\,
	\mathbf{1}_{[\theta_2(\qH),\oth]}
	\Bigr\},
\end{equation*}
where $\theta_2(\qH)$ satisfies
$\qrel(\theta_2)=\qH$ and $\qL<\qH$.

The ISO constraint binds between the
two flat levels:
\begin{equation}
	\wfunc(\theta_1,\theta_2) :=
	\int_{\theta_1}^{\theta_2(\qH)}
	\bigl[v_\theta(\qH,\theta) -
	v_\theta(\qL,\theta)\bigr]\,d\theta
	= 0.
	\label{eq:iso_DJ}
\end{equation}

The principal's profit is:
\begin{equation}
	\jfunc(\theta_1,\qL,\qH) =
	\int_{\uth}^{\theta_1}
	\vs(\qL,\theta)\,\dens(\theta)\,d\theta
	+ \int_{\theta_1}^{\theta_2(\qH)}
	\vs(\qH,\theta)\,\dens(\theta)\,d\theta
	+ \int_{\theta_2(\qH)}^{\oth}
	\vs(\qrel(\theta),\theta)\,
	\dens(\theta)\,d\theta.
	\label{eq:profit_cross}
\end{equation}

\begin{proposition}
	\label{prop_cross}
	Let $(\theta_1,\qL,\qH)$ be an optimal
	solution in $D_J$ to $\max\jfunc$
	subject to $\wfunc=0$. Then there exists
	$\nu\in\mathbb{R}$ such that:
	\begin{enumerate}[(i)]
		\item At $\theta_1$:
		$\;\bigl[\vs(\qL,\theta_1) -
		\vs(\qH,\theta_1)\bigr]
		\dens(\theta_1) =
		\nu\bigl[v_\theta(\qH,\theta_1) -
		v_\theta(\qL,\theta_1)\bigr].$
		\item $\displaystyle
		\int_{\uth}^{\theta_1}
		\vs_q(\qL,\theta)\,
		\dens(\theta)\,d\theta
		= \nu\int_{\theta_1}^{\theta_2}
		v_{q\theta}(\qL,\theta)\,d\theta.$
		\item $\displaystyle
		\int_{\theta_1}^{\theta_2}
		\bigl[\vs_q(\qH,\theta)\,
		\dens(\theta) +
		\nu\,v_{q\theta}(\qH,\theta)\bigr]
		\,d\theta +
		\nu\bigl[v_\theta(\qH,\theta_2)-
		v_\theta(\qL,\theta_2)\bigr]
		\dfrac{1}{\qrel'(\theta_2)} = 0.$
		\item $\wfunc(\theta_1,\theta_2)=0.$
	\end{enumerate}
\end{proposition}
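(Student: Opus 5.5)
We treat this as a finite-dimensional equality-constrained problem and apply the Lagrange multiplier rule. On $D_J$ the allocation is determined by the triple $(\theta_1,\qL,\qH)$, since $\theta_2=\theta_2(\qH)$ is pinned by $\qrel(\theta_2)=\qH$. Because $\qrel$ is strictly increasing and $C^1$ (implicit function theorem applied to \eqref{eq:euler}, using Assumptions~\ref{S2} and~\ref{S4}), we have $\theta_2'(\qH)=1/\qrel'(\theta_2)$. Both $\jfunc$ of \eqref{eq:profit_cross} and $\wfunc$ of \eqref{eq:iso_DJ} are then $C^1$ functions of $(\theta_1,\qL,\qH)$. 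The plan is to form $\mathcal{L}=\jfunc+\nu\wfunc$ and set its three partial derivatives to zero, with (iv) being the constraint itself.

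The first step is to check the constraint qualification $\nabla\wfunc\neq0$ at the optimum, so that $\nu$ exists. We use $\partial\wfunc/\partial\qL=-\int_{\theta_1}^{\theta_2}v_{q\theta}(\qL,\theta)\,d\theta$. Since $\qL<\qlim$ on the pool side, $v_{q\theta}(\qL,\cdot)<0$, so this derivative is strictly positive whenever $\theta_1<\theta_2$. The degenerate case $\theta_1=\theta_2$ would give $\qH=\qrel(\theta_1)$, and $\wfunc=0$ holds trivially. We would exclude it as not being a genuine jump candidate, or handle it by noting that then $\partial_{\theta_1}\wfunc=-[v_\theta(\qH,\theta_1)-v_\theta(\qL,\theta_1)]$ is nonzero unless $\qH$ is the mirror of $\qL$. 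This qualification step is the main obstacle I expect, since the conditions are stated with a multiplier normalized to $1$ on $\jfunc$; if it fails, we fall back on the Fritz John form and argue that the multiplier on $\jfunc$ cannot vanish.

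Next come the derivatives, computed by Leibniz's rule.
\begin{itemize}
\item In $\theta_1$: $\jfunc$ contributes $[\vs(\qL,\theta_1)-\vs(\qH,\theta_1)]\dens(\theta_1)$ and $\wfunc$ contributes $-[v_\theta(\qH,\theta_1)-v_\theta(\qL,\theta_1)]$. Setting $\partial_{\theta_1}\mathcal{L}=0$ gives (i).
\item In $\qL$: we get $\int_{\uth}^{\theta_1}\vs_q(\qL,\theta)\dens\,d\theta-\nu\int_{\theta_1}^{\theta_2}v_{q\theta}(\qL,\theta)\,d\theta=0$, which is (ii).
\item In $\qH$: the interior derivative gives $\int_{\theta_1}^{\theta_2}[\vs_q(\qH,\theta)\dens+\nu v_{q\theta}(\qH,\theta)]\,d\theta$. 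The moving endpoint $\theta_2(\qH)$ adds $[\vs(\qH,\theta_2)-\vs(\qrel(\theta_2),\theta_2)]\dens(\theta_2)\theta_2'$ from $\jfunc$, which vanishes because $\qH=\qrel(\theta_2)$. It also adds $\nu[v_\theta(\qH,\theta_2)-v_\theta(\qL,\theta_2)]\theta_2'$ from $\wfunc$, which survives with $\theta_2'=1/\qrel'(\theta_2)$. Together these yield (iii).
\end{itemize}

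Finally, we must justify that the optimum is interior in $D_J$, so that first-order equalities rather than KKT inequalities apply. This holds because $\uth<\theta_1<\theta_2<\oth$ and $\qL<\qH$ are open conditions. A boundary optimum, such as $\theta_2=\oth$, would need the corresponding one-sided version, and we would note it as such.
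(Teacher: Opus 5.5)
Your proposal is correct and is essentially the paper's proof: form $\jfunc+\nu\wfunc$ on the three parameters of $D_J$, differentiate by Leibniz's rule in $\theta_1$ (where the boundary terms do not cancel because of the jump), in $\qL$, and in $\qH$ (where the $\vs$-boundary terms at $\theta_2$ cancel since $\qrel(\theta_2)=\qH$, while the $v_\theta$-term survives with $d\theta_2/d\qH=1/\qrel'(\theta_2)$), and read off (i)--(iii). Your constraint-qualification check via $\partial_{\qL}\wfunc=-\int_{\theta_1}^{\theta_2}v_{q\theta}(\qL,\theta)\,d\theta>0$ and your interiority remark are details the paper leaves implicit when it invokes the Lagrange multiplier theorem. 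In the horizontal case the sign you need there is automatic: $\wfunc=0$ with $\qL<\qH$ and $\theta_1<\theta_2$ forces $\varphi(\qL)=\varphi(\qH)$, and hence $\qL<c<\qH$.
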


\begin{proof}
	See Appendix.
\end{proof}

In cases where the jump condition~(i)
implies $\vs(\qL,\theta_1)=\vs(\qH,\theta_1)$
--- that is, the virtual surplus is
continuous at the jump --- conditions
(ii)--(iii) together with ISO reduce to a
\emph{one-dimensional} optimization in
$\qH$, since $\qL$ is eliminated via
\eqref{eq:iso_DJ}. This occurs in the guiding example of Section~\ref{sec:ex_cross}.

\begin{corollary}[Algorithm]
	\label{cor:cross}
	A candidate optimal triple
	$(\theta_1^*,\qL^*,\qH^*)$ is computed as:
	\begin{enumerate}[S1.]
		\item Use \eqref{eq:iso_DJ} to
		express $\qL$ as a function of $\qH$
		and $\theta_1$.
		\item Use~(i) to determine $\theta_1$
		(or verify continuity of $\vs$ at
		the jump).
		\item Substitute into \eqref{eq:profit_cross}
		to obtain a reduced profit $\Pi(\qH)$.
		\item Maximize $\Pi(\qH)$ subject to
		$\qL<\qH$ to obtain $\qH^*$, hence
		$\qL^*$ and $\theta_1^*$.
	\end{enumerate}
\end{corollary}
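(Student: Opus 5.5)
The plan is to read Corollary~\ref{cor:cross} as a reorganization of the necessary conditions of Proposition~\ref{prop_cross}, so the proof amounts to checking that each step S1--S4 extracts information those conditions (or the constraint defining $D_J$) legitimately provide, and that the output is a point of $D_J$ satisfying all of (i)--(iv). I will work on the configuration fixed in this subsection: $\qlim\equiv c$ constant, $\qrel$ increasing, $\csplus$ above, crossing at $\theta^*$.

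\textbf{Step S1.} I will apply the implicit function theorem to \eqref{eq:iso_DJ}, viewed as $\wfunc(\theta_1,\qL,\qH)=0$ with $\theta_2=\theta_2(\qH)=\qrel^{-1}(\qH)$. Differentiating in $\qL$ gives $-\int_{\theta_1}^{\theta_2}v_{q\theta}(\qL,\theta)\,d\theta$. This is strictly positive because $\qL<c$ lies in $\csminus$. Hence $\qL$ is locally a $C^1$ function of $(\qH,\theta_1)$. By Definition~\ref{def:mirror_general} this function is the mirror of $\qH$ taken on the other side of the curve. For the horizontal case it is exactly $h(\qH,\cdot)=0$ integrated against a $\theta$-independent integrand. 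Existence and uniqueness of the root below $c$ then follow from the monotonicity of $h(\cdot,\theta)$ on each side of $\qlim$.

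\textbf{Step S2.} Condition~(i) is a scalar equation in $\theta_1$ once $(\qL,\qH,\nu)$ are given. When $v_{q\theta}$ does not depend on $\theta$ (so $\qlim$ is flat), the binding ISO in S1 forces $v_\theta(\qH,\theta)-v_\theta(\qL,\theta)$ to have zero integral over $[\theta_1,\theta_2]$. If this difference is in fact constant in $\theta$, it vanishes pointwise, and in particular at $\theta_1$. The right-hand side of~(i) is then zero, so~(i) reduces to continuity of $\vs$ at the jump, $\vs(\qL,\theta_1)=\vs(\qH,\theta_1)$. That equation determines $\theta_1$ independently of $\nu$, as in Section~\ref{sec:ex_cross}. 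Otherwise $\theta_1$ is kept as a function of $\nu$ and eliminated together with it.

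\textbf{Steps S3--S4.} Substituting $\qL(\qH)$ and $\theta_1(\qH)$ gives the reduced profit $\Pi(\qH)=\jfunc(\theta_1(\qH),\qL(\qH),\qH)$. I then show that its stationarity, combined with the envelope identities, reproduces (ii)--(iii). Differentiating $\Pi$ by the chain rule gives three contributions:
\begin{itemize}
\item the $\theta_1$-derivative vanishes by S2;
\item the $\theta_2$-boundary terms cancel because $\qH=\qrel(\theta_2)$ makes the integrands agree there;
\item the $\qL$-term gives $\int_{\uth}^{\theta_1}\vs_q(\qL)\dens\cdot \partial\qL/\partial\qH$.
\end{itemize}
Defining $\nu$ by~(ii), and substituting $\partial\qL/\partial\qH=-\partial_{\qH}\wfunc/\partial_{\qL}\wfunc$, turns $\Pi'(\qH)=0$ into exactly~(iii). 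The term $1/\qrel'(\theta_2)$ in~(iii) is what $\partial_{\qH}\theta_2$ contributes to $\partial_{\qH}\wfunc$. Thus a maximizer of $\Pi$ subject to $\qL<\qH$ yields a triple in $D_J$ satisfying (i)--(iv) with that $\nu$, which is what the corollary claims.

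The main obstacle is this last equivalence, namely the bookkeeping of the implicit derivatives and the sign and nondegeneracy of $\partial_{\qL}\wfunc$ needed to define $\nu$. I would first verify it on the closed form \eqref{eq:profit_ex_cross} and then carry out the general computation.
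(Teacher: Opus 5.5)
Your proposal is correct and follows the paper's own reasoning. The paper gives no separate proof: the corollary is the reduction stated just before it, where the binding constraint \eqref{eq:iso_DJ} eliminates $\qL$ and, in the horizontal case, condition~(i) of Proposition~\ref{prop_cross} collapses to continuity of $\vs$ at the jump. Your chain-rule check that $\Pi'(\qH)=0$ is exactly condition~(iii), once $\nu$ is defined by~(ii), makes that reduction explicit.

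One small overstatement: monotonicity of $v_\theta(\cdot,\theta)$ below $c$ gives uniqueness of $\qL$ but not its existence in $Q\subseteq\mathbb{R}_+$. In the MES case, $\qL=2a-\qH<0$ once $\qH>2a$; this is the corner $\qL=0$ that the paper evaluates separately.
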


\subsubsection{Sufficiency}
\label{sec:sufficiency_cross}

At the optimum, the binding structure is again a family rather than a
single constraint --- and a larger one than in the no-crossing case.
Since $v_{q\theta\theta}\equiv0$, the marginal rent separates as
$v_\theta(q,\theta)=\varphi(q)+\psi(\theta)$, so the isoperimetric
constraint holding with equality forces
$\varphi(\qH^*)=\varphi(\qL^*)$: the two flat levels generate the
\emph{same} marginal rent at every type, and
$\gif{\theta_a}{\theta_e}{\qopt}=0$ for every anchor
$\theta_a\in[\uth,\theta_1^*]$ in the lower flat and every endpoint
$\theta_e\in(\theta_1^*,\theta_2^*]$ in the upper one. The
certificate prices this two-parameter family with a product of two
weights, one on each side of the jump. On the upper flat the pointwise
condition leaves no freedom: the own-coefficient of a type
$x\in(\theta_1^*,\theta_2^*)$ must be
\begin{equation}
	\Lambda(x)
	:=R\bigl(\qH^*,x\bigr)\,\dens(x)
	=\frac{-\vs_q(\qH^*,x)}{v_{q\theta}(\qH^*,x)}\,\dens(x),
	\label{eq:flowH}
\end{equation}
the distortion--rent ratio of Assumption~\ref{ass:H} in its dual
role as the shadow price of each type's constraint. $\Lambda$ is
automatically positive --- $\qH^*$ lies above $\qrel$ and above
$\qlim$ on the upper flat --- and vanishes at $\theta_2^*$, where
the distortion releases. On the lower flat, the anchors carry a
density $\lambda_L\ge0$ determined, given the endpoint kernel
$k:=-\Lambda'/\Lambda(\theta_1^*)$, by the pointwise condition at
the pooled level: writing $A_L(x):=\int_{\uth}^{x}\lambda_L$,
\begin{equation}
	\lambda_L(x)\,
	\Bigl[\int k(\theta_e)
	\int_{x}^{\theta_e} v_{q\theta}(\qL^*,t)\,dt\,
	d\theta_e\Bigr]
	=\vs_q(\qL^*,x)\,\dens(x)
	+A_L(x)\,v_{q\theta}(\qL^*,x),
	\qquad A_L(\uth)=0.
	\label{eq:flowL}
\end{equation}
Writing $W(x)$ for the bracket, the same integration as in the
no-crossing case applies, and for the same reason: $k$ is a
probability measure, so $W'(x)=-v_{q\theta}(\qL^*,x)$ and the
equation is again exact. The anchor weight is therefore in closed
form,
\begin{equation}
	A_L(x)=\frac{\int_{\uth}^{x}\vs_q(\qL^*,t)\,\dens(t)\,dt}{W(x)},
	\label{eq:flowL_closed}
\end{equation}
non-negative --- by Assumption~\ref{S6} below, $A_L$ is non-decreasing from $A_L(\uth)=0$; and whenever $\qrel(\uth)<\qL^*$ this is automatic, since $\vs_q(\qL^*,\cdot)$ then changes sign at most once on the pool, so the numerator of \eqref{eq:flowL_closed} attains its maximum over the pool at an endpoint, where it equals $0$ (at $\uth$) or $\nu^*\varphi'(\qL^*)(\theta_2^*-\theta_1^*)<0$ (at $\theta_1^*$, by condition~(ii), with $\nu^*>0$ the mean of the positive weight $\Lambda$, as recorded below), while $W(x)=\varphi'(\qL^*)\int(\theta_e-x)\,k(\theta_e)\,d\theta_e<0$ --- and no differential
equation needs to be solved.

\begin{assumption}[Multiplier weights, crossing case]
	\label{S6}
	The branch weight $\Lambda$ is non-increasing on
	$[\theta_1^*,\theta_2^*]$; the anchor flow $A_L$ of
	\eqref{eq:flowL_closed} is non-decreasing; and
	$\vs_{qq}(\xi,x)\dens(x)+m(x)\,v_{qq\theta}(\xi,x)<0$ for all
	$\xi$ and a.e.\ $x$, where $m:=A_L$ on the lower flat,
	$\Lambda$ on the upper one, and $0$ on the tail.
\end{assumption}

As in the no-crossing case, the boundary behavior of the weights is
not free. Integrating \eqref{eq:flowL} by parts and using
condition~(ii) of Proposition~\ref{prop_cross} together with the
separable form of $v_\theta$ yields the mass identity
\begin{equation}
	A_L(\theta_1^*)
	=\Lambda(\theta_1^*),
	\label{eq:mass_identity}
\end{equation}
provided the mean of the weight equals the scalar multiplier,
$\int_{\theta_1^*}^{\theta_2^*}\Lambda
=\nu^*(\theta_2^*-\theta_1^*)$ --- and this last equation is
condition~(iii) of Proposition~\ref{prop_cross} read backwards,
its boundary term vanishing because
$\varphi(\qH^*)=\varphi(\qL^*)$. The scalar $\nu^*$ thus survives
inside the certificate as the \emph{mean} of the weight that
replaces it. In the family of Section~\ref{sec:MES} everything is
explicit: $\Lambda(x)=(\qH^*-x)/(\qH^*-a)$ decreases linearly to
zero, $k$ is uniform on $(\theta_1^*,\theta_2^*]$,
\eqref{eq:flowL} is a linear equation whose solution satisfies
$\lambda_L>0$ on the interior with $\lambda_L(\theta_1^*)=0$, the
mass identity holds with both sides equal to one, and the
first-order condition of the pooled level recovers
$\qH^*=a\sqrt2$ exactly. The nonlinear pricing application of
Appendix~\ref{sec:app_nlp}, where $\qrel$ is not linear, verifies
Assumption~\ref{S6} outside that family.

\begin{proposition}[Sufficiency, crossing case]
	\label{prop_suff_cross}
	Let the assumptions of Proposition~\ref{prop_cross} hold, and
	suppose Assumption~\ref{S6} is satisfied. Then $\qopt$ solves
	\eqref{maxi} among all measurable implementable allocations, and
	it is the unique such solution up to a null set of types.
\end{proposition}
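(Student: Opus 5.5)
I would follow the three-step Kuhn--Tucker template of Proposition~\ref{prop_suff}, now dualizing the two-parameter family of binding constraints. Define the averaged constraint
\[
\widehat W(q):=\int_{\uth}^{\theta_1^*}\!\lambda_L(\theta_a)\int_{\theta_1^*}^{\theta_2^*}\!\kappa(\theta_e)\,\gif{\theta_a}{\theta_e}{q}\,d\theta_e\,d\theta_a ,
\]
where $\kappa:=-\Lambda'\ge0$ on $(\theta_1^*,\theta_2^*)$. I would also add a point mass at $\theta_e=\theta_2^*$ if $\Lambda(\theta_2^*)\neq0$, which is vacuous here since $\Lambda(\theta_2^*)=0$ because $\qH^*=\qrel(\theta_2^*)$. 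The normalization is chosen so that the endpoint kernel is $k=\kappa/\Lambda(\theta_1^*)$, as in \eqref{eq:flowL}.

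\emph{Complementary slackness.} Every anchor in the lower flat and every endpoint in the upper flat give $\gif{\theta_a}{\theta_e}{\qopt}=0$ by separability, since $\varphi(\qH^*)=\varphi(\qL^*)$. Hence $\widehat W(\qopt)=0$.

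\emph{Weak duality.} For any bounded measurable implementable $q$ we have $\Phi\ge0$. The weights are nonnegative: $\lambda_L=A_L'\ge0$ and $\kappa\ge0$ both follow from Assumption~\ref{S6}. So $\jfunc(q)\le\mathcal L(q):=\jfunc(q)+\widehat W(q)$.

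The core step is to write $\mathcal L(q)=\int_\Theta\mathcal D(q(\theta),\theta)\,d\theta$ by Fubini. Expand each $\Phi(\theta_a,\theta_e;q)=\int_{\theta_a}^{\theta_e}[v_\theta(q(t),t)-v_\theta(q(\theta_a),t)]\,dt$. Then collect the terms evaluated at the type $\theta$, in two pieces.

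\emph{Own-coefficient piece.} The term $+v_\theta(q(\theta),\theta)$ arises from every pair with $\theta_a<\theta<\theta_e$, with total weight $m(\theta)$. For $\theta$ in the lower flat this is $A_L(\theta)\int\kappa=A_L(\theta)\Lambda(\theta_1^*)$. For $\theta$ in the upper flat it is $A_L(\theta_1^*)\int_\theta\kappa=A_L(\theta_1^*)\Lambda(\theta)$. On the tail it is zero. After the rescaling fixed by the mass identity \eqref{eq:mass_identity}, these coefficients become $A_L$ on the lower flat and $\Lambda$ on the upper flat, matching $m$ in Assumption~\ref{S6}. I expect the bookkeeping of the constant $\Lambda(\theta_1^*)$ to be the main nuisance: I would fix the normalization so that the upper-flat coefficient is exactly $\Lambda$, and then use \eqref{eq:mass_identity} to see that the lower-flat coefficient is $A_L$.

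\emph{Anchor piece.} On the lower flat only, the term $-\lambda_L(\theta)\int k(\theta_e)\int_\theta^{\theta_e}v_\theta(q(\theta),t)\,dt\,d\theta_e$ appears. It is a function of $q(\theta)$ whose $\xi$-derivative is $-\lambda_L W$-type.

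The resulting densities $\mathcal D(\cdot,\theta)$ can then be checked one region at a time.

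On the tail, $\mathcal D=\vs\dens$, which is strictly concave with maximizer $\qrel$ by Assumption~\ref{S2}.

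On the upper flat, $\mathcal D_\xi(\qH^*,x)=\vs_q\dens+\Lambda v_{q\theta}=0$ by the definition \eqref{eq:flowH}. Strict concavity is the third clause of Assumption~\ref{S6}.

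On the lower flat, stationarity at $\qL^*$ is exactly \eqref{eq:flowL}. For concavity, the anchor term contributes $-\lambda_L\int k\int_\theta^{\theta_e}v_{qq\theta}$, which is $\le0$ because $v_{qq\theta}>0$ and $\lambda_L\ge0$. The own term is covered by Assumption~\ref{S6}. Hence $\mathcal D_{\xi\xi}<0$.

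Each $\mathcal D(\cdot,\theta)$ is therefore strictly concave with unique maximizer $\qopt(\theta)$. This gives $\mathcal L(q)\le\mathcal L(\qopt)=\jfunc(\qopt)$, with equality only if $q=\qopt$ a.e. Chaining the three facts yields both optimality and uniqueness.

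I expect the main obstacle to be the Fubini regrouping. In particular, I must verify that the coefficient of $v_\theta(q(\theta),\theta)$ on each flat agrees exactly with the weight used in \eqref{eq:flowL} and \eqref{eq:flowH}. This is where the mass identity and the mean condition $\int\Lambda=\nu^*(\theta_2^*-\theta_1^*)$ must enter. I must also check that no residual boundary term survives at $\theta_1^*$ or $\theta_2^*$, using $\Lambda(\theta_2^*)=0$.
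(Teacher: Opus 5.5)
Your proposal is essentially the paper's proof: the same product-weighted average of the family $\gif{\theta_a}{\theta_e}{\cdot}$ over anchors in the lower flat and endpoints in the upper flat, complementary slackness from $\varphi(\qH^*)=\varphi(\qL^*)$, weak duality from non-negative weights, and a Fubini regrouping into pointwise densities that are stationary at $\qopt$ by \eqref{eq:flowL}, \eqref{eq:flowH} and $\vs_q(\qrel,\cdot)=0$, and strictly concave by Assumption~\ref{S6}. Two tidy-ups: dualize with the normalized kernel $k=-\Lambda'/\Lambda(\theta_1^*)$ from the start, since with $\kappa=-\Lambda'$ the whole dualized term carries a stray factor $\Lambda(\theta_1^*)$ and the upper-flat coefficient becomes $\Lambda(\theta_1^*)\Lambda$ rather than $\Lambda$; and claim strict concavity only for a.e.\ type, because at the jump type the marginal type is indifferent between $\qL^*$ and $\qH^*$, and uniqueness survives only because that exception is a null set.
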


\begin{proof}
	The argument is that of Proposition~\ref{prop_suff}, with the
	averaged constraint now running over the two-parameter family:
	\begin{equation*}
		\widehat W_J(q)
		:=\int_{\uth}^{\theta_1^*}\!\!
		\int_{\theta_1^*}^{\theta_2^*}
		\gif{\theta_a}{\theta_e}{q}\,
		\lambda_L(\theta_a)\,k(\theta_e)\,
		d\theta_e\,d\theta_a ,
		\qquad
		\mathcal{L}(q):=\jfunc(q)+\widehat W_J(q).
	\end{equation*}
	\emph{Complementary slackness}:
	$\varphi(\qH^*)=\varphi(\qL^*)$ makes every dualized
	constraint bind at $\qopt$, so
	$\widehat W_J(\qopt)=0$ and
	$\mathcal{L}(\qopt)=\jfunc(\qopt)$.
	\emph{Weak duality}: for measurable implementable $q$, each
	$\Phi\ge0$ and both weights are non-negative, so
	$\jfunc(q)\le\mathcal{L}(q)$.
	\emph{$\qopt$ maximizes $\mathcal{L}$, uniquely}: exchanging
	the order of integration --- Fubini's theorem applies, the
	integrand being bounded on the compact domain and both weights
	integrable --- terms collect by the type at which they are
	evaluated, $\mathcal{L}(q)
	=\int_\Theta\mathcal{D}(q(\theta),\theta)\,d\theta$ with
	\begin{equation*}
		\mathcal{D}(\xi,x)=
		\begin{cases}
			\vs(\xi,x)\dens(x)+A_L(x)\,v_\theta(\xi,x)
			-\lambda_L(x)\displaystyle\int k(\theta_e)
			\int_x^{\theta_e}v_\theta(\xi,t)\,dt\,d\theta_e,
			& x\in[\uth,\theta_1^*],\\[6pt]
			\vs(\xi,x)\dens(x)+\Lambda(x)\,v_\theta(\xi,x),
			& x\in(\theta_1^*,\theta_2^*],\\[4pt]
			\vs(\xi,x)\dens(x),
			& x\in(\theta_2^*,\oth],
		\end{cases}
	\end{equation*}
	the own-coefficient on the upper flat being
	$A_L(\theta_1^*)\int_{x}^{\theta_2^*}k
	=\Lambda(x)$ by the mass identity
	\eqref{eq:mass_identity}. Each $\mathcal{D}(\cdot,x)$ is
	strictly concave for a.e.\ $x$ by Assumption~\ref{S6}, the
	reference term on the lower flat only helping, since
	$v_{qq\theta}>0$ makes its second derivative
	$-\lambda_L\!\int\!k\!\int v_{qq\theta}\le0$. And
	$\qopt(x)$ is a stationary point everywhere: on the
	lower flat this is \eqref{eq:flowL}, on the upper flat the
	definition \eqref{eq:flowH} of $\Lambda$, and on the tail
	$\vs_q(\qrel(x),x)=0$. For a.e.\ $x$ strict concavity makes
	$\qopt(x)$ the unique maximizer of $\mathcal{D}(\cdot,x)$; the
	single exception is the jump type, a null set of types: there condition~(i) and $\varphi(\qH^*)=\varphi(\qL^*)$ give $\mathcal{D}(\qL^*,\theta_1^*)=\mathcal{D}(\qH^*,\theta_1^*)$ --- the indifference of the marginal type, i.e.\ the convex-valued correspondence of Remark~\ref{rem:convex_valued} seen from the dual side --- and in the linear family the failure is sharpest, $\vs_{qq}\dens+\Lambda v_{qq\theta}=0$ exactly and the density affine on $[\qL^*,\qH^*]$ (Remark~\ref{rem:jump_affine}). Hence
	$\mathcal{L}(q)\le\mathcal{L}(\qopt)$, with strict inequality
	unless $q=\qopt$ almost everywhere, and
	\[
	\jfunc(q)\le\mathcal{L}(q)
	\le\mathcal{L}(\qopt)=\jfunc(\qopt),
	\]
	which gives both optimality and uniqueness. \qed
\end{proof}

The comparison class is again every implementable allocation:
neither the space $D_J$ nor the convex-valued correspondence
at the jump enters the certificate, and
Proposition~\ref{prop_jump_forced} remains a structural
result about the optimum rather than an input to its
optimality. Relative to the no-crossing case the certificate
has one more moving part --- the endpoint kernel $k$, needed
because the constraints bind in \emph{both} arguments --- and
one fewer: no transversality-type condition is left to close
the system, the mass identity \eqref{eq:mass_identity} being
supplied by conditions~(ii) and~(iii) themselves.

\subsection{\textbf{(iii) Crossing Cases: Strictly Monotone Dividing Curve}}
\label{sec:crossing_mirror}

We now let $\qlim$ be strictly increasing and cross
$\qrel$ at an interior $\theta^*$. Neither degeneracy of the two
previous configurations is available: the incentive constraint does
not collapse to a one-dimensional integral, because $v_{q\theta}$
now depends on $\theta$; and continuity is not forced, because
Theorem~\ref{proposicion} requires $\qrel$ to lie in a single
single-crossing region. Here Proposition~\ref{thm:mirror} plays the
role that Theorem~\ref{proposicion} played in the no-crossing
case: it fixes the shape, and with it the space in which the
solution is to be sought. Once the contract pools at some level $\bar q$
and jumps at some $\theta_1$, the branch is not
free --- it must be the mirror $\qmir$ of
Definition~\ref{def:mirror_general}, which now
moves with $\theta$, until $\qrel$ overtakes it
at the type $\theta_3$ solving
$\qmir(\theta_3)=\qrel(\theta_3)$, with
$\theta_3:=\oth$ if that does not happen inside
$\Theta$. The space of candidate solutions is
therefore
\begin{equation}
	D_R = \Bigl\{(\bar q,\theta_1)\in
	\mathbb{R}^2 : \quad \qopt(\theta) =
	\bar q\,\mathbf{1}_{[\uth,\theta_1)}
	+ \qmir(\theta)\,
	\mathbf{1}_{[\theta_1,\theta_3)}
	+ \qrel(\theta)\,
	\mathbf{1}_{[\theta_3,\oth]},
	\quad \bar q<\qlim(\theta_1)\Bigr\},
	\label{eq:DR_space}
\end{equation}
and only two free parameters remain.

Two features distinguish this problem from the
previous two. First, the isoperimetric constraint
is not an equation to be imposed: along the
mirror $v_\theta(\qmir(\theta),\theta)=
v_\theta(\bar q,\theta)$ by construction, so
$\gif{\theta_a}{\theta}{\qopt}=0$ identically for
every $\theta_a$ in the pool and every
$\theta\in[\theta_1,\theta_3]$, and it is
non-decreasing thereafter. There is no multiplier
to determine, and the problem restricted to
$D_R$ is an \emph{unconstrained} maximization in
two real variables. Second, the mirror itself
depends on $\bar q$: differentiating
$v_\theta(\qmir(\theta),\theta)=
v_\theta(\bar q,\theta)$ with respect to
$\bar q$ gives
\begin{equation}
	\omega(\theta) := -\frac{\partial
		\qmir(\theta)}{\partial\bar q}
	= -\frac{v_{q\theta}(\bar q,\theta)}
	{v_{q\theta}(\qmir(\theta),\theta)} > 0 ,
	\label{eq:omega}
\end{equation}
positive because the pool lies in $\csminus$ and
the mirror in $\csplus$: raising the pool pushes
the mirror down, at the rate $\omega$ given by
the ratio of marginal rents at the two levels.
The principal's profit is
\begin{equation}
	\jfunc(\bar q,\theta_1) =
	\int_{\uth}^{\theta_1}
	\vs(\bar q,\theta)\,\dens(\theta)\,d\theta
	+ \int_{\theta_1}^{\theta_3}
	\vs(\qmir(\theta),\theta)\,\dens(\theta)\,
	d\theta
	+ \int_{\theta_3}^{\oth}
	\vs(\qrel(\theta),\theta)\,\dens(\theta)\,
	d\theta .
	\label{eq:profit_mirror}
\end{equation}

\begin{proposition}
	\label{prop_mirror_char}
	Let $(\bar q,\theta_1)$ be an optimal
	solution in $D_R$ to $\max\jfunc$. Then:
	\begin{enumerate}[(i)]
		\item At $\theta_1$, the virtual surplus
		is continuous across the jump:
		\begin{equation}
			\vs(\bar q,\theta_1) =
			\vs\bigl(\qmir(\theta_1),\theta_1
			\bigr).
			\label{eq:foc_jump}
		\end{equation}
		\item The marginal virtual surplus of the
		pool equalizes with that of the mirror
		branch, weighted by the transmission rate
		\eqref{eq:omega}:
		\begin{equation}
			\int_{\uth}^{\theta_1}
			\vs_q(\bar q,\theta)\,\dens(\theta)\,
			d\theta =
			\int_{\theta_1}^{\theta_3}
			\vs_q\bigl(\qmir(\theta),\theta\bigr)\,
			\omega(\theta)\,\dens(\theta)\,d\theta .
			\label{eq:foc_pool}
		\end{equation}
	\end{enumerate}
\end{proposition}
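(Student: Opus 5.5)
The proof is a direct first-order analysis of the two-variable function $\jfunc(\bar q,\theta_1)$ in \eqref{eq:profit_mirror}. On $D_R$ there is no constraint left to impose: the isoperimetric constraint holds identically along the mirror. An interior optimum in $D_R$ must therefore satisfy $\partial\jfunc/\partial\theta_1=0$ and $\partial\jfunc/\partial\bar q=0$, and these will be conditions~(i) and~(ii) respectively. Before differentiating, I record that $\qmir(\theta;\bar q)$ is $C^1$ in $(\theta,\bar q)$ by the implicit function theorem applied to \eqref{eq:mirror_general}. The relevant derivative is $v_{q\theta}(\qmir(\theta),\theta)$, which is nonzero because $\qmir>\qlim$. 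Differentiating \eqref{eq:mirror_general} in $\bar q$ gives \eqref{eq:omega}. Likewise, $\theta_3(\bar q)$ is defined by $\qmir(\theta_3;\bar q)=\qrel(\theta_3)$ and is differentiable by the implicit function theorem, provided $\qrel$ crosses $\qmir$ transversally. When $\theta_3=\oth$ is a corner, it does not depend on $\bar q$ locally.

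For (i), I differentiate \eqref{eq:profit_mirror} in $\theta_1$ with $\bar q$ fixed. The mirror $\qmir(\cdot)$ and the junction $\theta_3$ do not depend on $\theta_1$, so only the two boundary terms at $\theta_1$ survive. This gives
\[
\frac{\partial\jfunc}{\partial\theta_1}=\bigl[\vs(\bar q,\theta_1)-\vs(\qmir(\theta_1),\theta_1)\bigr]\dens(\theta_1).
\]
Since $\dens>0$, stationarity yields \eqref{eq:foc_jump}. This is the general counterpart of the computation in the proof of Corollary~\ref{prop:MES_regime3}.

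For (ii), I differentiate in $\bar q$ with $\theta_1$ fixed. The first integral contributes $\int_{\uth}^{\theta_1}\vs_q(\bar q,\theta)\dens\,d\theta$. The second contributes $\int_{\theta_1}^{\theta_3}\vs_q(\qmir,\theta)\,\partial_{\bar q}\qmir\,\dens\,d\theta=-\int\vs_q(\qmir,\theta)\,\omega\,\dens\,d\theta$, plus the boundary term $\vs(\qmir(\theta_3),\theta_3)\dens(\theta_3)\,\theta_3'(\bar q)$. The third integral contributes only $-\vs(\qrel(\theta_3),\theta_3)\dens(\theta_3)\,\theta_3'(\bar q)$, because its integrand does not depend on $\bar q$. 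The two boundary terms cancel because $\qmir(\theta_3)=\qrel(\theta_3)$; when $\theta_3=\oth$, both vanish. Setting the derivative to zero gives \eqref{eq:foc_pool}.

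The main obstacle is justifying that the optimum is interior, so that both partial derivatives must vanish. On the $\theta_1$ side, I would rule out the corners using the defining constraint $\bar q<\qlim(\theta_1)$ and the crossing structure. At $\theta_1\to\uth$ the pool disappears. At $\theta_1$ with $\bar q=\qlim(\theta_1)$ the jump closes, so $\qmir(\theta_1)=\bar q$, and that configuration is the continuous candidate, which is a separate family. I would treat these as excluded by the meaning of ``optimal solution in $D_R$'' with a genuine jump, in the same way Proposition~\ref{prop_cross} is stated. A second point to check is that the endpoint $\theta_3$ might switch between the interior and the corner $\oth$. At the switching point the boundary terms still cancel, because the mirror meets $\qrel$ exactly at $\oth$. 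So the formula for $\partial\jfunc/\partial\bar q$ is continuous there, and the one-sided derivatives agree.
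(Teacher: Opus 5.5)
Your proposal is correct and follows essentially the same route as the paper. You differentiate $\jfunc(\bar q,\theta_1)$ in $\theta_1$, where only the uncancelled boundary terms at the jump survive, and in $\bar q$, using $\partial_{\bar q}\qmir=-\omega$ and the cancellation of the $\theta_3$ boundary terms from $\qmir(\theta_3)=\qrel(\theta_3)$; your implicit-function and interiority remarks only make explicit what the paper leaves implicit by treating $D_R$ as an open parameter region.
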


\begin{proof}
	Since $\theta_3$ depends on $\bar q$ but not
	on $\theta_1$, differentiating
	\eqref{eq:profit_mirror} in $\theta_1$ leaves
	only the boundary terms at $\theta_1$, and
	they do not cancel because the allocation
	jumps there:
	\[
	\frac{\partial\jfunc}{\partial\theta_1} =
	\bigl[\vs(\bar q,\theta_1) -
	\vs(\qmir(\theta_1),\theta_1)\bigr]
	\dens(\theta_1),
	\]
	which gives \eqref{eq:foc_jump}.
	Differentiating in $\bar q$, the boundary
	terms at $\theta_3$ carry the factor
	$\partial\theta_3/\partial\bar q$ but cancel
	against each other, because
	$\qmir(\theta_3)=\qrel(\theta_3)$; the last
	integral does not depend on $\bar q$; and the
	integrand of the middle one varies at the
	rate \eqref{eq:omega}. Hence
	\[
	\frac{\partial\jfunc}{\partial\bar q} =
	\int_{\uth}^{\theta_1}
	\vs_q(\bar q,\theta)\,\dens(\theta)\,d\theta
	- \int_{\theta_1}^{\theta_3}
	\vs_q(\qmir(\theta),\theta)\,\omega(\theta)\,
	\dens(\theta)\,d\theta ,
	\]
	which gives \eqref{eq:foc_pool}.
\end{proof}

\begin{corollary}[Algorithm]
	\label{cor:mirror}
	A candidate optimal pair
	$(\bar q^*,\theta_1^*)$ is computed as:
	\begin{enumerate}[S1.]
		\item For each $\bar q$, obtain
		$\qmir(\cdot)$ from
		\eqref{eq:mirror_general} and
		$\theta_3(\bar q)$ from
		$\qmir(\theta_3)=\qrel(\theta_3)$,
		setting $\theta_3=\oth$ if no solution
		lies in $\Theta$.
		\item Use \eqref{eq:foc_jump} to
		determine $\theta_1$ as a function of
		$\bar q$.
		\item Solve \eqref{eq:foc_pool} for
		$\bar q^*$, discarding the roots at which
		the jump degenerates,
		$\qmir(\theta_1)=\bar q$, and those with
		$\bar q\geq\qlim(\theta_1)$.
		\item Verify
		Assumption~\ref{S7} at the
		surviving root.
	\end{enumerate}
\end{corollary}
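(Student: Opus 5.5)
The statement is the algorithm of Corollary~\ref{cor:mirror}, so the proof has two parts. First, each step follows from what precedes it. Second, any optimal solution in $D_R$ survives the steps, so the procedure returns a genuine candidate.

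\textbf{Step S1 (mirror and release type).} For fixed $\bar q<\qlim(\theta)$, Definition~\ref{def:mirror_general} gives a unique $\qmir(\theta)>\qlim(\theta)$. The implicit function theorem, applied to $v_\theta(\qmir,\theta)=v_\theta(\bar q,\theta)$ with $v_{q\theta}(\qmir,\theta)>0$, makes $\qmir$ $C^1$ in $(\bar q,\theta)$. The same differentiation gives the rate \eqref{eq:omega}. The type $\theta_3(\bar q)$ is then the first zero of $\qmir-\qrel$ on $[\theta_1,\oth]$, set to $\oth$ when that difference stays positive. Proposition~\ref{thm:mirror} is what justifies restricting attention to this single crossing.

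\textbf{Steps S2--S3 (first-order conditions).} These are the two conditions of Proposition~\ref{prop_mirror_char}. An interior optimum of the unconstrained two-variable problem \eqref{eq:profit_mirror} must make both partial derivatives vanish. \eqref{eq:foc_jump} pins $\theta_1$ given $\bar q$. To get a well-defined function $\theta_1(\bar q)$, I will use the implicit function theorem once more, on $g(\theta_1,\bar q):=\vs(\bar q,\theta_1)-\vs(\qmir(\theta_1),\theta_1)$. The required nondegeneracy is $\partial_{\theta_1}g\neq0$, and I expect it to follow from $\vs_{q\theta}>0$ (Assumption~\ref{S4}), because $\qmir>\bar q$. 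In the MES family this is exactly the computation in Corollary~\ref{prop:MES_regime3}, where $g$ factors as the jump size times $\qlim-\theta$.

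\textbf{Discarded roots.} Roots with $\qmir(\theta_1)=\bar q$ make the jump size vanish. Such a root places $\bar q$ on $\qlim$, so it violates the defining inequality of $D_R$ in \eqref{eq:DR_space}. Roots with $\bar q\ge\qlim(\theta_1)$ are outside $D_R$ by definition. Discarding both kinds therefore loses no optimum in $D_R$.

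\textbf{Step S4 (the obstacle).} Step S4 says only that the surviving root must be certified, and the main obstacle lies here. Stationarity in $D_R$ gives neither global optimality within $D_R$ nor optimality among all implementable allocations. Both require the dual certificate of Assumption~\ref{S7}, built as in Propositions~\ref{prop_suff} and~\ref{prop_suff_cross}. Its structure will be a two-parameter family of binding constraints, with anchors in the pool and endpoints on the mirror, and $\omega$ entering the weights. For the corollary itself I will only argue that S4 is a verification step. I do not claim the root is optimal until Assumption~\ref{S7} holds at it.
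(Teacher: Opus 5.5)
Your handling of S1--S3 matches the paper. Those steps are the first-order conditions of Proposition~\ref{prop_mirror_char} for the unconstrained two-variable problem on $D_R$, and both kinds of discarded root lie outside $D_R$: a degenerate root has $\bar q=\qlim(\theta_1)$.

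\textbf{The gap is in S4.} You treat Assumption~\ref{S7} as a multiplier-weight certificate and say it delivers optimality both within $D_R$ and among all implementable allocations. It is neither. Assumption~\ref{S7} only says that the $2\times2$ Hessian \eqref{eq:S7} of $\jfunc$ is negative definite at $(\bar q^*,\theta_1^*)$. All it yields is Proposition~\ref{prop_suff_mirror}: a strict local maximum \emph{within} $D_R$. The objects you describe belong to Assumption~\ref{S8} and Proposition~\ref{prop_suff_mirror_global}, a separate result that is not a step of this algorithm. Those objects are the two-parameter family of binding constraints, the branch weight \eqref{eq:flowM}, the kernel with its atom \eqref{eq:kernel_atom}, and the anchor weight.

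So the route you sketch for S4 cannot work. A second-order condition on a two-parameter family says nothing about allocations outside $D_R$. It does not even give global optimality inside $D_R$. That also requires the surviving root to be the unique admissible stationary point, and $\jfunc$ to be compared with its values at the corners of the parameter region ($\bar q=0$, tail exhausted). The correct justification of S4 is just the second-order sufficient condition for an interior stationary point of an unconstrained problem on an open set.

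\textbf{A secondary flaw is your nondegeneracy argument for $\theta_1(\bar q)$.} It drops the chain-rule term through the mirror:
\[
\partial_{\theta_1}g=-\int_{\bar q}^{\qmir(\theta_1)}\vs_{q\theta}(\xi,\theta_1)\,d\xi-\vs_q\bigl(\qmir(\theta_1),\theta_1\bigr)\,\partial_\theta\qmir(\theta_1).
\]
The first term is negative by Assumption~\ref{S4}. At a root of $g$, however, strict concavity places $\qrel(\theta_1)$ strictly between $\bar q$ and $\qmir(\theta_1)$, so $\vs_q(\qmir(\theta_1),\theta_1)<0$. In this configuration the mirror is increasing ($v_{q\theta\theta}<0$ gives $\partial_\theta\qmir>0$), so the second term is strictly positive and the two terms compete. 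In the MES family they sum to $-(1-\gamma)\bigl(\qmir(\theta^*)-\bar q\bigr)$. Nondegeneracy there comes from $\qlim$ being flatter than $\qrel$, which is exactly the factor $\qlim-\theta$ you cite, not from $\vs_{q\theta}>0$.

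Step S2 only asks you to solve \eqref{eq:foc_jump}, so either drop the implicit-function claim or state the slope comparison as an extra hypothesis. Relatedly, keep $\theta_3$ a function of $\bar q$ alone, as S1 does, rather than the first zero on $[\theta_1,\oth]$. The $\theta_1$-derivative in the proof of Proposition~\ref{prop_mirror_char} relies on that independence.
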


\subsubsection{Sufficiency}
\label{sec:sufficiency_mirror}

Because the isoperimetric constraint holds
identically on $D_R$, the problem restricted to
that space is unconstrained, and no multiplier
is left to determine: the local certificate is
simply the second-order condition of the
two-variable problem. A global certificate, in
the form of the multiplier weights of the two
previous configurations, is available as well;
we give the local one first.

\begin{assumption}
	\label{S7}
	At the stationary point
	$(\bar q^*,\theta_1^*)$ characterized in
	Proposition~\ref{prop_mirror_char}, the
	Hessian of $\jfunc$ is negative definite:
	\begin{equation}
		\begin{pmatrix}
			\jfunc_{\bar q\bar q} &
			\jfunc_{\bar q\theta_1}\\
			\jfunc_{\theta_1\bar q} &
			\jfunc_{\theta_1\theta_1}
		\end{pmatrix}
		\prec 0
		\qquad\text{at }
		(\bar q^*,\theta_1^*).
		\label{eq:S7}
	\end{equation}
\end{assumption}

Like Assumptions~\ref{S5} and~\ref{S6},
Assumption~\ref{S7} is an ex post
condition, verified once the solution is
computed; unlike them, it is a condition on a
$2\times2$ matrix of numbers rather than on a
function of $q$, since the reduction to
$D_R$ has already eliminated the infinite
dimensional part of the problem. The system
\eqref{eq:foc_jump}--\eqref{eq:foc_pool}
typically admits roots at which the jump
degenerates, $\qmir(\theta_1)=\bar q$, and a
second genuine root with
$\bar q>\qlim(\theta_1)$ at which the Hessian
is positive definite --- a minimum, with the
jump reversed. Both are discarded in step S3
of Corollary~\ref{cor:mirror}.

\begin{proposition}[Sufficiency, monotone
	crossing]
	\label{prop_suff_mirror}
	Let the assumptions of
	Proposition~\ref{prop_mirror_char} hold, and
	suppose Assumption~\ref{S7} is
	satisfied.Then $\qopt$ is a strict local maximizer of \eqref{maxi} within $D_R$: it yields strictly higher profit than any other allocation of $D_R$ whose parameters $(\bar q,\theta_1)$ lie in a neighborhood of $(\bar q^*,\theta_1^*)$.
\end{proposition}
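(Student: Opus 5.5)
This is a finite-dimensional statement: on $D_R$ the contract is parametrized by $(\bar q,\theta_1)$, and the profit $\jfunc(\bar q,\theta_1)$ of \eqref{eq:profit_mirror} is a real function of two variables. So I would prove it with the standard second-order sufficient condition for a strict local maximum. The steps, in order, are as follows.

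\emph{Regularity of $\jfunc$ near the candidate.} Fix $(\bar q^*,\theta_1^*)$ with $\bar q^*<\qlim(\theta_1^*)$; this is an open condition, so a neighborhood of the point stays in $D_R$.
\begin{enumerate}[(a)]
\item The mirror $\qmir(\theta;\bar q)$ is defined by $v_\theta(\qmir,\theta)=v_\theta(\bar q,\theta)$. Here $v_{q\theta}(\qmir,\theta)>0$, because $\qmir>\qlim$. By the implicit function theorem and $v\in C^3$ (Assumption~\ref{S1}), $\qmir$ is $C^2$ jointly in $(\theta,\bar q)$, with $\partial_{\bar q}\qmir=-\omega$ as in \eqref{eq:omega}.
\item The release type $\theta_3(\bar q)$ solves $\qmir(\theta_3)=\qrel(\theta_3)$.
\begin{itemize}
\item If this crossing is transversal, $\theta_3$ is $C^2$ in $\bar q$ by the implicit function theorem. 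Transversality means $\qmir'(\theta_3)\neq\qrel'(\theta_3)$, which I will assume at the candidate.
\item If $\theta_3=\oth$ in a neighborhood, it is constant.
\end{itemize}
\item The integrand $\vs$ is $C^3$ and $\dens$ is continuous. Each of the three integrals in \eqref{eq:profit_mirror} therefore has $C^1$ limits of integration and integrands that are $C^2$ in the parameters. By Leibniz's rule, $\jfunc$ is $C^2$ near $(\bar q^*,\theta_1^*)$.
\end{enumerate}

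I would compute the first derivatives exactly as in the proof of Proposition~\ref{prop_mirror_char}. They are continuous, since $\dens$ is continuous and the boundary terms at $\theta_3$ cancel identically.

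\emph{Second-order conclusion.} Proposition~\ref{prop_mirror_char} gives $\nabla\jfunc(\bar q^*,\theta_1^*)=0$, and Assumption~\ref{S7} gives a negative definite Hessian $H$. A second-order Taylor expansion with continuous Hessian yields
\[
\jfunc(\bar q^*+h,\theta_1^*+k)-\jfunc(\bar q^*,\theta_1^*)=\tfrac12 (h,k)H(h,k)^\top+o(h^2+k^2)\le -\tfrac{\mu}{4}(h^2+k^2)
\]
for small $(h,k)\neq0$, where $\mu>0$ is the smallest eigenvalue of $-H$.

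Two observations complete the argument.
\begin{itemize}
\item By construction every element of $D_R$ is implementable. The mirror keeps $\Phi=0$ along the branch, and the tail keeps $\Phi$ non-decreasing, as shown in the final paragraph of the proof of Proposition~\ref{thm:mirror}.
\item Distinct parameters give distinct allocations (off null sets), since the pool level or the jump location changes. Hence ``strictly higher profit than any other allocation of $D_R$ in the neighborhood'' follows.
\end{itemize}

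\emph{Main obstacle.} The delicate step is making sure the Hessian is actually defined and continuous, i.e.\ that $\jfunc$ is $C^2$.
\begin{itemize}
\item At $\theta_3$, the $\bar q$-derivative contains the boundary term $\bigl[\vs(\qmir(\theta_3),\theta_3)-\vs(\qrel(\theta_3),\theta_3)\bigr]\dens(\theta_3)\,\theta_3'(\bar q)$. It vanishes identically in $\bar q$, so differentiating it again contributes nothing. The remaining term $\int_{\theta_1}^{\theta_3}\vs_q\,\omega\,\dens$ differentiates cleanly, because $\vs_q(\qmir(\theta_3),\theta_3)=\vs_q(\qrel(\theta_3),\theta_3)=0$ kills its own boundary term.
\item The $\theta_1$-derivatives involve $\dens(\theta_1)$. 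Obtaining $\jfunc_{\theta_1\theta_1}$ in the classical sense needs $\dens$ differentiable, which I would assume. Otherwise I would argue directly from the first-order expansion of $\partial_{\theta_1}\jfunc$: write $\partial_{\theta_1}\jfunc=g(\bar q,\theta_1)\dens(\theta_1)$ with $g$ of class $C^1$ and $g=0$ at the candidate. Interpret Assumption~\ref{S7} with $\dens(\theta_1^*)\,\partial g$ in place of the formal derivatives, and run the same Taylor argument along rays.
\item The degenerate case $\theta_3$ hitting $\oth$ exactly at the candidate makes $\theta_3(\bar q)$ only piecewise smooth. There I would check one-sided expansions on each side and use the same quadratic bound.
\end{itemize}
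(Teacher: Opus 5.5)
Your proposal is correct and follows the same route as the paper: on $D_R$ the isoperimetric constraint holds identically, so the problem is an unconstrained maximization of $\jfunc(\bar q,\theta_1)$ over an open set. Proposition~\ref{prop_mirror_char} supplies the stationary point, and the negative definite Hessian of Assumption~\ref{S7} makes it a strict local maximum; your regularity checks (smoothness of $\qmir$ and $\theta_3$ by the implicit function theorem, cancellation of the $\theta_3$ boundary terms at second order, and the Peano-remainder Taylor bound) fill in details that the paper's two-line proof leaves implicit.
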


\begin{proof}
	On $D_R$ the isoperimetric constraint holds
	identically, so $\max\jfunc$ over
	\eqref{eq:DR_space} is an unconstrained
	problem in $(\bar q,\theta_1)$ over an open
	region; Proposition~\ref{prop_mirror_char}
	gives its stationary points and
	\eqref{eq:S7} makes the surviving one a
	strict local maximum. Global optimality within $D_R$ obtains in addition once the stationary point surviving step~S3 of Corollary~\ref{cor:mirror} is the unique admissible one and $\jfunc$ is compared with its values at the boundary of the parameter region --- the corner candidates enumerated at the end of this section, which the candidate comparison evaluates directly; optimality among \emph{all} implementable allocations, with no localization, is the content of Proposition~\ref{prop_suff_mirror_global}. As in the no-crossing
	case, the restriction to $D_R$ is
	substantive rather than assumed: by
	Proposition~\ref{thm:mirror}, every
	implementable allocation that pools on an
	initial segment and then jumps has its branch
	equal to the mirror until $\qrel$ overtakes
	it, and so belongs to $D_R$.
\end{proof}

\paragraph{From $D_R$ to all implementable allocations.}
Assumption~\ref{S7} certifies a local maximum inside $D_R$, and
Proposition~\ref{thm:mirror} justifies the restriction. A global
certificate is available as well, and it takes the same form as in
the two previous configurations --- with one new ingredient, forced
by a feature of this regime alone.

Along the mirror the constraint binds in the strongest possible
sense: $v_\theta(\qmir(\theta),\theta)=v_\theta(\bar q,\theta)$
for every $\theta$, so $\gif{\theta_a}{\theta_e}{\qopt}=0$ for
every anchor $\theta_a$ in the pool and every endpoint
$\theta_e\in[\theta_1,\theta_3]$ --- again a two-parameter family.
On the branch the pointwise condition fixes the own-coefficient of
a type,
\begin{equation}
	\Lambda(x)
	:=R\bigl(\qmir(x),x\bigr)\,\dens(x)
	=\frac{-\vs_q(\qmir(x),x)}
	{v_{q\theta}(\qmir(x),x)}\,\dens(x),
	\label{eq:flowM}
\end{equation}
which is the distortion--rent ratio evaluated along the mirror
rather than at a constant level. Unlike the horizontal case,
$\Lambda$ need not vanish at the end of the branch: it does so
exactly when $\theta_3<\oth$, that is, when the undistorted tail is
present. The endpoint kernel must therefore be allowed an atom,
\begin{equation}
	dk
	=\frac{-\Lambda'(\theta_e)}{\Lambda(\theta_1)}\,d\theta_e
	\;+\;\frac{\Lambda(\theta_3)}{\Lambda(\theta_1)}\,
	\delta_{\theta_3},
	\label{eq:kernel_atom}
\end{equation}
a probability measure on $[\theta_1,\theta_3]$ whose atom carries
mass $\Lambda(\theta_3)/\Lambda(\theta_1)$. The anchors then carry
a density $\lambda_R\ge0$ determined by the pointwise condition at
the pooled level, and since $dk$ is again a probability measure the
closed form \eqref{eq:flowL_closed} applies verbatim, with $dk$ in
place of $k(\theta_e)\,d\theta_e$, provided $W$ does not vanish on the pool; the sign $W<0$ --- automatic in the horizontal case, where $W(x)=\varphi'(\qL^*)\int(\theta_e-x)\,k<0$ --- is recorded in the next assumption; $\bar q^*\le\qlim(\uth)$ is a simple sufficient condition, and where it fails --- as it does in the no-tail sub-case of the family of Section~\ref{sec:MES} --- the sign is verified directly.

\begin{assumption}[Multiplier weights, monotone crossing]
	\label{S8}
	The branch weight $\Lambda$ of \eqref{eq:flowM} is positive and
	non-increasing on $[\theta_1^*,\theta_3]$; $W<0$ on the pool; the
	anchor weight $A_R$ of \eqref{eq:flowL_closed} is non-decreasing;
	and
	$\vs_{qq}(\xi,x)\dens(x)+m(x)v_{qq\theta}(\xi,x)<0$ for all
	$\xi$ and a.e.\ $x$, with $m:=A_R$ on the pool, $\Lambda$ on the
	branch, and $0$ on the tail.
\end{assumption}

\begin{proposition}[Global sufficiency, monotone crossing]
	\label{prop_suff_mirror_global}
	Let the assumptions of Proposition~\ref{prop_mirror_char} hold,
	and suppose Assumption~\ref{S8} is satisfied. Then $\qopt$
	solves \eqref{maxi} among all measurable implementable
	allocations, and it is the unique such solution up to a null
	set of types.
\end{proposition}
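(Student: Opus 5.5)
The plan is to repeat the three-step Kuhn--Tucker scheme of Propositions~\ref{prop_suff} and~\ref{prop_suff_cross}, now with the endpoint kernel $dk$ of \eqref{eq:kernel_atom}, which carries an atom. Let $\lambda_R:=A_R'\ge0$ be the anchor density; Assumption~\ref{S8} makes it non-negative. Define
\[
\widehat W_R(q):=\int_{\uth}^{\theta_1^*}\int_{[\theta_1^*,\theta_3]}\gif{\theta_a}{\theta_e}{q}\,dk(\theta_e)\,\lambda_R(\theta_a)\,d\theta_a,
\qquad
\mathcal L(q):=\jfunc(q)+\widehat W_R(q).
\]

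\emph{Complementary slackness.} Along the mirror, $v_\theta(\qmir(\theta),\theta)=v_\theta(\bar q^*,\theta)$, so by \eqref{eq:phi_as_h} we have $\gif{\theta_a}{\theta_e}{\qopt}=0$ for every pooled anchor and every $\theta_e\in[\theta_1^*,\theta_3]$. This includes the atom at $\theta_3$. Hence $\widehat W_R(\qopt)=0$.

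\emph{Weak duality.} Every $\Phi\ge0$ for measurable implementable $q$, and both $dk$ and $\lambda_R$ are non-negative. Therefore $\jfunc\le\mathcal L$.

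The core step is to rewrite $\mathcal L(q)=\int_\Theta\mathcal D(q(x),x)\,dx$ by Fubini. On the pool, a type $x$ collects three terms: $\vs\dens$, the anchor term $A_R(x)v_\theta(\xi,x)$, and the reference term $-\lambda_R(x)\int dk(\theta_e)\int_x^{\theta_e}v_\theta(\xi,t)\,dt$. These come from writing $\Phi$ as in \eqref{eq:iso}, where the reference value $q(\theta_a)$ appears over $[\theta_a,\theta_e]$. On the branch, a type $x$ collects $\vs\dens+A_R(\theta_1^*)\,k([x,\theta_3])\,v_\theta(\xi,x)$. I need this coefficient to equal $\Lambda(x)$. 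By \eqref{eq:kernel_atom}, $k([x,\theta_3])=\Lambda(x)/\Lambda(\theta_1^*)$, and the atom is exactly what makes this hold up to $\theta_3$ when $\Lambda(\theta_3)>0$. So the requirement reduces to the mass identity $A_R(\theta_1^*)=\Lambda(\theta_1^*)$. On the tail, $\mathcal D=\vs\dens$.

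Stationarity then holds in each region. On the tail it is $\vs_q(\qrel,x)=0$. On the branch it is the definition \eqref{eq:flowM}. On the pool it is the defining equation of $\lambda_R$, i.e.\ \eqref{eq:flowL} with $dk$. That equation integrates exactly to the closed form \eqref{eq:flowL_closed}, because $W'(x)=-v_{q\theta}(\bar q^*,x)$. This derivative uses $k$ being a probability measure supported at or beyond $\theta_1^*\ge x$, and $W<0$ from Assumption~\ref{S8} keeps the quotient well defined.

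Strict concavity of $\mathcal D(\cdot,x)$ comes from the last clause of Assumption~\ref{S8}. The reference term only helps, since its second derivative is $-\lambda_R\int dk\int v_{qq\theta}\le0$. The jump type is a null set. Pointwise maximization then gives $\mathcal L(q)\le\mathcal L(\qopt)$, with equality only if $q=\qopt$ a.e. Chaining the three facts yields optimality and uniqueness.

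The main obstacle is the mass identity $A_R(\theta_1^*)=\Lambda(\theta_1^*)$. By \eqref{eq:flowL_closed} this reads
\[
\int_{\uth}^{\theta_1^*}\vs_q(\bar q^*,t)\,\dens\,dt=\Lambda(\theta_1^*)\,W(\theta_1^*),
\qquad
W(\theta_1^*)=\int dk(\theta_e)\int_{\theta_1^*}^{\theta_e}v_{q\theta}(\bar q^*,t)\,dt.
\]
To prove it, I would substitute $dk$ and integrate by parts in $\theta_e$. The atom's boundary term cancels against the boundary term of $\Lambda$. This gives $\Lambda(\theta_1^*)W(\theta_1^*)=\int_{\theta_1^*}^{\theta_3}\Lambda(t)\,v_{q\theta}(\bar q^*,t)\,dt$. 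Next I would substitute $\Lambda=-\vs_q(\qmir,t)\dens/v_{q\theta}(\qmir,t)$, which turns the integrand into $\vs_q(\qmir,t)\,\omega(t)\,\dens$ by \eqref{eq:omega}. The claimed equality is then exactly the first-order condition \eqref{eq:foc_pool} of Proposition~\ref{prop_mirror_char}. I expect the delicate points to be the signs and the treatment of the endpoint $\theta_3=\oth$ in the no-tail sub-case.
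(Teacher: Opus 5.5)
Your proposal is correct and follows the paper's own route. It uses the product-weight Lagrangian $\lambda_R(\theta_a)\,dk(\theta_e)$ of Proposition~\ref{prop_suff_cross} with the atomic kernel \eqref{eq:kernel_atom}, the Fubini decomposition into pool, branch and tail densities, and the reduction of the mass identity $A_R(\theta_1^*)=\Lambda(\theta_1^*)$ to \eqref{eq:foc_pool} through $\omega$; your integration by parts in $\theta_e$, where the atom cancels the boundary term of $\Lambda$, spells out the step the paper asserts in one clause, and it goes through unchanged when $\theta_3=\oth$.
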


\begin{proof}
	Identical to that of Proposition~\ref{prop_suff_cross} ---
	including the uniqueness statement, the jump type again being
	the only type at which the maximizer is not unique ---
	averaging the family
	$\{\gif{\theta_a}{\theta_e}{\cdot}\}$ with the product weight
	$\lambda_R(\theta_a)\,dk(\theta_e)$, Fubini's theorem again
	applying, and reading the resulting
	pointwise densities off the three regions. The mass identity
	$A_R(\theta_1^*)=\Lambda(\theta_1^*)$ follows from
	condition~\eqref{eq:foc_pool} of
	Proposition~\ref{prop_mirror_char}, the transmission rate
	$\omega$ of \eqref{eq:omega} appearing as the Jacobian
	relating a variation of the pooled level to the induced
	variation of the mirror; the pointwise maximality on the
	branch is \eqref{eq:flowM}, on the pool the defining equation
	of $\lambda_R$, and on the tail $\vs_q(\qrel,\cdot)=0$; and
	all dualized constraints bind at $\qopt$ because the mirror
	holds them with equality by construction.
\end{proof}

The atom in \eqref{eq:kernel_atom} is not a technical
convenience: it is the certificate's reading of
Remark~\ref{rem:top_distortion}. When the undistorted tail is
present, $\theta_3<\oth$ and $\Lambda(\theta_3)=0$: the
constraint releases inside $\Theta$, the kernel is absolutely
continuous, and the top type is efficient. When it is absent,
$\theta_3=\oth$ and $\Lambda(\oth)>0$: a residual mass of
constraint is still binding at the top of the type space, the
kernel carries an atom there, and the top type is distorted by
exactly the amount recorded in
Remark~\ref{rem:top_distortion}. The frontier
$\gamma^*$ at which the two sub-cases meet is where the atom
first appears. We have verified Assumption~\ref{S8}
numerically throughout the family of Section~\ref{sec:MES}, on
both sides of that frontier.

\begin{remark}[Why strict concavity fails at the jump]
	\label{rem:jump_affine}
	In all three configurations the concavity requirement is
	imposed for almost every type, and the exception is the jump
	type itself: there the two levels of the jump yield the same value of the dual density --- so pointwise maximality can fail only there --- and, in the linear family, $\vs_{qq}\dens+\Lambda v_{qq\theta}=0$ exactly, the density is affine on the segment joining the two levels, and every point of that segment is a maximizer. This is the convex-valued
	correspondence of Remark~\ref{rem:convex_valued} seen from the
	dual side --- the indifference of the marginal type recorded in
	condition~(i) of Propositions~\ref{prop_cross}
	and~\ref{prop_mirror_char} --- and it is why the jump is
	compatible with a certificate built on pointwise maximization:
	the principal is indifferent at exactly one type, a set of
	measure zero.
\end{remark}

\begin{remark}[No distortion at the top, in general]
	\label{rem:top_general}
	The certificate degenerates in two places. Inside, at the jump
	type: strict concavity of the dual density fails, and the failure
	is the jump (Remark~\ref{rem:jump_affine}). At the boundary: the
	classical endpoint condition can fail --- the costate need not
	vanish at $\bar\theta$ --- and the failure is top-type distortion.
	
	On the separating piece the weight is
	$\Lambda=-f_{q}/v_{q\theta}\cdot p$, by \eqref{eq:flowH} or
	\eqref{eq:flowM}, and $v_{q\theta}>0$ there. Strict concavity of
	$f(\cdot,\bar\theta)$ then gives
	\[
	\Lambda(\bar\theta)>0
	\iff
	q^{*}(\bar\theta)>\qrel(\bar\theta),
	\]
	so the top type is distorted exactly when the kernel
	\eqref{eq:kernel_atom} carries an atom. Under the SMC only local
	constraints bind and the costate dies at the boundary. Here they
	are non-local, anchored in the pool and carried to the top by the
	branch, so $\Lambda(\bar\theta)$ is a residual shadow price. It
	prices the top distortion per unit of marginal rent: in the
	canonical family, $p\,(q^{*}-\qrel)/(q^{*}-\qlim)$ at
	$\bar\theta$.
	
	The three regimes are three behaviors of one zero. With no
	crossing the weight never vanishes: $\nu^{*}>0$, and the top is
	always distorted. With a horizontal crossing it vanishes at
	$\theta_{2}<\bar\theta$, and the top never is. With a monotone
	crossing its zero moves through the type space, and the frontier
	of Remark~\ref{rem:top_distortion} is where that zero crosses
	$\bar\theta$. This generalizes the MES comparison to the whole
	class. The exception is a contract with two flat pieces, selected
	by the comparison below: there a single constraint binds, against
	the extreme type, and the extreme is pooled ---
	Lemma~\ref{lem:cert_reduction}(iv).
	
	The statement covers the configurations of this section. Under
	the dictionary of Section~\ref{sec:decreasing} it transports to
	the opposite extreme; the top of a decreasing configuration is
	governed by its terminal piece --- relaxed tail, closing pool, or
	exclusion --- each with its own dual reading.
\end{remark}

When $v_{q\theta}(\cdot,\theta)$ is affine in
$q$ --- the linear family of
Appendix~\ref{sec:realization}, which contains the MES
application --- the mirror is
$\qmir(\theta)=2\qlim(\theta)-\bar q$ and
\eqref{eq:omega} collapses to
$\omega\equiv1$: the mirror moves down
one-for-one with the pool. Condition
\eqref{eq:foc_pool} then reduces to the
unweighted equalization of marginal virtual
surpluses used in
Corollary~\ref{prop:MES_regime3}, and
\eqref{eq:foc_jump}, with $\vs$ quadratic in
$q$ and the two levels symmetric about
$\qlim(\theta_1)$, places the jump at the
crossing, $\theta_1=\theta^*$. Neither
simplification survives in general:
$\omega$ varies with $\theta$ as soon as
$v_{qq\theta}$ does, and the jump need not
sit at $\theta^*$ unless $\vs(\cdot,\theta)$
is symmetric about $\qrel(\theta)$.

\subsubsection*{Candidate Comparison}

Only here must candidates be compared. With no crossing,
Theorem~\ref{proposicion} forces continuity and the contract lies
in $D$; with a horizontal dividing curve,
Proposition~\ref{prop_jump_forced} forces the jump and it lies in
$D_J$. In both, profit decides the shape through the shape results
themselves, before any candidate is solved. When $\qlim$ is
strictly monotone and crosses $\qrel$, neither result applies ---
$\qrel$ is not confined to one single-crossing region, and the
dividing curve is not flat --- so the decision is left to an
explicit comparison.

\textit{The candidates.} A continuous contract must lie in $D$. A
contract that pools below the dividing curve and jumps must lie in
$D_R$, by Proposition~\ref{thm:mirror}. Full pooling is always
implementable and no shape result excludes it. And where the
dividing curve is the steeper of the two, so that the crossing runs
from $\csplus$ to $\csminus$, Proposition~\ref{thm:mirror} does not
apply and two flat pieces must be added as well. The optimum is
found by solving each reduced problem and comparing $\jfunc$.

\textit{The comparison decides.} Among the
	monotone-crossing configurations of Appendix~\ref{sec:cases}, the
	mirror wins in cases~10, 12 and~15; two flat pieces in case~9; a
	pool followed by the branch, with no jump, in case~13; and full
	pooling in cases~11, 14 and~16. Geometry fixes which candidates are
	admissible, not which one wins. In the MES family the jump wins
throughout $\gamma\in(0,1)$, with full pooling dominated by
Remark~\ref{rem:pooling} --- a property of that family, verified
within it.

\textit{Corners.} The algorithms of
Corollaries~\ref{cor:cross} and~\ref{cor:mirror} eliminate one
unknown through the isoperimetric constraint, so they search only
where it binds. For an interior optimum that is harmless: a slack
constraint would allow a move towards $\qrel$ at a first-order
gain. But two boundaries block that move. At $\qL=0$ the low types
are excluded --- the terminal piece $\qterm\equiv0$ of
Section~\ref{sec:decreasing}, seen from the other end. At
$\qH=\qrel(\oth)$ the undistorted tail is exhausted and the branch
disappears. At either, the constraint may hold strictly and the
allocation still be optimal, so both must be evaluated directly.
These corners are not exotic: excluding types that the pooled
contract serves at a loss is the leading instance, and it binds
whenever $\vs(\bar q,\cdot)$ is negative on a set of positive
measure.
\section{Decreasing Configurations and Full Coverage}
\label{sec:decreasing}

The forty configurations differ in the direction
of the dividing curve $\qlim$, the direction of the
relaxed solution $\qrel$, and whether they cross.
The body has developed the increasing ones. The
rest are their images under the reflection
$\rho(\theta):=\uth+\oth-\theta$, and are covered
by the same results with a dictionary.

Write $\tilde v(q,\theta):=v(q,\rho(\theta))$ and
$\tilde\dens(\theta):=\dens(\rho(\theta))$. The
odd derivatives in $\theta$ change sign,
$\tilde v_\theta=-v_\theta$,
$\tilde v_{q\theta}=-v_{q\theta}$,
$\tilde v_{qq\theta}=-v_{qq\theta}$, so $\csplus$
and $\csminus$ are interchanged; the even ones do
not, $\tilde v_{q\theta\theta}=v_{q\theta\theta}$;
and $\tilde\vs_q=\vs_q$, since $\vs_q$ involves no
derivative in $\theta$. Both curves are carried to
their own reflections, so monotonicity is
reversed and a crossing at $\theta^*$ becomes one
at $\rho(\theta^*)$. Assumptions
\ref{S1}--\ref{S4} are
preserved, each requiring only a constant sign.
The mirror is invariant: its defining equation
$v_\theta(\qmir(\theta),\theta)=
v_\theta(\bar q,\theta)$ is multiplied by $-1$ on
both sides, so
$\tilde\qmir(\theta)=\qmir(\rho(\theta))$.

\begin{proposition}[Coverage]
	\label{prop_coverage}
	Let a configuration be the image under $\rho$
	of one treated in
	Section~\ref{sec:general_apply}. Then
	Theorem~\ref{proposicion},
	Proposition~\ref{prop_jump_forced} and
	Proposition~\ref{thm:mirror} hold for it with
	$\csplus$ and $\csminus$ interchanged, and its
	optimal contract is
	$\qopt(\theta)=\tilde q^{\,*}(\rho(\theta))$,
	where $\tilde q^{\,*}$ solves the corresponding
	system of Proposition~\ref{prop_variacional},
	\ref{prop_cross} or~\ref{prop_mirror_char} for
	$\tilde v$. Equivalently, those systems apply
	in the original variables under the
	substitutions
	\begin{equation}
		[\uth,\cutoff]\leftrightarrow
		[\cutoff,\oth],
		\qquad
		v_{q\theta}\mapsto-v_{q\theta},
		\qquad
		\nu\mapsto-\nu,
		\label{eq:dictionary}
	\end{equation}
	the pooled and separating intervals exchanging
	ends, the branch acquiring the opposite slope,
	and the jump, when present, opening downward.
\end{proposition}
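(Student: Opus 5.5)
The plan is to prove Proposition~\ref{prop_coverage} by a change of variables that maps the whole problem \eqref{maxi} for the reflected configuration onto a problem of the type already treated in Section~\ref{sec:general_apply}. Then every result is transported back through $\rho$. Concretely, given an allocation $q$ on $\Theta$, set $\tilde q:=q\circ\rho$. The first step is to check that $q$ is implementable for $v$ if and only if $\tilde q$ is implementable for $\tilde v$. Since (IC) is a family of pointwise inequalities indexed by pairs $(\theta,\theta')$, and $\tilde v(q,\theta)=v(q,\rho(\theta))$, the pair $(q,t)$ satisfies (IC) for $v$ exactly when $(q\circ\rho,\,t\circ\rho)$ satisfies (IC) for $\tilde v$. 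Equivalently, at the level of \eqref{eq:GIF}, the substitution $x=\rho(\tth)$ gives $\gif{\theta_1}{\theta_2}{q}=\Phi_{\tilde v}(\rho(\theta_1),\rho(\theta_2);\tilde q)$; the sign changes from the reversed orientation and from $\tilde v_{q\theta}=-v_{q\theta}$ cancel. Monotonicity, c\`adl\`ag-type one-sided limits and continuity are preserved with left and right exchanged, and Lemma~\ref{lmc} transports with $\csplus$ and $\csminus$ interchanged.

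The second step treats the objective, and this is the one delicate point. Rents now increase toward the other end, since $v_\theta$ may have either sign under the reflection, so individual rationality binds at $\oth$ instead of $\uth$. The objective should therefore be reduced directly for $\tilde v$, with the relevant boundary at $\rho(\uth)$ and the hazard term built from $\tilde\dens$. I would verify that the virtual surplus of the reflected problem, evaluated at $\rho(\theta)$, is exactly the one obtained by integrating by parts from the binding end of the original problem. Under this identification $\int\vs(q(\theta),\theta)\dens(\theta)\,d\theta=\int\tilde\vs(\tilde q(x),x)\tilde\dens(x)\,dx$. I also check that $\tilde\vs_q$ agrees with $\vs_q$ up to the reflection, so that $\qrel$ maps to $\tilde q_1=\qrel\circ\rho$, and that $\tilde q_0=\qlim\circ\rho$. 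This is where I expect the main obstacle. The statement's convention that $\vs$ is unchanged quietly presumes the reduction has been done relative to the end where (IR) binds. I would make this explicit by defining the reflected problem through (IC) and (IR) directly, rather than through \eqref{eq:vs}.

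The third step records the invariances. Assumptions~\ref{S1}--\ref{S4} and H are constant-sign conditions, so they hold for the reflected data with the appropriate signs. The mirror is invariant, $\tilde\qmir=\qmir\circ\rho$, as noted before the statement. The sign bookkeeping $v_{qq\theta}\mapsto -v_{qq\theta}$, $v_{q\theta\theta}\mapsto v_{q\theta\theta}$ sends an increasing configuration with $\csplus$ above to the reflected one. Hence each hypothesis of Theorem~\ref{proposicion}, Proposition~\ref{prop_jump_forced} and Proposition~\ref{thm:mirror} holds for $(\tilde v,\tilde\dens)$ whenever the corresponding hypothesis, with $\csplus$ and $\csminus$ swapped, holds for the original data.

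Finally, since $q\mapsto q\circ\rho$ is a value-preserving bijection between implementable allocations, optimal solutions correspond. This gives $\qopt=\tilde q^{\,*}\circ\rho$. The dictionary \eqref{eq:dictionary} follows by writing the systems of Propositions~\ref{prop_variacional}, \ref{prop_cross} and~\ref{prop_mirror_char} for $\tilde v$ and substituting $x=\rho(\theta)$. The interval $[\uth,\cutoff]$ becomes $[\rho(\cutoff),\oth]$. Each occurrence of $\tilde v_{q\theta}$ becomes $-v_{q\theta}$, and this sign is absorbed by $\nu\mapsto-\nu$. A nondecreasing branch becomes nonincreasing, so an upward jump opens downward.
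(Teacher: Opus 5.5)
Your proposal is correct and takes the same route as the paper. You transport the problem through $\rho$, observe that the odd $\theta$-derivatives change sign while the even ones and the mirror do not, and read the systems of Section~\ref{sec:general_apply} back in the original variables to obtain the dictionary \eqref{eq:dictionary}. Your explicit point that (IR) binds at the opposite end in the reflected problem, so the correctly reduced virtual surplus is $\vs\circ\rho$, is more careful than the paper's remark that Assumption~\ref{S1} is preserved and that $\tilde\vs_q=\vs_q$, and it is what actually justifies that remark.
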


Explicitly, the three candidate spaces become
\begin{align}
	\tilde D &= \Bigl\{(q,\cutoff):
	\qopt=q(\theta)\,
	\mathbf 1_{[\uth,\cutoff]}
	+ q(\cutoff)\,
	\mathbf 1_{(\cutoff,\theta_0]}
	+ \qterm(\theta)\,
	\mathbf 1_{(\theta_0,\oth]}\Bigr\},
	\label{eq:Dtilde}\\[4pt]
	\tilde D_J &= \Bigl\{(\theta_1,\qH,\qL):
	\qopt=\qrel(\theta)\,
	\mathbf 1_{[\uth,\theta_2]}
	+ \qH\,\mathbf 1_{(\theta_2,\theta_1]}
	+ \qL\,\mathbf 1_{(\theta_1,\theta_0]}
	+ \qterm(\theta)\,
	\mathbf 1_{(\theta_0,\oth]}\Bigr\},
	\label{eq:DJtilde}\\[4pt]
	\tilde D_R &= \Bigl\{(\bar q,\theta_1):
	\qopt=\qrel(\theta)\,
	\mathbf 1_{[\uth,\theta_3]}
	+ \qmir(\theta)\,
	\mathbf 1_{(\theta_3,\theta_1]}
	+ \bar q\,
	\mathbf 1_{(\theta_1,\theta_0]}
	+ \qterm(\theta)\,
	\mathbf 1_{(\theta_0,\oth]}\Bigr\},
	\label{eq:DRtilde}
\end{align}
with $\qrel(\theta_2)=\qH$ in
\eqref{eq:DJtilde} and
$\qmir(\theta_3)=\qrel(\theta_3)$ in
\eqref{eq:DRtilde}, the pool now closing the
distorted part of the contract rather than
opening it.

The terminal node $\theta_0$ and the piece
$\qterm$ beyond it are common to the three. A
constant piece cannot run to the end of the type
space: it is bounded on one side by the relaxed
solution and on the other by the non-negativity
of $q$, and whichever it reaches first ends it.
On the separating branch
$\vs_q=-\nu\,v_{q\theta}/\dens$, so the distorted
allocation lies above $\qrel$ where
$v_{q\theta}>0$ and below it where
$v_{q\theta}<0$. In the first case the pool sits
above $\qrel$, the relaxed solution rises to meet
it, and $\theta_0$ solves
$\qrel(\theta_0)=q(\cutoff)$ with
$\qterm=\qrel$: the constraint releases and the
remaining types are undistorted. In the second
the pool sits below $\qrel$, nothing stops it
from descending, and $\theta_0$ solves
$q(\theta_0)=0$ with $\qterm\equiv0$: the
remaining types are excluded. When neither
boundary is reached inside $\Theta$ we set
$\theta_0=\oth$ and the last piece is empty,
which is the case treated in the body --- there
the branch lies in $\csplus$ and above $\qrel$,
so the first alternative applies and the bound
$q\geq0$ never binds. The terminal piece is
implementable in either form, being constant in
the second and pointwise optimal in the first.

\begin{corollary}[Algorithm]
	\label{cor:coverage}
	A candidate optimum for a decreasing
	configuration is computed by applying
	Corollary~\ref{cor:algorithm},
	\ref{cor:cross} or~\ref{cor:mirror} to the
	reflected problem and mapping the solution
	back through $\rho$; in the linear family of
	Appendix~\ref{sec:cases}, equivalently, by applying the
	same corollary directly to the reflected
	parameters
	\begin{equation}
		(p,m,r)\longmapsto(-p,\;m,\;-m-r),
		\qquad
		(b_2,b_1,b_0)\longmapsto
		(b_2,\;-b_1,\;b_0+b_1),
		\label{eq:reflection_params}
	\end{equation}
	under which the family is closed and the
	curves computed from the reflected parameters
	coincide with the reflections of the original
ones. One step is added in either route:
locate $\theta_0$ from
$\qrel(\theta_0)=q(\cutoff)$ or
$q(\theta_0)=0$, whichever is reached first
inside $\Theta$, and truncate the contract
there.
\end{corollary}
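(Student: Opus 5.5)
The plan is to reduce the statement to Proposition~\ref{prop_coverage} and then add two verifications: the parametric dictionary \eqref{eq:reflection_params} for the linear family, and the terminal truncation step. I would treat the reduced problem \eqref{P2} as the primitive object, with objective $\int\vs(q(\theta),\theta)\dens(\theta)\,d\theta$ and implementability through \eqref{eq:GIF}. This avoids re-deriving which end \eqref{eq:IR} binds at after reflection, since the reflected problem is only used to generate the allocation.

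First, I would check that $q\mapsto q\circ\rho$ is a bijection between implementable allocations of the original problem and those of the reflected problem $(\tilde v,\tilde\dens)$. Implementability is the family of pairwise inequalities \eqref{eq:IC} over all type pairs, and relabelling types by the involution $\rho$ permutes that family. The change of variables $\theta\mapsto\rho(\theta)$, whose Jacobian has modulus one, shows that the two objectives coincide, using $\tilde\vs_q=\vs_q$ up to a term constant in $q$. Hence the maximizers correspond. By Proposition~\ref{prop_coverage} the reflected configuration is one of the increasing ones of Section~\ref{sec:general_apply}, so Corollary~\ref{cor:algorithm}, \ref{cor:cross} or~\ref{cor:mirror} applies to it. Its output $\tilde q^{\,*}$, composed with $\rho$, is then the candidate for the original problem. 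This gives the first route.

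For the second route I would take $\Theta=[0,1]$, so $\rho(\theta)=1-\theta$, write the linear family of Appendix~\ref{sec:realization} with $\vs_q=b_2q+b_1\theta+b_0$ and $v_{q\theta}$ affine in $(q,\theta)$ with coefficients $(p,m,r)$, substitute $1-\theta$, and match coefficients. For the virtual surplus, $\vs_q(q,1-\theta)=b_2q-b_1\theta+(b_0+b_1)$, which is the map $(b_2,b_1,b_0)\mapsto(b_2,-b_1,b_0+b_1)$. For the rent side, $\tilde v_{q\theta}(q,\theta)=-v_{q\theta}(q,1-\theta)$, and expanding gives the stated $(p,m,r)\mapsto(-p,m,-m-r)$. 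The overall sign flip produces $-p$, and the shift by $1$ in the $\theta$ coefficient produces $-m-r$. Both images are again affine of the same form, so the family is closed. Since $\qlim$ and $\qrel$ are the zero sets of $v_{q\theta}$ and $\vs_q$, the curves computed from the reflected parameters are exactly $\qlim\circ\rho$ and $\qrel\circ\rho$.

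The step I expect to be the main obstacle is the added truncation at $\theta_0$, because the candidate spaces of Section~\ref{sec:general_apply} never meet the boundary $q=0$ or the relaxed solution from this side. I would argue it from the branch equation $\vs_q=-\nu v_{q\theta}/\dens$ under the dictionary $\nu\mapsto-\nu$. This fixes the sign of $q-\qrel$ on the distorted part, which in turn says which boundary the closing constant piece approaches. If the pool lies above $\qrel$, then beyond $\qrel(\theta_0)=q(\cutoff)$ switching to $\qrel$ is pointwise optimal. I would check it is still implementable, since \eqref{eq:GIF} accumulates non-negatively there as in the last paragraph of the proof of Proposition~\ref{thm:mirror}. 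If the pool lies below $\qrel$ and decreasing, then once $q(\theta_0)=0$ the constraint $q\ge0$ forces $\qterm\equiv0$, which is constant and hence implementable. In either case the truncated contract weakly dominates the untruncated one. Finally, I would note that "whichever is reached first" is well defined because the two events are mutually exclusive by the sign argument.
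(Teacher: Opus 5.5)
Your proposal is correct and follows essentially the paper's own route. The first route uses the relabelling $q\mapsto q\circ\rho$ together with Proposition~\ref{prop_coverage}; the dictionary \eqref{eq:reflection_params} comes from direct substitution of $1-\theta$ into $v_{q\theta}=pq+m\theta+r$ and $\vs_q=b_2q+b_1\theta+b_0$; and the terminal piece is decided by the sign of the distortion read off $\vs_q=-\nu v_{q\theta}/\dens$, exactly as in the paragraph preceding the corollary. Working with \eqref{P2} as the primitive is a worthwhile refinement rather than a different argument, since after reflection $\tilde v_\theta=-v_\theta<0$, so Assumption~\ref{S1} is not literally preserved (the rent now vanishes at the other end), and the reflected objective must be taken as $\vs\circ\rho$ rather than recomputed from \eqref{eq:vs}.
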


\begin{remark}[Two symmetries]
	\label{rem:symmetries}
	Two transformations act on the taxonomy of Appendix~\ref{sec:cases}. The
	first is the reflection $\rho$ used above. The second is the change
	of sign of $v_{qq\theta}$, which interchanges $\csplus$ and
	$\csminus$ without touching the direction of either curve: it moves
	the side of $\qlim$ on which the positive region lies. Both preserve
	Assumptions~\ref{S1}--\ref{S4}, which ask only for a constant sign,
	and they act independently.
	
	The horizontal case displays them in the cleanest form. With
	$\qlim\equiv c$ crossed by a strictly monotone $\qrel$, the four
	configurations obtained by letting $\qrel$ increase or decrease and
	placing $\csplus$ above or below $\qlim$ are images of one another.
	They are four distinct entries of the taxonomy --- cases~29, 32, 35
	and~38 in the enumeration of Appendix~\ref{sec:cases} --- and they have the
	same solution: the same multiplier $\nu^*=\tfrac12$, the same pair
	of levels $\qL^*=c(2-\sqrt2)$ and $\qH^*=c\sqrt2$ straddling the
	dividing curve with $\qL^*+\qH^*=2c$, the jump at the crossing, and the
	same profit. All that changes is whether the pool opens or closes
	the contract and whether the jump is upward or downward.
	
	Without a crossing this does not happen. There the four images do
	not share a value, because the configurations also differ in whether
	$\qrel$ lies above or below $\qlim$, and that is not a symmetry but
	a genuine change of geometry: it decides whether the binding
	incentive constraint runs downward or upward, and with it which end
	of the type space carries the pool. The reflection pairs those
	configurations two by two --- which is why the no-crossing contracts
	come in pairs with equal profit --- but it does not merge all four.
\end{remark}

Applying Corollary~\ref{cor:coverage}
configuration by configuration yields the optimal
contract in every case: the trichotomy is not a
list of separate problems but three degeneracies
of one mechanism, carried across the whole
taxonomy by $\rho$. Appendix\ref{sec:cases} collects
the forty solutions.
\section{Discussion}
\label{sec:discussion}

\subsection*{Resemblance to multidimensional screening}

The class studied in this paper is one-dimensional, but its natural
habitat is multidimensional. When private information has several
dimensions and the allocation is a single decision, the standard route is
to aggregate the type into a one-dimensional index; the aggregation
typically preserves single crossing locally while destroying it globally,
and the induced index model is precisely of the kind studied here. The
seminal example is the nonlinear pricing problem of
\citet{MaskinEtAl1987}, in which two-dimensional consumer characteristics
enter a one-dimensional tariff; our own nonlinear pricing application
(Appendix~\ref{sec:app_nlp}) has the same structure, with expertise
operating through a preference and a cost channel that a single index
cannot keep apart. A necessary condition for local incentive
compatibility in the two-dimensional problem, obtained by the method of
characteristics, appears in \citet{AVC2022}; \citet{RochetStole2003}
survey the general theory and \citet{Basov} the available methods.

The correspondence runs deeper than the origin of the model: each
qualitative feature of the optimal contract found in the body has an
exact counterpart in the multidimensional literature.
Table~\ref{tab:2Dparallel} collects them.

\begin{table}[htbp]
	\centering
	\begin{tabular}{@{}p{0.26\textwidth}
			p{0.32\textwidth}
			p{0.32\textwidth}@{}}
		\hline
		& \textbf{Multidimensional screening}
		& \textbf{This paper (one dimension, no SMC)}\\
		\hline
		Bunching is robust
		& generic under multidimensional types: ironing and sweeping
		\citep{RochetChone1998}
		& forced by rent erosion near $\qlim$, with $\qrel$ strictly
		monotone (Theorem~\ref{proposicion},
		Proposition~\ref{prop_variacional})\\[2pt]
		Exclusion of a positive mass of types
		& generic in the participation region \citep{RochetChone1998}
		& terminal piece $\qterm\equiv0$ closing the decreasing
		configurations (Section~\ref{sec:decreasing})\\[2pt]
		Discontinuous allocations and menu gaps
		& allocation discontinuous at the participation boundary, and a
		type indifferent among all quantities in an interval
		\citep{AVC2022}
		& the jump, with the quality interval $(\qL^*,\qH^*)$ offered to
		no type (Propositions~\ref{prop_jump_forced}
		and~\ref{prop_cross})\\[2pt]
		Stochastic menus can strictly dominate
		& lotteries strictly optimal for a multiproduct monopolist
		\citep{ManelliVincent2007,Thanassoulis2004}; randomized tax
		schedules \citep{BHSS1995}
		& outside the concavity condition \eqref{eq:S5}
		(Remark~\ref{rem:lotteries_win}); inside it, deterministic
		contracts are optimal
		(Proposition~\ref{prop_stochastic})\\
		\hline
	\end{tabular}
	\caption{Phenomena of multidimensional screening and their
		counterparts in the one-dimensional class without single
		crossing.}
	\label{tab:2Dparallel}
\end{table}

The first three rows are results of the body read against the
multidimensional benchmark. Bunching here does not come from a
non-monotone relaxed solution, as in the classical ironing of the
one-dimensional theory, but from the global constraint itself --- the
same reason it cannot be assumed away in \citet{RochetChone1998}.
Exclusion arises not from a participation boundary in a
multidimensional type space but from the terminal structure of the
decreasing configurations, where the pool descends until the
non-negativity bound ends the contract. And the jump, with its interval
of untraded qualities, is the one-dimensional expression of the two
phenomena that \citet{AVC2022} find in the two-dimensional problem:
an allocation discontinuous at the boundary of the participation
region, and a type indifferent among all quantities in an interval
--- the counterpart of the convex-valued correspondence at our jump
type. The fourth row is the subject of the next
subsection: it is the one entry of the table in which the
one-dimensional class does \emph{not} automatically inherit the
multidimensional pathology, and the condition separating the two
regimes is already in hand.

\subsection*{Stochastic contracts}

A natural question is whether the principal could gain by offering
\emph{lotteries} over decisions rather than the deterministic contract
characterized above.\footnote{With one good and single crossing the
	answer is classical: lotteries do not help
	\citep{RileyZeckhauser1983,Strausz2006}. Without single crossing the
	question is open, and the multidimensional benchmark
	\citep{ManelliVincent2007} warns that the answer may be no longer
	negative.} The answer is no, and the reason is transparent once the
certificates of Section~\ref{sec:general_apply} are in hand: every
ingredient there is \emph{linear} in the assignment, so randomization
enters only through averages --- and averages are exactly what strict
concavity punishes. A random mechanism assigns to each type $\theta$
a probability measure $\mu_\theta$ over decisions, with transfers
such that individual rationality and global incentive compatibility
hold in expectation.

\begin{proposition}[Deterministic optimality among random mechanisms]
	\label{prop_stochastic}
	Suppose Assumptions~\ref{S5} and~\ref{Sflow} hold in the
	no-crossing configuration, Assumption~\ref{S6} in the
	horizontal-crossing configuration, or Assumption~\ref{S8} in
	the monotone-crossing configuration. Then the corresponding
	deterministic contract --- including, in the crossing cases,
	its jump --- is optimal among all incentive-compatible random
	mechanisms. Any mechanism whose lotteries are non-degenerate on
	a set of types of positive measure is strictly dominated.
\end{proposition}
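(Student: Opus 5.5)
The plan is to rerun the three-step Kuhn--Tucker chain of Propositions~\ref{prop_suff}, \ref{prop_suff_cross} and~\ref{prop_suff_mirror_global} with the deterministic assignment $q(\theta)$ replaced by a lottery $\mu_\theta$. Fix a random mechanism $(\mu,t)$ and write $\bar v(\theta',\theta):=\int v(\xi,\theta)\,d\mu_{\theta'}(\xi)$. The agent's payoff is linear in $\mu_{\theta'}$, so the envelope argument of \citet{MilgromSegal2002} applies to the expected utility. It gives $U'(\theta)=\int v_\theta(\xi,\theta)\,d\mu_\theta(\xi)$ a.e., and, after integration by parts with IR binding at $\uth$, expected profit $\int_\Theta\int \vs(\xi,\theta)\,d\mu_\theta(\xi)\,\dens(\theta)\,d\theta$.

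Integrating IC as in the derivation of \eqref{eq:GIF} gives, for every pair $\theta_a<\theta_e$, the randomized global constraint
\[
\Phi_\mu(\theta_a,\theta_e):=\int_{\theta_a}^{\theta_e}\Bigl[\int v_\theta(\xi,x)\,d\mu_x(\xi)-\int v_\theta(\xi,x)\,d\mu_{\theta_a}(\xi)\Bigr]dx\;\ge\;0 .
\]
This is linear in $\mu$ and reduces to $\Phi$ on Dirac lotteries. Only this direction of implementability is needed.

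Next I dualize exactly the same family with exactly the same weights as in the deterministic certificate. In the no-crossing case the weight is $\lambda=\Lambda'$ on anchors with endpoint $\oth$. In the two crossing cases it is the product weight $\lambda_L(\theta_a)k(\theta_e)$, or $\lambda_R(\theta_a)\,dk(\theta_e)$ with the atom \eqref{eq:kernel_atom}. The weights are non-negative by Assumptions~\ref{Sflow}, \ref{S6} and~\ref{S8}, so weak duality gives expected profit $\le\mathcal L(\mu)$.

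Linearity is the key point. The Fubini regrouping used in those proofs is an identity in the integrands, and it holds term by term after integrating against $\mu_x$. Hence
\[
\mathcal L(\mu)=\int_\Theta\int\mathcal D(\xi,x)\,d\mu_x(\xi)\,dx ,
\]
with the same dual density $\mathcal D$ as before. The anchor's own lottery $\mu_{\theta_a}$ enters only through the reference terms $-\lambda\,V$, and these are already collected at $x=\theta_a$ in $\mathcal D$.

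Pointwise, $\int\mathcal D(\xi,x)\,d\mu_x\le\max_\xi\mathcal D(\xi,x)=\mathcal D(\qopt(x),x)$. For a.e.\ $x$, strict concavity of $\mathcal D(\cdot,x)$ together with stationarity at $\qopt(x)$ makes $\qopt(x)$ the unique maximizer, so equality forces $\mu_x=\delta_{\qopt(x)}$. This yields $\mathcal L(\mu)\le\mathcal L(\qopt)=\jfunc(\qopt)$, with complementary slackness carried over unchanged. The inequality is strict whenever $\mu_x$ is non-degenerate on a set of types of positive measure.

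At the jump type, $\mathcal D$ is only weakly concave (Remark~\ref{rem:jump_affine}): any lottery on $[\qL^*,\qH^*]$ attains the maximum there. But this is a single type, so it does not affect the strictness claim. It also means the deterministic jump survives randomization, which is the content of the parenthetical "including its jump."

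The main obstacle is making the randomized Fubini step rigorous. Two things need checking. First, $x\mapsto\mu_x$ must be measurable with uniformly bounded support ($Q$ compact), so that $\int v_\theta\,d\mu_x$ is bounded and measurable and the exchange of integrals is licensed. Second, the expected-utility envelope formula must hold for random mechanisms. I would handle the first by assuming $\mu$ is a measurable kernel. For the second, I would apply \citet{MilgromSegal2002} to $\theta\mapsto\sup_{\theta'}[\bar v(\theta',\theta)-t(\theta')]$, whose family is equi-Lipschitz because $v_\theta$ is bounded on $Q\times\Theta$.
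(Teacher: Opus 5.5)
Your proposal is correct and follows the paper's proof essentially step for step: the Milgrom--Segal envelope formula for lotteries, the randomized constraint $\Phi_\mu\ge0$ that is linear in $\mu$, the same non-negative weights, the Fubini regrouping into $\int\!\int\mathcal D\,d\mu_x\,dx$, and pointwise strict concavity of $\mathcal D(\cdot,x)$ away from the null jump type. The paper phrases the last step as Jensen against the barycenter and you phrase it as a bound by the maximum, which is the same thing. One minor overstatement: the claim that every lottery on $[\qL^*,\qH^*]$ attains the maximum at the jump type holds in the linear family (Remark~\ref{rem:jump_affine}); in general the remark guarantees only that the two levels tie, and this does not affect your a.e.\ argument.
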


\begin{proof}
	Risk neutrality in transfers makes everything linear in the
	profile $(\mu_\theta)$: by the envelope theorem of
	\citet{MilgromSegal2002}, the rent satisfies
	$U'(\theta)=\int v_\theta(\xi,\theta)\,d\mu_\theta(\xi)$ a.e.,
	so expected profit is the expected virtual surplus, and each
	constraint $\gif{\theta_a}{\oth}{\mu}$ --- with both its
	integrals now taken against the lotteries --- remains
	non-negative for any incentive-compatible mechanism. The chain
	of Proposition~\ref{prop_suff} therefore applies word for word,
	$\jfunc(\mu)\le\jfunc(\mu)+\widehat W(\mu)
	=\int_\Theta\!\int
	\mathcal{D}(\xi,\theta)\,d\mu_\theta(\xi)\,d\theta$,
	and this is where randomization loses: each
	$\mathcal{D}(\cdot,\theta)$ is strictly concave with unique
	maximizer $\qopt(\theta)$, so a lottery is worth strictly less
	than its own barycenter, and at most
	$\mathcal{D}(\qopt(\theta),\theta)$. Integrating, and using
	that the averaged constraint vanishes at $\qopt$,
	$\jfunc(\mu)\le\jfunc(\qopt)$, strictly unless $\mu_\theta$ is
	degenerate for a.e.\ $\theta$. The crossing configurations run
	identically on the certificates of
	Propositions~\ref{prop_suff_cross}
	and~\ref{prop_suff_mirror_global}; the jump type, where the
	density is affine, is a single type and does not affect the
	a.e.\ argument.
\end{proof}

One might have hoped that randomization could smooth the jump away
--- offer the marginal types a coin flip between the two levels
instead of a discontinuity. It cannot: a lottery straddling the jump
is strictly dominated by its own barycenter, because the certificate
prices every constraint that the two levels hold in balance. The
same is true along the moving mirror, where the multiplier weight
carries an atom at the top whenever the top type is distorted (the
kernel \eqref{eq:kernel_atom}). The answer is thus uniform across
the trichotomy: wherever the shape results of
Section~\ref{sec:continuity} apply, deterministic contracts are
optimal --- jumps included. The two remaining
	certificates are covered by the same argument. Where a single
	constraint binds, the Lagrangian is again linear in the lottery
	profile and strictly concave in the decision, since
	$\vs_{qq}\dens+\nu^*v_{qq\theta}<0$ is what that certificate already
	requires; where full pooling wins at a corner of a jump space, the
	certificate delivers the strict concavity directly, as the quadratic
	gap of Lemma~\ref{lem:corner}. The argument uses nothing else, so
	determinism is without loss in every configuration of the class.

\begin{remark}[Where lotteries could win]
	\label{rem:lotteries_win}
	The mechanism behind Proposition~\ref{prop_stochastic} also
	delineates its boundary. A mean-preserving spread on the branch
	raises the agent's expected marginal rent above that of its
	barycenter --- Jensen again, since $v_\theta(\cdot,\theta)$ is
	convex when $v_{qq\theta}>0$ --- and therefore \emph{relaxes}
	the binding downward constraint, at a rate proportional to
	$\nu^* v_{qq\theta}$ per unit of variance; the same spread costs
	the principal virtual surplus at rate
	$|\vs_{qq}|\dens$. Condition \eqref{eq:S5} says exactly that the
	cost exceeds the value. Where it fails --- which requires
	$v_{qq\theta}$ large and positive --- the inequality reverses and a lottery
	straddling the dividing curve strictly improves upon its
	barycenter: randomization becomes a genuine screening
	instrument, as it is for the multiproduct monopolist of
	\citet{ManelliVincent2007} and \citet{Thanassoulis2004}. Within
	the class satisfying \eqref{eq:S5} we characterize the optimal
	deterministic contract, and Proposition~\ref{prop_stochastic}
	shows that determinism is then without loss; outside it, the
	deterministic optimum also loses its sufficiency guarantee, and
	the full analysis --- including whether the extreme-point
	methods of \citet{ManelliVincent2007} adapt to this class ---
	is left for future work.
\end{remark}

One feature of the three families solved above ---
	the guiding application of Section~\ref{sec:MES} and the two
	applications of Appendix~\ref{sec:App} --- is not required by
	Assumptions~\ref{S1}--\ref{S4}: in all of them $v_{q\theta}$ is
	affine in $q$. Two simplifications follow, and they are
	simplifications rather than results. First,
	$v_\theta(\cdot,\theta)$ is then symmetric about $\qlim(\theta)$,
	so the mirror of Proposition~\ref{thm:mirror} is the affine
	reflection $\qmir=2\qlim-\bar q$ and the transmission rate
	\eqref{eq:omega} is identically one. Second, when the dividing
	curve is in addition affine in $\theta$ --- as in the linear
	realization of Appendix~\ref{sec:realization}, though not in the
	applications --- the isoperimetric constraint
	\eqref{eq:iso_constraint} between two flat levels reduces to a
	condition on $\qlim$ at the midpoint of the interval rather than on
	its mean. Neither simplification is used in the proofs: the
	characterization and the certificates hold under
	\ref{S1}--\ref{S4}, and the method applies verbatim when
	$v_{qq\theta}$ is not constant. The mirror is then still well
	defined --- the other preimage of $v_\theta(\cdot,\theta)$ across
	the dividing curve, and still an involution --- but no longer a
	reflection; $\omega$ varies with $\theta$, and the optimum has no
	closed form. The computational treatment of that case is left to
	companion work.

\section{Conclusion}
\label{sec:conclusion}
When the Spence--Mirrlees condition fails along a monotone dividing
curve, the optimal contract is governed by a single question --- how
the relaxed solution meets that curve --- and the answer is a
trichotomy: the jump is impossible, unavoidable, or a choice.
Bunching, meanwhile, is not an option but a fixture, produced by the
erosion of informational rents rather than by the curvature of the
benchmark. And optimality itself is certified by an object matched to
the class: a weight on the family of constraints that bind together,
supplied in closed form by the variational system, and carrying an
atom exactly where the distortion refuses to release
--- and where it does not release, the top type is
	distorted, priced by that same weight. How many
	constraints bind is what fixes the object: one leaves a scalar, a
	family calls for the weight, and full pooling --- the immediate
	answer where local monotonicity fails at every type, and a corner
	candidate elsewhere --- binds all of them at once. Geometry alone
	does not settle this: two configurations can share the same contract
	and require different certificates, as a horizontal dividing curve
	makes every crossing constraint bind at once while a monotone one
	leaves a single binding pair; and two configurations related by a
	symmetry of the class need not share a solution. In each case the
	certificate proves the contract optimal among all implementable
	allocations, and not merely the best of those compared. Nor is the
	method confined to the taxonomy: Appendix~\ref{sec:outside} takes up
	the numerical example of Section~6 of \citet{schottmuller2015}, and
	the answer turns out to be the contract the corrigendum itself
	exhibited --- now certified among all implementable allocations.

Three directions strike us as the natural continuation.
Lotteries are dominated wherever the certificates
	apply --- the jump survives randomization --- but the boundary is
	sharp: where the concavity condition \eqref{eq:S5} fails, the
optimal mechanism may be genuinely stochastic, and the extreme-point
methods of \citet{ManelliVincent2007} are the natural tool for
characterizing it. The comparative statics of the frontier at which
the undistorted tail appears --- how the release point $\theta_3$ of
Corollary~\ref{prop:MES_regime3} travels through the type space as
the technology steepens --- is the subject of companion work in
progress. And the dynamic extension is, to us, the most inviting:
when the dividing curve drifts over time, the jump of the crossing
cases becomes a switching \emph{time}, the candidate comparison of
Section~\ref{sec:crossing_mirror} becomes an optimal stopping
problem, and the static trichotomy of this paper is its boundary
condition. Impossible, unavoidable, or a choice: the taxonomy that
organizes the static problem is, we expect, the alphabet of the
dynamic one.

\newpage
\appendix

\section{Proof of Theorem~\ref{proposicion}:
	Case~3}
\label{app:continuity}

We complete the proof of
Theorem~\ref{proposicion}. It remains
to verify implementability of $\hat{q}$ in
Case~3.

Throughout, $m(x)=\int_{\theta_1}^{\theta_1+x}
[v_\theta(q(\tth),\tth)-v_\theta(\qpert,\tth)]
\,d\tth$ is as in Case~1 of the proof of
Theorem~\ref{proposicion}, with $m(0)=0$
non-decreasing, and $\bar v$ denotes an upper
bound for $|v_{q\theta}|$ on the relevant
compact set.

Figure~\ref{fig:cases} illustrates the geometry of the two subcases
treated below.

\begin{proof}[Proof of Case~3]
	
	\noindent\textbf{Case 3:}
	$\theta_1-\varepsilon\leq\theta_0<
	\theta_1+\eta(\varepsilon)\leq\theta_2$.
	
	Since $\hat q\equiv\qpert$ on
	$[\theta_0,\theta_1+\eta(\varepsilon))$, the
	constraint does not depend on $\theta_0$:
	\begin{equation}
		\gif{\theta_0}{\theta_2}{\hat q}
		=\int_{\theta_1+\eta(\varepsilon)}
		^{\theta_2}
		\bigl[v_\theta(q(\tth),\tth)-
		v_\theta(\qpert,\tth)\bigr]\,d\tth
		=\gif{\theta_1}{\theta_2}{q}
		+h(\theta_2,\qpert)
		-m(\eta(\varepsilon)),
		\label{eq:master}
	\end{equation}
	where
	\begin{equation*}
		h(\theta,\qpert) :=
		\int_{\theta_1}^{\theta}
		\bigl[v_\theta(q(\theta_1),\tth) -
		v_\theta(\qpert,\tth)\bigr]\,d\tth.
	\end{equation*}
	
	Fix $\delta_0>0$ with
	$\qlim(\theta_1+\delta_0)<\qrel(\theta_1)$,
	possible by continuity of $\qlim$. For
	$\theta_2\in(\theta_1+\eta(\varepsilon),
	\theta_1+\delta_0]$, the first expression in
	\eqref{eq:master} is positive directly:
	every point of the region of integration
	satisfies $\xi\geq\qpert>q(\theta_1^-)\geq
	\qrel(\theta_1)>\qlim(\theta_1+\delta_0)
	\geq\qlim(\tth)$, so the integrand
	$\int_{\qpert}^{q(\tth)}
	v_{q\theta}(\xi,\tth)\,d\xi$ is positive.
	In the remainder we take
	$\theta_2\in[\theta_1+\delta_0,\oth]$, and
	analyze two mutually exclusive situations,
	independent of $\varepsilon$.
	
	\medskip
	\noindent\textbf{Case 3a:}
	$\gif{\theta_1}{\theta}{q}>0$ for all
	$\theta>\theta_1$.
	
	\medskip
	By hypothesis and continuity of
	$\gif{\theta_1}{\cdot}{q}$:
	\begin{equation}
		k := \min\bigl\{
		\gif{\theta_1}{\theta}{q}\mid
		\theta\in[\theta_1+\delta_0,
		\oth]\bigr\} > 0.
		\label{eq:ka}
	\end{equation}
	Note that $h(\theta,q(\theta_1))=0$ for all
	$\theta$, and that
	$|h(\theta,\qpert)|\leq(\oth-\theta_1)\,
	\bar v\,(q(\theta_1)-\qpert)$ for all
	$\theta$. Choosing $\qpert$ close enough to
	$q(\theta_1)$ that this bound is below
	$k/2$, and then $\varepsilon$ with
	$m(\eta(\varepsilon))<k/4$, we obtain from
	\eqref{eq:master}:
	\begin{align*}
		\gif{\theta_0}{\theta_2}{\hat{q}} &=
		\gif{\theta_1}{\theta_2}{q}
		+h(\theta_2,\qpert)
		-m(\eta(\varepsilon)) \\
		&\geq k - k/2 - k/4 > 0
	\end{align*}
	for every
	$\theta_2\in[\theta_1+\delta_0,\oth]$.
	
	\medskip
	\noindent\textbf{Case 3b:}
	$\gif{\theta_1}{\theta'}{q}=0$ for some
	$\theta'>\theta_1$.
	
	\medskip
	By Lemma~\ref{lem}, binding IC constraints
	cannot overlap, so $\theta'$ is well-defined.
	Let $\tth^*$ be the first point in
	$(\theta_1,\oth]$ where
	$v_{q\theta}(q(\theta_1),\tth^*)=0$
	(i.e., where the vertical line
	$q=q(\theta_1)$ crosses $\qlim$), with
	$\tth^*:=\oth$ if there is no such point.
	(When $\tth^*>\theta'$, the range of
	case~3b-(i) below is empty and is skipped.)
	
	The following claim is crucial:
	
	\begin{claim}
		\label{claim:h}
		The function $h(\theta',\cdot)$ satisfies:
		\begin{enumerate}[(i)]
			\item $h(\theta',q(\theta_1))=0$;
			\item $h(\theta',q(\theta_1^-))\geq 0$;
			\item $h_{\qpert \qpert}(\theta',\qpert)<0$
			for all $\qpert\in[q(\theta_1^-),
			q(\theta_1)]$;
			\item $A:=\int_{\theta_1}^{\theta'}
			v_{q\theta}(q(\theta_1),\tth)\,d\tth
			=-h_{\qpert}(\theta',q(\theta_1))>0$.
		\end{enumerate}
	\end{claim}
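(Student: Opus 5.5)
The four items follow from the definition
\[
h(\theta',\qpert)=\int_{\theta_1}^{\theta'}\bigl[v_\theta(q(\theta_1),\tth)-v_\theta(\qpert,\tth)\bigr]\,d\tth ,
\]
from the sign $v_{qq\theta}>0$ of Assumption~\ref{S3}, and from implementability of the original $q$. I will prove them in the order (i), (iii), (ii), (iv), because (iv) follows from the other three by a concavity argument.

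Items (i) and (iii) are immediate. For (i), at $\qpert=q(\theta_1)$ the integrand vanishes identically. For (iii), differentiating twice under the integral sign (allowed since $v\in C^3$) gives
\[
h_{\qpert\qpert}(\theta',\qpert)=-\int_{\theta_1}^{\theta'}v_{qq\theta}(\qpert,\tth)\,d\tth<0,
\]
because $v_{qq\theta}>0$ and $\theta'>\theta_1$. Differentiating once also gives $h_{\qpert}(\theta',\qpert)=-\int_{\theta_1}^{\theta'}v_{q\theta}(\qpert,\tth)\,d\tth$. At $\qpert=q(\theta_1)$ this is $-A$, which is the identity in (iv).

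For (ii) I will write $h(\theta',q(\theta_1^-))$ as a difference of two incentive functions. In the form \eqref{eq:iso},
\[
h(\theta',q(\theta_1^-))=\lim_{s\downarrow0}\gif{\theta_1-s}{\theta'}{q}-\gif{\theta_1}{\theta'}{q}.
\]
Here $\gif{\theta_1}{\theta'}{q}=0$ by the hypothesis of Case~3b. Each $\gif{\theta_1-s}{\theta'}{q}\ge0$ by implementability, and $\theta_1>\uth$ ensures $\theta_1-s\in\Theta$ for small $s$. As $s\downarrow0$, the part of $\gif{\theta_1-s}{\theta'}{q}$ integrated over $[\theta_1-s,\theta_1]$ is $O(s)$, since $q$ is essentially bounded. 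On the remaining range $[\theta_1,\theta']$ the reference level $q(\theta_1-s)$ converges to $q(\theta_1^-)$; this uses the c\`adl\`ag property and monotonicity of $q$, and dominated convergence then passes the limit inside the integral. The limit is therefore
\[
\int_{\theta_1}^{\theta'}\bigl[v_\theta(q(\tth),\tth)-v_\theta(q(\theta_1^-),\tth)\bigr]\,d\tth\;\ge\;0 .
\]
Subtracting $\gif{\theta_1}{\theta'}{q}=0$ cancels the $v_\theta(q(\tth),\tth)$ terms and leaves exactly $h(\theta',q(\theta_1^-))$, so (ii) holds. This limiting step is the only delicate point, and I expect it to be the main obstacle: one has to check that the left limit is what enters, and that the boundary sliver vanishes.

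For (iv), I will use that $h(\theta',\cdot)$ is strictly concave on $[q(\theta_1^-),q(\theta_1)]$ by (iii), takes a value $\ge0$ at the left endpoint by (ii), and equals $0$ at the right endpoint by (i). The left endpoint is strictly smaller because the jump is strict. Strict concavity puts the derivative at the right endpoint strictly below the secant slope:
\[
h_{\qpert}(\theta',q(\theta_1))<\frac{h(\theta',q(\theta_1))-h(\theta',q(\theta_1^-))}{q(\theta_1)-q(\theta_1^-)}\le 0 .
\]
Hence $A=-h_{\qpert}(\theta',q(\theta_1))>0$. This argument does not need to locate $\tth^*$ relative to $\theta'$.
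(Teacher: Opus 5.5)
Your proof is correct and follows the paper's own argument. In (i) and (iii) you use the definition of $h$ and the sign $v_{qq\theta}>0$. In (ii) you let the anchor $\theta_1-s$ tend to $\theta_1$ in the implementability constraint and subtract the binding $\Phi(\theta_1,\theta';q)=0$. In (iv) you combine strict concavity with the boundary values from (i)--(ii).

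The differences are minor. Your secant-slope inequality in (iv) is a slightly more direct version of the paper's argument by contradiction. Your treatment of the $O(s)$ sliver and the cancellation of the $v_\theta(q(\tilde\theta),\tilde\theta)$ terms in (ii) spells out a limit that the paper leaves implicit.
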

	
	\begin{proof}
		(i) Follows directly from the definition
		of $h$.
		
		(ii) From implementability of $q$,
		$\gif{\theta_1-\delta}{\theta'}{q}\geq 0$
		for $\delta>0$ small. Taking
		$\delta\to 0^+$ and using
		$\gif{\theta_1}{\theta'}{q}=0$ gives
		$h(\theta',q(\theta_1^-))\geq 0$.
		
		(iii) $h_{\qpert \qpert}(\theta',\qpert) =
		-\int_{\theta_1}^{\theta'}
		v_{qq\theta}(\qpert,\tth)\,d\tth<0$,
		since $v_{qq\theta}>0$ by
		Assumption~\ref{S3}.
		
		(iv) By (i)--(iii), $h(\theta',\cdot)$ is
		strictly concave with boundary values
		$h(\theta',q(\theta_1^-))\geq0$ and
		$h(\theta',q(\theta_1))=0$. If
		$h_{\qpert}(\theta',q(\theta_1))=0$,
		strict concavity would make $q(\theta_1)$
		the unique maximizer of
		$h(\theta',\cdot)$, forcing
		$h(\theta',q(\theta_1^-))<0$ and
		contradicting (ii). Since
		$h(\theta',\cdot)$ vanishes at
		$q(\theta_1)$ and is positive to its
		left, $h_{\qpert}(\theta',q(\theta_1))
		\leq0$; hence $A>0$.
	\end{proof}
	
	By Claim~\ref{claim:h}, since
	$h(\theta',q(\theta_1))=0$,
	$h(\theta',q(\theta_1^-))\geq 0$, and
	$h(\theta',\cdot)$ is strictly concave on
	$[q(\theta_1^-),q(\theta_1)]$, we have
	$h(\theta',\qpert)>0$ for every $\qpert$ in
	the open interval
	$(q(\theta_1^-),q(\theta_1))$; moreover, by
	(iv) and continuity of $h_{\qpert}$,
	\begin{equation}
		h(\theta',\qpert)\geq
		\frac{A}{2}\,
		\bigl(q(\theta_1)-\qpert\bigr)
		\label{eq:hlower}
	\end{equation}
	for $\qpert$ close enough to $q(\theta_1)$.
	
	We now verify
	$\gif{\theta_0}{\theta_2}{\hat{q}}\geq 0$
	in three subcases depending on the position
	of $\theta_2$ relative to $\tth^*$ and
	$\theta'$.
	
	\medskip
	\noindent\textit{Case 3b-(i):}
	$\tth^*\leq\theta_2\leq\theta'$.
	
	\begin{equation}
		\gif{\theta_0}{\theta_2}{\hat{q}} =
		\gif{\theta_1}{\theta_2}{q} +
		h(\theta',\qpert) -
		\int_{\theta_2}^{\theta'}
		\bigl[v_\theta(q(\theta_1),\tth) -
		v_\theta(\qpert,\tth)\bigr]\,d\tth -
		m(\eta(\varepsilon)).
		\label{eq:case_a}
	\end{equation}
	Since $\tth\in[\theta_2,\theta']\subset
	[\tth^*,\theta']$, the region is $\csminus$
	(as $q(\theta_1)<\qlim(\tth)$ for
	$\tth\geq\tth^*$), so:
	\begin{equation*}
		-\int_{\theta_2}^{\theta'}
		\bigl[v_\theta(q(\theta_1),\tth) -
		v_\theta(\qpert,\tth)\bigr]\,d\tth > 0.
	\end{equation*}
	Combined with $h(\theta',\qpert)>0$,
	$\gif{\theta_1}{\theta_2}{q}\geq 0$, and
	$m(\eta(\varepsilon))\to 0$, we obtain
	$\gif{\theta_0}{\theta_2}{\hat{q}}>0$
	for $\varepsilon$ small enough.
	
	\medskip
	\noindent\textit{Case 3b-(ii):}
	$\theta_2<\tth^*<\theta'$.
	
	Since $q(\theta_1)<\qlim(\tth)$ fails for
	$\tth<\tth^*$, the region
	$[\theta_1,\theta_2]\subset[\theta_1,\tth^*)$
	is in $\csplus$, so
	$\gif{\theta_1}{\theta_2}{q}\geq0$.
	Moreover, $h(\theta_2,\qpert)>0$: since
	$v_{q\theta\theta}<0$ by
	Assumption~\ref{S3}, the map
	\begin{equation*}
		\tth\;\longmapsto\;
		h_\theta(\tth,\qpert)
		=\int_{\qpert}^{q(\theta_1)}
		v_{q\theta}(\xi,\tth)\,d\xi
	\end{equation*}
	is strictly decreasing, and it is positive
	at $\tth=\theta_1$ because the segment
	$[\qpert,q(\theta_1)]$ lies in $\csplus$ at
	$\theta_1$; hence $h(\cdot,\qpert)$
	increases and then decreases. As
	$h(\theta_1,\qpert)=0$ and
	$h(\theta',\qpert)>0$, we get
	$h(\theta_2,\qpert)>0$ for every
	$\theta_2\in(\theta_1,\theta']$, with
	minimum over $[\theta_1+\delta_0,\tth^*]$
	attained at an endpoint:
	\begin{equation*}
		\mu:=\min\{h(\theta_1+\delta_0,\qpert),
		\,h(\tth^*,\qpert)\}>0 .
	\end{equation*}
	From \eqref{eq:master}:
	\begin{equation*}
		\gif{\theta_0}{\theta_2}{\hat q}
		\geq \mu-m(\eta(\varepsilon))>0
	\end{equation*}
	for $\varepsilon$ small enough.
	
	Note that we may assume, without loss of generality, that
	$\theta'$ is the last value in $(\theta_1,\oth]$ with
	$\gif{\theta_1}{\theta'}{q}=0$: if
	$\gif{\theta_1}{\theta''}{q}=0$ for some $\theta''>\theta'$,
	then
	\begin{equation*}
		h(\theta',\qpert) =
		h(\theta'',\qpert) -
		\int_{\theta'}^{\theta''}
		\bigl[v_\theta(q(\theta_1),\tth) -
		v_\theta(\qpert,\tth)\bigr]\,d\tth > 0,
	\end{equation*}
	since $h(\theta'',\qpert)>0$ by the same argument and the
	integral is negative in $\csplus$.
	
	\medskip
	\noindent\textit{Case 3b-(iii):}
	$\tth^*<\theta'<\theta_2$.
	
	\begin{equation}
		\gif{\theta_0}{\theta_2}{\hat{q}} =
		\gif{\theta_1}{\theta_2}{q} +
		h(\theta',\qpert) +
		I(\theta',\theta_2) - m(\eta(\varepsilon)).
		\label{eq:case_b}
	\end{equation}
	where:
	\begin{equation*}
		I(\theta',\theta_2) :=
		\int_{\theta'}^{\theta_2}
		\bigl[v_\theta(q(\theta_1),\tth) -
		v_\theta(\qpert,\tth)\bigr]\,d\tth ,
	\end{equation*}
	so that $|I(\theta',\theta_2)|\leq
	(\theta_2-\theta')\,\bar v\,
	(q(\theta_1)-\qpert)$. Define
	\begin{equation*}
		\hth:=\theta'+\frac{A}{4\bar v},
	\end{equation*}
	which does not depend on $\qpert$ or
	$\varepsilon$. By \eqref{eq:hlower}, for
	$\qpert$ close enough to $q(\theta_1)$ and
	$\theta'<\theta_2\leq\hth$:
	\begin{equation*}
		|I(\theta',\theta_2)| <
		\frac{h(\theta',\qpert)}{2},
	\end{equation*}
	and hence from \eqref{eq:case_b}:
	\begin{equation*}
		\gif{\theta_0}{\theta_2}{\hat{q}} >
		\gif{\theta_1}{\theta_2}{q} +
		\frac{h(\theta',\qpert)}{2} -
		m(\eta(\varepsilon)) > 0
	\end{equation*}
	for $\varepsilon$ small.
	
	For $\hth\leq\theta_2\leq\oth$, define:
	\begin{equation*}
		H(\theta,\qpert) :=
		\gif{\theta_1}{\theta}{q} +
		h(\theta,\qpert).
	\end{equation*}
	Then $H(\theta',q(\theta_1))=
	\gif{\theta_1}{\theta'}{q}+
	h(\theta',q(\theta_1))=0$, and
	$H(\theta,q(\theta_1))>0$ for all
	$\theta>\theta'$. By continuity and
	compactness of $[\hth,\oth]$, there exists
	$\qpert$ close enough to $q(\theta_1)$ such
	that $H(\theta,\qpert)>0$ for all
	$\theta\in[\hth,\oth]$. Since:
	\begin{equation*}
		\gif{\theta_0}{\theta_2}{\hat{q}} =
		H(\theta_2,\qpert) - m(\eta(\varepsilon)),
	\end{equation*}
	we obtain
	$\gif{\theta_0}{\theta_2}{\hat{q}}>0$
	for $\varepsilon$ small. This completes
	Case~3 and the proof of
	Theorem~\ref{proposicion}.
\end{proof}

\medskip
\noindent\textbf{Discussion: solutions below
	$\qlim$.} Theorem~\ref{proposicion} takes
$q\geq\qrel$ as given; we briefly discuss this
restriction. A non-decreasing implementable
decision may in principle spend stretches below
$\qlim$: there the monotonicity condition only
requires the decision to be locally
non-increasing, so flat stretches overtaken by
$\qlim$, possibly followed by an upward jump,
are not excluded by implementability alone. Two
observations delimit this case. First, if no
binding non-local constraint crosses such a
stretch, the solution is not optimal: a small
uniform upward shift of the (maximal) stretch
raises the virtual surplus, since $\vs$ is
increasing below $\qrel$, and preserves
implementability, all constraints across the
stretch being slack with a uniform margin.
Second, once the solution is known to remain in
$\csplus$, the truncation argument inside the
proof of Theorem~1 in \citet{schottmuller2015}
--- a part of that proof unaffected by the
corrigendum --- delivers $q\geq\qrel$:
replacing $q$ by $\max\{q,\qrel\}$ preserves
implementability and raises the objective.
Whether stretches below $\qlim$ that serve
binding constraints can be optimal depends on
the relative curvature of $\vs$ and $v_\theta$,
in the spirit of the reflection condition in
\citet[Proposition~2]{schottmuller2015}; we do
not pursue this here.

\section{Proofs: Gâteaux Optimality
Conditions}
\label{app:dem_gateaux}

\subsection*{1. No-Crossing Case: continuous allocation.}

Recall the space of candidate solutions:
\begin{equation*}
	D = \Bigl\{(q,\cutoff)\in
	C^1[\uth,\oth]\times\mathbb{R}:
	q(\theta)=\begin{cases}
		q(\cutoff) &
		\theta\in[\uth,\cutoff],\\
		q(\theta) &
		\theta\in[\cutoff,\oth],
	\end{cases}
	\uth<\cutoff<\oth\Bigr\},
\end{equation*}
with functionals:
\begin{align}
	\jfunc(q,\cutoff) &=
	\int_{\uth}^{\cutoff}
	\vs(q(\cutoff),\theta)\,
	\dens(\theta)\,d\theta +
	\int_{\cutoff}^{\oth}
	\vs(q(\theta),\theta)\,
	\dens(\theta)\,d\theta,
	\label{eq:J}\\
	\wfunc(q,\cutoff) &=
	\int_{\cutoff}^{\oth}
	\bigl[v_\theta(q(\theta),\theta) -
	v_\theta(q(\cutoff),\theta)\bigr]\,
	d\theta.
	\label{eq:W}
\end{align}

\subsection*{Proof of Proposition \ref{prop_variacional}}

\begin{proof}
	Equip $\mathcal{Q} = C^1[\uth,\oth]\times
	\mathbb{R}$ with norm
	$\|(q,\cutoff)\| = \|q\|_M + \|q'\|_M
	+ |\cutoff|$. For a direction
	$(r,\xi_1)\in\mathcal{Q}$, define the
	perturbed pair:
	\begin{equation*}
		\cutoff(\varepsilon) =
		\cutoff + \varepsilon\xi_1,
		\qquad
		q_\varepsilon(\theta) =
		q(\theta) + \varepsilon r(\theta),
		\quad \theta\in[\cutoff,\oth].
	\end{equation*}
	By continuity, the flat value of the
	perturbed contract is:
	\begin{equation*}
		\flatq(\varepsilon) :=
		q_\varepsilon\bigl(
		\cutoff(\varepsilon)\bigr) =
		q(\cutoff+\varepsilon\xi_1) +
		\varepsilon\,r(\cutoff+
		\varepsilon\xi_1),
	\end{equation*}
	so that differentiating at $\varepsilon=0$:
	\begin{equation}
		\delta \flatq := \frac{d}{d\varepsilon}
		\flatq(\varepsilon)\Big|_{\varepsilon=0}
		= \xi_1\,q'(\cutoff) + r(\cutoff).
		\label{eq:dq1}
	\end{equation}
	
	We now compute the Gâteaux derivatives
	of $\jfunc$ and $\wfunc$ in the direction
	$(r,\xi_1)$. Applying the Leibniz rule
	to each integral in \eqref{eq:J}:
	\begin{align*}
		\frac{d}{d\varepsilon}
		\int_{\uth}^{\cutoff(\varepsilon)}
		\vs(\flatq(\varepsilon),\theta)\,
		\dens\,d\theta\Big|_0
		&= \vs(q(\cutoff),\cutoff)\,
		\dens(\cutoff)\,\xi_1
		+ \delta \flatq
		\int_{\uth}^{\cutoff}
		\vs_q(q(\cutoff),\theta)\,
		\dens\,d\theta, \\
		\frac{d}{d\varepsilon}
		\int_{\cutoff(\varepsilon)}^{\oth}
		\vs(q_\varepsilon(\theta),\theta)\,
		\dens\,d\theta\Big|_0
		&= -\vs(q(\cutoff),\cutoff)\,
		\dens(\cutoff)\,\xi_1
		+ \int_{\cutoff}^{\oth}
		r(\theta)\,\vs_q(q(\theta),\theta)\,
		\dens\,d\theta.
	\end{align*}
	The two boundary terms cancel --- a
	direct consequence of the continuity
	of $q$ at $\cutoff$, which gives
	$\flatq(0)=q(\cutoff)$. Summing:
	\begin{equation}
			\label{eq:varJ}
		\delta\jfunc =
		\delta \flatq
		\int_{\uth}^{\cutoff}
		\vs_q(q(\cutoff),\theta)\,
		\dens(\theta)\,d\theta +
		\int_{\cutoff}^{\oth}
		r(\theta)\,\vs_q(q(\theta),\theta)\,
		\dens(\theta)\,d\theta.
	\end{equation}
	For $\wfunc$, note that the integrand
	vanishes at the lower limit ($g(\cutoff)=0$
	identically), so the boundary term from
	differentiating $\cutoff(\varepsilon)$
	is zero. The integrand variation gives:
	\begin{equation}
			\label{eq:varW}
		\delta\wfunc =
		\int_{\cutoff}^{\oth}
		r(\theta)\,v_{q\theta}(q(\theta),
		\theta)\,d\theta
		- \delta \flatq
		\int_{\cutoff}^{\oth}
		v_{q\theta}(q(\cutoff),\theta)\,
		d\theta.
	\end{equation}
	
	By the Lagrange multiplier theorem in
	Banach spaces, at a local extremum there
	exists $\nu\in\mathbb{R}$ such that
	$\delta\jfunc + \nu\,\delta\wfunc = 0$
	for all $(r,\xi_1)\in\mathcal{Q}$.
	Substituting \eqref{eq:varJ} and
	\eqref{eq:varW}:
	\begin{equation}
		\delta \flatq
		\Bigl[
		\int_{\uth}^{\cutoff}
		\vs_q(q(\cutoff),\theta)\,
		\dens\,d\theta -
		\nu\int_{\cutoff}^{\oth}
		v_{q\theta}(q(\cutoff),\theta)\,
		d\theta
		\Bigr]
		+
		\int_{\cutoff}^{\oth}
		r(\theta)\,
		\bigl[\vs_q(q(\theta),\theta)\,
		\dens(\theta) +
		\nu\,v_{q\theta}(q(\theta),\theta)
		\bigr]\,d\theta = 0.
		\label{eq:combined}
	\end{equation}
	
	Choosing $r$ arbitrary with $r(\cutoff)=0$
	(so $\delta \flatq=0$), the second integral
	must vanish for all such $r$, and by the
	fundamental lemma of the calculus of
	variations we obtain condition~(i).
	Choosing $r\equiv 0$ (so $\delta \flatq =
	\xi_1\,q'(\cutoff)$, which is arbitrary
	since $q'(\cutoff)\neq 0$), the bracketed
	term must vanish, giving condition~(ii).
\end{proof}

\medskip
\noindent The cancellation of boundary terms in the variation of
$\jfunc$ is the key structural feature:
it reflects the fact that, by
Theorem~\ref{proposicion}, the optimal contract is
\emph{continuous} at $\cutoff$, so
infinitesimally shifting the cutoff has
no direct first-order effect on profit.
Condition~(i) is the modified Euler
equation along the non-flat branch;
condition~(ii) is the transversality
condition at the cutoff, balancing the
marginal loss in profit on the pooled
interval against the marginal relaxation
of the global IC constraint.

%%%%%%%%%%%%%%%%%%%%%%%%%%%%%%%%%%%%%%%%%%%%%%%%%%%%%%%%%%%%

\subsection*{2. Crossing Case: Allocations with jump.}

Recall the space of candidate solutions:
\begin{equation*}
	D_J = \Bigl\{(\theta_1,\qL,\qH)\in
	\mathbb{R}^3:
	\qopt(\theta)=\begin{cases}
		\qL &
		\theta\in[\uth,\theta_1),\\
		\qH &
		\theta\in[\theta_1,\theta_2(\qH)],\\
		\qrel(\theta) &
		\theta\in[\theta_2(\qH),\oth],
	\end{cases}
	\qL<\qH\Bigr\},
\end{equation*}
where $\theta_2(\qH)$ satisfies
$\qrel(\theta_2)=\qH$, with functionals:
\begin{align}
	\jfunc(\theta_1,\qL,\qH) &=
	\int_{\uth}^{\theta_1}
	\vs(\qL,\theta)\,\dens(\theta)\,d\theta
	+ \int_{\theta_1}^{\theta_2}
	\vs(\qH,\theta)\,\dens(\theta)\,d\theta
	+ \int_{\theta_2}^{\oth}
	\vs(\qrel(\theta),\theta)\,
	\dens(\theta)\,d\theta,
	\label{eq:J_cross}\\
	\wfunc(\theta_1,\theta_2) &=
	\int_{\theta_1}^{\theta_2}
	\bigl[v_\theta(\qH,\theta) -
	v_\theta(\qL,\theta)\bigr]\,d\theta.
	\label{eq:W_cross}
\end{align}

\subsection*{Proof of Proposition \ref{prop_cross} }

\begin{proof}
	Define the Lagrangian
	$\mathcal{L}(\theta_1,\qL,\qH,\nu) =
	\jfunc(\theta_1,\qL,\qH) +
	\nu\,\wfunc(\theta_1,\theta_2(\qH))$.
	At a local extremum, the Lagrange
	multiplier theorem gives $\nu\in\mathbb{R}$
	such that the Gâteaux derivative of
	$\mathcal{L}$ vanishes in every admissible
	direction.
	
	Varying $\theta_1$ by $\xi_1$ shifts the
	boundary between the two flat pieces.
	Applying the Leibniz rule to each
	integral in $\jfunc$ and $\wfunc$, the
	boundary terms at $\theta_1$ do
	\emph{not} cancel --- in contrast with
	the no-crossing case --- because the
	contract has a jump there:
	\begin{equation*}
		\xi_1\bigl[\vs(\qL,\theta_1) -
		\vs(\qH,\theta_1)\bigr]\dens(\theta_1)
		- \nu\,\xi_1\bigl[v_\theta(\qH,\theta_1)
		- v_\theta(\qL,\theta_1)\bigr] = 0,
	\end{equation*}
	which gives condition~(i) since
	$\xi_1$ is arbitrary.
	
	Varying $\qL$ by $\xi_1$ affects the
	profit on $[\uth,\theta_1]$ and the
	reference level in the ISO constraint:
	\begin{equation*}
		\xi_1\!\int_{\uth}^{\theta_1}
		\vs_q(\qL,\theta)\,\dens(\theta)\,
		d\theta
		- \nu\,\xi_1\!\int_{\theta_1}^{\theta_2}
		v_{q\theta}(\qL,\theta)\,d\theta = 0,
	\end{equation*}
	which gives condition~(ii).
	
	Varying $\qH$ by $\xi_2$ affects the
	profit on $[\theta_1,\theta_2]$ and moves
	the boundary $\theta_2(\qH)$, where
	$\frac{d\theta_2}{d\qH}=
	\frac{1}{\qrel'(\theta_2)}$.
	The boundary term in $\jfunc$ vanishes
	because $\qrel(\theta_2)=\qH$ by
	definition, so
	$\vs(\qH,\theta_2)-
	\vs(\qrel(\theta_2),\theta_2)=0$.
	This gives:
	\begin{equation*}
		\xi_2\!\int_{\theta_1}^{\theta_2}
		\bigl[\vs_q(\qH,\theta)\,\dens(\theta)
		+ \nu\,v_{q\theta}(\qH,\theta)\bigr]
		\,d\theta
		+ \nu\,\xi_2\,
		\frac{v_\theta(\qH,\theta_2)-
			v_\theta(\qL,\theta_2)}
		{\qrel'(\theta_2)} = 0,
	\end{equation*}
	which gives condition~(iii) since
	$\xi_2$ is arbitrary.
\end{proof}

\medskip
\noindent The key structural difference from the
no-crossing case is the non-cancellation
of boundary terms at $\theta_1$: since
the contract has a jump there, the
boundary terms in $\delta\jfunc$ are
$\vs(\qL,\theta_1)\,\dens(\theta_1)\,\xi_1$
and $-\vs(\qH,\theta_1)\,\dens(\theta_1)
\,\xi_1$, which differ when $\qL\neq \qH$.
This yields condition~(i), the jump
condition, which has no analog in the
no-crossing case. The extra degree of
freedom $\qH$ then generates
condition~(iii).
\section{The Forty Configurations}
\label{sec:cases}

\subsection{Classification}
\label{sec:classification}

Under Assumptions~\ref{S2}, \ref{S3}
and~\ref{S4} the curves $\qlim(\theta)$ and
$\qrel(\theta)$ are monotone, and the signs of
$v_{qq\theta}$, $v_{q\theta\theta}$, $\vs_{qq}$
and $\vs_{q\theta}$ determine the geometry of the
problem. It is worth recording what each sign
controls, since the classification below is
nothing more than their combinations.

The pair $(\vs_{qq},\vs_{q\theta})$ fixes
$\qrel$: strict concavity makes it well defined,
and the sign of $\vs_{q\theta}$ makes it
increasing or decreasing. The pair
$(v_{qq\theta},v_{q\theta\theta})$ fixes $\qlim$
and the geometry of the two single-crossing
regions. The role of $v_{q\theta\theta}$ is the
familiar one: it makes $\qlim$ increasing,
decreasing, or --- when it vanishes identically
--- constant. The role of $v_{qq\theta}$ is less
visible and worth stating on its own, because it
governs the curvature of the marginal rent.

Read $v_\theta(\cdot,\theta)$ as a function of
the decision at a fixed type. Its derivative is
$v_{q\theta}$, which vanishes exactly at
$\qlim(\theta)$, and its second derivative is
$v_{qq\theta}$, of constant sign by
Assumption~\ref{S3}. So $v_\theta(\cdot,\theta)$
has a unique critical point, on the dividing curve,
and is strictly convex or strictly concave
according to that sign:
\begin{equation}
	v_{qq\theta}>0
	\;\Longleftrightarrow\;
	v_\theta(\cdot,\theta)\ \text{convex,
		minimized at }\qlim(\theta)
	\;\Longleftrightarrow\;
	\csplus\ \text{above }\qlim ,
	\label{eq:curvature}
\end{equation}
and the reverse in the opposite case. This is
what decides which side of the dividing curve
carries the positive single-crossing region, and
with it which of the two mechanisms of the body
applies. Where $v_\theta(\cdot,\theta)$ is
convex, two decisions equidistant from
$\qlim(\theta)$ pay the same marginal rent and
the mirror is a lower bound on implementable
continuations, so that crossing $\qlim$ requires
a jump. Where it is concave the inequality
reverses, the mirror becomes the upper edge of a
widening band, and the contract is carried across
the dividing curve passively, without a
discontinuity.

Assumption~\ref{S3} admits two branches, and they
organize the count. When $v_{q\theta\theta}$ has
constant sign the dividing curve is strictly
monotone; when it vanishes identically the dividing
curve is constant. We take the first branch as
the benchmark. Within each block of
Table~\ref{tab:monotone_cases} the count is
$2\times2$: $\qrel$ may lie above or below
$\qlim$, and $\csplus$ may lie on either side of
it, giving four non-crossing configurations; the
crossing ones are counted separately, since when
the two curves have opposite monotonicity only
one crossing orientation is possible. This leads
to the $28$ configurations reported below.

\begin{table}[htbp]
	\centering
	
	\begin{tabular}{c|c|c}
		&
		$\qlim$ increasing
		&
		$\qlim$ decreasing
		\\
		\hline
		
		$\qrel$ increasing
		&
		\begin{tabular}{c}
			2 configurations: $\qrel\subseteq CS^-$\\
			2 configurations: $\qrel\subseteq CS^+$\\
			4 crossing configurations
		\end{tabular}
		&
		\begin{tabular}{c}
			2 configurations: $\qrel\subseteq CS^-$\\
			2 configurations: $\qrel\subseteq CS^+$\\
			2 crossing configurations
		\end{tabular}
		\\
		
		\hline
		
		$\qrel$ decreasing
		&
		\begin{tabular}{c}
			2 configurations: $\qrel\subseteq CS^+$\\
			2 configurations: $\qrel\subseteq CS^-$\\
			2 crossing configurations
		\end{tabular}
		&
		\begin{tabular}{c}
			2 configurations: $\qrel\subseteq CS^+$\\
			2 configurations: $\qrel\subseteq CS^-$\\
			4 crossing configurations
		\end{tabular}
		
	\end{tabular}
	
	\caption{Classification of strictly monotone configurations.}
	\label{tab:monotone_cases}
	
\end{table}

\begin{remark}
	The four blocks contain $8$, $6$, $6$,
	and $8$ configurations, respectively,
	yielding a total of $28$ strictly
	monotone cases. The reduction from
	$32$ to $28$ occurs because when
	$\qlim$ and $\qrel$ have opposite
	monotonicity directions, only one
	crossing orientation is possible.
	All configurations are realized in
	Appendix\ref{sec:realization}, where the
	corresponding solutions are also collected.
\end{remark}

The second branch of Assumption~\ref{S3} adds
$12$ configurations. If $\qlim$ is constant and
$\qrel$ is strictly monotone, $\qrel$ may lie
entirely above $\qlim$, entirely below $\qlim$,
or cross $\qlim$, while $\csplus$ may lie either
above or below $\qlim$. Since $\qrel$ may be
either increasing or decreasing, this yields
$2\times2\times3=12$ $\qlim$-horizontal
configurations.

Despite the large number of geometric
configurations, all of them ultimately fall into
three qualitatively distinct non-trivial
situations, according to which of the shape
results of Section~\ref{sec:continuity} applies:

\begin{enumerate}[(i)]
	\item \textbf{No-crossing}:
	$\qrel\subseteq CS^+$ or
	$\qrel\subseteq CS^-$. The contract is
	continuous and lies in $D$.
	
	\item \textbf{Crossing, $\qlim$ horizontal}:
	$\qrel\cap\qlim\neq\emptyset$ with
	$v_{q\theta\theta}\equiv0$. The jump is
	unavoidable and the contract lies in $D_J$.
	
	\item \textbf{Crossing, $\qlim$ strictly
		monotone}: $\qrel\cap\qlim\neq\emptyset$
	with $v_{q\theta\theta}$ of constant sign. The
	branch is the mirror and the contract lies in
	$D_R$, its profit to be compared with that of
	the continuous candidate and of full pooling.
\end{enumerate}

\subsection{A Family Realizing All Forty Configurations}
\label{sec:realization}

The purpose of this section is to exhibit primitives realizing each of the geometric configurations classified above. The figures display the dividing curve $\qlim$, the relaxed allocation $\qrel$, and the corresponding single-crossing region $\csplus$.

The examples are generated from the utility function
\begin{equation*}
	v(q,\theta)
	=
	\frac{p}{2}q^{2}\theta
	+
	\frac{m}{2}q\theta^{2}
	+
	rq\theta
	+
	sq
	+
	n\theta
	+
	l.
\end{equation*}

The cross-partial derivative is
\begin{equation*}
	v_{q\theta}(q,\theta)
	=
	pq+m\theta+r.
\end{equation*}

The corresponding dividing curve is
\begin{equation*}
	\qlim(\theta)
	=
	-\frac{m\theta+r}{p}.
\end{equation*}

The virtual surplus is parameterized by
\begin{equation*}
	f_q(q,\theta)
	=
	b_2 q+b_1\theta+b_0,
\end{equation*}
which yields the relaxed allocation
\begin{equation*}
	\qrel(\theta)
	=
	-\frac{b_1\theta+b_0}{b_2}.
\end{equation*}

We call this the \emph{linear family}, after the
	curves it induces rather than the primitives: $v_{q\theta}$ is affine,
	so $\qlim$ and $\qrel$ are affine in $\theta$, while $v$ itself is
	quadratic in $q$. The principal's cost function is constructed as
\begin{equation*}
	C(q,\theta)
	=
	v(q,\theta)
	-
	(1-\theta)v_{\theta}(q,\theta)
	-
	f(q,\theta),
\end{equation*}
where
\begin{equation*}
	f(q,\theta)
	=
	\frac{b_2}{2}q^2
	+
	b_1q\theta
	+
	b_0q.
\end{equation*}

Note that $v_{qq\theta}=p$ is constant in this
family, so \eqref{eq:curvature} reads simply as
the sign of $p$, and $v_\theta(\cdot,\theta)$ is
exactly quadratic: the mirror of a level $\bar q$
is available in closed form,
$\qmir(\theta)=2\qlim(\theta)-\bar q$.

\begin{lemma}[Certificate reduction in the linear family]
	\label{lem:cert_reduction}
Consider the linear family in the canonical increasing orientation,
$b_2<0$, $p>0$, $b_1>0$; decreasing configurations are covered through the reflection dictionary of
	Corollary~\ref{cor:coverage}, under which $\nu\mapsto-\nu$. Let the
	candidate be an exact solution of the necessary system of
	Proposition~\ref{prop_variacional}, \ref{prop_cross}
	or~\ref{prop_mirror_char} prescribed for its configuration. Then:
	\begin{enumerate}[(i)]
		\item the mass identities \eqref{eq:mass_nocross} and
		\eqref{eq:mass_identity}, the identity
		$A_R(\theta_1^*)=\Lambda(\theta_1^*)$, and the boundary values
		$\lambda(\theta_1^*)=0$, $\Lambda(\theta_2^*)=0$, and
		$\Lambda(\theta_3)=0$ if and only if $\theta_3<\oth$, hold
		identically: they are consequences of the necessary conditions, not
		conditions to be verified;
		\item in both crossing configurations the jump sits at the crossing,
		$\theta_1^*=\theta^*$, and $\Lambda(\theta_1^*)=-b_2/p$; hence
		$\vs_{qq}\,\dens+\Lambda(\theta_1^*)\,v_{qq\theta}
		=b_2+\Lambda(\theta_1^*)p=0$ identically ---
		Remark~\ref{rem:jump_affine} for the whole family --- and the
		concavity clause of Assumptions~\ref{S6} and~\ref{S8} holds almost
		everywhere if and only if the weights attain the value $-b_2/p$ only
		at the jump type, a finite-root condition;
		\item on the branch the weight is monotone by construction: in the
		horizontal case $\Lambda$ is affine with slope
		$-b_1/[\,p(\qH^*-c)\,]<0$, the endpoint kernel is uniform, and
		$W(x)=p(\qL^*-c)\bigl[\tfrac12(\theta_1^*+\theta_2^*)-x\bigr]<0$ on
		the pool; in the monotone case $\Lambda$ is a ratio of two affine
		functions with no pole on the branch, hence monotone, its direction
		given by the sign of the constant $d_\Lambda:=N'D-ND'$, where $N$ and
		$D$ are its numerator and denominator;
		\item when $\qlim$ is strictly monotone and the candidate takes two
		flat levels, exactly one aggregated constraint binds --- the one
		against the extreme type --- and the multiplier is a scalar. Writing
		$\Delta$ for the size of the jump and $\bar c$ for the midpoint of the
		two levels, the difference of marginal rents between them is
		$\Delta\,p\,[\,\bar c-\qlim(\theta)\,]$, which changes sign once
		inside the binding interval and integrates to zero over it; the
		remaining pairs $(a,e)$ satisfy the constraint strictly, and $\Phi$
		restricted to the family is a quadratic with roots only at the two
		ends of that interval.
	\end{enumerate}
\end{lemma}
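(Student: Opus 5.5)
The plan is a direct computation in the linear family, exploiting that every object in the certificates is a polynomial or a ratio of affine functions once $v_{q\theta}=pq+m\theta+r$ and $\vs_q=b_2q+b_1\theta+b_0$ are substituted. Work in the canonical orientation $b_2<0$, $p>0$, $b_1>0$. Record two structural facts first. Since $v_{qq\theta}=p$ is constant, $v_\theta(\cdot,\theta)$ is exactly quadratic about $\qlim(\theta)$, so $\varphi$-type differences factor as $v_\theta(q,\theta)-v_\theta(\bar q,\theta)=(q-\bar q)\,p\,[\tfrac12(q+\bar q)-\qlim(\theta)]$. Since $\vs_{qq}=b_2$ is constant, $\vs(q,\theta)-\vs(\bar q,\theta)=(q-\bar q)\,[\tfrac{b_2}{2}(q+\bar q)+b_1\theta+b_0]$. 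Every item then becomes an identity between low-degree polynomials in $\theta$, and the decreasing configurations are transported through Corollary~\ref{cor:coverage}, with $\nu\mapsto-\nu$.

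\textbf{Item (i).} I treat each mass identity as the relevant necessary condition rearranged, exactly as in the text.
\begin{itemize}
\item For \eqref{eq:mass_nocross}, substitute condition~(ii) of Proposition~\ref{prop_variacional} into $\Lambda(\cutoff^*)=F/V_q$.
\item For $\lambda(\cutoff^*)=0$, differentiate \eqref{eq:flow}: $\lambda V_q=\vs_q\dens+\Lambda v_{q\theta}$ at $\bar q^*$. This vanishes at $\cutoff^*$ by condition~(i), together with continuity and $\Lambda(\cutoff^*)=\nu^*$.
\item For \eqref{eq:mass_identity}, integrate $\Lambda$ against the uniform kernel. The linearity of $\Lambda$ shows that its mean is $\nu^*$, via condition~(iii) with the boundary term killed by $\varphi(\qH^*)=\varphi(\qL^*)$. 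Then evaluate \eqref{eq:flowL_closed} at $\theta_1^*$ using condition~(ii).
\item For $A_R$, repeat the same computation, now using \eqref{eq:foc_pool} with $\omega\equiv1$.
\item The zeros $\Lambda(\theta_2^*)=0$ and $\Lambda(\theta_3)=0$ are immediate from \eqref{eq:flowH} and \eqref{eq:flowM}, because the numerator $-\vs_q$ vanishes exactly where the branch meets $\qrel$. When $\theta_3=\oth$ with no meeting, the numerator is positive.
\end{itemize}

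\textbf{Item (ii).} The jump condition (i) of Propositions~\ref{prop_cross} and~\ref{prop_mirror_char} gives $\vs(\qL,\theta_1)=\vs(\qH,\theta_1)$, using for Proposition~\ref{prop_cross} that the right side vanishes. By the factorization above, $\tfrac{b_2}{2}(\qL+\qH)+b_1\theta_1+b_0=0$, that is, $\qrel(\theta_1)$ equals the midpoint. The ISO or mirror relation makes that midpoint equal to $\qlim(\theta_1)$, so $\theta_1^*=\theta^*$. At that point the branch level satisfies $\qH-\qrel(\theta^*)=\qH-\qlim(\theta^*)$. Hence $\Lambda(\theta_1^*)=-b_2(\qH-\qrel)/(p(\qH-\qlim))\cdot\dens=-b_2/p$, taking $\dens\equiv1$ as in the family. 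The concavity clause then reads $b_2+m(x)p<0$, which fails only where $m=-b_2/p$. Since $A_L$ and $\Lambda$ are monotone (item (iii)) and reach that value at $\theta_1^*$, the failure set is finite precisely under the stated root condition.

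\textbf{Item (iii).} In the horizontal case, $\Lambda(x)=-(b_2\qH^*+b_1x+b_0)/(p(\qH^*-c))$ is affine with the stated slope. Then $k=-\Lambda'/\Lambda(\theta_1^*)$ is constant, hence uniform. $W$ follows by integrating $v_{q\theta}(\qL^*,t)=p(\qL^*-c)$ against $k$. In the monotone case, both the numerator and the denominator of \eqref{eq:flowM} are affine in $x$, because $\qmir$ is affine. A Möbius function without a pole on an interval is monotone, with sign $N'D-ND'$. The absence of a pole follows because $\qmir>\qlim$ on the branch.

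\textbf{Item (iv).} This is the step I expect to be the main obstacle, since its content is a combinatorial claim about which pairs bind. With two flat levels $\bar q_1<\bar q_2$ and $\qlim$ strictly monotone, the difference of marginal rents is the affine function $\Delta p[\bar c-\qlim(\theta)]$, which has a single sign change. For anchor $a$ on the lower flat and endpoint $e$ on the upper one, $\Phi(a,e)$ equals the integral of this affine function over the upper-flat part of $[a,e]$. That is a quadratic in $e$ which vanishes at the jump and at one other root. I would show the binding interval is the one reaching the extreme type: the integral over the whole upper flat must be zero, by the variational condition, and any proper subinterval starting at the jump gives a strictly positive value, because the integrand is positive first. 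So only the extreme endpoint binds, all anchors sharing it, and the family collapses to one aggregated constraint with a scalar multiplier.
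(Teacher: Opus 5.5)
Your route is the paper's own: each identity in (i) read as a rearranged necessary condition, the quadratic factorization of $\vs$ placing $\qrel(\theta_1)$ at the midpoint of the jump (hence $\theta_1^*=\theta^*$ and $\Lambda(\theta_1^*)=-b_2/p$, with $\dens\equiv1$ as you note), affine and M\"obius weights in (iii), and an affine rent difference that makes $\Phi$ a quadratic in the endpoint in (iv). Two steps, however, do not go through as written.

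The first is $A_R(\theta_1^*)=\Lambda(\theta_1^*)$. Your horizontal computation rests on the uniform kernel and the linearity of $\Lambda$. (The mean $\nu^*$ itself comes from condition~(iii) and the constancy of $v_{q\theta}(\qH^*,\cdot)$, not from linearity.) In the monotone case none of this carries over:
\begin{itemize}
\item $\Lambda$ of \eqref{eq:flowM} is a genuine ratio of affine functions, not an affine function;
\item the kernel \eqref{eq:kernel_atom} has a non-constant density and, when the tail is absent, an atom;
\item there is no multiplier whose mean could be matched.
\end{itemize}
So ``the same computation'' has nothing to run on. What works is an integration by parts that uses no linearity. By the definitions of $\Lambda$ and $\omega$, $\vs_q(\qmir(x),x)\,\omega(x)\,\dens(x)=\Lambda(x)\,v_{q\theta}(\bar q^*,x)$. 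Hence \eqref{eq:foc_pool} turns the numerator of \eqref{eq:flowL_closed} at $\theta_1^*$ into $\int_{\theta_1^*}^{\theta_3}\Lambda\,v_{q\theta}(\bar q^*,\cdot)$. For the denominator, put $G(\theta_e):=\int_{\theta_1^*}^{\theta_e}v_{q\theta}(\bar q^*,t)\,dt$ and integrate by parts against $dk$:
\[
W(\theta_1^*)=\int G\,dk
=\frac{-\Lambda(\theta_3)G(\theta_3)+\int_{\theta_1^*}^{\theta_3}\Lambda\,G'\,d\theta_e+\Lambda(\theta_3)G(\theta_3)}{\Lambda(\theta_1^*)}
=\frac{\int_{\theta_1^*}^{\theta_3}\Lambda\,v_{q\theta}(\bar q^*,\cdot)}{\Lambda(\theta_1^*)},
\]
so the ratio is $\Lambda(\theta_1^*)$. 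The atom is exactly what cancels the boundary term, and that is the idea missing from your sketch. The horizontal identity follows the same way, with $\Lambda(\theta_2^*)=0$ killing the boundary term.

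The second is the concavity clause in (ii). Since $p>0$, the condition $b_2+m(x)p<0$ fails on $\{m\ge-b_2/p\}$, not only where $m=-b_2/p$. To reduce the failure set to that level set you need $m\le-b_2/p$ throughout. That follows from two facts:
\begin{itemize}
\item the anchor weight is non-decreasing up to its value $-b_2/p$ at $\theta_1^*$ (that value is the mass identity);
\item $\Lambda$ is non-increasing from that value.
\end{itemize}
These are the monotonicity clauses of Assumptions~\ref{S6} and~\ref{S8}, not consequences of item~(iii). Item (iii) says nothing about $A_L$ or $A_R$. In the monotone case it gives only that $\Lambda$ is monotone in the direction fixed by $d_\Lambda$. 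Were that direction the wrong one, $-b_2/p$ would be attained only at $\theta_1^*$ while the clause failed on the whole branch. So the equivalence in (ii) holds only relative to the remaining clauses of those assumptions, and your argument should invoke them as hypotheses.

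A smaller omission in (iv): you check only pairs anchored on the first flat. The reversed pairs are strict as well, because on the first flat the rent difference keeps the sign it has just after the jump (its single zero lies inside the binding interval), but this needs to be said.
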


\begin{proof}
	(i) is the boundary discussion of Section~\ref{sec:general_apply} read
	as a statement: \eqref{eq:mass_nocross} is the transversality
	condition~(ii) of Proposition~\ref{prop_variacional} rearranged;
	\eqref{eq:mass_identity} follows from conditions~(ii)--(iii) of
	Proposition~\ref{prop_cross} with the separable form of $v_\theta$, the
	boundary term of~(iii) vanishing because $\varphi(\qH^*)=\varphi(\qL^*)$;
	$A_R(\theta_1^*)=\Lambda(\theta_1^*)$ follows from
	condition~\eqref{eq:foc_pool} of Proposition~\ref{prop_mirror_char}, the
	transmission rate appearing as the Jacobian; $\lambda(\theta_1^*)=0$
	from the Euler equation~(i) and continuity at the junction; and the
	release values of $\Lambda$ from $\vs_q(\qrel(\cdot),\cdot)=0$. None of
	this uses linearity.
	
	(ii) With $\vs$ quadratic in $q$,
	$\vs(q_{\mathrm L},\theta)-\vs(q_{\mathrm H},\theta)
	=(q_{\mathrm L}-q_{\mathrm H})\,b_2\,
	\bigl[\tfrac12(q_{\mathrm L}+q_{\mathrm H})-\qrel(\theta)\bigr]$
	for any two levels. The isoperimetric condition (horizontal case) or the
	mirror construction (monotone case) places the midpoint of the two
	levels of the jump at $\qlim(\theta_1)$, so the jump condition ---
	condition~(i) of Proposition~\ref{prop_cross}, whose $v_\theta$-term
	vanishes by separability, or \eqref{eq:foc_jump} --- reads
	$\qrel(\theta_1)=\qlim(\theta_1)$, that is, $\theta_1=\theta^*$.
	Substituting $b_1\theta_1+b_0=-b_2\,\qlim(\theta_1)$ and
	$m\theta_1+r=-p\,\qlim(\theta_1)$ into \eqref{eq:flowH} or
	\eqref{eq:flowM} gives
	\[
	\Lambda(\theta_1^*)
	=\frac{-b_2\,\bigl(q_{\mathrm{high}}-\qrel(\theta_1)\bigr)}
	{p\,\bigl(q_{\mathrm{high}}-\qlim(\theta_1)\bigr)}
	=-\frac{b_2}{p},
	\]
	whatever the level $q_{\mathrm{high}}$ of the jump.
	
	(iii) Horizontal: $v_{q\theta}(\qH^*,\cdot)=p(\qH^*-c)$ is a positive
	constant, so $\Lambda$ of \eqref{eq:flowH} is affine with slope
	$-b_1/[\,p(\qH^*-c)\,]$, negative in the canonical orientation, and the
	kernel $k=-\Lambda'/\Lambda(\theta_1^*)$ is constant; the bracket of
	\eqref{eq:flowL} is $W(x)=p(\qL^*-c)\int(\theta_e-x)\,k\,d\theta_e$,
	negative because $\qL^*<c$ strictly. Monotone: along the mirror
	$q^{\mathrm m}(x)=2\qlim(x)-\bar q$ both the numerator and the
	denominator of \eqref{eq:flowM} are affine in $x$, the denominator is
	positive on the branch, and the Wronskian $N'D-ND'$ of two affine
	functions is constant, so $\Lambda'=-d_\Lambda/D^2$ has constant sign.
	
	(iv) Since $v_\theta$ is quadratic in $q$ with $v_{q\theta}$ affine, two
	levels $\bar c\pm\tfrac12\Delta$ differ in marginal rent at $\theta$ by
	$\Delta\,v_{q\theta}(\bar c,\theta)=\Delta\,p\,[\,\bar c-\qlim(\theta)\,]$.
	With $\qlim$ strictly monotone this expression vanishes at a single
	type, so the constraint attached to a pair $(a,e)$, being its integral
	between the jump and $e$, is a quadratic in $e$ with a double sign
	pattern: it is strict except at the two ends of the interval over which
	it integrates to zero. Hence a single constraint binds and one
	non-negative scalar multiplier suffices, whereas with $\qlim$ horizontal
	the same difference $\Delta\,p\,(\bar c-c)$ is constant and vanishes
	identically at the isoperimetric level, so that every pair binds
	simultaneously and the multiplier must be a measure --- the two cases of
	Table~\ref{tab:solutions} that share the symbol $D_J$.
\end{proof}

Together, (i)--(iv) reduce sufficiency in each row of
Table~\ref{tab:solutions}, against Assumptions~\ref{S5}--\ref{S8}, to a
short residue: at most two scalar signs, one or two polynomial
inequalities on an interval, and, in the monotone rows, a $2\times2$
Hessian. Everything else --- the mass identities, the boundary values of
the weights, and the concavity clause at the jump type --- holds
identically in this family, so it is transcription rather than
verification.

\begin{remark}[Computational verification]
	\label{rem:cert_scripts}
	Facts~(ii) and~(iii) are verified symbolically with generic
	coefficients $(b_2,b_1,b_0;p,m,r)$ in \texttt{certificates.py} (routine
	\texttt{verify\_lemma\_symbolic}) in the supplementary material; the
	same script evaluates the residual conditions of each configuration in
	exact arithmetic. The identities of~(i) are re-evaluated there as
	\emph{transcription checks}: they hold automatically for an exact
	candidate, so a failure flags a mistranscribed solution rather than a
	failed certificate.
\end{remark}

\subsection*{Full pooling at a corner}

Full pooling arises in two different ways. In the twelve structural rows the
local monotonicity condition fails at every type, so nothing but a constant is
implementable and a sign check settles the case. In cases 11 and 14 other
contracts are perfectly implementable and the constant still wins: it wins as
a corner of a jump space, and beating a list of candidates is not a proof. The
certificate below closes those two rows. It is simpler than the certificates
of Section~\ref{sec:general_apply} in one respect: a constant satisfies every
constraint with equality, so complementary slackness is automatic and any
non-negative multiplier is admissible. What has to be shown is that some
multiplier makes the objective decrease in every implementable direction.

Throughout, $\Theta = [\underline\theta,\overline\theta]$, the pooled level
$\bar q$ is the stationary point of the pooled problem,
\begin{equation}\label{eq:poolfoc}
	\int_{\underline\theta}^{\overline\theta} f_q(\bar q, t)\,dt = 0 ,
\end{equation}
and we write $q = \bar q + g$ for a competing allocation. Two expansions drive
everything. Since $f$ is quadratic in $q$ with $f_{qq} = b_2 < 0$,
\begin{equation}\label{eq:Jexp}
	J(\bar q + g) - J(\bar q)
	= \int f_q(\bar q, t)\, g(t)\, dt + \frac{b_2}{2}\int g^2(t)\, dt ,
\end{equation}
and since $v_\theta$ is quadratic in $q$ with $v_{q\theta}$ affine,
\begin{equation}\label{eq:Phiexp}
	\Phi(a, e; \bar q + g)
	= \int_a^e \bigl(g(t) - g(a)\bigr) v_{q\theta}(\bar q, t)\, dt
	+ \frac{p}{2}\int_a^e \bigl(g^2(t) - g^2(a)\bigr)\, dt .
\end{equation}
The certificate has to cancel the linear term of \eqref{eq:Jexp} against the
linear terms of \eqref{eq:Phiexp}, and then check that what is left of the
quadratic terms has a sign.

The multipliers are carried by a one-parameter family of constraints: the
endpoint is fixed at one extreme of the type space, and the anchor moves.

\begin{lemma}[corner pooling]\label{lem:corner}
	Let $\bar q$ satisfy \eqref{eq:poolfoc} and put
	\[
	F(x) = \int_{\underline\theta}^{x} f_q(\bar q, t)\, dt ,
	\qquad
	\check V(x) = \int_{\underline\theta}^{x} v_{q\theta}(\bar q, t)\, dt ,
	\qquad
	\hat V(x) = \int_{x}^{\overline\theta} v_{q\theta}(\bar q, t)\, dt .
	\]
	Consider the two anchored families
	\[
	\bigl\{\Phi(a, \underline\theta)\bigr\}_{a \in \Theta}
	\quad\text{(anchor at the bottom)},
	\qquad
	\bigl\{\Phi(a, \overline\theta)\bigr\}_{a \in \Theta}
	\quad\text{(anchor at the top)},
	\]
	with cumulative weight and anchor density
	\[
	B = F/\check V,\quad \beta = -B'
	\qquad\text{respectively}\qquad
	B = F/\hat V,\quad \beta = B' ,
	\]
	and quadratic coefficient
	\[
	w(x) = \frac{b_2}{2} - \frac{p}{2}
	\Bigl(B(x) - (x - \underline\theta)\,\beta(x)\Bigr)
	\qquad\text{respectively}\qquad
	w(x) = \frac{b_2}{2} + \frac{p}{2}
	\Bigl(B(x) - (\overline\theta - x)\,\beta(x)\Bigr) .
	\]
	Suppose that, for one of the two orientations,
	\begin{enumerate}[(i)]
		\item $F$ and the corresponding $\check V$ or $\hat V$ have constant
		sign on the interior of $\Theta$, so that $B \ge 0$ there and has
		no pole;
		\item $\beta \ge 0$ on $\Theta$;
		\item $\displaystyle \gamma := -\sup_{x \in \Theta} w(x) > 0$.
	\end{enumerate}
	Then for every implementable $q$,
	\begin{equation}\label{eq:cornergap}
		J(q) \;\le\; J(\bar q) - \gamma \int_\Theta (q - \bar q)^2 ,
	\end{equation}
	so full pooling at $\bar q$ is optimal, and uniquely so.
\end{lemma}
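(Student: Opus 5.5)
The plan is a Lagrangian weak-duality argument of the same three-step shape as Proposition~\ref{prop_suff}, but done exactly, using the quadratic expansions \eqref{eq:Jexp}--\eqref{eq:Phiexp} in place of pointwise concavity. I treat the top-anchored orientation, $\{\Phi(a,\oth)\}_{a\in\Theta}$ with $B=F/\hat V$ and $\beta=B'$. The bottom orientation is handled identically after writing $\Phi(a,\uth)=-\int_{\uth}^{a}(\cdots)$ and using $\check V$; the sign flips there are exactly what produce $\beta=-B'$ and the minus sign in its $w$.

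\emph{Step 1 (weak duality and slackness).} Fix an implementable $q=\bar q+g$, measurable and essentially bounded as in Section~\ref{sec:sufficiency}. Each $\Phi(a,\oth;q)\ge0$, and $\beta\ge0$ by (ii), so
\[
J(q)\le J(q)+\widehat W(q),\qquad \widehat W(q):=\int_\Theta\beta(a)\,\Phi(a,\oth;q)\,da .
\]
At $q=\bar q$ every $\Phi$ vanishes, so $\widehat W(\bar q)=0$.

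\emph{Step 2 (cancellation of the linear terms).} Insert \eqref{eq:Phiexp} and exchange the order of integration by Fubini. I use $\int_{\uth}^{t}\beta=B(t)-B(\uth)$, and $B(\uth)=F(\uth)/\hat V(\uth)=0$ because $F(\uth)=0$ and $\hat V(\uth)\neq0$ by (i). The linear part of $\widehat W$ then becomes
\[
\int_\Theta g(t)\bigl[B(t)\,v_{q\theta}(\bar q,t)-\beta(t)\,\hat V(t)\bigr]dt .
\]
Differentiating the identity $B\hat V=F$ gives $B'\hat V-Bv_{q\theta}=f_q(\bar q,\cdot)$. So the bracket equals $-f_q(\bar q,t)$, and this term cancels exactly the linear term of \eqref{eq:Jexp}.

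\emph{Step 3 (the quadratic remainder).} By the same exchange, the quadratic part of $\widehat W$ is
\[
\frac p2\int_\Theta g^2(t)\bigl[B(t)-(\oth-t)\beta(t)\bigr]dt .
\]
Adding $\tfrac{b_2}{2}\int g^2$ from \eqref{eq:Jexp}, I obtain
\[
J(q)\le J(\bar q)+\int_\Theta w\,g^2\le J(\bar q)-\gamma\int_\Theta g^2
\]
by (iii). This is \eqref{eq:cornergap}, and uniqueness up to null sets follows because $\gamma>0$.

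\emph{Main obstacle.} The delicate step is the endpoint $\oth$, where $B=F/\hat V$ is of the form $0/0$: $F(\oth)=0$ by \eqref{eq:poolfoc} and $\hat V(\oth)=0$. I would first show by l'H\^opital that
\[
B(\oth)=\frac{f_q(\bar q,\oth)}{v_{q\theta}(\bar q,\oth)},
\]
which is finite whenever $v_{q\theta}(\bar q,\oth)\ne0$; this is the ``no pole'' clause of (i). I would then check that $(\oth-t)\beta(t)$ stays bounded, so that $\beta$ is integrable and Fubini is licensed. Where $v_{q\theta}(\bar q,\cdot)$ vanishes at the endpoint, I would instead truncate at $\oth-\varepsilon$, run Steps 1--3 there, and pass to the limit using boundedness of $g$ and the uniform bound $w\le-\gamma$.
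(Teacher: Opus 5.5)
Your proof is the paper's own argument, run in the top-anchored orientation instead of the bottom one. The ingredients are the same: weak duality against $\int\beta(a)\,\Phi(a,\oth;q)\,da$, Fubini, cancellation of the linear term through $(B\hat V)'=f_q(\bar q,\cdot)$, and the diagonal remainder $\int w\,g^2\le-\gamma\int g^2$. The one structural change is the role of \eqref{eq:poolfoc}: in your orientation it turns the singularity of $B$ at $\oth$ into a removable $0/0$, whereas in the bottom orientation it supplies the boundary value $B(\oth)=0$.

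Two corrections to your endpoint paragraph. First, since $\hat V'=-v_{q\theta}$, l'H\^opital gives $B(\oth)=-f_q(\bar q,\oth)/v_{q\theta}(\bar q,\oth)$, with a minus sign. Second, the truncation fallback adds nothing. If $B$ stays bounded, the untruncated argument already works, because a bounded, non-decreasing $B$ makes $\beta$ integrable. If $B$ is unbounded, cutting at $\oth-\varepsilon$ leaves a residue $\tfrac p2 B(\oth-\varepsilon)\int_{\oth-\varepsilon}^{\oth}g^2$ that need not vanish. That case has to be excluded by reading (i), as the paper does, as making $B$ a bounded weight.
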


\begin{proof}
	Take the bottom orientation; the other is symmetric. Let $\beta$ be the
	density of a non-negative measure on anchors and consider
	\[
	L(g) = J(\bar q + g) - J(\bar q)
	+ \int_\Theta \Phi(a, \underline\theta; \bar q + g)\,\beta(a)\, da .
	\]
	Since $q$ is implementable, $\Phi \ge 0$, and $\beta \ge 0$ by (ii), so
	$J(\bar q + g) - J(\bar q) \le L(g)$: it suffices to bound $L$.
	
	Collect the linear terms. By \eqref{eq:Phiexp} the linear part of
	$\int \Phi(a,\underline\theta)\beta(a)\,da$ is, after exchanging the order
	of integration, $-\int g(t)\, \bar\beta(t)\, v_{q\theta}(\bar q, t)\, dt$
	up to the anchor terms, where $\bar\beta(t) = \int_t^{\overline\theta}
	\beta$ is the cumulative weight. Hence the linear part of $L$ vanishes for
	every $g$ precisely when
	\begin{equation}\label{eq:cancel}
		B(x)\, \check V(x) = F(x) \qquad\text{for all } x \in \Theta ,
	\end{equation}
	which is the definition of $B$, and (i) makes it a legitimate
	non-negative weight. Two boundary values are needed for
	\eqref{eq:cancel} to define a measure on $\Theta$: $B(\underline\theta)=0$
	holds because $F$ and $\check V$ both vanish there, and
	\[
	B(\overline\theta) = 0
	\]
	holds because $F(\overline\theta) = 0$ is exactly the pooling first-order
	condition \eqref{eq:poolfoc}. The latter is the point at which this
	certificate differs from those of \ref{sec:general_apply}: there the
	weight is pinned at the far end by transversality at the cutoff, and here
	there is no cutoff --- the stationarity of the pool supplies the same
	boundary condition for free.
	
	What survives is diagonal in $g$. The quadratic term of \eqref{eq:Jexp}
	contributes $(b_2/2)\int g^2$, and the quadratic term of
	\eqref{eq:Phiexp}, weighted by $\beta$ and integrated by parts, contributes
	$-(p/2)\int g^2(x)\bigl(B(x) - (x-\underline\theta)\beta(x)\bigr) dx$.
	Therefore
	\[
	L(g) = \int_\Theta w(x)\, g^2(x)\, dx \;\le\; -\gamma \int_\Theta g^2 ,
	\]
	using (iii). With $g = q - \bar q$ this is \eqref{eq:cornergap}, and
	$\gamma > 0$ gives uniqueness.
\end{proof}

Conditions (i)--(iii) are the residue in the sense of Lemma~\ref{lem:cert_reduction}:
two constant-sign checks and one polynomial inequality, all decidable in exact
arithmetic for the linear family, since $F$, $\check V$ and $\hat V$ are
quadratic polynomials and $w$ is a ratio of polynomials of degree at most four.

\medskip\noindent\textit{Cases 11 and 14.}
In case 11 the bottom orientation applies: $\bar q = 13/10$,
\[
B(x) = \frac{5(1-x)}{11 - 10x},
\qquad
\beta(x) = \frac{5}{(11-10x)^2} \; > 0 ,
\]
the denominator $11 - 10x$ has no zero on $[0,1]$, and
$\gamma = 3/11$. In case 14 the top orientation applies: $\bar q = 9/10$,
\[
B(x) = \frac{x}{5x+2},
\qquad
\beta(x) = \frac{2}{(5x+2)^2} \; > 0 ,
\]
and $\gamma = 3/7$. In both rows \eqref{eq:cornergap} holds with a strictly
positive gap, so full pooling is the unique optimum and not merely the best
of the candidates compared.

The certificate does not extend to the partner configurations obtained by
flipping the sign of $p$, and the reason is structural rather than technical.

\begin{proposition}[the flipped partner]\label{prop:pflip}
	Let two configurations share the same $f$ and have $v_{q\theta}$ of
	opposite sign, so that $\Phi$ changes sign between them, and let them share
	the stationary pooled level $\bar q$ of \eqref{eq:poolfoc}. If one of them
	satisfies the hypotheses of Lemma~\ref{lem:corner}, then in the other every
	direction that is feasible to first order at $\bar q$ weakly improves the
	objective, and neither orientation of Lemma~\ref{lem:corner} can hold there.
\end{proposition}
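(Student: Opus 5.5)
The plan is to exploit the fact that the certificate of Lemma~\ref{lem:corner} is, at its core, a statement about the \emph{linear} terms of \eqref{eq:Jexp} and \eqref{eq:Phiexp}, and that flipping the sign of $v_{q\theta}$ flips those linear terms while leaving the linear term of $J$ untouched. Write $\Phi^{(1)}(a,e;g):=\int_a^e\bigl(g(t)-g(a)\bigr)v_{q\theta}(\bar q,t)\,dt$ for the linear part of \eqref{eq:Phiexp} in the configuration that satisfies the lemma, and $\Phi'^{(1)}=-\Phi^{(1)}$ for its partner. The partner shares $f$ and $\bar q$, so it shares $F$. A direction $g$ is feasible to first order at $\bar q$ in the partner exactly when $\Phi'^{(1)}(a,e;g)\ge0$ for all pairs, that is, when $\Phi^{(1)}(a,e;g)\le0$ for all pairs.

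\emph{Step 1 (first-order improvement).} Fix the orientation, say the bottom one, for which the original configuration satisfies Lemma~\ref{lem:corner}. The cancellation identity \eqref{eq:cancel}, with $\beta\ge0$ and the boundary values $B(\uth)=B(\oth)=0$, is an identity of linear functionals valid for every $g$:
\[
\int_\Theta \vs_q(\bar q,t)\,g(t)\,dt \;=\; -\int_\Theta \beta(a)\,\Phi^{(1)}(a,\uth;g)\,da .
\]
I will re-derive it from the proof of the lemma by exchanging the order of integration. For $g$ feasible to first order in the partner, every $\Phi^{(1)}(a,\uth;g)\le0$ and $\beta\ge0$, so the right side is $\ge0$. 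By \eqref{eq:Jexp} the first-order change of $J$ in direction $g$ is therefore non-negative, which is the first claim.

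\emph{Step 2 (no orientation survives).} For the partner, the cumulative weight in a given orientation is forced by its own cancellation identity: $B'=F/(-\check V)$ or $F/(-\hat V)$, since the weight is determined uniquely once the orientation is fixed. In the orientation the original uses, $B'=-B$. Because $F\not\equiv0$ on the interior (otherwise $\vs_q(\bar q,\cdot)\equiv0$, impossible under Assumption~\ref{S4}), $B'$ is strictly negative somewhere, so (i) fails. For the other orientation I would compare the signs of $\check V$ and $\hat V$. Both equal the total $\int_\Theta v_{q\theta}(\bar q,t)\,dt$ at the far endpoint. When $v_{q\theta}(\bar q,\cdot)$ keeps a sign on $\Theta$, $\check V$ and $\hat V$ share it, so in each configuration the two orientations produce weights of the same sign; flipping $p$ then makes both negative.

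\emph{Main obstacle.} The hard case is when $v_{q\theta}(\bar q,\cdot)$ changes sign on $\Theta$, i.e.\ $\bar q$ is crossed by $\qlim$. There the sign comparison of $\check V$ and $\hat V$ is not automatic. My first attempt would be an alternative, orientation-free argument: if the partner satisfied Lemma~\ref{lem:corner} in some orientation with weight $\beta'\ge0$, then its identity gives $\int\vs_q g=-\int\beta'\Phi'^{(1)}$. Combining this with Step~1, the linear functional $\int\vs_q g$ would have to be both non-positive and non-negative on the cone of first-order-feasible directions. I would then exhibit a single direction in that cone on which it is nonzero. The natural candidate is a step $g=\mathbf 1_{[x,\oth]}$ or $\mathbf 1_{[\uth,x]}$ placed where $F(x)\neq0$: its anchored linear constraints reduce to signs of $\check V$ or $\hat V$ on one side of $x$, which can be checked directly. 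If no such direction exists, I would fall back to verifying the sign relation between $\check V$ and $\hat V$ in the linear family explicitly, since both are quadratic polynomials.
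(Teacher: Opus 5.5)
Your Step~1 is correct and is the paper's own argument: the cancellation identity \eqref{eq:cancel}, read with the linear part of $\Phi$ negated, gives $\int \vs_q(\bar q,\cdot)\,g\ge 0$ on the partner's first-order feasible directions. The same-orientation half of Step~2 is also right, since the partner's weight is forced to be $-B$.

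The gap is the other orientation when $v_{q\theta}(\bar q,\cdot)$ changes sign on $\Theta$. This is not a marginal case; it is exactly the case the paper certifies.
\begin{itemize}
\item In case~11, $B=F/\check V=5(1-x)/(11-10x)$ with $F\propto x(x-1)$ forces $v_{q\theta}(\bar q,t)\propto 10t-\tfrac{11}{2}$. Then $\hat V(x)\propto(1-x)(5x-\tfrac12)$ changes sign at $x=\tfrac1{10}$.
\item In case~14, $B=F/\hat V=x/(5x+2)$ forces $v_{q\theta}(\bar q,t)\propto 10t-3$. Then $\check V(x)\propto x(5x-3)$ changes sign at $x=\tfrac35$.
\end{itemize}
As written, Step~2 settles the second claim for neither of the pairs $(11,9)$ and $(14,16)$. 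Your remedies also remain intentions: ``placed where $F(x)\neq0$'' is no placement rule, since $F$ is nonzero on the whole interior.

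The missing observation is local at one endpoint, so you do not need a global comparison of $\check V$ and $\hat V$. You noted that both equal $T:=\int_\Theta v_{q\theta}(\bar q,t)\,dt$ at the far endpoint. In fact $\check V+\hat V\equiv T$. Moreover $T\neq0$ is implicit in the lemma: it is what gives $B(\oth)=0$ in the bottom orientation (resp.\ $B(\uth)=0$ in the top one).

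Suppose the bottom orientation is certified. Then $\check V$ keeps one sign on the interior and equals $T$ at $\oth$, so it has the sign of $T$, and so does $F$. Since $\hat V(x)\to T$ as $x\downarrow\uth$, the partner's top weight $F/(-\hat V)$ is strictly negative just above $\uth$, and (i) fails. For the flip of case~11 this weight is $x/\bigl(2(x-\tfrac1{10})\bigr)$. The top-certified case is the same argument at $\oth$.

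In your Farkas formulation, the same fact tells you where to put the step. For $g=\mathbf 1_{[x,\oth]}$, the linear parts of the two anchored families are $\check V(x)\mathbf 1_{\{a\ge x\}}$ and $\hat V(x)\mathbf 1_{\{a<x\}}$, negated in the partner. Take $x$ close to $\uth$ (close to $\oth$ if the top orientation is certified). Then the direction $-\mathrm{sign}(T)\,\mathbf 1_{[x,\oth]}$ satisfies both anchored families of the partner and moves the objective by $|F(x)|>0$ to first order. That contradicts a partner certificate in either orientation. Only these two families are needed, not feasibility for every pair.

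The paper's proof stops at the conditional ``if some such $g$ has $\int f_q(\bar q,\cdot)g>0$''. So this endpoint argument is what closes the second claim there as well.
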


\begin{proof}
	Cancellation \eqref{eq:cancel} in the certified configuration exhibits a
	non-negative measure whose linear part offsets $\int f_q(\bar q,\cdot) g$
	exactly. Under the flip the linear part of $\Phi$ changes sign while
	$\int f_q(\bar q,\cdot)g$ does not, so the same weights turn the
	cancellation into an inequality of the opposite orientation: any $g$ with
	$\Phi \ge 0$ to first order has $\int f_q(\bar q,\cdot)g \ge 0$. If some
	such $g$ has $\int f_q(\bar q,\cdot)g > 0$ then no non-negative measure can
	cancel the linear part in the flipped configuration, which is what
	conditions (i)--(iii) would provide; the two are alternatives in the sense
	of Farkas.
\end{proof}

Both flipped pairs realize this. Case 9 is the partner of case 11 and case 16
the partner of case 14; in each the pool is stationary, neither orientation of
Lemma~\ref{lem:corner} holds, and two flat pieces beat the pool --- by
$81/20000$ in case 9 and by about $10^{-3}$ in case 16. Margins that small are
invisible to a grid search, which is why case 16 stood as full pooling until
the certificate failed on it: here it is the failure of the certificate, not
the comparison, that decides the row.

Figures~\ref{fig:sol1} and~\ref{fig:sol2}
display all forty geometric configurations considered in Appendix~\ref{sec:classification}. In each panel, the gray curve represents $\qlim$, the black curve represents $\qrel$, and the shaded region corresponds to
\begin{equation*}
	\csplus
	=
	\left\{
	(q,\theta):
	v_{q\theta}(q,\theta)>0
	\right\}.
\end{equation*}

% Figuras 

\subsubsection*{Classification by LMC}
\label{sec:lmc_classification}

The forty configurations are classified according
to whether $\qrel$ satisfies the Local
Monotonicity Condition pointwise, which is what
determines the candidate space.

When the LMC fails at every type --- cases $2$,
$4$, $6$, $8$, $21$, $24$, $26$, $27$, $31$,
$33$, $36$ and $40$ --- no non-constant
allocation is implementable and the optimum is a
single pool, obtained from
$\int_\Theta\vs_q(\bar q,\theta)\dens(\theta)\,
d\theta=0$.

When it holds at every type --- cases $1$, $3$,
$5$, $7$, $22$, $23$, $25$, $28$, $30$, $34$,
$37$ and $39$ --- the local condition is
satisfied but the global one need not be. Where
$\qrel$ is itself implementable it is the
optimum and no distortion arises; where it is
not, Theorem~\ref{proposicion} makes the contract
continuous and it lies in $D$, characterized by
Proposition~\ref{prop_variacional}.

When it is violated on a region, because $\qrel$
crosses $\qlim$ --- cases $9$ through $20$, and
$29$, $32$, $35$ and $38$ --- the contract lies
in $D_J$ if the dividing curve is constant, by
Proposition~\ref{prop_cross}, and in $D_R$ if it
is strictly monotone, by
Proposition~\ref{prop_mirror_char}; in the latter
case the jump competes with the continuous
candidate and with full pooling, and profit
settles the comparison. A comparison, however,
only ranks the candidates one has thought of. Each row is therefore
closed by a certificate, which proves the reported contract optimal
among all implementable allocations. This matters in practice as well
as in principle: in cases 9 and 16 the pool loses to two flats by
$81/20000$ and by about $10^{-3}$, margins no grid search separates.
Decreasing configurations are handled throughout via
Corollary~\ref{cor:coverage}.

\subsubsection*{Solutions}
\label{sec:solutions}

Figures~\ref{fig:sol1} and~\ref{fig:sol2} display the optimal contract
for each of the forty configurations. In every panel the dividing curve
$\qlim(\theta)$ is in light gray, the relaxed allocation
$\qrel(\theta)$ in dark gray, the optimal contract $\qopt(\theta)$ in
color, and the shaded region is
$\csplus=\{v_{q\theta}>0\}$. The label next to each case number gives
the space in which the solution lies: $D$ for a pool followed by the
branch $\qiso$, $D_J$ for two flats separated by a jump, $D_R$ for a
pool, a jump and the mirror branch, $q_1$ when the relaxed allocation is
itself implementable, \emph{flats} for two flat pieces when
the dividing curve is strictly monotone, and \emph{pool} when the
optimum is constant. Curves are drawn only where they are
non-negative, since quantities below zero are not allocations: $\qlim$
enters a panel where it turns positive, and the relaxed allocation shown
is $\max\{\qrel,0\}$, which is the one the optimum tracks. Axes carry
no numbers: only the shape is at issue.

Each contract was obtained by applying the algorithm that the shape
results of Section~\ref{sec:continuity} prescribe for its configuration
--- Corollary~\ref{cor:algorithm} where continuity is forced,
Corollary~\ref{cor:cross} where the dividing curve is flat, and, for a
decreasing configuration, Corollary~\ref{cor:coverage} --- and, in the
monotone-crossing case, by comparing the profit of the candidates that
remain available.\footnote{The solutions were checked against the fully
	discretized problem with no shape imposed. As noted in
	Section~\ref{sec:magic}, that problem is not concave and admits
	multiple local optima, so the check is run from several starting
	points and the best retained. The discretized check is a screen, not a
	certificate; the \emph{Cert.}\ column of Table~\ref{tab:solutions}
	records the sufficiency argument that closes each row.}

\begin{figure}[p]
	\centering
	\includegraphics[width=\textwidth]{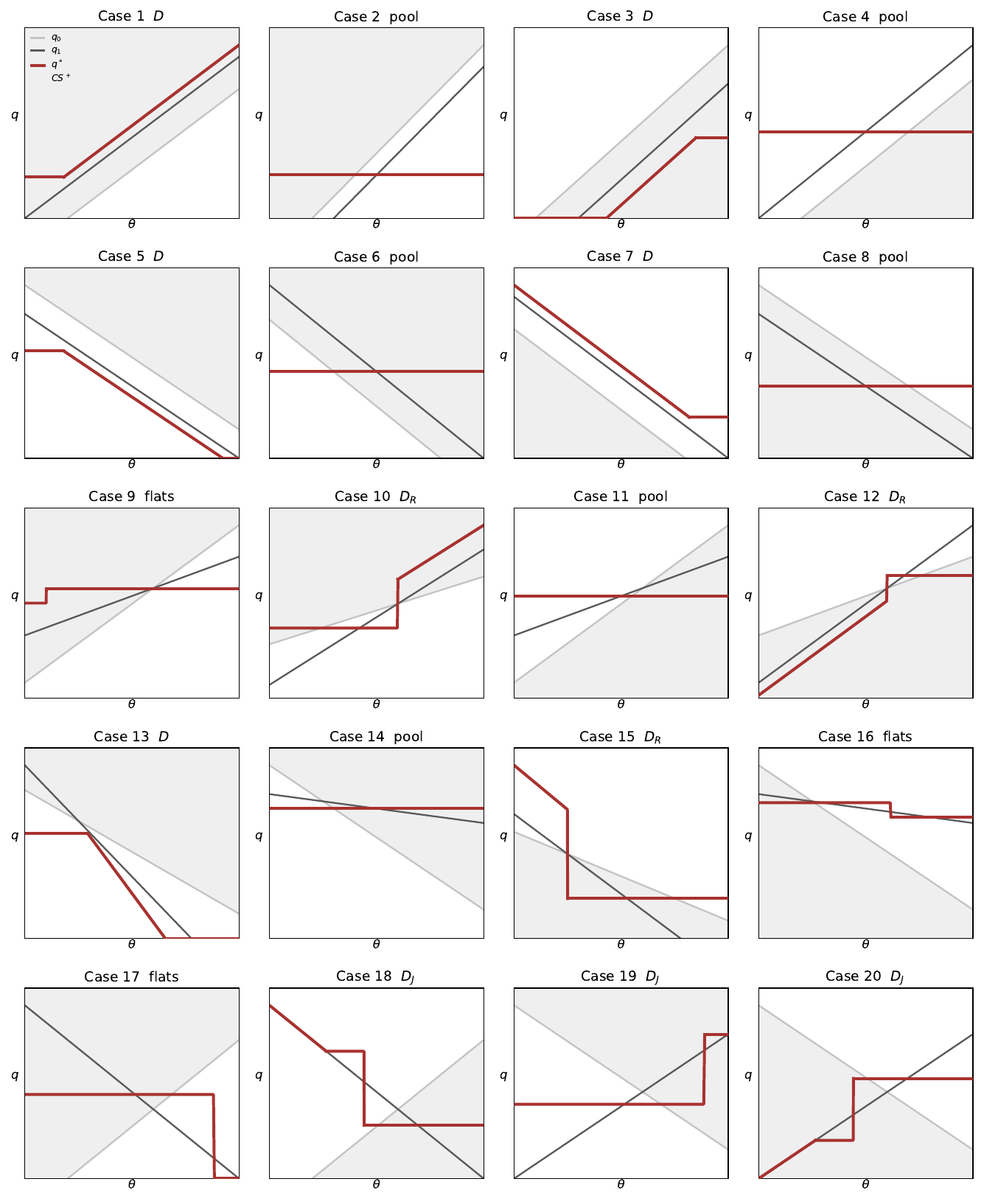}
	\caption{Optimal contracts, configurations 1--20, one panel each. Dividing
		curve $\qlim$ in light gray, relaxed allocation $\qrel$ in dark gray,
		optimal contract $\qopt$ in color, shaded region $\csplus$. Labels name
		the space the solution lies in, as in the text. Curves are drawn only
		where they are non-negative. Axes are omitted: only the shape is at
		issue.}
	\label{fig:sol1}
\end{figure}

\begin{figure}[p]
	\centering
	\includegraphics[width=\textwidth]{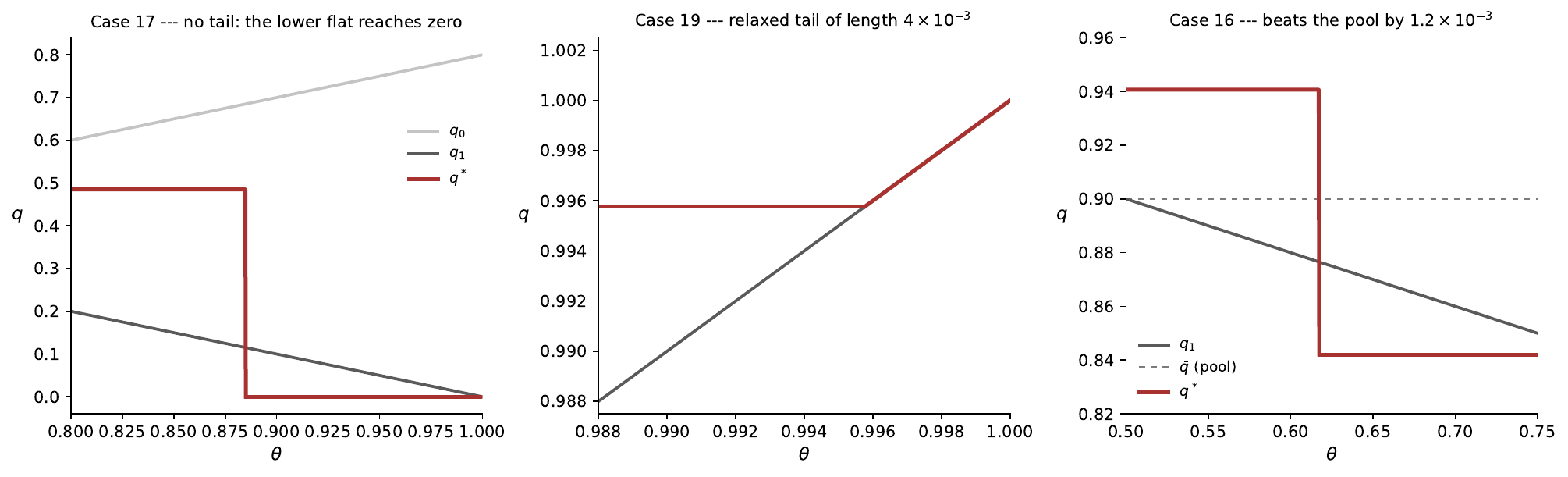}
	\caption{Three magnified windows. The feature that names
			the space can be finer than Figure~\ref{fig:sol1} shows: case 19's
			relaxed tail is $4\times10^{-3}$ long, and in case 17 it closes where
			the lower flat reaches zero. Case 16 is the same difficulty in profit:
			the two flats beat the pooled level $9/10$ (dotted) by
			$1.2\times10^{-3}$, which is why the row was first recorded as
			pooling.}
	\label{fig:zoom}
	
\end{figure}

\begin{figure}[p]
	\centering
	\includegraphics[width=\textwidth]{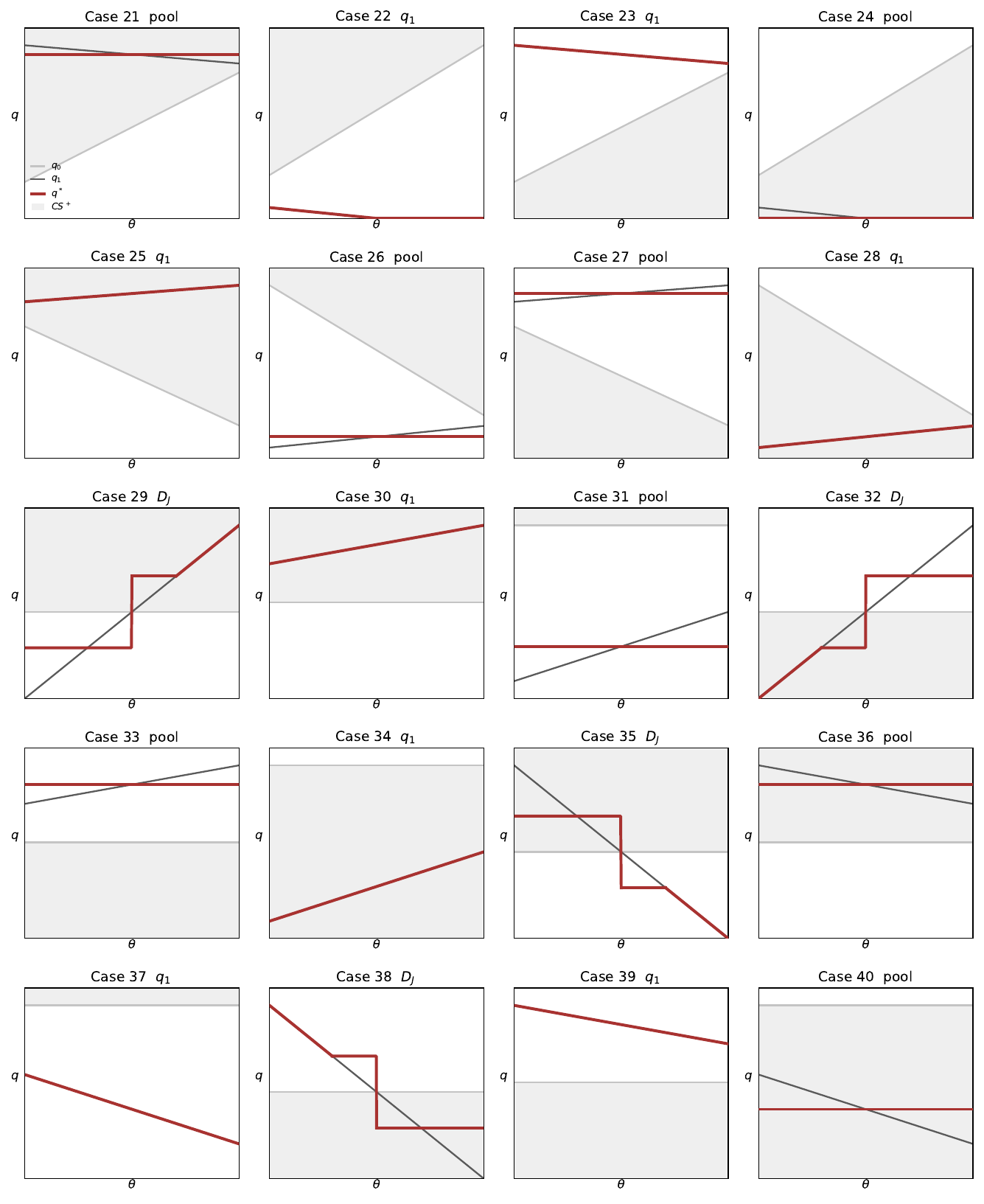}
	\caption{Optimal contracts, configurations 21--40. Conventions as in
		Figure~\ref{fig:sol1}.}
	\label{fig:sol2}	
\end{figure}
\afterpage{\clearpage}

\begin{table}[htbp]
	\centering
	\renewcommand{\arraystretch}{1.05}
	\begin{tabular}{@{}rcccl@{\hspace{1.5em}}rcccl@{}}
		\hline
		Case & Group & Shape & Cert. & $\Pi^*$ & Case & Group & Shape & Cert. & $\Pi^*$\\
		\hline
		1 & I & $D$ & dual & 0.1618 & 21 & I & pool & structural & 0.4050\\
		2 & I & pool & structural & 0.0200 & 22 & I & $q_1$ & relaxed & 0.0002\\
		3 & I & $D$ & dual & 0.0498 & 23 & I & $q_1$ & relaxed & 0.4054\\
		4 & I & pool & structural & 0.1250 & 24 & I & pool & structural & 0.0000\\
		5 & I & $D$ & dual & 0.1618 & 25 & I & $q_1$ & relaxed & 0.5004\\
		6 & I & pool & structural & 0.1250 & 26 & I & pool & structural & 0.0050\\
		7 & I & $D$ & dual & 0.1618 & 27 & I & pool & structural & 0.5000\\
		8 & I & pool & structural & 0.1250 & 28 & I & $q_1$ & relaxed & 0.0054\\
		9 & III & flats & scalar & 0.8491 & 29 & II & $D_J$ & dual & 0.1595\\
		10 & III & $D_R$ & dual & 0.8075 & 30 & I & $q_1$ & relaxed & 0.3217\\
		11 & III & pool & dual & 0.8450 & 31 & I & pool & structural & 0.0113\\
		12 & III & $D_R$ & dual & 0.8568 & 32 & II & $D_J$ & dual & 0.1595\\
		13 & III & $D$ & dual & 0.2343 & 33 & I & pool & structural & 0.3200\\
		14 & III & pool & dual & 0.4050 & 34 & I & $q_1$ & relaxed & 0.0129\\
		15 & III & $D_R$ & dual & 0.1571 & 35 & II & $D_J$ & dual & 0.1595\\
		16 & III & flats & scalar & 0.4062 & 36 & I & pool & structural & 0.3200\\
		17 & III & flats & scalar & 0.1352 & 37 & I & $q_1$ & relaxed & 0.0217\\
		18 & III & $D_J$ & scalar & 0.1583 & 38 & II & $D_J$ & dual & 0.1595\\
		19 & III & $D_J$ & scalar & 0.1350 & 39 & I & $q_1$ & relaxed & 0.3217\\
		20 & III & $D_J$ & scalar & 0.1583 & 40 & I & pool & structural & 0.0200\\
		\hline
	\end{tabular}
	\caption{The forty configurations and their solutions. Group~I: no crossing;
		II: horizontal dividing curve; III: monotone crossing. \emph{Shape} is the
		contract: \emph{flats} for two flat levels, $D_J$ when a jump coexists
		with a piece of $\qrel$. \emph{Cert.}\ is the certificate that proves the
		contract optimal among all implementable allocations: \emph{structural}
		when nothing else is implementable, \emph{relaxed} when $\qrel$ itself is,
		\emph{scalar} when one constraint binds, \emph{dual} when a family of them
		binds and the multiplier is a measure. The two columns are independent:
		rows 18 and 38 draw the same contract and are closed by different
		certificates, because a horizontal $\qlim$ makes every cross constraint
		bind at once while a monotone one leaves a single binding constraint.}
	\label{tab:solutions}
\end{table}

\section{Applications}
\label{sec:App}

\subsection{Procurement and
	Regulation}
\label{sec:app_regulation}

Consider a regulator contracting with a
firm whose technology is characterized
by a privately known productivity
parameter $\theta\in[\uth,\oth]=
[0.75,\,1.25]$, distributed uniformly.
The firm's cost function is:
\begin{equation*}
	c(q,\theta) =
	\frac{\theta^3}{3}\,q -
	\frac{\theta}{2}\,q^2 - M\theta,
	\quad M=25,
\end{equation*}
so that higher types are more efficient:
$c_\theta(q,\theta) = \theta^2 q -
\frac{1}{2}q^2 - M < 0$ for $q$
sufficiently small. The regulator's
payoff is $B(q,\theta) -T$, where:
\begin{equation*}
	B(q,\theta) = -\frac{\lambda}{2}q^2
	- \beta(\theta)\,q,
	\quad \lambda=2,
\end{equation*}
and $\beta(\theta) =
\frac{29\theta^3}{30} -
\frac{199\theta^2}{80}$
is the regulator's marginal cost at
$q=0$, which is decreasing in $\theta$.

A direct computation gives
$v_{q\theta}(q,\theta) = q - \theta^2$,
so $\qlim(\theta)=\theta^2$ is
increasing. The virtual surplus yields
$\qrel(\theta)=\frac{23}{20}\theta^2$,
which satisfies $\qrel > \qlim$
throughout $[\uth,\oth]$, so we are
in the no-crossing case of
Section~\ref{sec:nocrossing}. By
Theorem~\ref{proposicion}, the
optimal contract is continuous.

The modified Euler equation~(i) of
Proposition~\ref{prop_variacional}
gives:
\begin{equation}
	q_{iso}(\theta;\nu) =
	\frac{\nu\theta^2 +
		\frac{23}{10}\theta^3 -
		\frac{299}{80}\theta^2}
	{2\theta - \frac{13}{4} + \nu},
	\label{eq:qiso_reg}
\end{equation}
which unlike the  guiding example is not a
translation of $\qrel$. The system
formed by the ISO constraint and
transversality condition~(ii) is solved
numerically, yielding:
\begin{equation*}
	\theta_1^* \approx 0.8752,
	\quad
	\nu^* \approx 0.4282,
	\quad
	\bar{q}^* \approx 0.9268.
\end{equation*}
The optimal contract is:
\begin{equation}
	\qopt(\theta) = \begin{cases}
		0.9268 &
		\theta\in[0.75,\;0.8752],\\[4pt]
		q_{iso}(\theta;\,0.4282) &
		\theta\in[0.8752,\;1.25].
	\end{cases}
	\label{eq:qopt_reg}
\end{equation}
Implementability is verified: $\qopt>
\qlim$ throughout, the rent is
increasing, and
$\gif{\hat\theta}{\theta}{\qopt}\geq 0$
for all pairs.

This example also serves as a test of the sufficiency conditions
outside the linear family of Appendix~\ref{sec:realization}: here
$v_{q\theta}(q,\theta)=q-\theta^2$ is quadratic in the type, and both
$\qlim$ and $\qrel$ are nonlinear. Assumption~\ref{S5} holds on all
of $\Theta$, its left side attaining $-0.3218$ at worst.
Assumption~\ref{Sflow} holds as well: $V_q(\bar q^*,\cdot)$ stays
between $-0.0803$ and $-0.0470$ on the pool, so it does not vanish,
and the weight \eqref{eq:flow} is non-decreasing with $\lambda\ge0$
throughout. The two boundary identities are recovered numerically to
twelve digits: $\Lambda(\theta_1^*)=0.428236=\nu^*$ and
$\lambda(\theta_1^*)=0$. Every type, pooled types included, is the
pointwise maximizer of its weighted density, so
Proposition~\ref{prop_suff} applies and \eqref{eq:qopt_reg} is
optimal among all implementable allocations.
Figure~\ref{fig:laffont} plots it against the numerical solution.

\begin{figure}[htbp]
	\centering
	\includegraphics[width=0.5\linewidth]{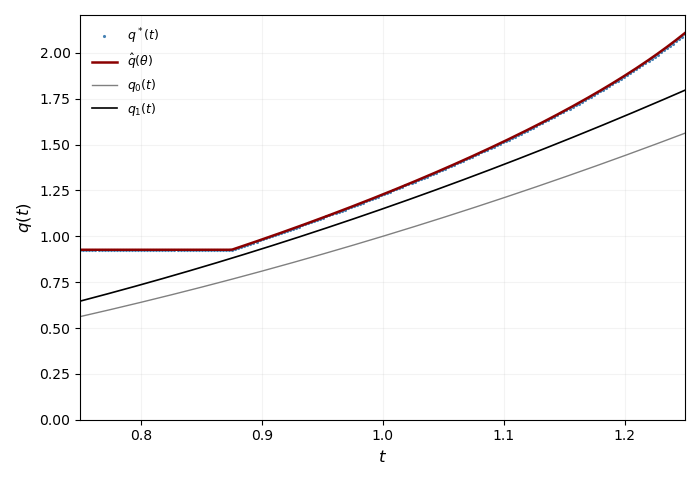}
	\caption{Regulation application.
		The optimal contract $\qopt$ (solid)
		exhibits bunching on $[0.75,\,0.8752]$.
		The numerical solution obtained via
		AMPL/Knitro with $N=200$ types
		coincides with the analytical
		solution~\eqref{eq:qopt_reg}.}
	\label{fig:laffont}
\end{figure}

%%%%%%%%%%%%%%%%%%%%%%%%%%%%%%%%%%%%%%%%%%%%
\subsection{Nonlinear Pricing}
\label{sec:app_nlp}

We illustrate the crossing case with a
nonlinear pricing example where the
relaxed solution $\qrel$ is nonlinear,
showing that Proposition~\ref{prop_cross}
remains tractable beyond the linear
setting of Section~\ref{sec:ex_cross}.

\subsubsection*{Setup}

A monopolist sells quality $q$ to
consumers who differ in their expertise
$\theta\in[\uth,\oth]=[0.1,\,1]$,
distributed uniformly. Expertise
operates through two channels.
On the preference side, the consumer's
gross utility is:
\begin{equation*}
	v(q,\theta) =
	\theta\Bigl(\frac{q^2}{2} -
	\frac{q}{2}\Bigr) + \theta + q,
\end{equation*}
so that expertise raises the marginal
value of quality at high levels and
lowers it at low levels: an expert pays
for refinement, but is insensitive to
marginal improvements among mediocre
offerings. On the cost side, expert
consumers are more demanding at high
quality but cheaper to serve at the
margin, with decreasing returns:
\begin{equation*}
	C(q,\theta) =
	\theta(q^2-q) - \theta^{3/4}\,q +
	\frac{3}{2}q.
\end{equation*}
The virtual surplus is:
\begin{equation*}
	\vs(q,\theta) =
	\frac{q}{2}\bigl(2\theta^{3/4} - q
	\bigr),
\end{equation*}
which is strictly concave in $q$, in
accordance with
Assumption\ref{S2}. Since
$v_\theta(q,\theta) = q^2/2 - q/2 + 1
\geq 7/8 > 0$, the informational rent
is strictly increasing along any
allocation and IR binds only at $\uth$.

The expertise channel in preferences
destroys global single-crossing:
$v_{q\theta}(q,\theta) = q - \frac{1}{2}$
changes sign at $q = \frac{1}{2}$,
so $\qlim = \frac{1}{2}$ is constant.
The relaxed solution
$\qrel(\theta) = \theta^{3/4}$ is
nonlinear and increasing. It starts
below $\qlim$ for low types and crosses
it at:
\begin{equation*}
	\theta^* =
	\Bigl(\tfrac{1}{2}\Bigr)^{4/3}
	\approx 0.397.
\end{equation*}

\subsubsection*{Applying the Proposition}

Since $v_\theta(q,\theta) =
q^2/2 - q/2 + 1$ is independent of
$\theta$, the optimality conditions
of Proposition~\ref{prop_cross}
simplify considerably.

The ISO constraint with equality gives:
\begin{equation*}
	\int_{\theta_1}^{\theta_2}
	\bigl[v_\theta(\qH,\theta) -
	v_\theta(\qL,\theta)\bigr]\,d\theta
	= \frac{(\qH-\qL)(\qH+\qL-1)}{2}
	\cdot(\theta_2-\theta_1) = 0,
\end{equation*}
which yields:
\begin{equation}
	\qL + \qH = 1.
	\label{eq:iso_nlp}
\end{equation}
The jump condition~(i) reduces to
$\vs(\qL,\theta_1) = \vs(\qH,\theta_1)$
(since $v_\theta(\qH,\theta_1) =
v_\theta(\qL,\theta_1)$ by
\eqref{eq:iso_nlp}), which gives:
\begin{equation*}
	\theta_1^{3/4} = \frac{\qL+\qH}{2}
	= \frac{1}{2}
	\implies
	\theta_1^* = \theta^*.
\end{equation*}
The jump point coincides exactly with
the crossing point. After substituting
$\qL = 1-\qH$ and $\theta_1 = \theta^*$,
and noting that $\theta_2(\qH) =
\qH^{4/3}$ (from $\qrel(\theta_2)=\qH$),
the profit reduces to a function of
$\qH$ alone:
\begin{equation}
	\Pi(\qH) =
	\int_{\uth}^{\theta^*}
	\vs(1-\qH,\theta)\,d\theta +
	\int_{\theta^*}^{\qH^{4/3}}
	\vs(\qH,\theta)\,d\theta +
	\int_{\qH^{4/3}}^{\oth}
	\vs(\theta^{3/4},\theta)\,d\theta,
	\label{eq:profit_nlp}
\end{equation}
which is strictly concave in $\qH$.
Maximizing \eqref{eq:profit_nlp}
numerically gives:
\begin{equation*}
	\qL^* \approx 0.3759,
	\quad
	\qH^* \approx 0.6241,
	\quad
	\theta_2^* \approx 0.5333.
\end{equation*}
The optimal contract is:
\begin{equation}
	\qopt(\theta) = \begin{cases}
		\qL^* \approx 0.3759 &
		\theta\in[0.1,\;\theta^*),\\[4pt]
		\qH^* \approx 0.6241 &
		\theta\in [\theta^*,\;\theta_2^*),
		\\[4pt]
		\theta^{3/4} &
		\theta\in[\theta_2^*,\;1].
	\end{cases}
	\label{eq:qopt_nlp}
\end{equation}

Figure~\ref{fig:NLP} plots it against the numerical solution.

\begin{figure}[htbp]
	\centering
	\includegraphics[width=0.5\linewidth]{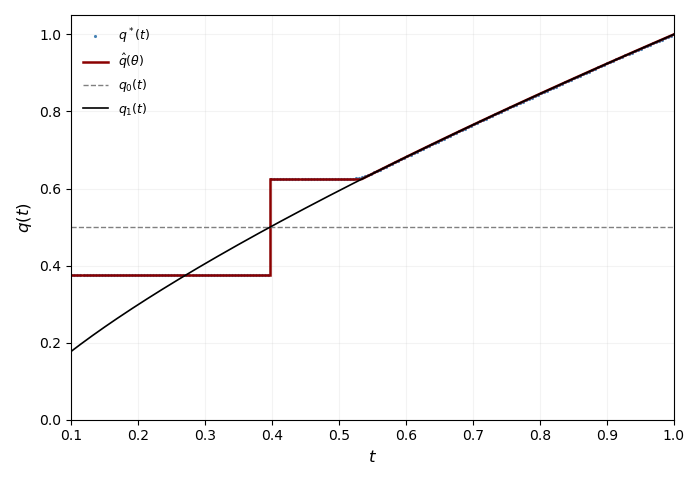}
\caption{Nonlinear pricing application.
	The optimal contract $\qopt$ (solid)
	is discontinuous at
	$\theta^*\approx 0.397$ and exhibits
	bunching on both sides of the jump.
	The numerical solution obtained via
	AMPL/Knitro with $N=200$ types
	coincides with the analytical
	solution~\eqref{eq:qopt_nlp}.}
	\label{fig:NLP}
\end{figure}

\subsubsection*{Economic intuition}

The two flat regions of the optimal
contract reflect the structure of the
single-crossing regions. Low types
$\theta < \theta^*$ lie in $\csminus$,
where $v_{q\theta} < 0$: higher types
within this range are willing to pay
\emph{less} for quality. Pooling them
at a common level $\qL^*$ eliminates
the incentive to mimic upward, since
the contract is not distorted relative
to their preferences. High types
$\theta\in(\theta^*,\theta_2^*)$ lie
in $\csplus$, where $v_{q\theta} > 0$,
but the global IC constraint prevents
them from being separated immediately
after the jump: separating them would
violate the downward IC constraints of
low types. Only for $\theta > \theta_2^*$
does separation become sustainable and
the allocation follows $\qrel$.

\subsubsection*{The certificate outside the linear family}

As in the regulation application, this example tests the sufficiency
conditions where the linear family of Appendix~\ref{sec:realization}
does not reach: the relaxed solution $\qrel(\theta)=\theta^{3/4}$ is
nonlinear, so neither flat level nor the release point is available
in closed form. Assumption~\ref{S6} holds. The branch weight
\eqref{eq:flowH} is
$\Lambda(x)=(\qH^*-x^{3/4})/(\qH^*-\tfrac12)$, which decreases from
$\Lambda(\theta_1^*)=1$ to $\Lambda(\theta_2^*)=0$: the constraint
releases exactly where the relaxed solution overtakes the upper flat,
so the endpoint kernel needs no atom and the top types are
undistorted, as \eqref{eq:qopt_nlp} shows. The anchor weight is
non-decreasing with $\lambda_L\ge0$, and the mass identity
\eqref{eq:mass_identity} is recovered numerically,
$A_L(\theta_1^*)=\Lambda(\theta_1^*)=1$ to nine digits. Both flat
levels are the pointwise maximizers of their weighted densities,
which are strictly concave for a.e.\ type, with the single exception
of the jump type: there $\Lambda(\theta_1^*)=1$ makes
$\vs_{qq}+\Lambda v_{qq\theta}=0$ exactly, the density is affine on
$[\qL^*,\qH^*]$, and every level in that interval is a maximizer ---
the convex-valued correspondence of Remark~\ref{rem:convex_valued},
read from the dual side as in Remark~\ref{rem:jump_affine}. Hence
Proposition~\ref{prop_suff_cross} applies and \eqref{eq:qopt_nlp} is
optimal among all implementable allocations.

\section{A route through a problem outside the taxonomy}
\label{sec:outside}

The forty configurations of Appendix~\ref{sec:cases} exhaust the
linear family, and the two applications above show the constructions working
outside it, with $\qlim(\theta)=\theta^{2}$ and
$\qrel(\theta)=\theta^{3/4}$; but a screening problem met in the wild will
match neither exactly. What transfers is the procedure, and it is worth
stating on its own, because the order of the steps is what makes it work:
each one produces the input the next one needs, and none of them is a
substitute for another.

\paragraph{1. Solve the discretized problem first, and believe none of it.}
Discretize the type space, impose no shape, and maximize subject to the
implementability constraints written cell by cell. The output is not a
result: it is an inventory. What it is good for is telling you \emph{where}
the contract jumps, where it is flat, where types are excluded --- and,
above all, which constraints are near equality, which is the datum the next
two steps need and the one hardest to guess from the primitives.

What it is not good for is deciding anything. The objective is not concave in
this class, so the discretized problem has local optima and the answer moves
with the starting point. Its errors are structured, not random, and two are
worth knowing in advance. A contract that jumps inside a grid cell reports a
spurious violation of order (jump)$\,\times\,$(mesh), so a correct candidate
is rejected for a reason that has nothing to do with the problem; splitting
the quadrature at the jumps removes it. And a margin of order $10^{-3}$ is
not separated from zero on any grid one is willing to run, so a genuinely
suboptimal contract can survive every numerical check --- which is exactly
what happened in case~16 of Table~\ref{tab:solutions}, recorded as full
pooling until a certificate said otherwise.

\paragraph{2. Conjecture the shape and, separately, the active set.}
From the inventory, write down a candidate family: how many flat pieces, in
what order, whether a piece of $\qrel$ survives at one end, whether some
types are excluded. Then --- and this is the step that is easy to skip ---
write down which constraints you believe bind, because that is what decides
the form of the multiplier and therefore which certificate can close the
problem.

Two features of the active set matter more than their appearance suggests.
Wherever the contract is constant on an interval, \emph{every} pair of types
inside it binds, trivially and simultaneously: a constant serves the same
quantity to both, so no rent differential accrues. That trivial continuum is
what forces a measure rather than a scalar. And wherever the contract is
excluded, $q\equiv0$, all the anchors in the excluded stretch impose the
\emph{same} constraint, so the weight carries an atom there rather than a
density. Neither is visible in the shape of the contract, and both are
visible in the numerics of step~1 if one is looking for them.

\paragraph{3. Pin the candidate with the Gateaux conditions.}
With the family fixed, the first-order conditions of
Section~\ref{sec:general_apply} determine its free parameters exactly. Here the
active set pays off: it is not necessary to impose the whole continuum of
constraints, only the tangency that touches it. Imposing $\Phi=0$ between two
points --- typically the cutoff and the extreme type --- pins the candidate,
and on the contract that results the remaining constraints are satisfied
automatically: strictly off the flat parts, with equality inside them. The
two-point equation is not a weaker stand-in for the continuum; it is its
tangency condition, which is why pinning the cutoff this way agrees with
pinning it by stationarity.

The candidate that comes out should be compared against step~1 before going
on. Agreement is not proof --- the discretized problem confirmed case 16's
pool --- but disagreement is information, and it is cheaper to find here than
after a certificate has failed for reasons that look technical.

\paragraph{4. Close it with the certificate the active set calls for.}
One binding constraint gives a scalar multiplier: the Lagrangian is then
diagonal in the perturbation, and two sign conditions deliver
$J(q)\le J(q^*)-\gamma\int(q-q^*)^2$ over the whole implementable set, with
uniqueness. A continuum of binding constraints gives a weight measure, built
by the constructions of Section~\ref{sec:general_apply}, and the residue is a
handful of sign and monotonicity conditions on that weight. Where nothing but
a constant is implementable the local condition settles it outright.

Two things have to be read off the problem rather than assumed, and both cost
several rows before they were understood. The weight accumulates from the end
the constant piece sits at, not from a fixed end: anchoring it at the wrong
extreme produces a weight that is off by an order of magnitude, or one with a
pole where the denominator changes sign. And the directional conditions ---
which way the weight is monotone, which sign a scalar takes --- are stated in
a canonical orientation, and reverse under the symmetries of
Remark~\ref{rem:symmetries}; what does not reverse are the conditions that
survive as ratios, where the sign cancels.

A failure at this step deserves to be read before it is forced. It may mean
the multiplier was built wrongly, and the diagnosis is usually visible: a
pole sits exactly where the weight was anchored at the wrong end. But it may
also mean the candidate is not optimal. When no non-negative measure can
cancel the linear part, Farkas says a first-order feasible improving
direction exists, and the conclusion is not that the certificate is missing
but that the contract is beatable. That is how the corrected row of
Table~\ref{tab:solutions} was found, after the comparison and the
discretized check had both endorsed it.

\medskip

The procedure does not depend on the linear family; only its verification is
made cheap there by Lemma~\ref{lem:cert_reduction}. We close by running it on
a problem from the literature, whose primitives are not linear and whose
optimum was an open question.

\subsection*{A worked example}

Section~6 of \citet{schottmuller2015} illustrates his algorithm on
an amended version of his running example: the principal's payoff
is $u(q,\theta)-t$ and the agent's is $t-c(q,\theta)$, with
\[
u(q,\theta)=\frac{21\theta q}{10},
\qquad
c(q,\theta)=\theta q+\frac{q^{2}}{2\theta}-\frac{\theta}{10},
\]
and $\theta$ uniform on $\Theta=[1/4,1/2]$. \citet{AVP2022} showed that the
decision function that example reports is not optimal, by exhibiting an
implementable contract $q^{a}$ that earns more --- $0.0641923$ against
$0.0612817$ --- and used it to locate two incorrect statements in
\citet{schottmuller2015}. What that argument did not settle is what the
optimum \emph{is}: $q^{a}$ was built to exhibit a failure, not presented as a
solution. The four steps settle it.

\paragraph{Steps 1 and 2} Writing $v=-c$ puts the problem in the notation of
Section~\ref{sec:general_apply}. Then $v_{q\theta}=q/\theta^{2}-1$ vanishes on
the dividing curve $\qlim(\theta)=\theta^{2}$, and $v_{qq\theta}=1/\theta^{2}>0$
places $\csplus$ above it by \eqref{eq:curvature}. With $\theta$ uniform the
virtual surplus is $\vs=v+u-(\tfrac12-\theta)v_{\theta}$, and its relaxed
solution $\qrel(\theta)=\theta^{2}(5+\theta)/5$ separates from $\qlim$ by
$\theta^{3}/5>0$ throughout $\Theta$: the two curves never meet, so this is a
no-crossing configuration, the contract is continuous by
Theorem~\ref{proposicion}, and it lies in $D$ --- a pool followed by the
branch. The first best, $q^{\mathrm{FB}}(\theta)=11\theta^{2}/10$, is the
benchmark against which \citet{schottmuller2015} reports the example; it lies
above $\qrel$, the two agreeing only at $\overline\theta$. The active set is
the one that goes with a
constant piece: every pair of types inside the pool binds at once, so the
multiplier is a measure and the certificate to aim for is the dual one.

\paragraph{Step 3} The system is the mass identity together with
$\Phi(\overline\theta,\theta_{1})=0$, the isoperimetric condition against the
extreme type. Solving it returns
\[
\theta_{1}=0.311834,
\qquad
\nu^{*}=1.501497,
\qquad
\bar q^{*}=0.121572,
\]
from any starting point in the region, with residuals at the level of the
quadrature\footnote{The system mixes a logarithm --- from the $\bar q/\theta$ term in
	$g_q$ --- with polynomials in $(\theta_1,\nu)$, and has no closed form.
	Computed to thirty digits, $\theta_1=0.31183369505519761045$,
	$\nu^{*}=1.50149738904958988998$ and
	$\bar q^{*}=0.12157134961957002896$, with residuals of order
	$10^{-45}$; at that precision the mass identity of the certificate
	holds to $10^{-31}$ and $\lambda\ge0$ on the pool to $10^{-30}$, so the
	$10^{-4}$ reported above with six-digit parameters is the rounding of
	the parameters and not slack in the certificate.}. These are the parameters of $q^{a}$: the candidate of
\citet{AVP2022} is what the first-order conditions single out, and not one of
several contracts that happen to beat the reported one. Two features are worth
noting. The pool is not an artefact of the bound $q\ge0$ --- it sits strictly
inside --- and the branch runs \emph{above} $q^{\mathrm{FB}}$ throughout, so
every type is distorted, upward.

\paragraph{Step 4} The weight of Proposition~\ref{prop_variacional},
$\Lambda=F/V_{q}$, is available in closed form. At the candidate,
$V_{q}(\bar q^{*},\underline\theta)=-0.0069<0$; the mass identity
$\Lambda(\theta_{1})=\nu^{*}$ holds to the precision of the quadrature;
$\lambda\ge0$ on the pool; $\Lambda$ increases from $0$ to $1.5015$ over
$[\underline\theta,\theta_{1}]$; the pointwise density is strictly concave,
its second derivative at most $-1.99$ on $\Theta$; and $q^{a}$ is the
pointwise maximizer of that density, on the pool and on the branch alike.
Assumptions~\ref{S5} and~\ref{Sflow} therefore hold, and
Proposition~\ref{prop_suff} closes the problem: no implementable allocation
earns more than $q^{a}$, and by Proposition~\ref{prop_stochastic} no random
mechanism does either.

\medskip

The weight is worth a second look, because the pool is exactly where
\citet{AVP2022} located the error in \citet[Theorem 1]{schottmuller2015}.
Equation~(4) there determines the multiplier $\eta$ only where
$c_{q\theta}\neq0$, that is, off the dividing curve and away from any
constant piece; on an interval on which the decision is constant it is left
undetermined, and the theorem constrains it there only through the sixth
bullet point, $\eta^{+}(\theta_{1})\le\eta^{-}(\theta_{2})$. That is the
statement \citet{AVP2022} show to be incorrect, by exhibiting choices that a
type at the bottom of the constant piece cannot be offered.

The certificate does not inherit the difficulty, because it does not leave
the weight free where the contract is flat. On the pool $\Lambda=F/V_{q}$ is
an explicit construction, its boundary value at the cutoff is the mass
identity rather than an assumption, and what sufficiency requires of it is
that it be monotone in the direction Assumption~\ref{Sflow} names --- a
condition on a determinate object, not an inequality between one-sided
limits of an indeterminate one. In this example it holds. The example
therefore does two things at once: it shows where a sufficient condition
stated on a free multiplier fails, and it shows the contract certified by
one that is not free.

\newpage
\bibliographystyle{plainnat}
\bibliography{references}

\end{document}